\numberwithin{equation}{section}
\numberwithin{figure}{chapter}
\title{\textbf{Modelling Crowd Dynamics:\\ a Multiscale, Measure-theoretical Approach}\\\vspace{1 cm}
\large{Master's Thesis Industrial and Applied Mathematics\\
Eindhoven University of Technology, The Netherlands\\\vspace{5 cm}
                    Author:\\
                    Joep Evers\\\vspace{1 cm}
                    Supervisor:\\
                    Dr.~Adrian Muntean}}
\author{}
\date{May 30, 2011}
\newcommand{\setheaders}{
\pagestyle{fancy}
\fancyhf{}
\fancyhead[LE, RO]{Technische Universiteit \textbf{Eindhoven} University of Technology}
\fancyfoot[LE]{\thepage \hspace{0.5 cm} Modelling Crowd Dynamics}
\fancyfoot[RO]{Modelling Crowd Dynamics\hspace{0.5 cm} \thepage }
\renewcommand{\headrulewidth}{0pt}
}
\begin{document}
\newtheorem{theorem}{Theorem}[section]
\newtheorem{lemma}[theorem]{Lemma}
\newtheorem{postulate}[theorem]{Postulate}
\newtheorem{definition}[theorem]{Definition}
\newtheorem{assumption}[theorem]{Assumption}
\newtheorem{corollary}[theorem]{Corollary}
\theoremstyle{definition}
\newtheorem{remark}[theorem]{Remark}

\newcommand{\supp}{\operatorname*{supp}}
\newcommand{\argmin}{\operatorname*{arg\,min}}

\pagenumbering{roman}

%
%
%
%
%
%
%
%
%
%
%

\maketitle

\newpage
\abstract{
We present a strategy capable of describing basic features of the dynamics of crowds. The behaviour of the crowd is considered from a twofold perspective: both macroscopically and microscopically. We examine the large scale behaviour of the crowd (considering it as a continuum), simultaneously being able to capture phenomena happening at the individual pedestrian's level. We unify the micro and macro approaches in a single model, by working with general mass measures and their transport.\\
We improve existing modelling by coupling a measure-theoretical framework with basic ideas of mixture theory formulated in terms of measures. This strategy allows us to define several constituents (subpopulations) of the large crowd, each having its own partial velocity. We thus have the possibility to examine the interactive behaviour between subpopulations that have distinct characteristics. We give special features to those pedestrians that are represented by the microscopic (discrete) part. In real life situations they would play the role of firemen, tourist guides, leaders, terrorists, predators etc. Since we are interested in the global behaviour of the rest of the crowd, we model this part as a continuum.\\
By identifying a suitable concept of entropy, we derive an entropy inequality and show that our model agrees with a Clausius-Duhem-like inequality. From this inequality natural restrictions on the proposed velocity fields follow; obeying these restrictions makes our model compatible with thermodynamics.\\
We prove existence and uniqueness of a solution to a time-discrete transport problem for general mass measures. Moreover, we show properties like positivity of the solution and conservation of mass. Although afterwards we opt for a particular form, our results are valid for mass measures in their most general appearance.\\
We give a robust scheme to approximate the solution and illustrate numerically two-scale micro-macro behaviour. We experiment with a number of scenarios, in order to capture the emergent qualitative behaviour.\\
Finally, we formulate open problems and basic research questions, inspired by our modelling, analysis and simulation results.\\
\\
\textbf{Keywords:} Crowd dynamics; Social and behavioral sciences; Conservation laws; Micro-macro models; Mass measures; Thermodynamics; Mixture theory; Social networks; Initial value problems; Simulation\\
\\
\textbf{MSC 2010:} 35Q91; 35L65; 35Q80; 28A25; 91D30; 65L05\\
\textbf{PACS 2010:} 45.50.-j; 47.10.ab; 02.30.Cj; 02.60.Cb; 05.70.-a; 47.51.+a}

\newpage\thispagestyle{empty}\mbox{}
\vspace{5 cm}
\begin{center}
\parbox{11 cm}{\hyphenpenalty=100000 \large{I want to suggest that even with our woeful ignorance of why humans behave the way they do, it is possible to make some predictions about how they behave collectively.}}
\end{center}\hspace{9.8 cm}\begin{footnotesize}Philip Ball, \textit{Critical Mass}\footnote{Quotation taken from \protect\cite{Ball}, p. 6.}\end{footnotesize}

\pagenumbering{arabic}
\pagestyle{empty}
\tableofcontents


\newpage
\setheaders{}
\chapter{Introduction}\label{section introduction}
I would like to use this first section to introduce the central challenge treated in this thesis: the modelling of the behaviour of crowds. This section is also used to place this subject in a broader context.
\section{Process leading to this thesis}
In 2010 I was offered the opportunity to take part in the Honors Program Industrial and Applied Mathematics: a new initiative of the Department of
Mathematics and Computer Science (Eindhoven University of Technology) to challenge "the best students of IAM" by working in an academic, scientific setting. During these months I analyzed and extended a number of currently existing approaches to the modelling of crowd dynamics, under supervision of Dr.~Adrian Muntean and Prof.dr.~Mark Peletier. Since this research project took place in a relatively short period of time, we ultimately had to conclude that there were many more things to be explored in this field. As a natural consequence, we thus decided to dedicate my master's thesis to the same topic.\\
\\
The next part of this introduction is intended to emphasize that we have not been dealing with an irrelevant problem. Indeed, a series of everyday-life events have shown the importance of being able to predict the dynamics of a crowd. Several attempts to capture human behaviour in mathematical models have found their way into useful applications. In the following sections these statements are discussed in more detail.

\section{Societal background}
In his 1895 book \textit{La psychologie des foules}, the French social psychologist Gustave Le Bon wrote: "L'âge où nous entrons sera véritablement \textbf{l'ère des foules}".\footnote{"The age we are about to enter, will truly be the era of crowds". The French quotation is taken from \protect\cite{LeBon}, p.~12, a republication of the original work published in 1895.} Although Le Bon could probably at the time (the \textit{fin de siècle}) not quite foresee what the twentieth century would look like, his words have turned out to be prophetic. In their most literal sense - disregarding any political or metaphorical meaning that Le Bon might have intended to attach to them - they describe what our world has become. Indeed, during the twentieth century the global population has been ever-increasing. Occasions involving large numbers of people in crowded areas have become familiar sights. We experience such situations both under everyday urban conditions, and during large-scale manifestations with huge audiences (which take place occasionally).\\
\\
Most of the "everyday situations" mentioned above happen in a `normal' setting, that is, without the people being in a state of panic. Although this tends to sound reassuring, it does by no means imply that no special attention is to be paid. In order to design safe and comfortable public space, the presence of pedestrians cannot be disregarded. The behaviour of individual pedestrians, possibly clustering together to form larger crowds, is an important factor in urban design and traffic management. Indeed, \textit{un}safe and \textit{un}comfortable situations are often related to congestion and high densities. To assess the quality of the pedestrian environment, the following questions may be taken as a guideline (see \cite{UrbDesComp}):
\begin{itemize}
  \item Are routes direct, leading the people where they actually want to go?
  \item Is it easy to find and follow a certain route?
  \item Are crossings easy to use; how long do pedestrians have to wait before they can cross a road?
  \item Are footways well-lit, and sufficiently wide; what obstructions are there?
  \item If pedestrians, cyclists and vehicles are mixed, does the typical speed of road users allow for this?
\end{itemize}
These questions also indicate which measures can prevent the aforementioned cases of unsafety/discomfort. The necessity of these measures depends on the typical scale of pedestrian traffic (i.e. the `crowdedness') that the area has to deal with. The effectiveness of each specific measure is also determined by its interplay with the others.\\
\\
The comments made here, should be extended to the semi-public domain. To be more concrete, we do not only need to consider the behaviour of people walking in the street, on sidewalks and in parks, but also in railway stations (see Figure \ref{Figure stationEindhoven}), sports stadiums and shopping malls. In fact, the investigation of crowd dynamics is essential in any setting in which a person's desired motion is impeded by the presence of other people, to prevent that inconvenience escalates into danger.\\

\begin{figure}
  \includegraphics[width=\linewidth]{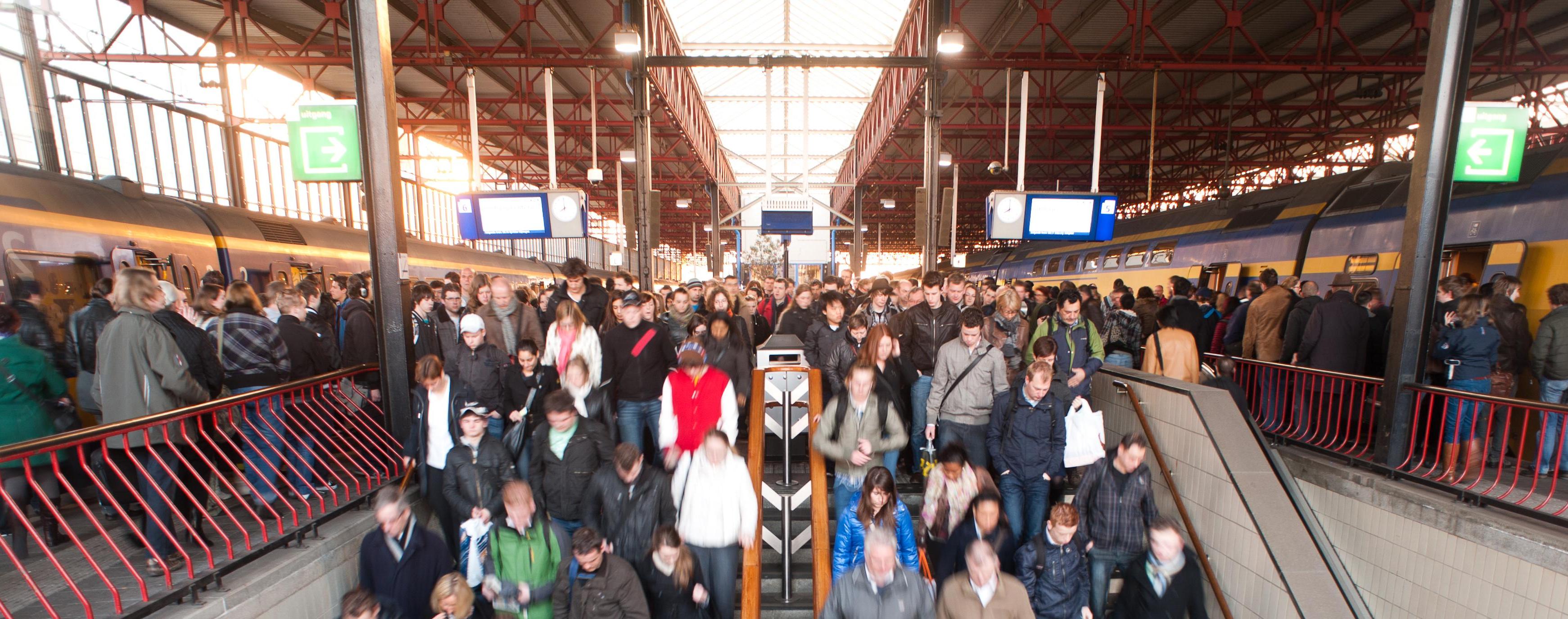}\\
  \caption{Passengers rushing towards the exit of Eindhoven Railway Station, after alighting from two trains that have just arrived. At the transition from the platform to the staircase a bottleneck occurs. \textit{(Photo: Joep Julicher)}}\label{Figure stationEindhoven}
\end{figure}
If proper understanding of, or response to a crowd's behaviour is lacking, events in the (recent) past have shown us what the consequences are. These are dramatic situations during which people's lives are at stake. An example is the pilgrimage (\textit{Hajj}) to Mecca and other holy places in Saudi Arabia, that Muslims perform every year during the \textit{Dhu al-Hijjah} month. Since all adult Muslims are required to perform such Hajj at least once during their lifetimes (unless physical or financial circumstances make this impossible), huge numbers of people (several millions) gather every year during a relatively short period of time. Near the town of Mina, pilgrims as a ritual throw pebbles at three stone pillars (\textit{jamarahs}), representing the devil.\footnote{God/Allah had commanded Abraham/Ibrahim to sacrifice his son Ishmael, but Satan urged Abraham to disobey God. The `stoning of the devil' ritual is to commemorate Abraham rejecting the temptation of the devil.\protect\cite{Hajj}} A special bridge (the Jamarat Bridge) was built is the 1960s to allow pilgrims to take part in the ritual at two floor levels. However, during the 1424 AH (2004) and 1426 AH (2006) Hajjs several hundreds of people were trampled to death due to overcrowding at the bridge. As this urged authorities to take action, crowd experts were asked to investigate the situation, and give recommendations for improvements; see \cite{Still, HelbingJamarat}.\\
The site was completely reconstructed afterwards. The three pillars were replaced by wider and taller oblong concrete structures, to allow more pilgrims at the same time, and prevent them from accidentally throwing pebbles at each other. Crowds flow around these new `pillars' more easily, by which congestion is reduced. A new multi-storey bridge was built with better entrance facilities and more emergency exits. Bottlenecks were removed. Moreover, the protocols for crowd management by stewards and other officials were reconsidered.\cite{HajjBBC}\\
\\
Unfortunately, crowd disasters do not always happen far away from us. Recently, in July 2010, 21 people died at the Loveparade in the German city of Duisburg, at a distance of less than 100 kilometers from Eindhoven. This festival took place in a closed-off area, which could be entered via a small number of tunnels. Each of these ended at a common ramp/staircase, eventually leading to the festival ground. Due to overcrowding at the bottom end of the staircase, a stampede occurred. All lethal victims died because of suffocation.\\
After these disastrous events, there was a lot of criticism on the safety precautions that had, or had not, been taken. The organization of the Loveparade was blamed for having provided too few emergency exits to the Loveparade area. Moreover, the number of security agents was smaller than promised. On the other hand, local authorities are said to have ignored the police's objections, because they were so keen on having this prestigious event in their town. For more information, see e.g. \cite{ZeitDuisburg1, TagesspiegelDuisburg, SueddeutscheDuisburg}.\\
\\
The above emphasizes the importance and urgency of being able to properly describe the dynamics of a crowd. This is a prerequisite for predicting crowd behaviour, which in its turn is needed to anticipate life-threatening situations.

\section{Modelling approaches}\label{section introduction modelling approaches}
Ever since the Renaissance scientists have wondered how human behaviour can be captured in mathematical formalism (cf. e.g. \cite{Ball}). We focus now in particular on the developments of the last decades, during which the description and analysis of a crowd's motion has gained the attention of the scientific world. This is mainly due to the gradual increase of the number of large-audience events being organized, and the accidents happening at such events. An illustration of these developments has been given above. Several models have been developed and explored to catch the crowd-related phenomena we experience in real life, in a scientific (mathematical) framework. These models were treated both analytically and numerically, and were based on a number of distinct perspectives.\\
\\
Focussing on what happens at the level of an individual pedestrian, one obtains discrete, agent-based or microscopic models. In this approach the dynamics of each single person in the crowd are modelled and traced. This perspective is adopted by Dirk Helbing et al. within the framework of a so-called \textit{social force model}, see e.g. \cite{HelbingMolnar}. Roughly speaking, each pedestrian is driven via Newton's Second Law, by means of a \textit{social} force that depends on the presence and social behaviour of other people. Simulation is their main tool for obtaining qualitative and quantitative information.\\
\\
On the other hand, one can also `zoom out', considering a crowd on a more global scale. Working on this macroscopic level, one uses densities, rather than individual pedestrians. Via this approach one ends up in what is essentially a fluid-dynamic setting. A specific property of these models is that we cannot capture local interactions any more. Macroscopic modelling is thus only useful if we are interested in average characteristics of the crowd and its motion. This way of modelling is for instance adopted by Bertrand Maury et al. in \cite{Maury}, where initially a gradient flow structure is proposed. Their results are derived analytically.\\
\\
Up to the initial conditions (which might be generated by means of a random sampling), the models described above are fully deterministic.\\
\\
In this thesis we focus an approach that differs in nature from the aforementioned ones. Our motivation is that we do not want to be forced to choose either of the two perspectives; we aim at `marrying' the two perspectives in one single model. This strategy was described by Benedetto Piccoli, Andrea Tosin et al. in \cite{PiccoliTosinMeasTh, PiccoliTosin, Piccoli2010}. In most of the fluid-dynamic models of pedestrians, non-linear hyperbolic conservation laws appear. Certain analytical and numerical problems inherently arise when treating these PDEs, especially in more than one space dimension. For instance the solution might not be unique, shock waves can occur, it is important whether the corresponding flux is convex or not, boundary conditions are difficult to impose, etc. Numerically, positivity of the density might not always be guaranteed. In order to circumvent these problems, Piccoli et al.'s work takes place in a (time-discrete) measure-theoretical framework. This thesis is mainly built upon the fundaments Piccoli \textit{cum suis} provides.\\
However, we want to be able to cover a much wider range of situations, for instance a crowd consisting of several distinct subpopulations (rather than only one population), each of which is willing to move with its own desired velocity. Moreover, we draw parallels with mixture theory and thermodynamics, and extract inspiration from those fields.\\
\\
The microscopic, macroscopic and micro-macro approaches described above, are not the only existing strategies. See e.g. \cite{Schadschneider2011} (pp.~212--214) for a systematic classification of models. Partially based on \cite{Schadschneider2011}, we describe a number of different perspectives.\\
\\
At an intermediate level between micro and macro are so-called mesoscopic models. These models do not distinguish between individuals and hence cannot trace their trajectories. However, individual \textit{behaviour} can be specified. The description is in terms of probability densities: mostly, the probability to find an individual with specified speed in a certain location at a certain time. To give an example, Dirk Helbing describes vehicular traffic in such framework (also indicated by `Boltzmann-like' or `gas-kinetic' models) in \cite{HelbingPhysA}. Moreover, in \cite{Helbing} he derives hydrodynamic equations for pedestrians, based on the Boltzmann-approach.\\
\\
Up to now, all models have mainly been based on considerations related to physics. This mostly matches our point of view. Other views are (of course) also possible.\\
\\
The fact that we do not fully understand the underlying mechanisms of human behaviour, is most naturally incorporated in a model by adding stochasticity. A small amount of \textit{external noise} can be added to avoid undesired situations. One should think here for example of a situation in which two individuals are positioned `head-on' and both want to move straight ahead. In a model it might happen that a deadlock occurs, although in reality the two most likely just move aside slightly, and get passed one another. A bit of stochastic fluctuation forces a breakthrough if the described deadlock happens. Away from such configurations, the influence of the noise is only small.\\
\textit{Fully} stochastic models are different in the sense that they do not just add random perturbation to intrinsically deterministic dynamics. Here, the underlying decision-making processes are influenced directly by random effects. If you take the deterministic limit in these models (i.e. vanishing stochastic effects) the overall phenomena are completely different from those in the stochastic `normal' case. This makes fully stochastic models intrinsically different from models that contain external noise only.\\
\\
In a \textit{cellular automata} model all variables are discrete. The crowd is described at the individual's level. The spatial domain is subdivided into a number of cells, where each cell possesses a state variable (e.g. $0=$ `empty', $1=$ `occupied'). At specified (discrete) points in time, the model decides which cells are occupied in the subsequent generation; the state variables are then updated accordingly. For reality-mimicking models the total number of occupied cells is constant. Typically, the update is rule-based, i.e. each occupied cell (read: particle/pedestrian) makes a decision where to go, based on the current situation and its goals. These rules are often supported by psychological arguments. Possibly, some stochastic effects are included. The update can either be executed in parallel (for all cells simultaneously), or for randomly selected individuals only.\\
\\
Serge Hoogendoorn and Piet Bovy model individuals as players in a differential game. This approach was first applied to driving behaviour in traffic flows in \cite{HoogendoornGameTh}. The ideas presented therein were subsequently extended to pedestrian dynamics; see \cite{HoogendoornBovyGameTh}. Vehicles and/or pedestrians are assumed to maximize their expected success or profit, or follow a Zipfian principle of \textit{least effort}. In such a game-theoretical approach, individuals are allowed to modify their control decisions based on their observations, and on predictions of the behaviour of other players in their neighbourhood.

\section{A few people in the field}\label{section introduction people in the field}
Throughout 2010 and 2011 Adrian and I have been in contact with many people, whose area of expertise was related to our research. When meeting them, we were mainly interested in what kind of questions these experts would like to be answered by crowd models. Moreover, we obtained insight in the wide range of possible applications.\\
\\
The person who provided us with the actual idea for studying the topic of crowd dynamics is Prof.~Chris Budd (University of Bath, UK). Mark Peletier invited me to meet Prof.~Budd during his stay in Eindhoven, Spring 2010. At that time, I needed to decide what I wanted to do in my Honors Program, and these two people suggested the idea (that has eventually led to this thesis). In January 2011, Adrian Muntean and I visited Mark Peletier during his sabbatical stay in Bath, and we met Prof.~Budd again. He was the one drawing our attention to two-scale phenomena in nature (birds and fish), by showing us a couple of fascinating BBC movies.\footnote{These phenomena will be addressed in Section \ref{section two-scales no sing cont} of this thesis.}\\
\\
Prof.dr.ir.~Serge Hoogendoorn, Dr.ir.~Winnie Daamen and Mario Campanella~MSc (Delft University of Technology) work on modelling/simulation of pedestrian/traffic dynamics, and on the calibration of their models by real-life experiments.\footnote{See \texttt{www.pedestrians.tudelft.nl} for more information on either of these two areas of expertise.} Winnie Daamen gave a talk in Eindhoven (during the CASA colloquium) in September 2010. Adrian Muntean and I paid a visit to their Transport \& Planning Department in January 2011.\\
They have developed a software package called \textsc{Nomad}, which is a microscopic simulation tool that can e.g. be used to assess the geometry of infrastructure. The underlying model is the one described in \cite{HoogendoornBovyGameTh}. The behaviour of the individuals is comprised in their acceleration, which consists of an uncontrollable and a controllable part. The uncontrollable part is due to physical interactions with their surroundings, that is, with other individuals and obstacles (part of the geometry of the domain). The controllable part contains the tendency to minimize the walking cost (cf. Section \ref{section introduction modelling approaches}).\\
Calibration of the simulation is done by observing real-life traffic and pedestrian flows (as far as privacy regulations permit to do so) and laboratory experiments; see e.g. \cite{CampanellaIFAC09, Hoogendoorn}. These laboratory experiments take place in standardized environments, and participants' positions are obtained by video image analysis. An example of such standard setting (which is very well-known in the field) is the \textit{counterflow} or \textit{bidirectional flow}: a narrow corridor in which two groups of people want to move in opposite directions. In this experiment, one expects to observe a specific phenomenon of \textit{self-organization}: lane-formation. That is, people tend to align in such a way that they just follow someone going in the same direction.\footnote{A video of a bidirectional flow experiment is available at \texttt{www.youtube.com/watch?v=J4J\_\_lOOV2E}. Lanes are formed without the participants being instructed to do so (note that these lanes are unsteady). The experiment was performed in connection with the German Hermes project; in Delft similar experiments are done.} In Figure \ref{figure counterflow} a schematic impression of a bidirectional flow is given. Another standard scenario is a bottleneck. It occurs for instance if a corridor suddenly gets narrower. At this transition, individuals get clogged up in circular structures (\textit{arches}) which are hard to break; this scenario is indicated in Figure \ref{figure arches}. See also \cite{Schadschneider2011}, pp.~418--419, for a description of both settings and of the emerging phenomena.
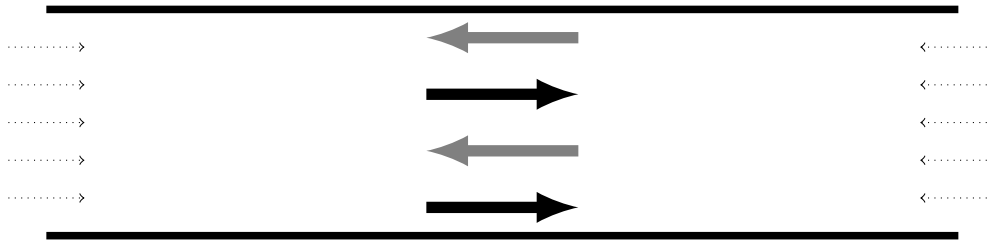
\begin{figure}[ht]
\centering
\begin{tikzpicture}
    \draw[line width = 1 mm] (0,0)--(12,0);
    \draw[line width = 1 mm] (0,3)--(12,3);

    \draw[->, dotted] (-0.5,0.5)--(0.5,0.5);
    \draw[->, dotted] (-0.5,1)--(0.5,1);
    \draw[->, dotted] (-0.5,1.5)--(0.5,1.5);
    \draw[->, dotted] (-0.5,2)--(0.5,2);
    \draw[->, dotted] (-0.5,2.5)--(0.5,2.5);

    \draw[<-, dotted] (11.5,0.5)--(12.5,0.5);
    \draw[<-, dotted] (11.5,1)--(12.5,1);
    \draw[<-, dotted] (11.5,1.5)--(12.5,1.5);
    \draw[<-, dotted] (11.5,2)--(12.5,2);
    \draw[<-, dotted] (11.5,2.5)--(12.5,2.5);

    \draw[arrows={-latex}, line width = 1.5 mm] (5,0.375)--(7,0.375);
    \draw[arrows={latex-}, line width = 1.5 mm, gray] (5,1.125)--(7,1.125);
    \draw[arrows={-latex}, line width = 1.5 mm] (5,1.875)--(7,1.875);
    \draw[arrows={latex-}, line width = 1.5 mm, gray] (5,2.625)--(7,2.625);
\end{tikzpicture}
\caption{Schematic drawing of a counterflow scenario with lane-formation. At both ends of the corridor individuals are supplied; they intend to walk to the other side. Halfway, four lanes emerge.}\label{figure counterflow}
\end{figure}

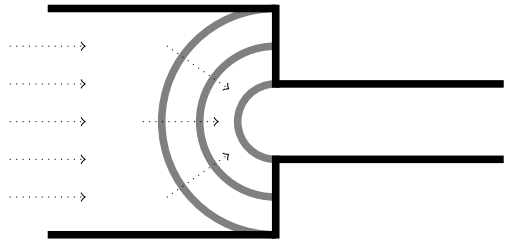
\begin{figure}[ht]
\centering
\begin{tikzpicture}
    \draw[line width = 1 mm, gray] (3,3) arc(90:270:1.5cm);
    \draw[line width = 1 mm, gray] (3,2.5) arc(90:270:1cm);
    \draw[line width = 1 mm, gray] (3,2) arc(90:270:0.5cm);

    \draw[line width = 1 mm] (0,0)--(3,0);
    \draw[line width = 1 mm] (0,3)--(3,3);
    \draw[line width = 1 mm] (3,1)--(6,1);
    \draw[line width = 1 mm] (3,2)--(6,2);
    \draw[line width = 1 mm] (3,3.05)--(3,1.95);
    \draw[line width = 1 mm] (3,1.05)--(3,-0.05);

    \draw[->, dotted] (-0.5,0.5)--(0.5,0.5);
    \draw[->, dotted] (-0.5,1)--(0.5,1);
    \draw[->, dotted] (-0.5,1.5)--(0.5,1.5);
    \draw[->, dotted] (-0.5,2)--(0.5,2);
    \draw[->, dotted] (-0.5,2.5)--(0.5,2.5);

    \draw[->, dotted] (1.25,1.5)--(2.25,1.5);
    \draw[->, dotted, rotate around={-35:(3,1.5)}] (1.25,1.5)--(2.25,1.5);
    \draw[->, dotted, rotate around={35:(3,1.5)}] (1.25,1.5)--(2.25,1.5);

\end{tikzpicture}
\caption{Schematic drawing of arches at a bottleneck. the flow is obstructed there.}\label{figure arches}
\end{figure}

Ing.~Gilian Brouwer works for Adviesburo Nieman: a company specialized in consultancy on quality, safety and building physics.\footnote{\texttt{www.nieman.nl}} He is specialized in fire safety engineering. At Nieman, several software tools are applied to simulate the evacuation of people in case of fire in a building. In case a building does not comply with Dutch regulations (`Bouwbesluit'), simulations are needed to make feasible that nevertheless an equivalent level of safety is provided. In \cite{Zeeburg} a comparison is made between two of the software packages that are used by Nieman (\textit{FDS + Evac} and \textit{Simulex}). Both packages are based on a social force-like model, where FDS + Evac also includes the effect of smoke. An individual's desired route is based on the geometry of the domain and the configuration of other pedestrians. The route is chosen such that it is expected to take a minimal amount of time to reach the destination. In Simulex however, in each spatial position the shortest route to the exit is precalculated (considering the distance). The evacuation is assessed by recording the times at which pedestrians leave the building, and consequently calculating exit flow rates or average exit times.\\
\\
Prof.dr.ir.~Bauke de Vries is the chairman of the Urban Management \& Design Systems group at the Department of Architecture, Building and Planning (Eindhoven University of Technology). We met him in February 2011 and spoke about his working interests and area of expertise. He mainly focuses on interior design of buildings, where for instance routing and the positions of obstacles are taken into account. In the past, he also performed real-life experiments in his department, to obtain data about the walking behaviour of his coworkers within the office space. During our conversation, he also explained the work of Prof.dr.~Harry Timmermans, who is a member of the same group. He is more specialized in the design of exterior urban space, such as shopping malls. An example can be found in \cite{DijkstraDeVries}, in which the two experts cooperated. Their simulation models work with agents (read: pedestrians) that have an agenda and an updating mechanism for their `to do'-list; the individuals behave and move accordingly.\\
\\
We also talked with people about crowds in a broader context. An example is Dr.~Petru Curseu who is an expert in group processes and decision-making at Tilburg University. To get an impression of his area of interest, see \cite{Curseu}. When he visited us in Eindhoven (November 2010), he showed his interest in microscopic pedestrian models. He proposed to use them as a metaphor for more abstract societal phenomena. For instance, if there is strongly repulsive interaction between a large group and a small group, the question arises whether this enables the large group to leave a room relatively faster than the small group. According to Dr.~Curseu, a parallel can be drawn between this pedestrian simulation and a majority in a country that discriminates a minority. The majority group turns out to have better access to resources.\\
I took part in the Study Group Mathematics with Industry 2011 (organized at the Vrije Universiteit Amsterdam, 24--28 January 2011). The problem was provided by \textsc{Chess}.\footnote{\texttt{www.chess.nl}} An \textit{ad hoc} wireless network is considered, in which each node can broadcast and receive messages. However, it is uncertain whether sent messages reach a receiver, and if yes, how many/which nodes are reached. The aim was to let each node send, receive and process `intelligent' messages such that it can estimate the number of nodes in its neighbourhood and in the total network. The application possibilities of such unreliable wireless networks are closely linked to the subject of this thesis. This was also pointed out to us at an earlier stage by Prof.dr.~Fabian Wirth (University of Würzburg, Germany), an expert in logistic networks. Typically, each node is a simple microprocessor, with limited computational power, that measures a property of its environment, like temperature or position. A large number of such sensors can be used to detect forest fires, but also to obtain information about e.g. bird flocking, social behaviour and crowd dynamics. The proceedings of the study group are to be expected.\\
In March 2011 Adrian Muntean and I had a conversation with Dr.~David Fedson, a medical expert in the area of epidemics. We talked about Malcolm Gladwell's \textit{The Tipping Point}, which treats phase transitions that take place everywhere around us. Certain events suddenly take off in social behaviour (fashion), epidemics of infectious diseases, and in animal groups. For the latter see \cite{SwarmScienceNews}. Another topic he brought to our attention is synchronization. A connection between synchronous behaviour in nature and in the financial world is made in \cite{Saavedra}. Prof.~Steven Strogatz also explains this phenomenon in a fascinating way.\footnote{\texttt{www.youtube.com/watch?v=aSNrKS-sCE0}}

\section{Content and structure of this thesis}
In this thesis we focus on the mathematical background of crowd dynamics models. This means that the emphasis will not be on direct applications. We are mainly interested in identifying underlying structure and mechanisms, and in understanding the coupling between the micro- and macro-scale. We work in a fully deterministic setting. For more information in the direction of stochastic modelling approaches (such as stochastically interacting particles), the reader is referred e.g. to \cite{Schadschneider2011}.\\
\\
We first describe the theoretical fundaments on which we build our model. In Section \ref{section basics meas theory} basic measure-theoretical concepts are introduced. The most important parts are the (refined) Lebesgue decomposition and the Radon-Nikodym Theorem.\\
Section \ref{section mixture theory and thermodynamics} is dedicated to mixture theory, a branch of continuum mechanics that turns out to be suitable for describing a crowd consisting of multiple subpopulations. The corresponding theory is presented in measure-theoretical language. It provides us with conservation of mass equations, that are the basis for our model. In Section \ref{section mixture theory and thermodynamics} we also cover aspects of thermodynamics that are used later to derive an entropy inequality for a particular crowds setting.\\
\\
In Section \ref{section Application crowd dynamics}, we apply the modelling ideas of the preceding sections to obtain a model for the dynamical behaviour of a crowd. After a number of (mainly formal) calculations we first present a continuous-in-time model and the accompanying entropy inequality. We also propose a velocity field, governing the dynamics. From this model, we derive a discrete-in-time version. All our mathematical analysis concerns this time-discrete model. The main result is a proof of global existence and uniqueness of a time-discrete solution. We also derive a discrete-in-time equivalent of the entropy inequality.\\
\\
Section \ref{section numerical scheme} contains the full description of the numerical scheme and parameter setting, which we use to simulate our two-scale crowd setting. The aim of the simulation part is to investigate the basic patterns produced by the interplay between one or two individuals and a macroscopic crowd. Section \ref{section numerical illustration} contains our simulation results.\\
\\
In Section \ref{section further work} we review the work done and the obtained results. Moreover, we give suggestions for future work and pose a few basic questions that are still open.\\
\\
Our first attempt \cite{EversMuntean} of modelling pedestrians using the approach described in this thesis, was published in \textit{Nonlinear Phenomena in Complex Systems}. Its content can be found in Appendix \ref{appendix paper}.

\newpage
\chapter{Measure theory}\label{section basics meas theory}

\section{Measures and their Lebesgue decomposition}\label{section Lebesgue decomposition}
Let $\bigl(\Omega,\mathcal{B}(\Omega)\bigr)$ be a measurable space, where $\emptyset\neq\Omega\subset\mathbb{R}^d$. Here, $\mathcal{B}(\Omega)$ denotes the $\sigma$-algebra of Borel subsets of $\Omega$. Suppose that $\mu$ and $\lambda$ are positive, finite measures defined on $\mathcal{B}(\Omega)$.

\begin{definition}[Absolutely continuous measures]\label{def abs cont measure}
The measure $\mu$ is said to be \textit{absolutely continuous} with respect to the measure $\lambda$ if for any $\Omega'\in \mathcal{B}(\Omega)$, $\lambda(\Omega')=0$ implies $\mu(\Omega')=0$. Notation:
\begin{equation*}
\mu\ll\lambda.
\end{equation*}
\end{definition}

\begin{definition}[Singular measures]\label{def singular measure}
The measures $\mu$ and $\lambda$ are said to be \textit{mutually singular} if there exists a $B\in\mathcal{B}(\Omega)$ such that
\begin{equation*}
\mu(\Omega\setminus B)= \lambda(B)=0.
\end{equation*}
Notation:
\begin{equation*}
\mu\perp\lambda.
\end{equation*}
Although the relation $\mu\perp\lambda$ is symmetric, it is also often said that $\mu$ is singular with respect to $\lambda$.
\end{definition}
Definition \ref{def singular measure} is taken from \cite{EvansGariepy}, p.~40. An alternative definition is given by \cite{Rudin}, p.~120, which we will now show to be equivalent.

\begin{lemma}\label{alternative def singular measures}
The following two statements are equivalent:
\begin{enumerate}[(i)]
  \item There exists a $B\in\mathcal{B}(\Omega)$ such that $\mu(\Omega\setminus B)= \lambda(B)=0$.\label{alt def sing meas EvansGariepy}
  \item There are disjoint $A_1, A_2\in\mathcal{B}(\Omega)$ such that $\mu(\Omega')=\mu(\Omega'\cap A_1)$ and $\lambda(\Omega')=\lambda(\Omega'\cap A_2)$ for all $\Omega'\in\mathcal{B}(\Omega)$.\label{alt def sing meas Rudin}
\end{enumerate}
\end{lemma}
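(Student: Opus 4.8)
The plan is to prove the equivalence by establishing both implications, constructing the relevant sets explicitly in each direction.

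For the direction $(\ref{alt def sing meas EvansGariepy})\Rightarrow(\ref{alt def sing meas Rudin})$, suppose there is a set $B\in\mathcal{B}(\Omega)$ with $\mu(\Omega\setminus B)=\lambda(B)=0$. The natural candidates are $A_1=\Omega\setminus B$ and $A_2=B$; these are disjoint Borel sets whose union is all of $\Omega$. I would then verify the two identities. For the $\mu$ claim, I would write $\Omega'=(\Omega'\cap A_1)\cup(\Omega'\cap B)$ as a disjoint union, use finite additivity to get $\mu(\Omega')=\mu(\Omega'\cap A_1)+\mu(\Omega'\cap B)$, and observe that $\mu(\Omega'\cap B)\le\mu(B)=0$ by monotonicity, so the second term vanishes. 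The $\lambda$ claim is symmetric, splitting along $A_2=B$ and its complement and using $\lambda(\Omega\setminus B)\ge\lambda(\Omega'\setminus B)$ together with $\lambda(\Omega'\setminus B)\le\lambda(\Omega\setminus B)$; here I must be slightly careful and argue $\lambda(\Omega'\cap(\Omega\setminus B))\le\lambda(\Omega\setminus B)$, but since $\mu(\Omega\setminus B)=0$ rather than $\lambda(\Omega\setminus B)=0$, I should instead use that $\lambda(B)=0$ gives $\lambda(\Omega'\cap B)=0$, so $\lambda(\Omega')=\lambda(\Omega'\cap(\Omega\setminus B))=\lambda(\Omega'\cap A_2)$ only if $A_2=\Omega\setminus B$. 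This sign-keeping is exactly the subtle point: I need to match each measure to the set on which it is \emph{supported}, so the correct choice is $A_1=\Omega\setminus B$ for $\mu$ and $A_2=B$ for $\lambda$, and then the vanishing hypotheses $\mu(\Omega\setminus B)=0$ feeds the \emph{other} direction's split.

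For the converse $(\ref{alt def sing meas Rudin})\Rightarrow(\ref{alt def sing meas EvansGariepy})$, given disjoint $A_1,A_2$ with the stated concentration properties, I would set $B=A_2$ (or equivalently work with $A_1$). Taking $\Omega'=A_1$ in the $\lambda$-identity gives $\lambda(A_1)=\lambda(A_1\cap A_2)=\lambda(\emptyset)=0$ by disjointness, and taking $\Omega'=\Omega\setminus A_1$ in the $\mu$-identity (or using that $\mu$ concentrates on $A_1$) forces $\mu(\Omega\setminus A_1)=\mu((\Omega\setminus A_1)\cap A_1)=0$. Thus choosing $B=A_1$ yields $\mu(\Omega\setminus B)=0$ and $\lambda(B)=0$, which is precisely statement (\ref{alt def sing meas EvansGariepy}).

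The routine calculations are all short applications of monotonicity, finite additivity, and disjointness; the only genuine obstacle is bookkeeping, namely being careful about \emph{which} measure is concentrated on \emph{which} set and pairing $\mu$ with $A_1$ and $\lambda$ with $A_2$ consistently throughout. Once that pairing is fixed, both implications reduce to evaluating the given identities at the two obvious test sets $A_1$ and $\Omega\setminus A_1$.
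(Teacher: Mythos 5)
Your second implication, (\ref{alt def sing meas Rudin}) $\Rightarrow$ (\ref{alt def sing meas EvansGariepy}), is correct and is essentially the paper's own argument: take $B=A_1$, so that $\lambda(B)=\lambda(A_1\cap A_2)=\lambda(\emptyset)=0$ and $\mu(\Omega\setminus B)=\mu\bigl((\Omega\setminus A_1)\cap A_1\bigr)=\mu(\emptyset)=0$. (Your opening ``set $B=A_2$'' is a slip, but the computation you actually perform uses $B=A_1$, which is what the conclusion needs.)

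The first implication, however, contains a genuine error: you have the pairing backwards, and your attempted mid-paragraph repair ends by re-asserting the wrong pairing. The hypothesis $\mu(\Omega\setminus B)=0$ says that $\mu$ is concentrated on $B$, and $\lambda(B)=0$ says that $\lambda$ is concentrated on $\Omega\setminus B$; the correct choice is therefore $A_1=B$ (for $\mu$) and $A_2=\Omega\setminus B$ (for $\lambda$), which is exactly what the paper takes. Your choice $A_1=\Omega\setminus B$, $A_2=B$ fails outright: taking $\Omega'=\Omega$ in the required identity $\mu(\Omega')=\mu(\Omega'\cap A_1)$ gives $\mu(\Omega)=\mu(\Omega\setminus B)=0$, which is false whenever $\mu$ is nontrivial. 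The specific step that breaks is your claim ``$\mu(\Omega'\cap B)\leqslant\mu(B)=0$'': the hypothesis gives $\mu(\Omega\setminus B)=0$, not $\mu(B)=0$; in fact $\mu(B)=\mu(\Omega)$. You partially notice the problem when handling $\lambda$ --- you correctly derive $\lambda(\Omega')=\lambda\bigl(\Omega'\cap(\Omega\setminus B)\bigr)$, i.e. that $A_2$ must be $\Omega\setminus B$ --- but your summary sentence then reverts to ``the correct choice is $A_1=\Omega\setminus B$ for $\mu$ and $A_2=B$ for $\lambda$'', which is the same wrong assignment you started with. Once the pairing is fixed to $A_1=B$, $A_2=\Omega\setminus B$, the verification is the two lines you intended: $\mu(\Omega'\setminus A_1)\leqslant\mu(\Omega\setminus B)=0$ and $\lambda(\Omega'\setminus A_2)\leqslant\lambda(B)=0$, after which additivity and monotonicity yield both identities, exactly as in the paper.
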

\begin{proof}
\begin{enumerate}
  \item Assume that (\ref{alt def sing meas EvansGariepy}) holds. Define $A_1:=B$ and $A_2:=\Omega\setminus B$ (these sets are obviously disjoint). For all $\Omega'\in\mathcal{B}(\Omega)$ we have (using property (\ref{alt def sing meas EvansGariepy}))
    \begin{equation*}
    \mu(\Omega'\setminus A_1)\leqslant \mu(\Omega \setminus A_1)= \mu(\Omega\setminus B)=0,
    \end{equation*}
    and
    \begin{equation*}
    \lambda(\Omega'\setminus A_2)\leqslant \lambda(\Omega \setminus A_2)= \lambda\bigl(\Omega\setminus(\Omega\setminus B)\bigr)=\lambda(B)=0.
    \end{equation*}
    It follows that
    \begin{equation*}
    \mu(\Omega')=\mu(\Omega'\cap A_1)+\mu(\Omega'\setminus A_1)\leqslant \mu(\Omega' \cap A_1),
    \end{equation*}
    and
    \begin{equation*}
    \lambda(\Omega')= \lambda(\Omega' \cap A_2)+ \lambda(\Omega'\setminus A_2)\leqslant \lambda(\Omega'\cap A_2).
    \end{equation*}
    Since $\Omega'\cap A_1$ and $\Omega'\cap A_2$ are subsets of $\Omega'$, it is clear that
    \begin{equation*}
    \mu(\Omega')\geqslant \mu(\Omega' \cap A_1),
    \end{equation*}
    and
    \begin{equation*}
    \lambda(\Omega')\geqslant \lambda(\Omega'\cap A_2),
    \end{equation*}
    thus we have that
    \begin{equation*}
    \mu(\Omega')= \mu(\Omega' \cap A_1),
    \end{equation*}
    and
    \begin{equation*}
    \lambda(\Omega')= \lambda(\Omega'\cap A_2).
    \end{equation*}
  \item Assume that (\ref{alt def sing meas Rudin}) holds. Define $B:=A_1$, then
    \begin{equation*}
    \nonumber \mu(\Omega\setminus B)=\mu\bigl((\Omega\setminus B) \cap A_1\bigr)=\mu\bigl((\Omega\setminus A_1) \cap A_1\bigr)=\mu(\emptyset)=0,
    \end{equation*}
    and
    \begin{equation*}
    \nonumber \lambda(B)=\lambda(B\cap A_2)=\lambda(A_1\cap A_2)=\lambda(\emptyset)=0.
    \end{equation*}
\end{enumerate}
\end{proof}
In the following theorem we relate any positive, finite measure to absolutely continuous and singular measures.
\begin{theorem}[Lebesgue decomposition]\label{Thm Lebesgue decomp}
If $\mu$ and $\lambda$ are positive, finite measures defined on $\mathcal{B}(\Omega)$, then there exists a unique pair of positive, finite measures $\mu_{\text{ac}}$ and $\mu_{\text{s}}$ defined on $\mathcal{B}(\Omega)$, such that:
\begin{enumerate}[(i)]
  \item $\mu=\mu_{\text{ac}}+\mu_{\text{s}}$,
  \item $\mu_{\text{ac}}\ll\lambda$,
  \item $\mu_{\text{s}}\perp\lambda$.
\end{enumerate}
We call $\mu_{\text{ac}}$ the \textit{absolutely continuous} part and $\mu_{\text{s}}$ the \textit{singular} part of $\mu$ w.r.t. $\lambda$. The pair $(\mu_{\text{ac}}, \mu_{\text{s}})$ is called the \textit{Lebesgue decomposition} of $\mu$ w.r.t. $\lambda$.
\end{theorem}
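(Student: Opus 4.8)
The plan is to establish existence and uniqueness separately, and to do so without invoking the Radon--Nikodym Theorem (which in this exposition appears only afterwards). The guiding idea for existence is that the singular part of $\mu$ ought to be precisely the portion of $\mu$ that lives on $\lambda$-null sets, and this portion should be taken as large as possible. Concretely, I would consider the family $\mathcal{N} := \{N \in \mathcal{B}(\Omega) : \lambda(N) = 0\}$ of $\lambda$-null sets and set $s := \sup\{\mu(N) : N \in \mathcal{N}\}$. Since $\mu$ is finite, $s \leqslant \mu(\Omega) < \infty$, so the supremum is a real number and I may pick $N_k \in \mathcal{N}$ with $\mu(N_k) \to s$. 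Taking $B := \bigcup_{k} N_k$ produces a single $\lambda$-null set (a countable union of null sets) that is maximal in the sense that $\mu(B) = s$; this $B$ will carry the singular part.

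Having fixed $B$, I would define
\begin{equation*}
\mu_{\text{s}}(\Omega') := \mu(\Omega' \cap B), \qquad \mu_{\text{ac}}(\Omega') := \mu(\Omega' \setminus B) \qquad (\Omega' \in \mathcal{B}(\Omega)),
\end{equation*}
which are manifestly positive, finite measures summing to $\mu$, giving property (i). For property (iii), note that $\lambda(B) = 0$ and $\mu_{\text{s}}(\Omega \setminus B) = \mu(\emptyset) = 0$, so $\mu_{\text{s}} \perp \lambda$ follows directly from Definition \ref{def singular measure}. The only substantive point is property (ii): given $\lambda(\Omega') = 0$ I must show $\mu_{\text{ac}}(\Omega') = \mu(\Omega' \setminus B) = 0$. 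Here the maximality of $B$ does the work, since $B \cup (\Omega' \setminus B)$ is again $\lambda$-null and hence has $\mu$-measure at most $s$, while disjointness yields $\mu(B) + \mu(\Omega' \setminus B) \leqslant s = \mu(B)$; finiteness of $\mu(B)$ then forces $\mu(\Omega' \setminus B) = 0$.

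For uniqueness, suppose $\mu = \mu_{\text{ac}} + \mu_{\text{s}} = \mu_{\text{ac}}' + \mu_{\text{s}}'$ are two admissible decompositions. I would choose $\lambda$-null sets $B, B'$ witnessing $\mu_{\text{s}} \perp \lambda$ and $\mu_{\text{s}}' \perp \lambda$ and put $C := B \cup B'$, so that $\lambda(C) = 0$ while both singular parts are concentrated on $C$. On $\Omega' \cap C$ absolute continuity forces $\mu_{\text{ac}}(\Omega' \cap C) = \mu_{\text{ac}}'(\Omega' \cap C) = 0$, whence $\mu_{\text{s}}(\Omega' \cap C) = \mu(\Omega' \cap C) = \mu_{\text{s}}'(\Omega' \cap C)$; off $C$ both singular parts vanish. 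Adding the two contributions gives $\mu_{\text{s}} = \mu_{\text{s}}'$, and therefore $\mu_{\text{ac}} = \mu - \mu_{\text{s}} = \mu_{\text{ac}}'$.

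I expect the main obstacle to be the existence step, specifically the verification of absolute continuity of $\mu_{\text{ac}}$: it is easy to construct some $\lambda$-null $B$ and then lose track of why it must be large enough, whereas the argument hinges precisely on choosing $B$ to attain the supremum $s$ and on the finiteness of $\mu$, which alone licenses the cancellation $\mu(B) + \mu(\Omega' \setminus B) \leqslant \mu(B)$. An alternative route would be von Neumann's Hilbert-space argument on $L^2(\mu + \lambda)$, but that proves Radon--Nikodym simultaneously and relies on the Riesz representation theorem, so I prefer the self-contained extremal construction sketched above.
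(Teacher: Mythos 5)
Your proof is correct and is essentially the paper's own argument (Appendix \ref{Appendix Proof Lebesgue decomp}) in dual form: the paper minimizes $\mu$ over sets of full $\lambda$-measure and intersects a minimizing sequence, while you maximize $\mu$ over $\lambda$-null sets and take the union of a maximizing sequence; since $\mu(\Omega)<\infty$, the two extremal problems are equivalent under complementation, and in both cases the extremality is exactly what forces $\mu_{\text{ac}}\ll\lambda$. Your uniqueness step rests on the same device --- combining the two witness null sets into one --- though you carry it out by direct computation, whereas the paper forms the signed measure $\tilde{\mu}=\mu_{\text{ac}}^1-\mu_{\text{ac}}^2$ and shows that $\tilde{\mu}\ll\lambda$ together with $\tilde{\mu}\perp\lambda$ (via Lemma \ref{alternative def singular measures}) forces $\tilde{\mu}\equiv0$.
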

\begin{proof}
The proof of Theorem \ref{Thm Lebesgue decomp} is given in Appendix \ref{Appendix Proof Lebesgue decomp}.
\end{proof}

\begin{definition}[Discrete measures]\label{def discrete measure}
The measure $\mu$ is said to be a \textit{discrete measure} with respect to the measure $\lambda$ if there exists a countable set $A:=\{x_1,x_2,\ldots\}\subset\Omega$ such that
\begin{equation*}
\mu(\Omega\setminus A)= \lambda(A)=0.
\end{equation*}
\end{definition}

\begin{lemma}\label{lemma characterization discrete measure}
Let $\mu$ be a positive, finite measure on $\mathcal{B}(\Omega)$. Then the following two statements are equivalent:
\begin{enumerate}[(i)]
  \item $\mu$ is a discrete measure with respect to the Lebesgue measure $\lambda^d$.\label{statement discrete wrt Lebesgue}
  \item There is a countable set $\{x_1, x_2,\ldots\}\subset \Omega$ and a set of corresponding nonnegative coefficients $\{\alpha_1,\alpha_2,\ldots\}\subset \mathbb{R}$, such that $\mu = \sum_{i=1}^{\infty}\alpha_i \delta_{x_{i}}$. Here $\delta_{x_{i}}$ is the Dirac measure centered at $x_i$.\label{statement sum of diracs}
\end{enumerate}
\end{lemma}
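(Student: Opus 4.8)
The plan is to prove the two implications separately. A useful preliminary observation is that the condition $\lambda^d(A)=0$ appearing in statement (\ref{statement discrete wrt Lebesgue}) (via Definition \ref{def discrete measure}) is automatic: any countable set $A=\{x_1,x_2,\ldots\}$ is a countable union of singletons, each of Lebesgue measure zero, so $\lambda^d(A)=0$ by countable subadditivity. Hence the real content of statement (\ref{statement discrete wrt Lebesgue}) is that $\mu$ is concentrated on the countable set $A$, i.e. $\mu(\Omega\setminus A)=0$.

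For the direction (\ref{statement sum of diracs})$\Rightarrow$(\ref{statement discrete wrt Lebesgue}), I would set $A:=\{x_1,x_2,\ldots\}$ and simply evaluate. Since $x_i\in A$ for each $i$, every Dirac measure satisfies $\delta_{x_i}(\Omega\setminus A)=0$, so $\mu(\Omega\setminus A)=\sum_{i=1}^{\infty}\alpha_i\,\delta_{x_i}(\Omega\setminus A)=0$, while $\lambda^d(A)=0$ by the observation above. This establishes discreteness.

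For the converse (\ref{statement discrete wrt Lebesgue})$\Rightarrow$(\ref{statement sum of diracs}), I would start from a countable set $A=\{x_1,x_2,\ldots\}$ with $\mu(\Omega\setminus A)=0$, discarding repetitions so that the $x_i$ are distinct, and define the candidate coefficients $\alpha_i:=\mu(\{x_i\})$. These are nonnegative because $\mu$ is a positive measure. To show $\mu=\sum_{i=1}^{\infty}\alpha_i\delta_{x_i}$, I would test both measures on an arbitrary $\Omega'\in\mathcal{B}(\Omega)$. Splitting $\mu(\Omega')=\mu(\Omega'\cap A)+\mu(\Omega'\setminus A)$, the second term vanishes by monotonicity since $\Omega'\setminus A\subseteq\Omega\setminus A$. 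Writing $\Omega'\cap A$ as the disjoint (countable) union of those singletons $\{x_i\}$ with $x_i\in\Omega'$ and invoking countable additivity gives $\mu(\Omega'\cap A)=\sum_{i:\,x_i\in\Omega'}\alpha_i$, which is precisely $\bigl(\sum_{i=1}^{\infty}\alpha_i\delta_{x_i}\bigr)(\Omega')$. As $\Omega'$ was arbitrary, the two measures coincide.

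The steps are largely routine; the one point requiring care is the passage $\mu(\Omega'\cap A)=\sum_{i:\,x_i\in\Omega'}\alpha_i$, which relies on the singletons $\{x_i\}$ being Borel-measurable and pairwise disjoint so that countable additivity applies. I would also note for consistency that $\sum_{i=1}^{\infty}\alpha_i=\mu(A)=\mu(\Omega)<\infty$, reflecting that $\mu$ is finite; this is not needed for the equivalence but confirms that the series defining the measure converges.
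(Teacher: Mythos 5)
Your proposal is correct and follows essentially the same route as the paper's proof: for (i)$\Rightarrow$(ii) you reduce $\mu(\Omega')$ to $\mu(\Omega'\cap A)$, take $\alpha_i:=\mu(\{x_i\})$ and apply countable additivity over the disjoint singletons, and for (ii)$\Rightarrow$(i) you evaluate directly and note that a countable set is Lebesgue-null. The only differences are cosmetic (your explicit remark that $\lambda^d(A)=0$ is automatic, and the finiteness check $\sum_i\alpha_i=\mu(\Omega)<\infty$), which the paper leaves implicit.
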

\begin{proof}
\begin{enumerate}
  \item Assume that (\ref{statement discrete wrt Lebesgue}) holds. Let $A:=\{y_1,y_2,\ldots\}\subset\Omega$ be the collection of points such that $\mu(\Omega\setminus A)= \lambda(A)=0$. Without loss of generality, assume that all elements of $A$ are distinct. (In case not all elements of $A$ are distinct, just delete $y_j$ from $A$ if there is a $y_i\in A$ satisfying $i<j$ and $y_i=y_j$.)\\
      Since $\mu(\Omega\setminus A)=0$, we have for any $\Omega'\in\mathcal{B}(\Omega)$:
      \begin{equation*}
      \mu(\Omega'\cap A) \leqslant \mu(\Omega') = \mu(\Omega'\cap A)+ \mu(\Omega'\setminus A) \leqslant \mu(\Omega'\cap A)+ \mu(\Omega\setminus A)= \mu(\Omega'\cap A),
      \end{equation*}
      thus $\mu(\Omega') = \mu(\Omega'\cap A)$. For fixed $\Omega'$, let $\mathcal{J}$ be the index set, such that $i\in\mathcal{J}$ implies $y_i\in \Omega'\cap A$. Since $\Omega'\cap A = \bigcup_{i\in \mathcal{J}}\{y_i\}$ is a disjoint union, it follows that
      \begin{equation*}
      \mu(\Omega')=\mu(\Omega'\cap A)=\sum_{i\in\mathcal{J}}\mu(y_i).
      \end{equation*}
      If we define $\alpha_i:=\mu(y_i)\geqslant0$, and write $\mathbf{1}$ for the indicator function, then the above is equivalent to
      \begin{equation*}
      \mu(\Omega')=\sum_{i=1}^{\infty}\alpha_i\mathbf{1}_{y_i\in\Omega'}.
      \end{equation*}
      This is exactly the definition of the Dirac measure, so we can also write
      \begin{equation*}
      \mu(\Omega')=\sum_{i=1}^{\infty}\alpha_i\delta_{y_i}.
      \end{equation*}
  \item Assume that (\ref{statement sum of diracs}) holds, and let the set $\{x_1, x_2,\ldots\}$ be called $B$. It follows by definition of the Dirac measure that $\mu(\Omega\setminus B)=0$. $B$ is a countable collection of points, which (obviously) all have Lebesgue measure zero. Thus $\lambda^d(B)=\sum_{i=1}^{\infty}\lambda^d(x_i)=0$. We conclude that $\mu$ is discrete with respect to $\lambda^d$.
\end{enumerate}
\end{proof}

\begin{definition}[Singular continuous measures]\label{def singular cont measure}
The measure $\mu$ is said to be \textit{singular continuous} with respect to the measure $\lambda$ if $\mu(x)=0$ for all $x\in\Omega$, and there is a $B\in\mathcal{B}(\Omega)$ such that
\begin{equation*}
\mu(\Omega\setminus B)= \lambda(B)=0.
\end{equation*}
\end{definition}

\begin{remark}
We only consider measures defined on $\mathcal{B}(\Omega)$, i.e. we can only apply them to sets. If we write $\mu(x)$, we therefore actually mean $\mu(\{x\})$.
\end{remark}

\begin{theorem}[Decomposition of singular measures]\label{Thm decomp sing}
If $\mu_{\text{s}}$ is a positive, finite measure defined on $\mathcal{B}(\Omega)$, which is singular w.r.t. $\lambda$, then there exists a unique pair of positive, finite measures $\mu_{\text{d}}$ and $\mu_{\text{sc}}$ defined on $\mathcal{B}(\Omega)$, such that:
\begin{enumerate}[(i)]
  \item $\mu_{\text{s}}=\mu_{\text{d}}+\mu_{\text{sc}}$,
  \item $\mu_{\text{d}}$ is a discrete measure w.r.t. $\lambda$,
  \item $\mu_{\text{sc}}$ is singular continuous w.r.t. $\lambda$.
\end{enumerate}
We call $\mu_{\text{d}}$ the \textit{discrete} part and $\mu_{\text{sc}}$ the \textit{singular continuous} part of $\mu_{\text{s}}$ w.r.t. $\lambda$. The singular continuous part is also called the \textit{Cantor part} of the measure.
\end{theorem}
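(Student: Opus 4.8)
The plan is to build the decomposition explicitly from the atoms of $\mu_{\text{s}}$ and then argue uniqueness through the behaviour on singletons. Since $\mu_{\text{s}}$ is singular w.r.t. $\lambda$, I would first fix (by Definition \ref{def singular measure}) a set $B\in\mathcal{B}(\Omega)$ with $\mu_{\text{s}}(\Omega\setminus B)=\lambda(B)=0$, and then collect the atoms $A:=\{x\in\Omega:\mu_{\text{s}}(\{x\})>0\}$. The key preliminary observation is that $A$ is at most countable: because $\mu_{\text{s}}$ is finite, for each $n$ the set $\{x:\mu_{\text{s}}(\{x\})>1/n\}$ contains at most $n\,\mu_{\text{s}}(\Omega)$ points, and $A$ is the countable union of these finite sets. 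Moreover every atom must lie in $B$ (an atom in $\Omega\setminus B$ would give $\mu_{\text{s}}$ positive mass there), so $A\subseteq B$ and hence $\lambda(A)=0$.

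Next I would define the two parts as restrictions of $\mu_{\text{s}}$: for $\Omega'\in\mathcal{B}(\Omega)$ put $\mu_{\text{d}}(\Omega'):=\mu_{\text{s}}(\Omega'\cap A)$ and $\mu_{\text{sc}}(\Omega'):=\mu_{\text{s}}(\Omega'\setminus A)$. Both are positive, finite measures, and by construction $\mu_{\text{s}}=\mu_{\text{d}}+\mu_{\text{sc}}$, which is (i). For (ii), $\mu_{\text{d}}(\Omega\setminus A)=\mu_{\text{s}}(\emptyset)=0$ together with $\lambda(A)=0$ shows $\mu_{\text{d}}$ is discrete w.r.t. $\lambda$ in the sense of Definition \ref{def discrete measure}. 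For (iii) I would verify the two defining properties of Definition \ref{def singular cont measure}: that $\mu_{\text{sc}}(\{x\})=0$ for every $x$ (immediate, splitting into $x\in A$ and $x\notin A$ and using that points outside $A$ are not atoms), and that $\mu_{\text{sc}}$ is singular w.r.t. $\lambda$ with witnessing set $B\setminus A$, for which $\lambda(B\setminus A)\le\lambda(B)=0$ while $\mu_{\text{sc}}\bigl(\Omega\setminus(B\setminus A)\bigr)\le\mu_{\text{sc}}(\Omega\setminus B)+\mu_{\text{sc}}(A)=0$.

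For uniqueness, suppose $\mu_{\text{s}}=\mu_{\text{d}}+\mu_{\text{sc}}=\mu_{\text{d}}'+\mu_{\text{sc}}'$ are two admissible decompositions. Evaluating on a singleton and using that both singular continuous parts vanish on points gives $\mu_{\text{d}}(\{x\})=\mu_{\text{s}}(\{x\})=\mu_{\text{d}}'(\{x\})$ for every $x\in\Omega$. Since each discrete part is concentrated on a countable set on whose complement it vanishes, each equals the sum of its own point masses (this is exactly the representation established in Lemma \ref{lemma characterization discrete measure}); agreement on all singletons therefore forces $\mu_{\text{d}}=\mu_{\text{d}}'$ on every Borel set, whence $\mu_{\text{sc}}=\mu_{\text{s}}-\mu_{\text{d}}=\mu_{\text{sc}}'$.

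I expect the main obstacle to be the existence half, specifically pinning down that the atom set $A$ is countable and that the residual measure $\mu_{\text{sc}}$ genuinely carries no atoms while staying singular w.r.t. $\lambda$; once the restriction-to-$A$ construction is in place, the three verifications are short. The uniqueness argument is essentially the assertion that a measure concentrated on a countable set is determined by its values on singletons, so the only real work there is to invoke the discrete representation cleanly.
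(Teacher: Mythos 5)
Your proof is correct and follows essentially the same route as the paper's: existence is obtained by restricting $\mu_{\text{s}}$ to its (countable, $\lambda$-null) set of atoms, with countability coming from the same level-set counting argument, and uniqueness from the two facts that singular continuous parts vanish on every singleton while discrete parts are determined by their values on a countable concentration set. The only cosmetic differences are that the paper collects the atoms inside the set $A_1$ provided by Lemma \ref{alternative def singular measures} rather than inside the witnessing set $B$ of Definition \ref{def singular measure}, and that it phrases uniqueness via the difference measure $\mu_{\text{d}}^1-\mu_{\text{d}}^2$ instead of comparing singleton values and invoking Lemma \ref{lemma characterization discrete measure}.
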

\begin{proof}
The proof of Theorem \ref{Thm decomp sing} is given in Appendix \ref{Appendix Proof decomp sing}.
\end{proof}

\begin{corollary}[Refined Lebesgue decomposition]\label{Corollary refined Lebesgue decomposition}
Assume the hypotheses of Theorems \ref{Thm Lebesgue decomp} and \ref{Thm decomp sing}. Then for any positive, finite measure $\mu$, there are unique measures $\mu_{\text{ac}}$, $\mu_{\text{d}}$ and $\mu_{\text{sc}}$ such that
\begin{equation*}
\mu=\mu_{\text{ac}}+\mu_{\text{d}}+\mu_{\text{sc}},
\end{equation*}
where $\mu_{\text{ac}}\ll\mu$, $\mu_{\text{d}}$ is a discrete measure, and $\mu_{\text{sc}}$ is singular continuous w.r.t. $\lambda$.
\end{corollary}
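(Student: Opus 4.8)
The plan is to obtain the refined decomposition simply by composing the two decomposition theorems already established, and then to argue uniqueness by peeling off the pieces in the same order. First I would apply Theorem \ref{Thm Lebesgue decomp} to $\mu$ with respect to $\lambda$ (here $\lambda = \lambda^d$), which produces a unique pair $(\mu_{\text{ac}}, \mu_{\text{s}})$ of positive, finite measures satisfying $\mu = \mu_{\text{ac}} + \mu_{\text{s}}$, $\mu_{\text{ac}} \ll \lambda$, and $\mu_{\text{s}} \perp \lambda$. Next I would apply Theorem \ref{Thm decomp sing} to the singular measure $\mu_{\text{s}}$, obtaining a unique pair $(\mu_{\text{d}}, \mu_{\text{sc}})$ with $\mu_{\text{s}} = \mu_{\text{d}} + \mu_{\text{sc}}$, where $\mu_{\text{d}}$ is discrete and $\mu_{\text{sc}}$ is singular continuous w.r.t. $\lambda$. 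Substituting yields $\mu = \mu_{\text{ac}} + \mu_{\text{d}} + \mu_{\text{sc}}$, which settles existence immediately, each summand being of the required type.

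For uniqueness, suppose $\mu = \mu_{\text{ac}} + \mu_{\text{d}} + \mu_{\text{sc}} = \mu_{\text{ac}}' + \mu_{\text{d}}' + \mu_{\text{sc}}'$ are two decompositions of the prescribed form. The crucial observation, and the one point that requires genuine verification rather than mere bookkeeping, is that the sum of a discrete and a singular continuous measure is again singular with respect to $\lambda$. I would prove this by choosing a countable set $A$ witnessing discreteness of $\mu_{\text{d}}$ (so $\mu_{\text{d}}(\Omega \setminus A) = \lambda(A) = 0$) and a Borel set $B$ witnessing singular continuity of $\mu_{\text{sc}}$ (so $\mu_{\text{sc}}(\Omega \setminus B) = \lambda(B) = 0$), and then checking, via monotonicity and subadditivity, that $C := A \cup B$ satisfies $(\mu_{\text{d}} + \mu_{\text{sc}})(\Omega \setminus C) = \lambda(C) = 0$. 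Hence $\mu_{\text{d}} + \mu_{\text{sc}} \perp \lambda$, and likewise $\mu_{\text{d}}' + \mu_{\text{sc}}' \perp \lambda$.

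Given this, both $\bigl(\mu_{\text{ac}}, \mu_{\text{d}} + \mu_{\text{sc}}\bigr)$ and $\bigl(\mu_{\text{ac}}', \mu_{\text{d}}' + \mu_{\text{sc}}'\bigr)$ are Lebesgue decompositions of $\mu$ w.r.t. $\lambda$, so the uniqueness clause of Theorem \ref{Thm Lebesgue decomp} forces $\mu_{\text{ac}} = \mu_{\text{ac}}'$ and $\mu_{\text{d}} + \mu_{\text{sc}} = \mu_{\text{d}}' + \mu_{\text{sc}}'$. Applying the uniqueness clause of Theorem \ref{Thm decomp sing} to this common singular measure then separates the remaining pieces and gives $\mu_{\text{d}} = \mu_{\text{d}}'$ and $\mu_{\text{sc}} = \mu_{\text{sc}}'$, completing the argument. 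I expect the singularity of $\mu_{\text{d}} + \mu_{\text{sc}}$ to be the only delicate step, since it is what legitimizes treating $\mu_{\text{d}} + \mu_{\text{sc}}$ as the singular part and thereby unlocks both uniqueness clauses; everything else is a direct invocation of the two preceding theorems, applied in the correct order.
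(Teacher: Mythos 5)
Your proof is correct and takes essentially the same route the paper intends: the corollary is stated there without any proof, as an immediate consequence of Theorems \ref{Thm Lebesgue decomp} and \ref{Thm decomp sing}, and your existence argument is exactly that composition. The one step you add for uniqueness --- verifying that $\mu_{\text{d}}+\mu_{\text{sc}}\perp\lambda$ via the union of the two witnessing null sets, which is what licenses invoking both uniqueness clauses --- is a detail the paper leaves implicit but which is genuinely needed, so your write-up is if anything more complete than the original.
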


\begin{remark}
The statement $\mu(\Omega\setminus B)=\lambda(B)=0$ (cf. Definitions \ref{def singular measure}, \ref{def discrete measure} and \ref{def singular cont measure}), is equivalent to: $\lambda(B)=0$ and $\mu(B)=\mu(\Omega)$. This is due to the identity $\mu(\Omega)=\mu(\Omega\cap B)+\mu(\Omega\setminus B)=\mu(B)+\mu(\Omega\setminus B)$. If $B$ is such that $\mu(B)=\mu(\Omega)$, we call $B$ a set of \textit{full measure}.
\end{remark}

\section{Radon-Nikodym Theorem}
The Radon-Nikodym Theorem is of vital importance in this context, especially for the applicability of measure theory to mixture theory as arising in real-life situations mimicking the dynamics of crowds. The following theorem gives sufficient conditions for a measure to be expressed in terms of a density, with respect to another measure:

\begin{theorem}[Radon-Nikodym for finite measures]\label{Radon-Nikodym} Suppose $\mu$ and $\lambda$ are positive measures on a measurable space $\big(\Omega,\mathcal{B}(\Omega)\big)$ such that $0<\mu(\Omega)<\infty$, $0<\lambda(\Omega)<\infty$, and let $\mu$ be absolutely continuous with respect to $\lambda$. Then there exists a real, nonnegative, $\mathcal{B}(\Omega)$-measurable function $h$ on $\Omega$ such that
\begin{equation*}
\mu(\Omega')=\int_{\Omega'} h d\lambda, \hspace{1 cm} \text{for all }\Omega'\in \mathcal{B}(\Omega).
\end{equation*}
\end{theorem}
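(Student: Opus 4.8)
The plan is to use von Neumann's Hilbert-space argument, which sidesteps any Hahn-decomposition bookkeeping and instead exploits the Riesz representation theorem. First I would form the auxiliary measure $\nu := \mu + \lambda$, which is again positive and finite, and work in the real Hilbert space $L^2(\nu)$. The linear functional $T(f) := \int_\Omega f \, d\nu$\,-\,no, rather $T(f) := \int_\Omega f \, d\mu$, is bounded on $L^2(\nu)$: since $\mu \leqslant \nu$, the Cauchy--Schwarz inequality gives $|T(f)| \leqslant \int_\Omega |f| \, d\nu \leqslant \sqrt{\nu(\Omega)}\,\|f\|_{L^2(\nu)}$. By the Riesz representation theorem there is a $g \in L^2(\nu)$ with $\int_\Omega f \, d\mu = \int_\Omega f g \, d\nu$ for every $f \in L^2(\nu)$; testing against indicators (which lie in $L^2(\nu)$ because $\nu$ is finite) yields $\mu(\Omega') = \int_{\Omega'} g \, d\nu$ for all $\Omega' \in \mathcal{B}(\Omega)$.

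The next step is to pin down the range of $g$. Testing the identity $\mu(\Omega') = \int_{\Omega'} g \, d\nu$ against the sets $\{g < 0\}$ and $\{g > 1\}$, and using $0 \leqslant \mu(\Omega') \leqslant \nu(\Omega')$, forces $0 \leqslant g \leqslant 1$ to hold $\nu$-almost everywhere. This is exactly the point where absolute continuity must enter: on the set $E := \{g = 1\}$ the identity reads $\mu(E) = \nu(E) = \mu(E) + \lambda(E)$, whence $\lambda(E) = 0$, and then $\mu \ll \lambda$ gives $\mu(E) = 0$, so $\nu(E) = 0$. Thus $0 \leqslant g < 1$ holds $\nu$-almost everywhere, which is what makes the final summation legitimate.

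Rewriting $\int_\Omega f \, d\mu = \int_\Omega f g \, d\nu$ as $\int_\Omega f(1-g)\,d\mu = \int_\Omega fg\,d\lambda$ and choosing $f = \mathbf{1}_{\Omega'}\sum_{k=0}^{n} g^k$ for a fixed $\Omega'$, the factor $(1-g)$ telescopes the geometric sum and leaves $\int_{\Omega'}(1 - g^{n+1})\,d\mu = \int_{\Omega'} \sum_{k=1}^{n+1} g^k \, d\lambda$. Letting $n \to \infty$ and invoking dominated convergence on the left (the integrands are bounded by $1$, and $g^{n+1}\to 0$ since $g<1$) together with monotone convergence on the right (the partial sums increase to $g/(1-g)$) produces $\mu(\Omega') = \int_{\Omega'} \tfrac{g}{1-g}\,d\lambda$. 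Setting $h := g/(1-g)$, which is nonnegative and $\mathcal{B}(\Omega)$-measurable, completes the construction.

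I expect the delicate part to be neither the Riesz step nor the limit passage, but the verification that $g < 1$ holds $\nu$-almost everywhere: this is the unique place where the hypothesis $\mu \ll \lambda$ is indispensable, and dropping it would leave a genuine mass of $g$ concentrated at the value $1$, corresponding precisely to the singular part of $\mu$. A secondary technical point is ensuring the geometric-series manipulation is valid $\nu$-almost everywhere rather than pointwise; since every exceptional set identified above is $\nu$-null, it contributes nothing to any of the integrals, and the resulting $h$ is determined only $\lambda$-almost everywhere, which is all the statement requires.
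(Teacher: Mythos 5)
Your proof is correct: it is the classical von Neumann argument, and each step is sound---the boundedness of $T$ on $L^2(\nu)$ via Cauchy--Schwarz, the Riesz representation, the localization $0\leqslant g\leqslant 1$ holding $\nu$-almost everywhere, the use of $\mu\ll\lambda$ to annihilate $\{g=1\}$, and the geometric-series passage with dominated/monotone convergence to obtain $h=g/(1-g)$. Only cosmetic points remain: pick a Borel representative of $g$ satisfying $0\leqslant g\leqslant 1$ everywhere and set $h:=0$ on the $\lambda$-null set $\{g=1\}$, so that $h$ is real-valued and $\mathcal{B}(\Omega)$-measurable on all of $\Omega$; note also that the hypothesis $\mu(\Omega)>0$ is never actually used. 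This is, however, a genuinely different route from the paper's: the thesis does not prove the theorem at all, but delegates it to the elementary treatment of Bradley \cite{Bradley}, which avoids Hilbert-space machinery entirely and proceeds by elementary measure-theoretic arguments. Your approach buys brevity and self-containedness at the price of invoking the Riesz representation theorem in $L^2(\mu+\lambda)$; the cited elementary proof buys accessibility for readers without functional-analysis background, which is presumably why the thesis prefers it. A further bonus of your method, which you implicitly notice in your closing remark, is that if the hypothesis $\mu\ll\lambda$ is dropped, the set $\{g=1\}$ carries exactly the singular part of $\mu$, so the same computation yields the Lebesgue decomposition --- the paper's Theorem \ref{Thm Lebesgue decomp} --- simultaneously; the paper instead proves that result separately, in Appendix \ref{Appendix Proof Lebesgue decomp}, by an infimum argument following Evans and Gariepy.
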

\begin{proof}
A comprehensible, well-structured proof of Theorem \ref{Radon-Nikodym} can be found in \cite{Bradley}.
\end{proof}

\begin{remark}
The density $h$ is often called \textit{Radon-Nikodym derivative} and is denoted by
\begin{equation}
h:=\frac{d\mu}{d\lambda}.
\end{equation}
\end{remark}

\begin{lemma}
Assume the hypothesis of Theorem \ref{Radon-Nikodym}. Then the Radon-Nikodym derivative is unique in the following sense: if both $h_1$ and $h_2$ satisfy $\mu(\Omega')=\int_{\Omega'} h_k d\lambda$, for all $\Omega'\in \mathcal{B}(\Omega)$, where $k\in\{1,2\}$, then $h_1=h_2$ almost everywhere (w.r.t. $\lambda$) in $\Omega$.
\end{lemma}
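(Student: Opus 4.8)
The plan is to exploit the hypothesis that the two candidate densities produce identical integrals over \emph{every} Borel set. Subtracting the defining relations for $h_1$ and $h_2$ yields
\begin{equation*}
\int_{\Omega'} (h_1 - h_2)\, d\lambda = \mu(\Omega') - \mu(\Omega') = 0 \qquad \text{for all } \Omega' \in \mathcal{B}(\Omega).
\end{equation*}
Because the densities are real-valued (hence finite) and $\mu(\Omega)<\infty$, every integral involved is finite, so this subtraction never produces an indeterminate $\infty-\infty$. Writing $g := h_1 - h_2$, the goal is to deduce $g = 0$ $\lambda$-almost everywhere from the vanishing of its integral over each measurable set.

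First I would isolate the set on which $g$ is strictly positive, $P := \{x \in \Omega : h_1(x) > h_2(x)\}$, which is $\mathcal{B}(\Omega)$-measurable since $h_1$ and $h_2$ are. The obstacle here is that a vanishing integral of $g$ over a set does not by itself force $g$ to vanish, because positive and negative contributions could cancel; the remedy is to test only against sets on which $g$ has a fixed sign and is bounded away from zero. Accordingly I would decompose $P = \bigcup_{n=1}^{\infty} P_n$ with $P_n := \{x : h_1(x) - h_2(x) \geqslant 1/n\}$, noting that $P_n \uparrow P$.

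On each $P_n$ we have $g \geqslant 1/n$, so a Chebyshev-type estimate gives
\begin{equation*}
\tfrac{1}{n}\,\lambda(P_n) \leqslant \int_{P_n} g\, d\lambda = 0,
\end{equation*}
forcing $\lambda(P_n)=0$. By continuity from below, $\lambda(P) = \lim_{n\to\infty} \lambda(P_n) = 0$. Applying the identical argument to $-g$ (equivalently, interchanging the roles of $h_1$ and $h_2$) shows that the set $N := \{x : h_1(x) < h_2(x)\}$ likewise has $\lambda$-measure zero. Finally, since $\{x : h_1(x) \neq h_2(x)\} = P \cup N$, subadditivity gives $\lambda(\{h_1 \neq h_2\}) = 0$, i.e. $h_1 = h_2$ $\lambda$-almost everywhere, as claimed. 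The only point requiring care throughout is the finiteness that legitimizes the subtraction and keeps all integrals well defined; this is guaranteed by the standing hypotheses $\mu(\Omega)<\infty$ and the real-valuedness of the densities.
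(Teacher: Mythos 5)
Your proof is correct and takes essentially the same route as the paper: subtracting the two defining identities to get $\int_{\Omega'}(h_1-h_2)\,d\lambda=0$ for every $\Omega'\in\mathcal{B}(\Omega)$, and concluding $h_1=h_2$ $\lambda$-a.e. The only difference is that the paper asserts this final implication without justification, whereas you supply the standard supporting argument (the sets $P_n=\{h_1-h_2\geqslant 1/n\}$, the Chebyshev estimate $\tfrac{1}{n}\lambda(P_n)\leqslant\int_{P_n}(h_1-h_2)\,d\lambda=0$, and continuity from below), which is a welcome bit of extra rigor.
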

\begin{proof}
This statement can easily be verified. For any $\Omega'\in \mathcal{B}(\Omega)$ we have that
\begin{equation*}
\int_{\Omega'} (h_1-h_2) d\lambda=\int_{\Omega'} h_1 d\lambda-\int_{\Omega'} h_2 d\lambda=\mu(\Omega')-\mu(\Omega')=0.
\end{equation*}
Thus, $h_1-h_2=0$ almost everywhere w.r.t. $\lambda$ in $\Omega$, i.e. $h_1=h_2$ $\lambda$-a.e. in $\Omega$.
\end{proof}
\begin{remark}
Theorem \ref{Radon-Nikodym} also holds for more general situations, for instance, including the case of signed $\sigma$-finite\footnote{The measure $\lambda$ is called $\sigma$-finite if there exists a sequence of sets $\{E_k\}_{k=1}^{\infty}\subset\mathcal{B}(\Omega)$ with $\Omega=\bigcup_{k=1}^{\infty}E_k$ and $|\lambda(E_k)|<\infty$ (for all $k\in\mathbb{N}$); see \protect\cite{Elstrodt}, p.~269.} measures (see \cite{Elstrodt}, e.g.).
\end{remark}
\section{Properties of Radon-Nikodym derivatives}
In the following lemma we state a number of useful properties of Radon-Nikodym derivatives:
\begin{lemma}[Basic properties]\label{Properties RN derivatives}
Assume the hypothesis of Theorem \ref{Radon-Nikodym} on the measures $\nu$, $\mu$, $\lambda$, $\nu_1$, $\nu_2$, $\mu_1$ and $\mu_2$. The Radon-Nikodym derivatives satisfy the following general properties:
\begin{enumerate}
  \item If $\nu\ll\mu\ll\lambda$, then $\dfrac{d\nu}{d\lambda}=\dfrac{d\nu}{d\mu}\dfrac{d\mu}{d\lambda}$.\label{Property three measures}\\
  \item If $\mu\ll\lambda$ and $g\in L^1_{\mu}(\Omega)$, then $\int_{\Omega'}g d\mu = \int_{\Omega'}g \dfrac{d \mu}{d\lambda}d\lambda$ for all $\Omega'\in\mathcal{B}(\Omega)$.\label{Property integration of g}\\
  \item If $\mu\ll\lambda$ and $\lambda\ll\mu$, then $\dfrac{d\mu}{d\lambda}=\Bigl(\dfrac{d\lambda}{d\mu}\Bigr)^{-1}$.\label{Property inverse}\\
  \item If $(\Omega_i,\mathcal{B}(\Omega_i), \mu_i)$, $i\in \{1,2\}$, are two measure spaces with $\nu_i\ll\mu_i$, $i\in \{1,2\}$, then $\nu_1\otimes\nu_2\ll\mu_1\otimes\mu_2$ and $\dfrac{d(\nu_1\otimes\nu_2)}{d(\mu_1\otimes\mu_2)}=\dfrac{d\nu_1}{d\mu_1}\dfrac{d\nu_2}{d\mu_2}$.\label{Property Cartesian product}
\end{enumerate}
\end{lemma}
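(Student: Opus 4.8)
The plan is to treat property \ref{Property integration of g} as the foundation and to derive properties \ref{Property three measures} and \ref{Property inverse} from it by combining the defining relation of the Radon--Nikodym derivative with the uniqueness lemma following Theorem \ref{Radon-Nikodym}; property \ref{Property Cartesian product} I would handle separately via Tonelli's theorem and a $\pi$--$\lambda$ argument.

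First I would prove property \ref{Property integration of g} by the standard approximation machine. For $g=\mathbf{1}_A$ with $A\in\mathcal{B}(\Omega)$ both sides equal $\mu(\Omega'\cap A)$: the left by definition of the integral of an indicator, the right because $\int_{\Omega'}\mathbf{1}_A\,\frac{d\mu}{d\lambda}\,d\lambda=\int_{\Omega'\cap A}\frac{d\mu}{d\lambda}\,d\lambda=\mu(\Omega'\cap A)$ by the defining property of $\frac{d\mu}{d\lambda}$ in Theorem \ref{Radon-Nikodym}. Linearity extends this to nonnegative simple functions, the Monotone Convergence Theorem lifts it to arbitrary nonnegative measurable $g$ (here finiteness of $\mu$ guarantees $\frac{d\mu}{d\lambda}\in L^1_\lambda$, since $\int_\Omega \frac{d\mu}{d\lambda}\,d\lambda=\mu(\Omega)<\infty$, and applying the identity to $|g|$ shows $g\,\frac{d\mu}{d\lambda}\in L^1_\lambda$ whenever $g\in L^1_\mu$), and splitting $g=g^+-g^-$ finishes the general case.

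Given property \ref{Property integration of g}, property \ref{Property three measures} is almost immediate. Transitivity of absolute continuity gives $\nu\ll\lambda$, so all three derivatives exist. For every $\Omega'\in\mathcal{B}(\Omega)$,
\begin{equation*}
\nu(\Omega')=\int_{\Omega'}\frac{d\nu}{d\mu}\,d\mu=\int_{\Omega'}\frac{d\nu}{d\mu}\,\frac{d\mu}{d\lambda}\,d\lambda,
\end{equation*}
where the second equality is property \ref{Property integration of g} applied to $g=\frac{d\nu}{d\mu}\in L^1_\mu$. Since also $\nu(\Omega')=\int_{\Omega'}\frac{d\nu}{d\lambda}\,d\lambda$, the two integrands agree $\lambda$-a.e. by the uniqueness lemma. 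Property \ref{Property inverse} then follows by specialization: the chain $\lambda\ll\mu\ll\lambda$ together with property \ref{Property three measures} gives $\frac{d\lambda}{d\mu}\,\frac{d\mu}{d\lambda}=\frac{d\lambda}{d\lambda}=1$ $\lambda$-a.e.; since the product is a.e. equal to $1$, neither factor can vanish on a set of positive measure, so $\frac{d\mu}{d\lambda}=\bigl(\frac{d\lambda}{d\mu}\bigr)^{-1}$ $\lambda$-a.e.

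For property \ref{Property Cartesian product} I would guess the density $H(x_1,x_2):=\frac{d\nu_1}{d\mu_1}(x_1)\,\frac{d\nu_2}{d\mu_2}(x_2)$ on $\Omega_1\times\Omega_2$ and verify it first on rectangles. For a measurable rectangle $A_1\times A_2$, Tonelli's theorem for nonnegative integrands gives
\begin{equation*}
\int_{A_1\times A_2}H\,d(\mu_1\otimes\mu_2)=\Bigl(\int_{A_1}\frac{d\nu_1}{d\mu_1}\,d\mu_1\Bigr)\Bigl(\int_{A_2}\frac{d\nu_2}{d\mu_2}\,d\mu_2\Bigr)=\nu_1(A_1)\,\nu_2(A_2)=(\nu_1\otimes\nu_2)(A_1\times A_2).
\end{equation*}
The two finite measures $E\mapsto(\nu_1\otimes\nu_2)(E)$ and $E\mapsto\int_E H\,d(\mu_1\otimes\mu_2)$ thus agree on the $\pi$-system of measurable rectangles, which generates the product $\sigma$-algebra; a Dynkin ($\pi$--$\lambda$) argument extends the agreement to all of $\mathcal{B}(\Omega_1)\otimes\mathcal{B}(\Omega_2)$. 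This simultaneously yields $\nu_1\otimes\nu_2\ll\mu_1\otimes\mu_2$ (a $(\mu_1\otimes\mu_2)$-null set is sent to $0$ by the integral, hence by $\nu_1\otimes\nu_2$) and, via the uniqueness lemma, identifies $H$ as the claimed Radon--Nikodym derivative. The main obstacle is precisely this last property: one must justify the interchange of integration by Tonelli and then promote equality on rectangles to equality on the full product $\sigma$-algebra, which is where essentially all the measure-theoretic care lies; by contrast, properties \ref{Property three measures}--\ref{Property inverse} are formal consequences of property \ref{Property integration of g} and uniqueness.
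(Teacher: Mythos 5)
Your proof is correct, but it is organized quite differently from the paper's (Appendix~\ref{Appendix Proof properpties RN deriv}). The first difference is the direction of dependency between Parts~\ref{Property three measures} and~\ref{Property integration of g}: the paper proves the chain rule (Part~\ref{Property three measures}) from scratch by the indicator $\rightarrow$ simple function $\rightarrow$ monotone-convergence argument (following Halmos), and then deduces Part~\ref{Property integration of g} from it by splitting $g=g^+-g^-$, introducing the auxiliary measures $\gamma^{\pm}(\Omega'):=\int_{\Omega'}g^{\pm}d\mu$, and applying the Radon--Nikodym theorem plus uniqueness to identify $d\gamma^{\pm}/d\mu=g^{\pm}$. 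You run the same approximation machine inside Part~\ref{Property integration of g} instead, and then Part~\ref{Property three measures} becomes a one-line corollary with $g=d\nu/d\mu$; this is the more economical ordering, since the paper's reduction costs an extra invocation of the Radon--Nikodym theorem while yours costs nothing, though both rest on the same core computation for simple functions. Part~\ref{Property inverse} is handled identically in both. The second difference is in Part~\ref{Property Cartesian product}: the paper never restricts to rectangles, proving absolute continuity by a sections argument (from $(\mu_1\otimes\mu_2)(\Omega')=0$ one gets $\mu_2(\Omega'_{x_1})=0$ for $\mu_1$-a.e.\ $x_1$, hence $\nu_2(\Omega'_{x_1})=0$ for $\nu_1$-a.e.\ $x_1$) and then computing $(\nu_1\otimes\nu_2)(\Omega')$ for an \emph{arbitrary} measurable $\Omega'$ through the iterated-integral definition of the product measure and the identity $\mathbf{1}_{\Omega'_1}(x_1)\,\mathbf{1}_{\Omega'_{x_1}}(x_2)=\mathbf{1}_{\Omega'}(x_1,x_2)$, concluding by uniqueness. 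You verify the candidate density only on measurable rectangles, where Tonelli factorizes the integral trivially, and promote the agreement of the two finite measures to the full product $\sigma$-algebra by a $\pi$--$\lambda$ argument, with absolute continuity falling out as a byproduct. Your route outsources the extension step to Dynkin's theorem and keeps the computation minimal; the paper's stays self-contained within the Halmos-style Fubini toolkit at the price of heavier bookkeeping with sections. Both are sound, and both implicitly use the same standing facts (finiteness of the measures, measurability of sections respectively of the product density).
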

\begin{proof}The proof of Lemma \ref{Properties RN derivatives} is given in Appendix \ref{Appendix Proof properpties RN deriv}.
\end{proof}

\section{Mass measure $\mu_m$}\label{section Mass measure}
Here, we already give an indication of the particular measures we intend to use in the rest of this thesis. Let $\Omega\subset\mathbb{R}^d$ be a domain (read: object, body) with mass. For physically relevant situations, we consider $d\in\{1,2,3\}$. Let $\mu_m(\Omega')$ be defined as the \textit{mass} contained in $\Omega'\subset\Omega$. Note that we use the concept of a mass measure very much in the spirit of \cite{Boehm}.

\begin{remark}
As a rule, whenever we write $\Omega'\subset\Omega$, we actually mean that $\Omega'$ is such that $\Omega'\in\mathcal{B}(\Omega)$. We assume $\mu_m$ to be defined on all elements of $\mathcal{B}(\Omega)$.
\end{remark}
In Sections \ref{section micro mass measure} and \ref{section macro mass measure} we consider two specific interpretations of this mass measure depending on the localization of the information we wish to capture.
\subsection{Microscopic mass measure $m_m$}\label{section micro mass measure}
Suppose that $\Omega$ contains a collection of $N$ point masses (each of them of mass 1), and denote their positions by $\{p_k\}_{k=1}^N\subset\Omega$, for $N\in\mathbb{N}$. We want the mass measure $m_m$ to be a counting measure with respect to these point masses, i.e. for all $\Omega'\in\mathcal{B}(\Omega)$
\begin{equation}\label{m counting measure}
m_m(\Omega')=\#\{p_k\in \Omega'\}.
\end{equation}
By (\ref{m counting measure}) we mean that $m_m$ counts the number of individuals located in $\Omega'$ (with corresponding, known positions $p_k\in\Omega'$). This can be achieved by representing $m_m$ as the sum of Dirac masses, with their singularities located at the points $p_k$, $k\in\{1,2,\ldots,N\}$:
\begin{equation}\label{m sum of dirac}
m_m=\sum_{k=1}^{N} \delta_{p_k}.
\end{equation}
If $m_m$ is defined as in (\ref{m counting measure}) or (\ref{m sum of dirac}), we call it a \textit{microscopic} mass measure. Note that such measure satisfies the conditions in Definition \ref{def discrete measure} and is thus a discrete measure. The characteristics of such measure are illustrated by Figure \ref{figure micro mass measure}.

\begin{figure}[ht]
\centering
\begin{tikzpicture}
    \draw[thick] (0,0) rectangle (8,6);
    \node at (4,6.4){\Large{$\Omega$}};
    \draw[thick] (3.5,3) ellipse (40pt and 25pt);
    \node at (3.5,4.2){\Large{$\Omega'$}};
    \filldraw [gray] (6.5,1) circle (2pt) node[black,below] {\small{$p_1$}};
    \filldraw [gray] (3,3) circle (2pt) node[black,below] {\small{$p_2$}};
    \filldraw [gray] (6,4) circle (2pt) node[black,below] {\small{$p_3$}};
    \filldraw [gray] (2,5) circle (2pt) node[black,below] {\small{$p_4$}};
    \filldraw [gray] (1,2.5) circle (2pt) node[black,below] {\small{$p_5$}};
    \filldraw [gray] (4,1.5) circle (2pt) node[black,below] {\small{$p_6$}};
    \filldraw [gray] (4,5.5) circle (2pt) node[black,below] {\small{$p_7$}};
    \filldraw [gray] (4,2.8) circle (2pt) node[black,below] {\small{$p_8$}};
\end{tikzpicture}
\caption{A microscopic mass measure counts the number of points $p_k$ that are contained in $\Omega'$, a subset of $\Omega$. In this case $m_m(\Omega')=2$ since $p_2$ and $p_8$ lie in $\Omega'$.}\label{figure micro mass measure}
\end{figure}
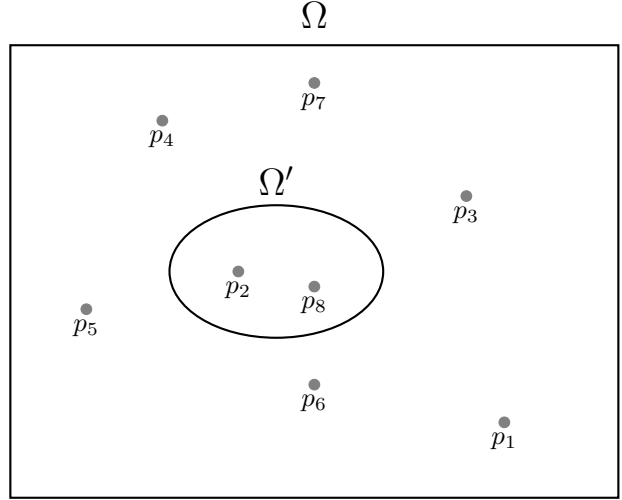

\subsection{Macroscopic mass measure $M_m$}\label{section macro mass measure}
Let us now regard a different mass measure $M_m$. Assume that the following postulate applies to $M_m$:
\begin{postulate}[Properties of $M_m$]\label{Postulate M_m}
\begin{enumerate}
  \item $M_m\geqslant 0$.\label{Postulate M_m 1}\\
  \item $M_m$ is $\sigma$-additive.\label{Postulate M_m 2}\\
  \item $M_m$ is finite.\label{Postulate M_m 3}\\
  \item $M_m\ll\lambda^d$ (with $\lambda^d$ the Lebesgue measure in $\mathbb{R}^d$).\label{Postulate M_m 4}
\end{enumerate}
\end{postulate}
By Parts \ref{Postulate M_m 1}, \ref{Postulate M_m 2} and \ref{Postulate M_m 3} of Postulate \ref{Postulate M_m}, we have that $M_m$ is a positive, finite measure on $\Omega$, whereas Part \ref{Postulate M_m 4} implies that there is no mass present in a set that has no volume (w.r.t. $\lambda^d$). We refer to a mass measure satisfying Postulate \ref{Postulate M_m} as a \textit{macroscopic} mass measure.
Now Theorem \ref{Radon-Nikodym} guarantees the existence of a real, nonnegative density $\rho\in L_{\lambda^d}^1(\Omega)$ such that
\begin{equation*}
M_m(\Omega')=\int_{\Omega'} \rho(x)d\lambda^d(x), \hspace{1 cm} \text{for all } \Omega'\in\mathcal{B}(\Omega).
\end{equation*}
An illustration of such measure is given in Figure \ref{figure macro mass measure}.
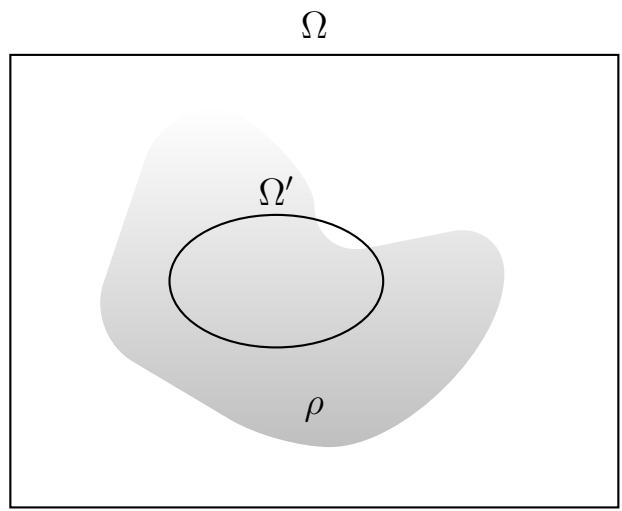
\begin{figure}[ht]
\centering
\begin{tikzpicture}
    \shade[top color=white, bottom color=gray!50] (1,2.3) [rounded corners=20pt] -- (2,5.3) arc(90:0:2cm)-- (6.5,3.8) arc(0:-90:3cm) -- cycle;
    \node at (4,1.3){\Large{$\rho$}};
    \draw[thick] (0,0) rectangle (8,6);
    \node at (4,6.4){\Large{$\Omega$}};
    \draw[thick] (3.5,3) ellipse (40pt and 25pt);
    \node at (3.5,4.2){\Large{$\Omega'$}};
\end{tikzpicture}
\caption{A macroscopic mass measure is used to obtain the mass in $\Omega'\subset\Omega$. $M_m(\Omega')$ depends on the density $\rho$ of the area in $\Omega'$. The density is here indicated by grey shading.}\label{figure macro mass measure}
\end{figure}

\newpage
\chapter{Mixture theory and thermodynamics}\label{section mixture theory and thermodynamics}
In this section, we present some ideas from \cite{Bowen, Muller, FonsvdVen} regarding the theory of mixtures. We cast their description in a measure-theoretical framework. Unlike \cite{Bowen, Muller}, here we mainly consider the Eulerian point of view. In Sections \ref{section description mixture}--\ref{section mixture entropy} we describe the mixture as a continuum; this is the classical and most common way of doing so (cf. \cite{Bowen, Muller} e.g.). In Section \ref{section generalization mixture}, we extend the ideas of the preceding sections to more general mass measures.

\section{Description of a mixture}\label{section description mixture}
At this stage, a mixture is defined as a continuum consisting of a certain number of constituents (also called: components). Constitutive relations typically differ from component to component.\\
\\
Let $\Omega\subset\mathbb{R}^d$ be the domain in which the mixture is located, and consider the measurable space $\bigl(\Omega,\mathcal{B}(\Omega)\bigr)$, where $\mathcal{B}(\Omega)$ is the $\sigma$-algebra of Borel subsets of $\Omega$. Suppose the mixture consists of $\nu$ constituents, with index $\alpha\in\{1,2,\ldots,\nu\}$. For each constituent $\alpha$ we define its volume and mass present in $\Omega'\in\mathcal{B}(\Omega)$ at time $t\in(0,T)$ by $\tau^{\alpha}(t, \Omega')$ and $\mu^{\alpha}(t, \Omega')$, respectively. Note that in fact $t$ is the time variable and can be understood as a parameter. The objects $\tau^{\alpha}(t, \cdot)$ and $\mu^{\alpha}(t, \cdot)$ on $\mathcal{B}(\Omega)$ are for each $t\in(0,T)$ fixed measures.\\
We assume that the following postulate applies:
\begin{postulate}[Properties of $\tau^{\alpha}$ and $\mu^{\alpha}$]\label{Postulate tau^alpha mu^alpha}
For all $\alpha\in\{1,2,\ldots,\nu\}$ and all $t\in(0,T)$ we postulate:
\begin{enumerate}
  \item $\tau^{\alpha}\geqslant 0$.\label{Postulate tau^alpha mu^alpha 1}\\
  \item $\tau^{\alpha}$ is $\sigma$-additive.\\
  \item $\tau^{\alpha}$ is finite.\\
  \item $\tau^{\alpha}\ll\lambda^d$ (with $\lambda^d$ the Lebesgue measure in $\mathbb{R}^d$).\label{Postulate tau^alpha mu^alpha 2}\\
  \item $\mu^{\alpha}\geqslant 0$.\label{Postulate tau^alpha mu^alpha 3}\\
  \item $\mu^{\alpha}$ is $\sigma$-additive.\\
  \item $\mu^{\alpha}$ is finite.\\
  \item $\mu^{\alpha}\ll\tau^{\alpha}$.\label{Postulate tau^alpha mu^alpha 4}
\end{enumerate}
\end{postulate}
Note that the concepts of Section \ref{section macro mass measure} are incorporated in this postulate.
\begin{remark}
It follows in a straightforward way from Parts \ref{Postulate tau^alpha mu^alpha 2} and \ref{Postulate tau^alpha mu^alpha 4} of Postulate \ref{Postulate tau^alpha mu^alpha} that $\mu^{\alpha}\ll\lambda^d$ for all $\alpha\in\{1,2,\ldots,\nu\}$ and all $t\in(0,T)$. The absolute continuity of both $\tau^{\alpha}$ and $\mu^{\alpha}$ complies with the fact that we define mixtures at the continuum level.
\end{remark}

\begin{remark}
We characterize the presence of constituent $\alpha$ at a certain $x$ by the presence of mass \textit{and} volume here. Mathematically, this means that if at time $t\in(0,T)$ a subset $\Omega'\in\mathcal{B}(\Omega)$ contains some fraction of constituent $\alpha$, then both $\mu^{\alpha}(t,\Omega')>0$ and $\tau^{\alpha}(t,\Omega')>0$ must hold. On the other hand, if a constituent $\alpha$ is not present in $\Omega'$, then $\mu^{\alpha}(t,\Omega')$ and $\tau^{\alpha}(t,\Omega')$ are both zero. Following this intuitive characterization, we assume $\tau^{\alpha}\ll\mu^{\alpha}$ for all $\alpha\in\{1,2,\ldots,\nu\}$ and for all $t\in(0,T)$.
\end{remark}
Let $\tau$ and $\mu$ be the (time-dependent) total volume and mass measures, given by
\begin{eqnarray}
\tau(t, \Omega')&:=&\sum_{\alpha=1}^{\nu}\tau^{\alpha}(t, \Omega'),\label{def total volume mixture}\\
\mu(t, \Omega')&:=&\sum_{\alpha=1}^{\nu}\mu^{\alpha}(t, \Omega'),\label{def total mass mixture}
\end{eqnarray}
for all $\Omega'\in\mathcal{B}(\Omega)$ and all $t\in(0,T)$.\\
\\
By the definitions of $\tau$ and $\mu$ given in (\ref{def total volume mixture}) and (\ref{def total mass mixture}) and as a consequence of Postulate \ref{Postulate tau^alpha mu^alpha}, we have $\mu\ll\tau\ll\lambda^d$. Note that since we also assumed $\tau^{\alpha}\ll\mu^{\alpha}$ for each $\alpha$, it follows by definition that also $\tau\ll\mu$ holds.\\
\\
For any $t\in(0,T)$, the Radon-Nikodym Theorem provides the existence of the unique nonnegative densities (Radon-Nikodym derivatives) $\Theta(t,\cdot)$, $\Theta^{\alpha}(t,\cdot)$, $\theta^{\alpha}(t,\cdot)\in L_{\lambda^d}^1(\Omega)$ (for each $\alpha\in\{1,2,\ldots,\nu\}$):
\begin{eqnarray}
\Theta&:=&\dfrac{d\tau}{d\lambda^d},\label{Theta}\\
\Theta^{\alpha}&:=&\dfrac{d\tau^{\alpha}}{d\lambda^d},\\
\theta^{\alpha}&:=&\dfrac{d\tau^{\alpha}}{d\tau}.\label{theta alpha}
\end{eqnarray}
The Radon-Nikodym derivative $\theta^{\alpha}$ arising in (\ref{theta alpha}) is called the \textit{volume fraction} of component $\alpha$ at time $t$. As a result of (\ref{def total volume mixture}):
\begin{equation*}
\sum_{\alpha=1}^{\nu}\theta^{\alpha}=1,\hspace{1 cm}\text{a.e. w.r.t. }\tau,\text{ and for all }t\in(0,T).
\end{equation*}
Note also that due to Part \ref{Property three measures} of Lemma \ref{Properties RN derivatives}, we have that for all $\alpha\in\{1,2,\ldots,\nu\}$, and for all $t\in(0,T)$
\begin{equation*}
\Theta^{\alpha}=\theta^{\alpha}\Theta,\hspace{1 cm}\text{a.e. w.r.t. }\lambda^d.
\end{equation*}
Also for the mass measures unique Radon-Nikodym derivatives exist. For all $\alpha\in\{1,2,\ldots,\nu\}$ and for all $t\in(0,T)$, we define
\begin{eqnarray}
\check{\rho} &:=& \dfrac{d\mu}{d\tau},\label{rho check}\\
\check{\rho}^{\alpha} &:=& \dfrac{d\mu^{\alpha}}{d\tau},\label{rho check alpha}\\
\hat{\rho}^{\alpha} &:=& \dfrac{d\mu^{\alpha}}{d\tau^{\alpha}}.\label{rho hat alpha}
\end{eqnarray}
We call $\check{\rho}$ the density of the mixture, $\check{\rho}^{\alpha}$ the \textit{partial} density of component $\alpha$, and $\hat{\rho}^{\alpha}$ the \textit{intrinsic} density of that component. By (\ref{def total mass mixture}) we have that
\begin{equation}\label{density is sum partial densities w.r.t. tau}
\check{\rho}=\sum_{\alpha=1}^{\nu}\check{\rho}^{\alpha},\hspace{1 cm}\text{a.e. w.r.t. }\tau,\text{ and for all }t\in(0,T).
\end{equation}
From (\ref{Theta})-(\ref{theta alpha}) and (\ref{rho check})-(\ref{rho hat alpha}) a number of identities can be derived (again via Lemma \ref{Properties RN derivatives}, Part \ref{Property three measures}). For all $t\in(0,T)$ we have
\begin{eqnarray}
\check{\rho}^{\alpha} &=& \hat{\rho}^{\alpha}\theta^{\alpha},\hspace{1 cm}\text{a.e. w.r.t. }\tau,\text{ for all } \alpha\in\{1,2,\ldots,\nu\},\\
\check{\rho}^{\alpha}\Theta &=& \hat{\rho}^{\alpha}\Theta^{\alpha},\hspace{1 cm}\text{a.e. w.r.t. }\lambda^d,\text{ for all } \alpha\in\{1,2,\ldots,\nu\}.
\end{eqnarray}
\begin{remark}
Note that the latter $\check{\rho}^{\alpha}\Theta$ equals the Radon-Nikodym derivative $d\mu^{\alpha}/d\lambda^d$. In practice it will be much more convenient to work with $\lambda^d$ than to work with $\tau$. Therefore we define
\begin{equation}\label{rho}
\rho^{\alpha}:=\dfrac{d\mu^{\alpha}}{d\lambda^d}=\check{\rho}^{\alpha}\Theta,
\end{equation}
and henceforth refer to $\rho^{\alpha}$ (rather than to $\check{\rho}^{\alpha}$) as the partial density.\\
Furthermore $\check{\rho}\Theta$ is the Radon-Nikodym derivative of the total mass measure $\mu$ with respect to $\lambda^d$. Analogously to $\rho^{\alpha}$, we define
\begin{equation*}
\rho:=\dfrac{d\mu}{d\lambda^d}=\check{\rho}\Theta,
\end{equation*}
and refer to this $\rho$ as the density, from now on.
\end{remark}

Now, (\ref{density is sum partial densities w.r.t. tau}) implies
\begin{equation}\label{density is sum partial densities w.r.t. lambda}
\rho=\sum_{\alpha=1}^{\nu}\rho^{\alpha},\hspace{1 cm}\text{a.e. w.r.t. }\lambda^d,\text{ and for all }t\in(0,T).
\end{equation}
The \textit{mass concentration} of constituent $\alpha$ at time $t$ is defined as
\begin{equation}\label{mass concentration densities}
c^{\alpha}:=\dfrac{\rho^{\alpha}}{\rho},
\end{equation}
and it follows naturally from (\ref{density is sum partial densities w.r.t. lambda}) and (\ref{mass concentration densities}) that
\begin{equation*}
\sum_{\alpha=1}^{\nu}c^{\alpha}=1,\hspace{1 cm}\text{a.e. w.r.t. }\mu,\text{ and for all }t\in(0,T).
\end{equation*}
\begin{remark}\label{remark where c^alpha defined}
The mass concentration $c^{\alpha}$ is only defined if $\rho>0$. For any $t\in(0,T)$, the subset
\begin{equation*}
\Omega_t^0:=\{x\in\Omega \,\big|\, \rho(t,x)=0\}\subset\Omega
\end{equation*}
is a null set w.r.t. the measure $\mu$. This can easily be seen:
\begin{equation*}
\mu(t, \Omega_t^0)=\int_{\Omega_t^0}\rho(t,x)d\lambda^d(x)=\int_{\Omega_t^0}0\,d\lambda^d(x)=0.
\end{equation*}
This implies that $c^{\alpha}$ is defined almost everywhere w.r.t. $\mu$. We will see later, that defining $c^{\alpha}$ actually only is useful in regions where $\rho>0$ is satisfied (i.e. where mass is present), and that there is thus no serious problem here.
\end{remark}

\begin{remark}
It is obvious that $\dfrac{\rho^{\alpha}}{\rho}=\dfrac{\check{\rho}^{\alpha}\Theta}{\check{\rho}\Theta}=\dfrac{\check{\rho}^{\alpha}}{\check{\rho}}$ holds $\mu$-almost everywhere. We thus could have defined $c^{\alpha}$ as $\check{\rho}^{\alpha}/\check{\rho}$, and would have obtained the same properties almost everywhere w.r.t. $\mu$.
\end{remark}

\section{Kinematics}\label{section kinematics mixture}
The kinematics of component $\alpha$ of the mixture is dictated by a motion mapping $\chi^{\alpha}$, such that
\begin{equation*}
x=\chi^{\alpha}(t,X^{\alpha}),
\end{equation*}
denotes the position at time $t$ of an $\alpha$-particle, which was initially (i.e. at $t=0$) situated in $X^{\alpha}$.\\
By assumption $\chi^{\alpha}$ is smooth, and the inverse mapping $\Xi^{\alpha}$ exists, such that
\begin{equation*}
X^{\alpha}=\Xi^{\alpha}(t,x).
\end{equation*}
Also by assumption, $\Xi^{\alpha}$ is smooth. Such smooth motion mapping $\chi^{\alpha}$, with smooth inverse, is called a \textit{diffeomorphism}.\\
\\
For each component $\alpha$ its velocity at position $x\in\Omega$ at time $t\in(0,T)$ is denoted by $v^{\alpha}(t,x)$ and can be derived from the motion mapping $\chi^{\alpha}$:
\begin{equation}\label{velocity component alpha}
v^{\alpha}(t,x):=\dfrac{D^{(\alpha)}x}{Dt}:=\dfrac{\partial}{\partial t}\chi^{\alpha}(t,X^{\alpha}).
\end{equation}
\begin{remark}
The right-hand side in (\ref{velocity component alpha}) is indeed a function of $(t,x)$, since we have to read $X^{\alpha}$ as $X^{\alpha}=\Xi^{\alpha}(t,x)$. The derivative $\partial/\partial t$ is taken however only with respect to the first variable of $\chi^{\alpha}$.
\end{remark}

\begin{definition}[Time derivatives]\label{def time derivatives}
Let $g$ denote some (physical) scalar quantity associated with component $\alpha$ and depending on space and time. We write $g=\tilde{g}(t,X^{\alpha})$ if we consider the quantity from a Langrangian point of view, and $g=\bar{g}(t,x)$ from a Eulerian point of view. The \textit{time derivative} of $g$ is now defined as
\begin{equation*}
\dfrac{D^{(\alpha)}g}{Dt}:=\dfrac{\partial\tilde{g}(t,X^{\alpha})}{\partial t}=\dfrac{\partial \bar{g}(t,x)}{\partial t}+ \nabla \bar{g}(t,x)\cdot v^{\alpha}(t,x).
\end{equation*}
\end{definition}
In the above definition, note that $\tilde{g}(t,X^{\alpha})=\bar{g}\bigl(t,\chi^{\alpha}(t,X^{\alpha})\bigr)$.\\
\\
Now, we define the \textit{barycentric} velocity (i.e. the velocity of the centre of mass of a volume element) of the mixture by
\begin{equation}\label{def barycentric velocity}
v:=\dfrac{1}{\rho}\sum_{\alpha=1}^{\nu}\rho^{\alpha}v^{\alpha}=\sum_{\alpha=1}^{\nu}c^{\alpha}v^{\alpha}.
\end{equation}
The barycentric velocity is also called \textit{mean velocity}, or just \textit{velocity of the mixture}.
\begin{remark}
A similar expression as in Definition \ref{def time derivatives} holds for time derivatives of scalar functions $g$ associated to the total mixture:
\begin{equation*}
\dfrac{Dg}{Dt}:=\dfrac{\partial\tilde{g}(t,X)}{\partial t}=\dfrac{\partial \bar{g}(t,x)}{\partial t}+ \nabla \bar{g}(t,x)\cdot v(t,x).
\end{equation*}
\end{remark}

\section{Balance of mass}\label{section mass balance mixture}
We assume that there is no mass exchange between the components of the mixture. The conservation of mass implies that for any $\Omega'\in\mathcal{B}(\Omega)$, for all $t\in(0,T)$ and for all $\alpha\in\{1,2,\ldots,\nu\}$
\begin{equation}\label{mass conservation}
\dfrac{d}{dt}\mu^{\alpha}\bigl(t, \chi^{\alpha}(t,\Omega')\bigr)=0.
\end{equation}
Note that $\chi^{\alpha}(t,\Omega')$ is the configuration at time $t$ of all particles of the component $\alpha$ that were initially located in $\Omega'$. For $\mu^{\alpha}\bigl(t, \chi^{\alpha}(t,\Omega')\bigr)$ in (\ref{mass conservation}) to be well-defined, we need that $\chi^{\alpha}(t,\Omega')$ is an element of $\mathcal{B}(\Omega)$ for each $t\in(0,T)$ and for all $\Omega'\in\mathcal{B}(\Omega)$.
\begin{lemma}\label{Diffeomorphism maps Borel to Borel}
If for each $t\in(0,T)$, the mapping $\chi^{\alpha}(t,\cdot):\Omega\rightarrow\Omega$ is a diffeomorphism, then $\chi^{\alpha}(t,\Omega')\in\mathcal{B}(\Omega)$, for each $t\in(0,T)$ and for all $\Omega'\in\mathcal{B}(\Omega)$.
\end{lemma}
\begin{proof}
Given that for any $t\in(0,T)$ fixed, $\chi^{\alpha}(t,\cdot)$ is a diffeomorphism from $\Omega$ to $\Omega$, we have in particular that its inverse $\Xi^{\alpha}(t,\cdot)$ is a continuous mapping from $\Omega$ to $\Omega$. By definition of continuous functions (cf. \cite{Rudin}, p.~8), $(\Xi^{\alpha})^{-1}(t,\Omega')$ is an open set, for any $\Omega'\subset\Omega$ open.\\
The Borel $\sigma$-algebra is defined as the smallest $\sigma$-algebra containing all open subsets of $\Omega$. It follows that we have that $(\Xi^{\alpha})^{-1}(t,\Omega')\in\mathcal{B}(\Omega)$ (i.e. it is a measurable set). By definition of measurable functions (cf. \cite{Rudin}, p.~8), we now have that $\Xi^{\alpha}(t,\cdot)$ is measurable.\\
Theorem 1.12 (b) (\cite{Rudin}, p.~13) provides that $(\Xi^{\alpha})^{-1}(t,\Omega')\in\mathcal{B}(\Omega)$ not only for all $\Omega'\subset\Omega$ open, but also for all $\Omega'\in\mathcal{B}(\Omega)$. That is, $\Xi^{\alpha}$ is a Borel function (cf. \cite{Rudin}, p.~12).\\
$\Xi^{\alpha}(t,\cdot)$ is the inverse of $\chi^{\alpha}(t,\cdot)$, so the above statement also states that for all $t\in(0,T)$ fixed $\chi^{\alpha}(t,\Omega')\in\mathcal{B}(\Omega)$ for all $\Omega'\in\mathcal{B}(\Omega)$.
\end{proof}

We now present a few calculations involving Reynolds' transport theorem, that are based on \cite{TemamMiranville} (pp.~18--21). We use the notation $\det(\nabla \chi^{\alpha})$ for the determinant of the Jacobian matrix of the motion mapping. The following identity is derived e.g. in \cite{TemamMiranville} (p.~20):
\begin{equation*}
\dfrac{d}{dt}\det(\nabla \chi^{\alpha}) = \nabla\cdot v^{\alpha}\det(\nabla \chi^{\alpha}).
\end{equation*}
From the conservation of mass statement, formulated in (\ref{mass conservation}), we derive
\begin{eqnarray}
\nonumber 0=\dfrac{d}{dt}\mu^{\alpha}\Bigl(t,\chi^{\alpha}(t,\Omega')\Bigr) &=& \dfrac{d}{dt} \int_{\chi^{\alpha}(t,\Omega')}\rho^{\alpha}(t,x)d\lambda^d(x)\\
\nonumber &=& \dfrac{d}{dt} \int_{\Omega'}\rho^{\alpha}\bigl(t, \chi^{\alpha}(t,X)\bigr)\det(\nabla \chi^{\alpha})(t,X) d\lambda^d(X)\\
\nonumber &=& \int_{\Omega'}\dfrac{\partial\rho^{\alpha}}{\partial t}\det(\nabla \chi^{\alpha}) d\lambda^d\\
\nonumber &&+ \int_{\Omega'}\nabla\rho^{\alpha}\cdot \dfrac{\partial\chi^{\alpha}}{\partial t}\det(\nabla \chi^{\alpha}) d\lambda^d\\
\nonumber &&+ \int_{\Omega'}\rho^{\alpha}\dfrac{d}{dt}\det(\nabla \chi^{\alpha}) d\lambda^d\\
\nonumber &=& \int_{\Omega'}\Bigl(\dfrac{\partial\rho^{\alpha}}{\partial t}+\nabla\rho^{\alpha}\cdot v^{\alpha}+\rho^{\alpha}\nabla\cdot v^{\alpha} \Bigr)\det(\nabla \chi^{\alpha}) d\lambda^d\\
\nonumber &=& \int_{\Omega'}\Bigl(\dfrac{\partial\rho^{\alpha}}{\partial t}+\nabla\cdot\bigl(\rho^{\alpha} v^{\alpha}\bigr)\Bigr)\det(\nabla \chi^{\alpha}) d\lambda^d\\
&=& \int_{\chi^{\alpha}(t,\Omega')}\Bigl(\dfrac{\partial\rho^{\alpha}}{\partial t}+\nabla\cdot\bigl(\rho^{\alpha} v^{\alpha}\bigr)\Bigr)d\lambda^d.\label{derivation global mass balance}
\end{eqnarray}
Since $t$ and $\Omega'$ can be chosen arbitrarily, we conclude from the expression in the seventh line of (\ref{derivation global mass balance}) that for all $\alpha\in\{1,2,\ldots,\nu\}$ and for all $t\in(0,T)$
\begin{equation}\label{mass balance local including Jacobian}
\Bigl(\dfrac{\partial\rho^{\alpha}}{\partial t}+\nabla\cdot\bigl(\rho^{\alpha} v^{\alpha}\bigr)\Bigr)\det(\nabla \chi^{\alpha})=0,\hspace{1 cm} \lambda^d\text{-a.e. in }\Omega.
\end{equation}
Since the inverse of $\chi^{\alpha}$ is assumed to be differentiable, we know that $\det(\nabla \chi^{\alpha})\neq 0$ (cf. remark on p.~4 of \cite{TemamMiranville}). Thus, we deduce from (\ref{mass balance local including Jacobian}) that
\begin{equation}\label{mass balance local constituent}
\dfrac{\partial\rho^{\alpha}}{\partial t}+\nabla\cdot\bigl(\rho^{\alpha} v^{\alpha}\bigr)=0,\hspace{1 cm}\lambda^d\text{-a.e. in }\Omega.
\end{equation}
The weak formulation of (\ref{mass balance local constituent}) leads to
\begin{equation}\label{weak formulation mass balance constituent}
\dfrac{d}{dt}\int_{\Omega}\psi^{\alpha}\rho^{\alpha} d\lambda^d=\int_{\Omega}v^{\alpha}\cdot\nabla\psi^{\alpha}\rho^{\alpha} d\lambda^d,
\end{equation}
for all $\alpha\in\{1,2,\ldots,\nu\}$, for all $t\in(0,T)$, and for all $\psi^{\alpha}\in C^1_0(\bar{\Omega})$.\\
\\
\begin{remark}
From (\ref{mass balance local constituent}) we can derive the local mass balance for the mixture as a whole. To achieve this, we sum (\ref{mass balance local constituent}) over all $\alpha$ and take (\ref{density is sum partial densities w.r.t. lambda}) and (\ref{def barycentric velocity}) into consideration. This yields that for all $t\in(0,T)$
\begin{equation}\label{mass balance local mixture}
\dfrac{\partial\rho}{\partial t}+\nabla\cdot\bigl(\rho v\bigr)=0,\hspace{1 cm}\lambda^d\text{-a.e. in }\Omega.
\end{equation}
The weak formulation of (\ref{mass balance local mixture}) can be either deduced directly from (\ref{mass balance local mixture}), or derived from (\ref{weak formulation mass balance constituent}). For the latter way, we take the same test function $\psi^{\alpha}=\psi\in C^1_0(\bar{\Omega})$ for each $\alpha$, again sum over all $\alpha$ and use (\ref{density is sum partial densities w.r.t. lambda}) and (\ref{def barycentric velocity}). This procedure results in
\begin{equation}\label{weak formulation mass balance mixture}
\dfrac{d}{dt}\int_{\Omega}\psi\rho d\lambda^d=\int_{\Omega}v\cdot\nabla\psi\rho d\lambda^d.
\end{equation}
In Remark \ref{remark where c^alpha defined} we indicated that defining $c^{\alpha}$ is actually only useful in regions where $\rho>0$. This is made clear by (\ref{mass balance local mixture}) and (\ref{weak formulation mass balance mixture}). The mass concentrations $c^{\alpha}$ are incorporated in the definition of $v$. However $v$ only appears in combination with $\rho$ as the product $\rho v$. If $\rho=0$ the product is zero any way, and thus (at least physically) it does not matter how (or whether) $c^{\alpha}$ is defined.
\end{remark}

\section{Generalization}\label{section generalization mixture}
We would like to extend the ideas presented in Sections \ref{section description mixture}--\ref{section mass balance mixture}, such that they hold not only for those measures that are absolutely continuous w.r.t. $\lambda^d$. Indeed we have seen in Section \ref{section Lebesgue decomposition} that in general, a measure might also contain a singular part. The mixture-theoretical concepts presented so far do however not hold for singular measures.\\
\\
Let $\mu^{\alpha}(t,\Omega')$, like before, denote the mass of constituent $\alpha$ present in $\Omega'\in\mathcal{B}(\Omega)$ at time $t\in(0,T)$. To obtain a more general framework, we need to revoke the assumption that $\mu^{\alpha}\ll \tau^{\alpha}$ for all $\alpha\in\{1,2,\ldots,\nu\}$. In this setting we thus postulate the following:
\begin{postulate}[Properties of $\mu^{\alpha}$]\label{Postulate mu^alpha generalized}
For all $\alpha\in\{1,2,\ldots,\nu\}$ and all $t\in(0,T)$ we postulate:
\begin{enumerate}
  \item $\mu^{\alpha}\geqslant 0$.\label{Postulate mu^alpha generalized 1}\\
  \item $\mu^{\alpha}$ is $\sigma$-additive.\\
\end{enumerate}
\end{postulate}
Furthermore, as in (\ref{def total mass mixture}), we define the total mass by
\begin{equation}
\mu(t, \Omega'):=\sum_{\alpha=1}^{\nu}\mu^{\alpha}(t, \Omega'),\label{def total mass mixture generalized}
\end{equation}
for all $\Omega'\in\mathcal{B}(\Omega)$ and all $t\in(0,T)$.\\
\\
This new setting, without the absolute continuity demand, implies that in most cases of Section \ref{section description mixture} we cannot apply the Radon-Nikodym Theorem anymore. However, note that due to (\ref{def total mass mixture generalized}) and Part \ref{Postulate mu^alpha generalized 1} of Postulate \ref{Postulate mu^alpha generalized} we have that $\mu^{\alpha}\ll\mu$ for all $\alpha\in\{1,\ldots,\nu\}$. Thus for each $\mu^{\alpha}$ a unique Radon-Nikodym derivative exists with respect to $\mu$. Let us denote
\begin{equation}\label{mass concentration RN-deriv generalization}
c^{\alpha}:=\dfrac{d\mu^{\alpha}}{d\mu},
\end{equation}
which is also called the \textit{mass concentration} of constituent $\alpha$ at time $t$.
\begin{remark}
If we still would have been in the absolute continuous case of Section \ref{section description mixture}, both $\mu\ll\tau$ and $\tau\ll\mu$ were true. Due to Part \ref{Property inverse} of Lemma \ref{Properties RN derivatives}, this implies that $1/\check{\rho}=d\tau/d\mu$, which is defined $\mu$-almost everywhere. Part \ref{Property three measures} of the same lemma then implies that
\begin{equation*}
c^{\alpha}=\dfrac{d\mu^{\alpha}}{d\mu}=\dfrac{d\mu^{\alpha}}{d\tau} \dfrac{d\tau}{d\mu} =\dfrac{\check{\rho}^{\alpha}}{\check{\rho}}.
\end{equation*}
Also:
\begin{equation*}
\dfrac{\check{\rho}^{\alpha}}{\check{\rho}}=\dfrac{\check{\rho}^{\alpha}\Theta}{\check{\rho}\Theta}=\dfrac{\rho^{\alpha}}{\rho},
\end{equation*}
and since all expressions above are defined $\mu$-a.e., we have that
\begin{equation*}
c^{\alpha}=\dfrac{\rho^{\alpha}}{\rho},\hspace{1 cm}\mu\text{-almost everywhere.}
\end{equation*}
Hence, in the absolutely continuous case (\ref{mass concentration RN-deriv generalization}) is equivalent to the definition of $c^{\alpha}$ given in Section \ref{section description mixture}.
\end{remark}

It follows from (\ref{def total mass mixture generalized}) that:
\begin{equation}\label{sum of c^alpha = 1 generalization}
\sum_{\alpha=1}^{\nu}c^{\alpha}=1,\hspace{1 cm}\text{a.e. w.r.t. }\mu,\text{ and for all }t\in(0,T).
\end{equation}
Up to the definition of the barycentric velocity, in Section \ref{section kinematics mixture} we have not used the absolute continuity of any of the mass measures. The ideas presented there are also applicable in the generalized setting. We only have to redefine (in a very natural way) the barycentric velocity of the mixture by
\begin{equation}\label{definition barycentric vel generalization}
v:=\sum_{\alpha=1}^{\nu}c^{\alpha}v^{\alpha}.
\end{equation}
\\
From a notational point of view it is not difficult to generalize the weak formulation of the balance of mass concept, as given in (\ref{weak formulation mass balance constituent}). Since we know that $\rho^{\alpha}=d\mu^{\alpha}/d\lambda^d$, the weak form can be written as
\begin{equation}\label{weak formulation mass balance constituent generalized}
\dfrac{d}{dt}\int_{\Omega}\psi^{\alpha}d\mu^{\alpha}=\int_{\Omega}v^{\alpha}\cdot\nabla\psi^{\alpha}d\mu^{\alpha},
\end{equation}
for all $\alpha\in\{1,2,\ldots,\nu\}$, for all $t\in(0,T)$, and for all $\psi^{\alpha}\in C^1_0(\bar{\Omega})$. This `trick' is merely a matter of notation, and the above is equivalent to (\ref{weak formulation mass balance constituent}) for absolutely continuous mass measures. However, we still need to make sure that this expression also makes sense for more general mass measures.\\
\\
We are running ahead by mentioning this now, but we will see later (see e.g. Section \ref{section two-scales no sing cont}) that we are mainly interested in measures that are either absolutely continuous, or discrete (i.e. sum of Dirac distributions), or possibly a combination of the two. This means that we explicitly exclude the singular continuous part from the measure, and only show that (\ref{weak formulation mass balance constituent generalized}) makes sense for discrete measures. Note that the measures we do allow, are (combinations of) exactly those measures introduced in Section \ref{section Mass measure}.\\
\\
We consider $\mu^{\alpha}$ to be a single Dirac measure centered at a time-dependent position $x(t)\in\Omega$ for each $t\in(0,T)$, and assume its motion to be described by $dx(t)/dt=v^{\alpha}(t,x(t))$. We choose an arbitrary function $\psi^{\alpha}\in C^1_0(\bar{\Omega})$ and take the inner product with this function (evaluated in $x(t)$) on both sides of the equality. In the resulting left-hand side, we recognize the chain rule (recall that $\psi^{\alpha}$ is once continuously differentiable), so
\begin{equation*}
\nabla\psi^{\alpha}(x(t))\cdot\dfrac{d}{dt}x(t)= \dfrac{d}{dt}\psi^{\alpha}(x(t)).
\end{equation*}
Now, we have obtained
\begin{equation}\label{time der psi equals grad in velocity}
\dfrac{d}{dt}\psi^{\alpha}(x(t))=\nabla\psi^{\alpha}(x(t))\cdot v^{\alpha}(t,x(t)).
\end{equation}
We apply the definition of the Dirac measure, and since $\mu^{\alpha}=\delta_{x(t)}$, we find that (\ref{time der psi equals grad in velocity}) can also be written as
\begin{equation*}
\dfrac{d}{dt}\int_{\Omega}\psi^{\alpha}d\mu^{\alpha}=\int_{\Omega}v^{\alpha}\cdot\nabla\psi^{\alpha}d\mu^{\alpha}.
\end{equation*}
To extend this approach, now let $\mu^{\alpha}$ be a linear combination of Dirac deltas centered at $x_i(t)$, and let each of these move according to
\begin{equation}\label{dx_i/dt = v^alpha}
\dfrac{d}{dt}x_i(t)=v^{\alpha}(t,x_i(t)).
\end{equation}
Here the index $i$ is taken from a countable, possibly infinite, index set $\mathcal{J}$. If $\mathcal{J}$ is infinite, the coefficients of the Dirac deltas must have finite sum. Similar arguments as before yield that, for each $i\in\mathcal{J}$, we have
\begin{equation*}
\dfrac{d}{dt}\psi^{\alpha}(x_i(t))=\nabla\psi^{\alpha}(x_i(t))\cdot v^{\alpha}(t,x_i(t)).
\end{equation*}
If $\mu^{\alpha}=\sum_{i\in\mathcal{J}}\alpha_i\delta_{x_i(t)}$ with nonnegative coefficients such that $\sum_{i\in\mathcal{J}}\alpha_i<\infty$, then we have
\begin{equation*}
\sum_{i\in\mathcal{J}}\alpha_i\dfrac{d}{dt}\psi^{\alpha}(x_i(t))=\sum_{i\in\mathcal{J}}\alpha_i\nabla\psi^{\alpha}(x_i(t))\cdot v^{\alpha}(t,x_i(t)),
\end{equation*}
which is, by the definition of the Dirac measure, again equal to (\ref{weak formulation mass balance constituent generalized}).\\
\\
Note that the weak formulation (\ref{weak formulation mass balance constituent generalized}) also makes sense for linear combinations of absolutely continuous and discrete measures. For the absolutely continuous part we have the balance of mass-interpretation; for the discrete part we interpret it via (\ref{dx_i/dt = v^alpha}).

\begin{remark}\label{remark sum of diracs preserves constant coeff}
We have seen in Lemma \ref{lemma characterization discrete measure} that any discrete measure can be written as a linear combination of Dirac distributions. However, if (for every $t$) $\mu^{\alpha}(t,\cdot)$ is discrete, not only the centres of the Dirac deltas, but also the coefficients might be time-dependent. It will be shown later that the time-dependent discrete measures we work with have constant coefficients (see Corollary \ref{corollary sum of diracs preserves constant coeff}). This means that the time-dependence is only present in the location of the Dirac deltas. It will also be shown that these positions $x_i(t)$ satisfy (\ref{dx_i/dt = v^alpha}); cf. Lemma \ref{lemma sum of diracs obey dx/dt=v}.\\
In this spirit, (\ref{weak formulation mass balance constituent generalized}) makes sense for the types of measures that are relevant for us.
\end{remark}

\begin{remark}
Analogously to Section \ref{section mass balance mixture}, we can deduce from (\ref{weak formulation mass balance constituent generalized}) the structure of a weak formulation that applies to the total mass measure $\mu$. We take the same test function $\psi^{\alpha}=\psi\in C^1_0(\bar{\Omega})$ for each $\alpha$. Each integral w.r.t. $\mu^{\alpha}$ is transformed into an integral w.r.t. $\mu$ using (\ref{mass concentration RN-deriv generalization}). Consequently, summing over all $\alpha$, and using (\ref{sum of c^alpha = 1 generalization}) and (\ref{definition barycentric vel generalization}), yields
\begin{equation}\label{weak formulation mass balance mixture generalization}
\dfrac{d}{dt}\int_{\Omega}\psi d\mu=\int_{\Omega}v\cdot\nabla\psi d\mu.
\end{equation}
\end{remark}

\section{Entropy inequality}\label{section mixture entropy}
In this section, we place the mixture concepts discussed in Sections \ref{section description mixture}--\ref{section generalization mixture} into a thermodynamical context. Our aim is to derive an \textit{entropy inequality}. This inequality is strongly related to concepts like the second axiom of thermodynamics, or the Clausius-Duhem Inequality.\\
For more information on thermodynamics and on the role of entropy, the reader is referred to e.g. \cite{Zemansky} or \cite{Hawking}.\\
\\
To each constituent of the mixture we assign a function $\eta^{\alpha}:(0,T)\times\Omega\rightarrow\mathbb{R}$, which is called the \textit{entropy density} (per unit mass) of component $\alpha$. The entropy density for the total mixture is defined as
\begin{equation}\label{def total entropy}
\eta := \sum_{\alpha=1}^{\nu}c^{\alpha}\eta^{\alpha}.
\end{equation}
Note that this definition has the same structure as the definition of the barycentric velocity in (\ref{def barycentric velocity}). The entropy $S_{\Omega'}(t)$ assigned to $\Omega'\subset\Omega$ at time $t$ is defined by integration of $\eta$ with respect to the density $\rho$, that is
\begin{equation}\label{def S total entropy integral entr dens}
S_{\Omega'}(t):=\int_{\Omega'}\eta\rho d\lambda^d.
\end{equation}
To each component we also assign a temperature $T^{\alpha}:(0,T)\times\Omega\rightarrow\mathbb{R}^+$.\\
\\
Following the ideas of \cite{Muller} (p.~7), \cite{DunwoodyMueller} (p.~347), and \cite{Bowen} (p.~28), we now postulate the following:
\begin{postulate}[Local entropy inequality]\label{postulate local entropy inequality}
Locally the following inequality holds:
\begin{equation}\label{local entropy inequality}
\rho \dfrac{D\eta}{Dt}+\nabla\cdot j_{\eta}-\sum_{\alpha=1}^{\nu}\rho^{\alpha}\dfrac{r^{\alpha}}{T^{\alpha}}\geqslant0.
\end{equation}
Here $j_{\eta}$ is the \textit{entropy flux vector}, and $r^{\alpha}$ is the volume supply of external heat to constituent $\alpha$.
\end{postulate}

\begin{remark}
In the term $\sum_{\alpha=1}^{\nu}\rho^{\alpha}r^{\alpha}/T^{\alpha}$, which is the volume supply of external entropy to constituent $\alpha$, we recognize a similar structure as in (\ref{def barycentric velocity}) and (\ref{def total entropy}). It can therefore be considered as some "barycentric quantity".
\end{remark}

\begin{remark}
Sometimes (\ref{local entropy inequality}) is written as
\begin{equation*}
\dfrac{\partial}{\partial t}(\rho\eta)+\nabla\cdot(\rho\eta v+j_{\eta})-\sum_{\alpha=1}^{\nu}\rho^{\alpha}\dfrac{r^{\alpha}}{T^{\alpha}}\geqslant0,
\end{equation*}
cf. e.g. \cite{Muller}, p.~7. Using the local balance of mass (\ref{mass balance local mixture}) and Definition \ref{def time derivatives}, we can indeed show that these two formulations are equivalent:
\begin{eqnarray}
\nonumber \dfrac{\partial}{\partial t}(\rho\eta)+\nabla\cdot(\rho\eta v) &=& \dfrac{\partial}{\partial t}(\rho\eta)+\nabla(\rho\eta)\cdot v+\rho\eta \nabla\cdot v \\
\nonumber &=& \dfrac{D}{Dt}(\rho\eta)+\rho\eta \nabla\cdot v \\
\nonumber &=& \dfrac{D\rho}{Dt}\eta+ \rho\dfrac{D\eta}{Dt}+\eta\bigl( \nabla\cdot(\rho v)-\nabla\rho\cdot v\bigr)\\
\nonumber &=& \rho\dfrac{D\eta}{Dt}+ \eta\bigl(\dfrac{D\rho}{Dt}-\nabla\rho\cdot v\bigr)+ \eta\nabla\cdot(\rho v)\\
\nonumber &=& \rho\dfrac{D\eta}{Dt}+ \eta\dfrac{\partial\rho}{\partial t}+ \eta\nabla\cdot(\rho v)\\
\nonumber &=& \rho\dfrac{D\eta}{Dt}+ \eta\bigl(\dfrac{\partial\rho}{\partial t}+ \nabla\cdot(\rho v)\bigr)\\
 &=& \rho\dfrac{D\eta}{Dt}.
\end{eqnarray}
\end{remark}

For an arbitrarily fixed $\Omega'\subset\Omega$ (such that $\partial\Omega'$ is sufficiently regular) we now calculate the time derivative (at fixed time $t\in(0,T)$) of the total entropy contained in $\Omega'$:
\begin{eqnarray}
\nonumber \dfrac{\partial}{\partial t}\int_{\Omega'}\eta\rho d\lambda^d &=& \int_{\Omega'}\dfrac{\partial}{\partial t}(\eta\rho) d\lambda^d\\
\nonumber &\geqslant& \int_{\Omega'}\sum_{\alpha=1}^{\nu}\rho^{\alpha}\dfrac{r^{\alpha}}{T^{\alpha}}- \nabla\cdot(\rho\eta v+j_{\eta})d\lambda^d\\
&=& \int_{\Omega'}\sum_{\alpha=1}^{\nu}\rho^{\alpha}\dfrac{r^{\alpha}}{T^{\alpha}}d\lambda^d - \int_{\partial\Omega'}(\rho\eta v+j_{\eta})\cdot n d\lambda^{d-1},\label{global entropy inequality}
\end{eqnarray}
where we have used the local entropy inequality in the second line of (\ref{global entropy inequality}).\\
We call (\ref{global entropy inequality}) the \textit{global entropy inequality}.

\begin{remark}\label{remark how to derive entropy ineq whole mixture}
It would have been possible to derive the entropy inequality for the whole mixture from the partial entropy inequalities formulated individually, per constituent; cf. e.g. \cite{GreenNaghdi} (pp.~475--476). However, nowadays the general consensus is that this approach gives a too restrictive result: the motion of the mixture is overconstraint. For more details on this fundamental issue, the reader is referred to \cite{Holmes} (pp.~II.12--13), and \cite{BedfordDrumheller} (p.~865).
\end{remark}

\begin{remark}\label{remark physical relevance generalization entropy inequality}
In general we are also interested in mass measures that are not exclusively absolutely continuous. This naturally leads to the question whether we can generalize the concept of entropy and the accompanying entropy inequality. However, at this point, we should ask ourselves what the physical meaning of such generalization is. More understanding is needed in order to judge the physical relevance of an extension to a broader class of measures. There are also practical objections. For example, caution is needed when generalizing boundary terms. It is all but obvious in what way the integral over $\partial \Omega'$ in (\ref{global entropy inequality}) should be defined for a general mass measure, that may also contain a discrete part.\\
In Section \ref{section entropy inequality cont-in-time} we introduce an explicit entropy density, and derive the corresponding entropy inequality. Despite of the aforementioned difficulties, we show afterwards (see Section \ref{section generalization entropy inequality to discrete measures}) that structurally the same inequality can be obtained for discrete measures (compare the statement of Theorem \ref{theorem entropy ineq one component} with (\ref{entropy ineq discrete measure})).
\end{remark}

\newpage
\chapter{Application to crowd dynamics}\label{section Application crowd dynamics}
In this section we explain and extend the measure-theoretical approach, developed by Bene\-detto Piccoli et al. (cf. e.g. \cite{PiccoliTosin, PiccoliTosinMeasTh, Piccoli2010}). We intend to fit some of their ideas to our framework presented in Section \ref{section mixture theory and thermodynamics}.\\
\\
We consider a population located in a given domain $\Omega\subset\mathbb{R}^d$. To capture physically realistic situations we take $d\in\{1,2,3\}$. Let $T\in(0,\infty)$ be the fixed final time of the process. We define a measure $\mu:(0,T)\times\mathcal{B}(\Omega)\rightarrow\mathbb{R}^+$, such that $\mu(t,\Omega')$ represents the mass of the part of the population present in a region $\Omega'\in\mathcal{B}(\Omega)$ at time $t\in(0,T)$.\\
We assume that the population consists of a fixed number of subpopulations (these were called \textit{constituents} in Section \ref{section mixture theory and thermodynamics}), indexed $\alpha\in\{1,2,\ldots,\nu\}$. The mass of each subpopulation present in $\Omega'\in\mathcal{B}(\Omega)$ at time $t\in(0,T)$ is given by the time-dependent mass measure $\mu^{\alpha}(t,\Omega')$.\\
The mass measures $\mu$ and $\mu^{\alpha}$ are related via $\mu=\sum_{\alpha=1}^{\nu}\mu^{\alpha}$. We explicitly assume that $\mu^{\alpha}(t,\cdot)$ is a finite measure for all $t\in(0,T)$ and $\alpha\in\{1,2,\ldots,\nu\}$. As a result, $\mu(t,\cdot)$ becomes a finite measure for all $t\in(0,T)$.\\

\section{Weak formulation}\label{section weak formulation}
Each subpopulation moves according to its own motion mapping $\chi^{\alpha}$, from which a velocity field $v^{\alpha}(t,x)$ follows. Note that for each $\alpha$, the dependence of $v^{\alpha}$ on $t$ indicates the functional dependence on all time-dependent measures $\mu^{\alpha}$, as we will see in Section \ref{section specification velocity}.
\begin{remark}
We henceforth disregard the assumptions with respect to the motion mappings $\chi^{\alpha}$ and velocity fields $v^{\alpha}$, which were done in the derivation of the balance of mass equations presented in Section \ref{section mixture theory and thermodynamics}. The result, Equation (\ref{weak formulation mass balance constituent generalized}), is taken here as a starting point for further modelling, without considering the underlying assumptions.
\end{remark}
Recapitulating: the fact that here we deal with $\nu$ time-dependent mass measures $\mu^{\alpha}$, transported with corresponding velocities $v^{\alpha}$, translates into
\begin{equation}\label{Cons of mass}
\dfrac{\partial \mu^{\alpha}}{\partial t}+\nabla \cdot (\mu^{\alpha} v^{\alpha})=0,\hspace{1 cm}\text{for all }\alpha\in\{1,2,\ldots,\nu\}.
\end{equation}
Equations (\ref{Cons of mass}) are accompanied by the following initial conditions:
\begin{equation}\label{initial condition constituent}
\mu^{\alpha}(0,\cdot)=\mu_0^{\alpha}, \hspace{1 cm}\text{for each }\alpha\in\{1,2,\dots,\nu\},
\end{equation}
for given $\mu_0^{\alpha}$.\\
\\
This partial differential equation in terms of measures is a shorthand notation for the weak formulation presented in Sections \ref{section mass balance mixture} and \ref{section generalization mixture}. Namely, for all test functions $\psi^{\alpha}\in C_0^1(\bar{\Omega})$, where $\alpha\in\{1,2,\ldots,\nu\}$ and for almost every $t\in(0,T)$ the following holds:
\begin{equation}\label{Weak Form}
\dfrac{d}{dt}\int_{\Omega}\psi^{\alpha}(x)d\mu^{\alpha}(t,x)=\int_{\Omega}v^{\alpha}(t,x)\cdot\nabla\psi^{\alpha}(x)d\mu^{\alpha}(t,x), \hspace{1 cm} \alpha\in\{1,2,\ldots,\nu\}.
\end{equation}

\begin{remark}
If, for some $\alpha\in\{1,2,\ldots,\nu\}$, the mapping $t\mapsto \int_{\Omega}\psi^{\alpha}(x)d\mu^{\alpha}(t,x)$ is absolutely continuous on $(0,T)$, for any choice of $\psi^{\alpha}\in C_0^1(\bar{\Omega})$, then it is differentiable at almost every $t\in(0,T)$. See e.g. Theorem 7.20 in \cite{Rudin}, p.~148. This means that the left-hand side of (\ref{Weak Form}) exists for almost every $t$.
\end{remark}

\begin{remark}
If we demand that $v^{\alpha}\in L^1\bigl((0,T);L_{\mu^{\alpha}}^1(\Omega)\bigr)$ for all $\alpha\in\{1,2,\ldots,\nu\}$, then the right-hand side of (\ref{Weak Form}) is well-defined. Indeed, since $\psi^{\alpha}\in C_0^1(\bar{\Omega})$, we have that $\|\nabla \psi^{\alpha}\|_{L^{\infty}(\Omega)}$ is finite for each $\alpha\in\{1,2,\ldots,\nu\}$. We thus have for each $\alpha\in\{1,2,\ldots,\nu\}$ that
\begin{eqnarray}
\nonumber \Bigl|\int_0^T\int_{\Omega}v^{\alpha}(t,x)\cdot\nabla\psi^{\alpha}(x)d\mu^{\alpha}(t,x)dt\Bigr| &\leqslant& \int_0^T\Bigl|\int_{\Omega}v^{\alpha}(t,x)\cdot\nabla\psi^{\alpha}(x)d\mu^{\alpha}(t,x)\Bigr|dt\\
\nonumber &\leqslant& \int_0^T\|\nabla \psi^{\alpha}\|_{L^{\infty}(\Omega)}\int_{\Omega}|v^{\alpha}(t,x)|d\mu^{\alpha}(t,x)dt\\
\nonumber &=& \int_0^T\|\nabla \psi^{\alpha}\|_{L^{\infty}(\Omega)}\|v^{\alpha}(t,\cdot)\|_{L_{\mu^{\alpha}}^1(\Omega)}dt\\
\nonumber &=& \|\nabla \psi^{\alpha}\|_{L^{\infty}(\Omega)}\int_0^T\|v^{\alpha}(t,\cdot)\|_{L_{\mu^{\alpha}}^1(\Omega)}dt\\
\nonumber &=& \|\nabla \psi^{\alpha}\|_{L^{\infty}(\Omega)}\|v^{\alpha}\|_{L^1\bigl([0,T];L_{\mu^{\alpha}}^1(\Omega)\bigr)}\\
&<& \infty.
\end{eqnarray}
In particular, it follows that $\int_{\Omega}v^{\alpha}(t,x)\cdot\nabla\psi^{\alpha}(x)d\mu^{\alpha}(t,x)$ is finite for almost every $t\in(0,T)$ and for all $\alpha\in\{1,2,\ldots,\nu\}$, and thus the right-hand side of (\ref{Weak Form}) is well-defined.
\end{remark}

\begin{definition}[Weak solution of (\ref{Cons of mass})]\label{def weak solution}
The vector of time-dependent measures
\begin{equation*}
\Bigl((\mu^1)_{t\geqslant 0}, (\mu^2)_{t\geqslant 0},\ldots, (\mu^{\nu})_{t\geqslant 0}\Bigr),
\end{equation*}
where each $\mu^{\alpha}$ satisfies the initial condition (\ref{initial condition constituent}), is called a weak solution of (\ref{Cons of mass}), if for all $\alpha\in\{1,2,\ldots,\nu\}$ the following properties hold:
\begin{enumerate}[(i)]
\item for all $\psi^{\alpha}\in C_0^1(\bar{\Omega})$, the mappings $t\mapsto \int_{\Omega}\psi^{\alpha}(x)d\mu^{\alpha}(t,x)$ are absolutely continuous,
\item $v^{\alpha}\in L^1\bigl((0,T);L_{\mu^{\alpha}}^1(\Omega)\bigr)$,
\item (\ref{Weak Form}) is satisfied.
\end{enumerate}
\end{definition}
We deduced a weak formulation with respect to the total mass measure $\mu$ in Section \ref{section generalization mixture}. We use the shorthand notation
\begin{equation}\label{Cons of mass total mass measure}
\dfrac{\partial \mu}{\partial t}+\nabla \cdot (\mu v)=0,
\end{equation}
accompanied, for given $\mu_0$, by the initial condition
\begin{equation}\label{initial condition total}
\mu(0,\cdot)=\mu_0.
\end{equation}
This shorthand notation should, again, be interpreted as (cf. (\ref{weak formulation mass balance mixture generalization}))
\begin{equation}\label{Weak Form system}
\dfrac{d}{dt}\int_{\Omega}\psi(x) d\mu(t,x)=\int_{\Omega}v(t,x)\cdot\nabla\psi(x) d\mu(t,x),
\end{equation}
for all $\psi\in C_0^1(\bar{\Omega})$ and almost every $t\in(0,T)$.\\
\\
A weak solution to (\ref{Cons of mass total mass measure}) will be understood in this section always in the sense of Definition \ref{def weak solution system}:
\begin{definition}[Weak solution of (\ref{Cons of mass total mass measure})]\label{def weak solution system}
The time-dependent measure $(\mu)_{t\geqslant 0}$, satisfying the initial condition (\ref{initial condition total}), is called a weak solution of (\ref{Cons of mass total mass measure}), if the following statements hold:
\begin{enumerate}[(i)]
\item for all $\psi\in C_0^1(\bar{\Omega})$, the mapping $t\mapsto \int_{\Omega}\psi(x)d\mu(t,x)$ is absolutely continuous,
\item $v\in L^1\bigl((0,T);L_{\mu}^1(\Omega)\bigr)$,
\item (\ref{Weak Form system}) is fulfilled.
\end{enumerate}
\end{definition}

The main difference between our approach here and the one presented in \cite{Piccoli2010}, is that we take the freedom to allow each subpopulation to have its own velocity field $v^{\alpha}$. Note that the crucial modelling steps take place when we decide what $v^{\alpha}$ actually looks like. We then characterize the motion of the corresponding subpopulation. In particular, we define the way in which this motion is influenced by the crowd surrounding it (belonging to any of the components).\\
\\
In \cite{Piccoli2010} merely the setting of transporting the total mass measure is considered (see Definition \ref{def weak solution system}). This means that only the  barycentric velocity $v$ of the crowd seen as a whole can be prescribed. However, we thus lose the ability of modelling the interaction between subpopulations of different types. Distinct behaviour of subpopulations is namely driven by underlying partial velocities of distinct nature.\\
\\
To be more specific, \cite{Piccoli2010} considers a combination of an absolutely continuous measure (a `cloud' of people, in which individuals are indistinguishable) and a sum of Dirac measures (point masses). Only $v$ is prescribed in the framework of that paper, partial velocities being not included. This implies that on the macroscopic scale (the crowd) the absolutely continuous and discrete parts essentially move according to the same velocity field. Correspondingly, the Dirac masses cannot evolve independently from the cloud; they are in fact part of the cloud with some "pointer" attached to them.\\
\\
Note that this situation is incorporated in our model as a special case, where we take one $\mu^{\alpha}$ to be a combination of Dirac measures and an absolutely continuous part. However, we feel that it is of no use drawing attention to these point masses, if they cannot behave independently.

\section{Further specification of the velocity fields}\label{section specification velocity}
Until now, we have not explicitly defined the velocity fields $v^{\alpha}$ ($\alpha\in\{1,2,\ldots,\nu\}$). In Section \ref{section weak formulation} we have already remarked that the crucial modelling step takes place at the moment when we decide on the explicit form of $v^{\alpha}$. By the choices we make here, we define the characteristics of a subpopulation. That is, we assign, so to say, a personality to the members of that subpopulation.\\
In the definition of $v^{\alpha}$ we can incorporate the way in which an individual is influenced by the people around him. This individual might be shy, wanting to keep distance from others. The individual's character might also be quite the opposite, driving him to come close to other people. We even have the freedom to distinguish between the way the individual responds to the presence of others, based on the subpopulation that other individual belongs to. For example, one subpopulation might consist of acquaintances (triggering attractive behaviour), while a second subpopulation consists of `enemies' (from which the individual is repelled).\\
\\
In this section, we show the way we want to model the velocity fields. Very much inspired by the \textit{social force model} by Dirk Helbing et al. \cite{HelbingMolnar}, the velocity of a pedestrian is modelled as a \textit{desired velocity} $v_{\text{des}}^{\alpha}$ perturbed by a component $v_{\text{soc}}^{\alpha}$. The latter component, which is called \textit{social velocity}, is due to the presence of other individuals, both from the pedestrian's own subpopulation and from the other subpopulation. This effect is modelled by functional dependence of $v_{\text{soc}}^{\alpha}$ on the set of measures $\bigl\{\mu^1, \mu^2,\ldots, \mu^{\nu}\bigr\}$. The desired velocity is independent of these mass measures, and represents the velocity a pedestrian would have had in absence of other people; this implies that $v_{\text{des}}^{\alpha}$ is independent of $t$, provided the environment of an individual does not change in time. Cf. Remark \ref{remark vdes independent of time}.\\
\\
The velocity $v^{\alpha}$ is thus for $\alpha\in\{1,2,\ldots,\nu\}\}$ defined by superposing $v_{\text{des}}^{\alpha}$ and its perturbation $v_{\text{soc}}^{\alpha}$:
\begin{equation}\label{Def v decomposition}
v^{\alpha}(t,x):= v_{\text{des}}^{\alpha}(x)+v_{\text{soc}}^{\alpha}(t,x), \hspace{1 cm} \text{for all }t\in(0,T) \text{ and }x\in\Omega.
\end{equation}
The component $v_{\text{soc}}^{\alpha}$ models the effect of interactions with other pedestrians on the current velocity. This is essentially a nonlocal contribution. Since the effects (in general) differ from one subpopulation to the other, we assume that $v_{\text{soc}}^{\alpha}$ has the form:
\begin{equation}\label{Def vsoc sum of integrals}
v_{\text{soc}}^{\alpha}(t,x):= \sum_{\beta=1}^{\nu}\int_{\Omega \setminus \{x\}} f^{\alpha}_{\beta}(|y-x|)g(\theta_{xy}^{\alpha})\dfrac{y-x}{|y-x|}d\mu^{\beta}(t,y),\hspace{1 cm}\text{for all }\alpha\in\{1,2,\ldots,\nu\}.
\end{equation}
\begin{remark}\label{remark vdes independent of time}
Note thus that the $t$-dependence of $v^{\alpha}$ and $v^{\alpha}_{\text{soc}}$ is not an explicit time-dependence. As (\ref{Def vsoc sum of integrals}) shows, the social velocity depends functionally on the time-dependent mass measures $\mu^{\beta}$. Although we have not done so, we are also allowed to choose $v_{\text{des}}=v_{\text{des}}(t,x)$. This ought to be an explicit dependence on $t$ due to the interpretation given to $v_{\text{des}}$: the velocity a pedestrian would have if he was the only person in the room.\\
Obviously, this term should depend only on the geometry of the domain (and not on the crowd it contains). If $v_{\text{des}}$ depends explicitly on the time, this reflects that the domain is non-static. For example, such situation occurs when certain areas are not always accessible (opening hours), or are temporarily obstructed. For the moment this is too farfetched.
\end{remark}
In (\ref{Def vsoc sum of integrals}), we have used the following modelling ansätze:
\begin{itemize}
  \item $f^{\alpha}_{\beta}$ is a function from $\mathbb{R}_+$ to $\mathbb{R}$, describing the effect of the mutual distance between individuals on their interaction. The subscript and superscript denote that this specific function incorporates the influence of the members of subpopulation $\beta$ on subpopulation $\alpha$. In principle we distinguish between two kinds of effects:
      \begin{itemize}
        \item \textit{Attraction-repulsion}: this is the interaction between acquaintances (and similar kinds of interaction). In this case, $f^{\alpha}_{\beta}$ is a composition of two effects: at short distance individuals are repelled, since they want to avoid collisions and congestion, but if their mutual distance increases they are attracted to other group mates, in order not to get separated from the group.
        \item \textit{Repulsion}: i.e. `shy' behaviour. Here, $f^{\alpha}_{\beta}$ contains only the repulsive part, since the individual in subpopulation $\alpha$ does not like to come close to subpopulation $\beta$.
      \end{itemize}
      Graphical representations of these two choices of $f^{\alpha}_{\beta}$ are depicted in Figure \ref{Figure graphs fAR fR}.
  \item $\theta_{xy}^{\alpha}$ denotes the angle between $y-x$ and $v_{\text{des}}^{\alpha}(x)$: the angle under which $x$ sees $y$ if it were moving in the direction of $v_{\text{des}}^{\alpha}(x)$.
  \item $g$ is a function from $[-\pi,\pi]$ to $[0,1]$ that encodes the fact that an individual's perception is not equal in all directions. We choose:
  \begin{equation}\label{g(alpha)}
  g(\theta):= \sigma +(1-\sigma)\dfrac{1+\cos(\theta)}{2}, \hspace{1 cm}\text{for }\theta\in[-\pi,\pi].
  \end{equation}
  This definition ensures that an individual experiences the strongest influence from someone straight ahead, since $g(0)=1$ for any $\sigma\in[0,1]$. The constant $\sigma$ is a tuning parameter, called \textit{potential of anisotropy}. It determines how strongly a pedestrian is focussed on what happens in front of him, and how large the influence is of people at his sides or behind him. The effect of the choice of this parameter can be seen in Figure \ref{Figure g}.\\
  Note that $\cos(\theta_{xy}^{\alpha})$ can be calculated in a convenient way:
  \begin{equation*}
  \cos(\theta_{xy}^{\alpha})= \dfrac{(y-x)\cdot v_{\text{des}}^{\alpha}(x)}{|y-x|\,|v_{\text{des}}^{\alpha}(x)|}.
  \end{equation*}
\end{itemize}

\begin{figure}[h]
\vspace{0 cm}
\begin{tabular}{rr}
\hspace{0.5 cm}\includegraphics[width=0.45\linewidth]{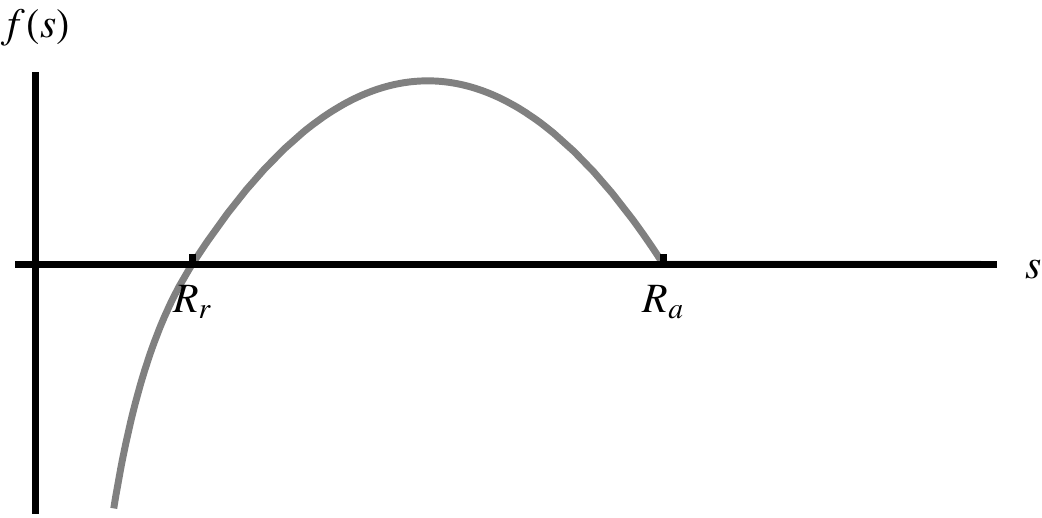}
&
\hspace{0 cm}\includegraphics[width=0.45\linewidth]{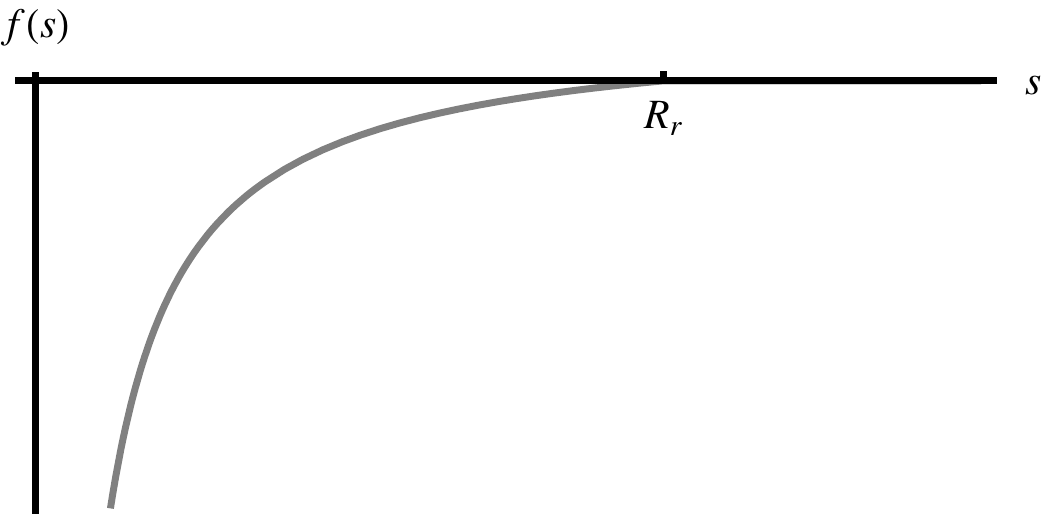} \\
\end{tabular}
\vspace{0 cm}
\caption{Graphical representation of typical examples of the function $f^{\alpha}_{\beta}$: attraction-repulsion (left) and repulsion only (right). Here, $R_r$ is the \textit{radius of repulsion}: an individual is repelled if its distance to someone else is smaller than $R_r$, while $R_a$ is the \textit{radius of attraction}: an individual is attracted to an `acquaintance' if their mutual distance is between $R_r$ and $R_a$. Note that $R_r$ might be chosen differently in either of the two choices. Typically $R_r$ will be larger in the repulsion case than in the attraction-repulsion case.}\label{Figure graphs fAR fR}
\end{figure}

\begin{figure}[h]
\centering
\vspace{0 cm}
\includegraphics[width=0.45\linewidth]{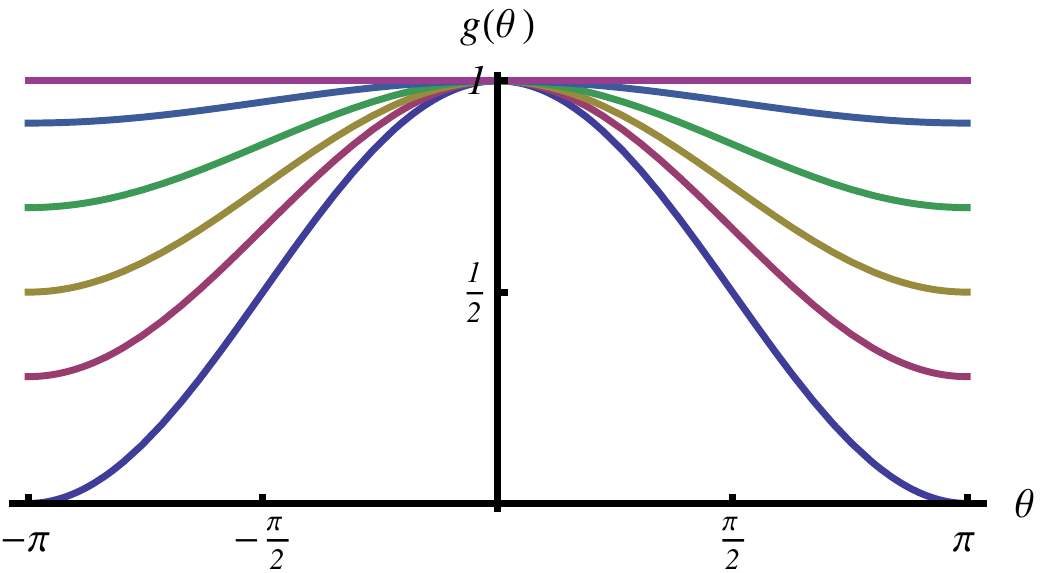}\\
\vspace{0 cm}
\caption{Graphical representation of the function $g$. We have $g(0)=1$ always, and $g(-\pi)=g(\pi)=\sigma$. Furthermore $g$ is symmetric around $\theta=0$ and increasing on $(-\pi,0)$ (thus decreasing on $(0,\pi)$). The plot has been made for $\sigma=0$, $0.3$, $0.5$, $0.7$, $0.9$, $1$ (bottom to top).}\label{Figure g}
\end{figure}

\begin{remark}
Other choices for $v_{\text{soc}}^{\alpha}$ are also possible. Based on the current velocities we can make an individual anticipate the distance he expects to be from another pedestrian in the (near) future. Although this is probably more realistic, it increases the complexity of the model dramatically. If these changes are made, $v^{\alpha}$ namely depends on the velocities $v^{\beta}$ ($\beta\in\{1,2,\ldots,\nu\}$); in particular on $v^{\alpha}$ itself. The definition of $v^{\alpha}$ becomes thus implicit, and is much harder to work with.
\end{remark}

\section{Derivation of an entropy inequality}\label{section entropy inequality cont-in-time}
In this section, we derive an entropy inequality concept in the spirit of (\ref{global entropy inequality}) in Section \ref{section mixture entropy}. Here, we also specify under which conditions this inequality holds. For simplicity and clarity, calculations are performed for a crowd without distinct subpopulations first; see Section \ref{section entropy inequality cont-in-time one population}. A multi-component crowd is considered afterwards in Section \ref{section entropy inequality cont-in-time multi-component}. The inspiration for the one-population case was provided by \cite{CarrilloMoll} (but we slightly deviate from their definitions).\\
\\
We work with a mass measure that is absolutely continuous w.r.t. the Lebesgue measure, that is: we do have a mass density.\\
\\
If one wants to deviate from the setting of \cite{CarrilloMoll}, an alternative approach is suggested in Appendix \ref{Appendix entropy ideal gas}. There, the entropy density follows, if the so-called \textit{free energy} $F$ is explicitly provided. The crucial decision is thus the choice of this free energy. In Appendix \ref{Appendix entropy ideal gas}, this is done for the example of the ideal gas. If one would find a good choice of $F$, the same arguments could be used for crowds.

\subsection{One population}\label{section entropy inequality cont-in-time one population}
We consider the presence of a single population in the domain $\Omega\subset \mathbb{R}^d$. The time-dependent density is $\rho:(0,T)\times\Omega\rightarrow \mathbb{R}^+$. We recall that it satisfies the balance of mass equation (\ref{mass balance local mixture}):
\begin{equation}\label{balance of mass recalled}
\dfrac{\partial\rho}{\partial t}+\nabla\cdot\bigl(\rho v\bigr)=0,\hspace{1 cm}\text{a.e. in }\Omega.
\end{equation}

\begin{assumption}\label{assumption gradient structure velocity}
We assume that the velocity field $v:(0,T)\times\Omega\rightarrow \mathbb{R}^d$ is of the form
\begin{equation}\label{velocity gradient structure}
v:= \nabla V + \nabla W \star \rho,
\end{equation}
where $V:\Omega\rightarrow \mathbb{R}$ is called \textit{confinement potential}, and $W:\mathbb{R}^d\rightarrow \mathbb{R}$ is called \textit{interaction potential}. We also assume that $W$ is symmetric, that is
\begin{equation*}
W(\xi)=W(-\xi),\hspace{1 cm}\text{for all }\xi\in\mathbb{R}^d.
\end{equation*}
\end{assumption}
Note that in \cite{CarrilloMoll} a minus-sign is added to the definition in (\ref{velocity gradient structure}).\\
The convolution $\nabla W \star \rho$ is defined as
\begin{equation}\label{def convolution with density}
(\nabla W \star \rho)(t,x):=\int_{\Omega}\nabla W(x-y)\rho(t,y)d\lambda^d(y),\hspace{1 cm}\text{for all }t\in(0,T),\text{ and }x\in\Omega,
\end{equation}
and thus this part is time-dependent via $\rho$. For brevity we often write $\rho(x)$ instead of $\rho(t,x)$ in the sequel. The time-dependency is to be understood implicitly. Analogously, if we write $(\nabla W \star \rho)(x)$, we actually mean $(\nabla W \star \rho)(t,x)$.

\begin{remark}
The component $\nabla V$ is the \textit{desired velocity} $v_{\text{des}}$ of Section \ref{section specification velocity}. Similarly, the component $\nabla W \star \rho$ is the \textit{social velocity} $v_{\text{soc}}$.
\end{remark}
We already proposed a velocity field of the following form (cf. Section \ref{section specification velocity}):
\begin{equation}\label{velocity decomposition}
v(t,x):= v_{\text{des}}(x)+\int_{\Omega} f(|y-x|)g(\theta_{xy})\dfrac{y-x}{|y-x|}\rho(t,y)d\lambda^d(y).
\end{equation}
We can take $\Omega$ as our domain of integration here, instead of $\Omega\setminus\{x\}$ (cf. (\ref{Def vsoc sum of integrals})). This is because $\{x\}$ is a $\lambda^d$-negligible set; the integral has the same value with or without the exclusion of this set from the domain. This holds for absolutely continuous measures, but not for any measure in general.\\
\\
The velocity as defined in (\ref{velocity decomposition}) fits to the structure of Assumption \ref{assumption gradient structure velocity}, if:
\begin{itemize}
  \item $v_{\text{des}}$ can be written as $\nabla V$ for some potential $V$. Note that in \cite{PiccoliTosin} a comparable situation is covered. The potential is found by solving the Laplace equation $\Delta V =0$ in the domain $\Omega$. For us this would mean that $v_{\text{des}}$ is divergence-free. However, the difference with our situation is that \cite{PiccoliTosin} normalizes and rescales $\nabla V$ afterwards.
  \item we disregard here the factor $g(\theta_{xy})$, that is, we set $g\equiv 1$ (or $\sigma=1$). This is necessary at this stage, because the inclusion of $g$ makes it impossible to find a \textit{symmetric} interaction potential $W$.\footnote{This is an inevitable, but disappointing choice. We thus recommend further research in this direction.}
  \item there is a function $F:\mathbb{R}_+\rightarrow\mathbb{R}$, such that $f=-F'$. Then:
      \begin{equation}
      f(|y-x|)\dfrac{y-x}{|y-x|}=-f(|x-y|)\dfrac{x-y}{|x-y|}=F'(|x-y|)\dfrac{x-y}{|x-y|}=\nabla F(|x-y|),
      \end{equation}
      where the last step of the calculations is due to the chain rule, and the fact that $\nabla|x-y|=\dfrac{x-y}{|x-y|}$. Now write $W(x-y):=F(|x-y|)$; note that this indeed implies that $W(\xi)=W(-\xi)$ for all $\xi\in\mathbb{R}^d$.
\end{itemize}
It is, for example, possible to have $v_{\text{des}}\equiv a \in \mathbb{R}^d$; that is, the desired velocity has constant magnitude and direction. To comply with the assumption that $v_{\text{des}}$ can be written as $\nabla V$, we have to take $V(x)=a\cdot x$.\\
\\
It follows from the aforementioned assumptions on $g$ and $f$ (and the subsequent calculations) that $v_{\text{soc}}=\nabla W\star \rho$. In Figure \ref{Figure graphs FAR FR} an impression is given of the functions $F$ that correspond (via the relation $f=-F'$) to those functions $f$ plotted in Figure \ref{Figure graphs FAR FR}. These functions are unique up to additional constants.
\begin{figure}[h]
\vspace{0 cm}
\begin{tabular}{rr}
\hspace{0.5 cm}\includegraphics[width=0.45\linewidth]{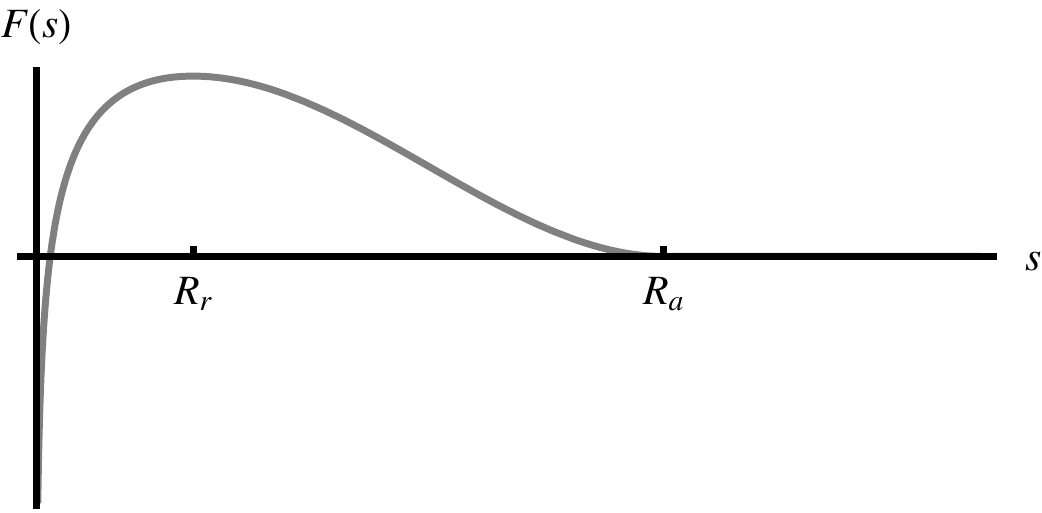}
&
\hspace{0 cm}\includegraphics[width=0.45\linewidth]{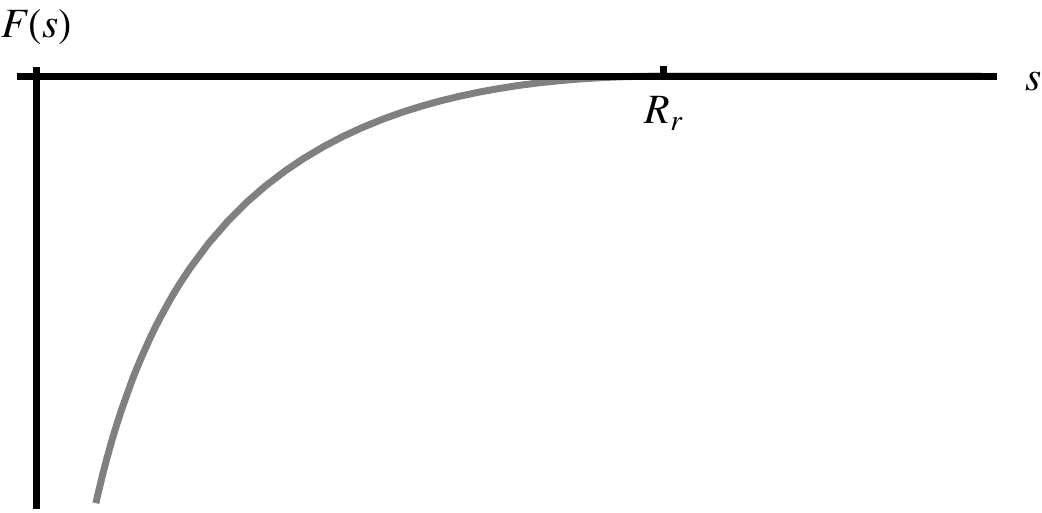} \\
\end{tabular}
\vspace{0 cm}
\caption{Graphical impression of the function $F$ corresponding, via the relation $f=-F'$, to the functions plotted in Figure \ref{Figure graphs fAR fR}: attraction-repulsion (left) and repulsion only (right).}\label{Figure graphs FAR FR}
\end{figure}

As in \cite{CarrilloMoll}, we define the entropy density $\eta$ as
\begin{equation}\label{entropy density single component}
\eta(t,x):= V(x)+\dfrac12 (W\star\rho)(t,x).
\end{equation}
The corresponding entropy of the system in $\Omega$ is then, according to (\ref{def S total entropy integral entr dens}), given by:
\begin{equation}\label{entropy integral single component}
S(t)=\int_{\Omega}\eta(t,x)\rho(t,x)d\lambda^d(x).
\end{equation}
The central question here is: Does the choice (\ref{entropy density single component}) -- (\ref{entropy integral single component}) satisfy the Clausius-Duhem Inequality? We answer this question in Theorem \ref{theorem entropy ineq one component}.

\begin{theorem}[Entropy inequality]\label{theorem entropy ineq one component}
Assume that $v$ satisfies Assumption \ref{assumption gradient structure velocity}, that the entropy $S$ is given by (\ref{entropy density single component}) -- (\ref{entropy integral single component}), and also that the system is isolated.\footnote{The statement that the system is isolated means that there is no flux through the boundary of $\Omega$, i.e. $\rho v\cdot n=0$ at $\partial \Omega$.}\\
Then the following inequality holds:
\begin{equation*}
\dfrac{dS}{dt}\geqslant 0.
\end{equation*}
\end{theorem}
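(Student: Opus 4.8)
The plan is to differentiate $S$ in time and reduce everything, via the mass balance (\ref{balance of mass recalled}), to a manifestly nonnegative quantity. First I would split the entropy into its confinement and interaction parts,
\begin{equation*}
S(t)=\int_{\Omega}V\rho\,d\lambda^d+\frac12\int_{\Omega}(W\star\rho)\rho\,d\lambda^d,
\end{equation*}
and differentiate under the integral sign. The confinement term is straightforward, yielding $\int_{\Omega}V\,\partial_t\rho\,d\lambda^d$ (note that $V$ is time-independent). For the interaction term I would write it as the double integral $\tfrac12\int_{\Omega}\int_{\Omega}W(x-y)\rho(t,x)\rho(t,y)\,d\lambda^d(y)\,d\lambda^d(x)$ and differentiate both copies of $\rho$. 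The crucial step is to invoke the symmetry assumption $W(\xi)=W(-\xi)$: after relabelling $x\leftrightarrow y$ in one of the two resulting terms they coincide, so the factor $\tfrac12$ is absorbed and the time derivative of the interaction term is exactly $\int_{\Omega}(W\star\rho)\,\partial_t\rho\,d\lambda^d$.

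Combining the two pieces gives
\begin{equation*}
\frac{dS}{dt}=\int_{\Omega}\bigl(V+W\star\rho\bigr)\partial_t\rho\,d\lambda^d.
\end{equation*}
Next I would observe that the integrand's prefactor is precisely a potential for the velocity: since $\nabla W\star\rho=\nabla(W\star\rho)$, Assumption \ref{assumption gradient structure velocity} reads $v=\nabla(V+W\star\rho)$. Writing $\Phi:=V+W\star\rho$, so that $v=\nabla\Phi$, I would substitute the balance of mass in the form $\partial_t\rho=-\nabla\cdot(\rho v)$ and integrate by parts:
\begin{equation*}
\frac{dS}{dt}=-\int_{\Omega}\Phi\,\nabla\cdot(\rho v)\,d\lambda^d=\int_{\Omega}\nabla\Phi\cdot(\rho v)\,d\lambda^d-\int_{\partial\Omega}\Phi\,\rho v\cdot n\,d\lambda^{d-1}.
\end{equation*}
The boundary integral vanishes because the system is isolated, i.e.\ $\rho v\cdot n=0$ on $\partial\Omega$. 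Since $\nabla\Phi=v$, the remaining bulk term equals $\int_{\Omega}\rho\,|v|^2\,d\lambda^d\geqslant0$, using $\rho\geqslant0$, which proves the claim.

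The main obstacle is not the sign-chasing but justifying the formal manipulations: differentiation under the integral sign (together with the exchange of $\partial_t$ and the spatial convolution) and the integration by parts both require enough regularity and integrability of $\rho$, $v$, $V$ and $W$ that is not spelled out in Assumption \ref{assumption gradient structure velocity}. I would either impose such smoothness and decay hypotheses explicitly or establish the identity first for smooth approximations and pass to the limit. The one genuinely structural point --- and the reason the factor $\tfrac12$ in (\ref{entropy density single component}) is exactly right --- is the symmetry of $W$: without $W(\xi)=W(-\xi)$ the derivative of the interaction energy would not collapse to $\int_{\Omega}(W\star\rho)\partial_t\rho\,d\lambda^d$, and the prefactor would fail to coincide with the velocity potential $\Phi$.
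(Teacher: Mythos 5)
Your proposal is correct and follows essentially the same route as the paper's proof: differentiate $S$, use the symmetry $W(\xi)=W(-\xi)$ (via relabelling $x\leftrightarrow y$ under the double integral, which is exactly the paper's computation showing $\int_{\Omega}(W\star\partial_t\rho)\rho\,d\lambda^d=\int_{\Omega}(W\star\rho)\partial_t\rho\,d\lambda^d$) to absorb the factor $\tfrac12$, substitute the mass balance, integrate by parts, kill the boundary term by the isolation hypothesis, and recognise $\int_{\Omega}|v|^2\rho\,d\lambda^d\geqslant0$. Your closing remark on the regularity needed to justify differentiation under the integral sign and the integration by parts is a fair caveat that the paper leaves implicit.
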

\begin{proof}
We investigate directly the time-derivative of the entropy $S$, i.e. we have:
\begin{align}\label{dS/dt first step}
\notag \dfrac{dS}{dt}&= \int_{\Omega}V(x)\dfrac{\partial \rho}{\partial t}(x)d\lambda^d(x)+\dfrac12\int_{\Omega}\dfrac{\partial}{\partial t}(W\star\rho)(x)\rho(x)d\lambda^d(x)+ \dfrac12\int_{\Omega}(W\star\rho)(x)\dfrac{\partial \rho}{\partial t}(x)d\lambda^d(x)\\
&= \int_{\Omega}V(x)\dfrac{\partial \rho}{\partial t}(x)d\lambda^d(x)+ \dfrac12\int_{\Omega}(W\star\dfrac{\partial\rho}{\partial t})(x)\rho(x)d\lambda^d(x)+ \dfrac12\int_{\Omega}(W\star\rho)(x)\dfrac{\partial \rho}{\partial t}(x)d\lambda^d(x).
\end{align}
Note that
\begin{align}\label{W star d rho/dt}
\notag \int_{\Omega}(W\star\dfrac{\partial\rho}{\partial t})(x)\rho(x)d\lambda^d(x) &= \int_{\Omega}\int_{\Omega}W(x-y)\dfrac{\partial\rho}{\partial t}(y)d\lambda^d(y)\rho(x)d\lambda^d(x)\\
\notag &= \int_{\Omega}\int_{\Omega}W(x-y)\dfrac{\partial\rho}{\partial t}(y)\rho(x)d\lambda^d(y)d\lambda^d(x)\\
\notag &= \int_{\Omega}\int_{\Omega}W(y-x)\dfrac{\partial\rho}{\partial t}(x)\rho(y)d\lambda^d(x)d\lambda^d(y)\\
\notag &= \int_{\Omega}\int_{\Omega}W(x-y)\rho(y)d\lambda^d(y)\dfrac{\partial\rho}{\partial t}(x)d\lambda^d(x)\\
&= \int_{\Omega}(W\star\rho)(x)\dfrac{\partial\rho}{\partial t}(x)d\lambda^d(x).
\end{align}
We have replaced $x$ by $y$, and \textit{vice versa}, to obtain the third equality (this is solely a matter of notation). To obtain the fourth one, we used that for all $x,y\in\Omega$, $W(y-x)=W(x-y)$ and interchanged the order of integration.\\
\\
From (\ref{W star d rho/dt}), we conclude that (\ref{dS/dt first step}) can be written as
\begin{align}\label{dS/dt second step}
\notag \dfrac{dS}{dt}&= \int_{\Omega}V(x)\dfrac{\partial \rho}{\partial t}(x)d\lambda^d(x)+ \int_{\Omega}(W\star\rho)(x)\dfrac{\partial \rho}{\partial t}(x)d\lambda^d(x)\\
&= \int_{\Omega}\bigl(V(x)+(W\star\rho)(x)\bigr)\dfrac{\partial \rho}{\partial t}(x)d\lambda^d(x).
\end{align}
Substitution of the balance of mass (\ref{balance of mass recalled}) in (\ref{dS/dt second step}) yields
\begin{align}\label{dS/dt third step}
\notag \dfrac{dS}{dt}=& - \int_{\Omega}\bigl(V(x)+(W\star\rho)(x)\bigr)\nabla\cdot\bigl(\rho(x)v(x)\bigr)d\lambda^d(x)\\
\notag =& - \int_{\partial\Omega}\bigl(V(x)+(W\star\rho)(x)\bigr)\rho(x)v(x)\cdot nd\lambda^{d-1}(x)\\
& + \int_{\Omega}\nabla\bigl(V(x)+(W\star\rho)(x)\bigr)\cdot\rho(x)v(x)d\lambda^d(x).
\end{align}
By the hypothesis that $\rho v\cdot n=0$ at the boundary $\partial \Omega$, the boundary term in (\ref{dS/dt third step}) vanishes. We proceed as follows:
\begin{align}\label{dS/dt fourth step}
\notag \dfrac{dS}{dt}&= \int_{\Omega}\nabla\bigl(V(x)+(W\star\rho)(x)\bigr)\cdot\rho(x)v(x)d\lambda^d(x)\\
\notag &=  \int_{\Omega}\bigl(\nabla V(x)+\nabla(W\star\rho)(x)\bigr)\cdot\rho(x)v(x)d\lambda^d(x)\\
\notag &=  \int_{\Omega}\bigl(\nabla V(x)+(\nabla W\star\rho)(x)\bigr)\cdot\rho(x)v(x)d\lambda^d(x)\\
\notag &=  \int_{\Omega}v(x)\cdot\rho(x)v(x)d\lambda^d(x)\\
\notag &=  \int_{\Omega}|v(x)|^2\rho(x)d\lambda^d(x)\\
    &\geqslant 0.
\end{align}
Note that the relation
\begin{align}
\notag\nabla(W\star\rho)(x)=& \nabla_x\int_{\Omega}W(x-y)\rho(y)d\lambda^{d}(y)\\
\notag =& \int_{\Omega}\nabla_x W(x-y)\rho(y)d\lambda^{d}(y)\\
\notag =& \int_{\Omega}(\nabla W)(x-y)\rho(y)d\lambda^{d}(y)\\
\notag=&(\nabla W\star\rho)(x),
\end{align}
was used in the third step.\\
\\
In (\ref{dS/dt fourth step}), we recognize the entropy inequality we were looking for:
\begin{equation}\label{dS/dt positive}
\dfrac{dS}{dt}\geqslant 0.
\end{equation}
\end{proof}
\begin{remark}
In Theorem \ref{theorem entropy ineq one component} and in its proof we have implicitly assumed a sufficient amount of regularity of the boundary $\partial\Omega$. This is also important for the steps we are about to take.
\end{remark}
Let us comment on the situation in which $\rho v\cdot n=0$ does not necessarily hold at $\partial \Omega$. This means that we allow mass (and consequently also entropy) to escape from or enter the domain of our focus. Instead of (\ref{dS/dt fourth step}), this would yield
\begin{align}\label{dS/dt including boundary term}
\notag \dfrac{dS}{dt} &\geqslant  - \int_{\partial\Omega}\bigl(V(x)+(W\star\rho)(x)\bigr)\rho(x)v(x)\cdot nd\lambda^{d-1}(x)\\
&= - \int_{\partial\Omega}\eta(x)\rho(x)v(x)\cdot nd\lambda^{d-1}(x)- \int_{\partial\Omega}\dfrac12(W\star\rho)(x)\rho(x)v(x)\cdot nd\lambda^{d-1}(x).
\end{align}
After defining the \textit{entropy flux} $j_{\eta}$ as
\begin{equation*}
j_{\eta}:= \dfrac12(W\star\rho)\rho v,
\end{equation*}
we are thus in the setting of (\ref{global entropy inequality}) if there is no external volume supply of heat:
\begin{equation*}
\dfrac{dS}{dt} \geqslant  - \int_{\partial\Omega}\bigl(\eta\rho v+j_{\eta}\bigr)\cdot nd\lambda^{d-1}.
\end{equation*}

\subsection{Multi-component crowd}\label{section entropy inequality cont-in-time multi-component}
In this section, we extend the results of Section \ref{section entropy inequality cont-in-time one population} to the situation in which the crowd consists of a given number of subpopulations. Let the crowd in the domain $\Omega\subset \mathbb{R}^d$ consist of $\nu$ subpopulations, indexed by $\alpha\in\{1,2,\ldots,\nu\}$. The time-dependent density of component $\alpha$ is denoted by $\rho^{\alpha}:(0,T)\times\Omega\rightarrow \mathbb{R}^+$. Cf. the definition (\ref{rho}) of $\rho^{\alpha}$ as a Radon-Nikodym derivative in Section \ref{section description mixture}.

\begin{assumption}\label{assumption gradient structure velocity multi-comp}
\begin{enumerate}
  \item The velocity field of component $\alpha$, denoted by $v^{\alpha}:(0,T)\times\Omega\rightarrow \mathbb{R}^d$ is assumed to be of the form
\begin{equation}
v^{\alpha}:= \nabla V^{\alpha} + \sum_{\beta=1}^{\nu}\nabla W_{\beta}^{\alpha} \star \rho^{\beta},\hspace{1 cm}\text{for all }\alpha\in\{1,2,\ldots,\nu\},
\end{equation}
where $V^{\alpha}:\Omega\rightarrow \mathbb{R}$ is called the \textit{confinement potential} of component $\alpha$, and $W_{\beta}^{\alpha}:\mathbb{R}^d\rightarrow \mathbb{R}$ is called the \textit{interaction potential} of component $\beta$ (affecting $\alpha$).
  \item For each $\alpha, \beta\in\{1,2,\ldots,\nu\}$ we assume that $W_{\beta}^{\alpha}(\xi)=W_{\beta}^{\alpha}(-\xi)$ holds for all $\xi\in\mathbb{R}^d$.
  \item We assume \textit{symmetric interactions}, that is
\begin{equation}
W_{\beta}^{\alpha}\equiv W_{\alpha}^{\beta}, \hspace{1 cm}\text{for all }\alpha,\beta\in\{1,2,\ldots,\nu\}.
\end{equation}\label{part symm interactions assumption grad stract mult-comp}
\end{enumerate}
\end{assumption}

\begin{remark}\label{remark predator-prey}
Part \ref{part symm interactions assumption grad stract mult-comp} of Assumption \ref{assumption gradient structure velocity multi-comp} implies for example that we do not allow a "predator-prey relation" between two subpopulations. In such relation a predator should be attracted to the prey-population, but a prey should be repelled from the predators. These interactions are \textit{asymmetric}.
\end{remark}

The following balance of mass equation is satisfied for each $\alpha\in\{1,2,\ldots,\nu\}$:
\begin{equation}\label{balance of mass constituent}
\dfrac{\partial\rho^{\alpha}}{\partial t}+\nabla\cdot\bigl(\rho^{\alpha} v^{\alpha}\bigr)=0,\hspace{1 cm}\text{a.e. in }\Omega.
\end{equation}
We recall from Sections \ref{section description mixture}--\ref{section kinematics mixture} the definitions of the total density $\rho$ and barycentric velocity $v$:
\begin{align}
\notag \rho:=& \sum_{\alpha=1}^{\nu}\rho^{\alpha},\\
\notag v:=& \sum_{\alpha=1}^{\nu}\dfrac{\rho^{\alpha}}{\rho}v^{\alpha}.
\end{align}
On the macroscopic scale, the following balance of mass equation is satisfied:
\begin{equation}\label{balance of mass total}
\dfrac{\partial\rho}{\partial t}+\nabla\cdot\bigl(\rho v\bigr)=0,\hspace{1 cm}\text{a.e. in }\Omega.
\end{equation}
For all $t\in(0,T)$ and $x\in\Omega$ we define the partial entropy density of component $\alpha$ as
\begin{equation}\label{partial entropy density}
\eta^{\alpha}(t,x):= V^{\alpha}(x)+\dfrac12\sum_{\beta=1}^{\nu}(W_{\beta}^{\alpha}\star\rho^{\beta})(t,x).
\end{equation}
The entropy density of the whole crowd is defined as
\begin{equation*}
\eta:=\sum_{\alpha=1}^{\nu}\dfrac{\rho^{\alpha}}{\rho}\eta^{\alpha}.
\end{equation*}
Consequently, the entropy of the system at time $t$ is given by
\begin{align}
\notag S(t)=& \int_{\Omega}\eta(t,x)\rho(t,x)d\lambda^d(x)\\
=& \sum_{\alpha=1}^{\nu}\int_{\Omega}\Bigl(V^{\alpha}(x)+\dfrac12\sum_{\beta=1}^{\nu}(W_{\beta}^{\alpha}\star\rho^{\beta})(t,x) \Bigr)\rho^{\alpha}(t,x)d\lambda^d(x).\label{entropy integral multi-component}
\end{align}
In the spirit of Theorem \ref{theorem entropy ineq one component} we can formulate the following theorem:
\begin{theorem}[Entropy inequality]\label{theorem entropy ineq multi-component}
Assume that for each $\alpha$ we have that $v^{\alpha}$ satisfies Assumption \ref{assumption gradient structure velocity multi-comp}. Moreover, assume that the entropy is given by (\ref{entropy integral multi-component}) and that the system is isolated.\\
Then the following inequality holds:
\begin{equation*}
\dfrac{dS}{dt}\geqslant 0.
\end{equation*}
\end{theorem}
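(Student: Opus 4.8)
The plan is to follow the structure of the proof of Theorem~\ref{theorem entropy ineq one component}, but to handle carefully the double summation over the constituent indices $\alpha$ and $\beta$. First I would differentiate the entropy $S$ given by (\ref{entropy integral multi-component}) in time. The linear part $\sum_{\alpha}\int_{\Omega}V^{\alpha}\,\partial_t\rho^{\alpha}\,d\lambda^d$ is immediate. The quadratic part produces two cross-terms by the product rule: one in which the time derivative falls on the convolved density $\rho^{\beta}$, and one in which it falls on $\rho^{\alpha}$.

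The crucial step will be to show that these two cross-terms coincide, so that the prefactor $\tfrac12$ disappears. In the single-population case this followed from the evenness $W(\xi)=W(-\xi)$ alone; here I would need \emph{both} symmetry hypotheses. After writing each convolution as a double integral over $\Omega\times\Omega$ and relabelling the integration variables $x\leftrightarrow y$, the evenness $W_{\beta}^{\alpha}(\xi)=W_{\beta}^{\alpha}(-\xi)$ rewrites the kernel; then relabelling the summation indices $\alpha\leftrightarrow\beta$ together with the symmetric-interaction assumption $W_{\beta}^{\alpha}\equiv W_{\alpha}^{\beta}$ (Part~\ref{part symm interactions assumption grad stract mult-comp} of Assumption~\ref{assumption gradient structure velocity multi-comp}) identifies the two terms. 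This is the only point where symmetry of interactions is genuinely used, and it is the main obstacle; the rest is bookkeeping.

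With the quadratic contribution collapsed, I would arrive at
\[
\dfrac{dS}{dt}=\sum_{\alpha=1}^{\nu}\int_{\Omega}\Bigl(V^{\alpha}+\sum_{\beta=1}^{\nu}W_{\beta}^{\alpha}\star\rho^{\beta}\Bigr)\dfrac{\partial\rho^{\alpha}}{\partial t}\,d\lambda^d.
\]
I would then substitute the balance of mass (\ref{balance of mass constituent}) to replace $\partial_t\rho^{\alpha}$ by $-\nabla\cdot(\rho^{\alpha}v^{\alpha})$ and integrate by parts in each summand. Since the system is isolated -- interpreted componentwise as $\rho^{\alpha}v^{\alpha}\cdot n=0$ on $\partial\Omega$ for every $\alpha$ -- all the boundary integrals vanish.

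Finally, using $\nabla(W_{\beta}^{\alpha}\star\rho^{\beta})=(\nabla W_{\beta}^{\alpha})\star\rho^{\beta}$ exactly as in the proof of Theorem~\ref{theorem entropy ineq one component}, the gradient in each integrand becomes $\nabla V^{\alpha}+\sum_{\beta}(\nabla W_{\beta}^{\alpha})\star\rho^{\beta}=v^{\alpha}$. The integrand therefore reduces to $v^{\alpha}\cdot\rho^{\alpha}v^{\alpha}=|v^{\alpha}|^{2}\rho^{\alpha}$, and since each $\rho^{\alpha}\geqslant0$ I conclude
\[
\dfrac{dS}{dt}=\sum_{\alpha=1}^{\nu}\int_{\Omega}|v^{\alpha}|^{2}\rho^{\alpha}\,d\lambda^d\geqslant0.
\]
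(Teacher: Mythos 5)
Your proposal is correct and follows essentially the same route as the paper's proof: differentiate $S$ by the product rule, collapse the two quadratic cross-terms by relabelling the integration variables $x\leftrightarrow y$ and the indices $\alpha\leftrightarrow\beta$ (using evenness of each $W_{\beta}^{\alpha}$), substitute the componentwise balance of mass, integrate by parts with the componentwise no-flux condition $\rho^{\alpha}v^{\alpha}\cdot n=0$, and recognize $\nabla V^{\alpha}+\sum_{\beta}(\nabla W_{\beta}^{\alpha})\star\rho^{\beta}=v^{\alpha}$ to obtain $\sum_{\alpha}\int_{\Omega}|v^{\alpha}|^{2}\rho^{\alpha}\,d\lambda^d\geqslant 0$. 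The only, immaterial, difference is bookkeeping: you invoke the symmetric-interaction hypothesis $W_{\beta}^{\alpha}\equiv W_{\alpha}^{\beta}$ immediately to identify the two cross-terms, whereas the paper carries the symmetrized kernel $\tfrac12\bigl(W_{\beta}^{\alpha}+W_{\alpha}^{\beta}\bigr)$ through the integration by parts and only uses that hypothesis at the final identification with $v^{\alpha}$.
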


\begin{proof}
The approach here is very much in the spirit of the proof of Theorem \ref{theorem entropy ineq one component}. We thus consider the time-derivative of $S$:
\begin{align}\label{dS/dt first step total}
\notag \dfrac{dS}{dt}=& \sum_{\alpha=1}^{\nu}\int_{\Omega}V^{\alpha}(x)\dfrac{\partial \rho^{\alpha}}{\partial t}(x)d\lambda^d(x)\\
\notag &+\dfrac12\sum_{\alpha=1}^{\nu}\int_{\Omega}\sum_{\beta=1}^{\nu}\dfrac{\partial}{\partial t}(W_{\beta}^{\alpha}\star\rho^{\beta})(x)\rho^{\alpha}(x)d\lambda^d(x)\\
\notag &+ \dfrac12\sum_{\alpha=1}^{\nu}\int_{\Omega}\sum_{\beta=1}^{\nu}(W_{\beta}^{\alpha}\star\rho^{\beta})(x)\dfrac{\partial \rho^{\alpha}}{\partial t}(x)d\lambda^d(x)\\
\notag =& \sum_{\alpha=1}^{\nu}\int_{\Omega}V^{\alpha}(x)\dfrac{\partial \rho^{\alpha}}{\partial t}(x)d\lambda^d(x)\\
\notag &+\dfrac12\sum_{\alpha=1}^{\nu}\sum_{\beta=1}^{\nu}\int_{\Omega}(W_{\beta}^{\alpha}\star\dfrac{\partial\rho^{\beta}}{\partial t})(x)\rho^{\alpha}(x)d\lambda^d(x)\\
&+ \dfrac12\sum_{\alpha=1}^{\nu}\sum_{\beta=1}^{\nu}\int_{\Omega}(W_{\beta}^{\alpha}\star\rho^{\beta})(x)\dfrac{\partial \rho^{\alpha}}{\partial t}(x)d\lambda^d(x)
\end{align}
Following the idea of (\ref{W star d rho/dt}), we derive
\begin{align}\label{W star d rho/dt total}
\notag \sum_{\alpha=1}^{\nu}\sum_{\beta=1}^{\nu}\int_{\Omega}(W_{\beta}^{\alpha}\star\dfrac{\partial\rho^{\beta}}{\partial t})(x)\rho^{\alpha}(x)d\lambda^d(x) &= \sum_{\alpha=1}^{\nu}\sum_{\beta=1}^{\nu}\int_{\Omega}\int_{\Omega}W_{\beta}^{\alpha}(x-y)\dfrac{\partial\rho^{\beta}}{\partial t}(y)d\lambda^d(y)\rho^{\alpha}(x)d\lambda^d(x)\\
\notag &= \sum_{\alpha=1}^{\nu}\sum_{\beta=1}^{\nu}\int_{\Omega}\int_{\Omega}W_{\beta}^{\alpha}(x-y)\dfrac{\partial\rho^{\beta}}{\partial t}(y)\rho^{\alpha}(x)d\lambda^d(y)d\lambda^d(x)\\
\notag &= \sum_{\beta=1}^{\nu}\sum_{\alpha=1}^{\nu}\int_{\Omega}\int_{\Omega}W_{\alpha}^{\beta}(y-x)\dfrac{\partial\rho^{\alpha}}{\partial t}(x)\rho^{\beta}(y)d\lambda^d(x)d\lambda^d(y)\\
\notag &= \sum_{\alpha=1}^{\nu}\sum_{\beta=1}^{\nu}\int_{\Omega}\int_{\Omega}W_{\alpha}^{\beta}(x-y)\rho^{\beta}(y)d\lambda^d(y)\dfrac{\partial\rho^{\alpha}}{\partial t}(x)d\lambda^d(x)\\
&= \sum_{\alpha=1}^{\nu}\sum_{\beta=1}^{\nu}\int_{\Omega}(W_{\alpha}^{\beta}\star\rho^{\beta})(x)\dfrac{\partial\rho^{\alpha}}{\partial t}(x)d\lambda^d(x).
\end{align}
We have replaced $x$ by $y$, $\alpha$ by $\beta$, and \textit{vice versa}, to obtain the third equality (this is again solely a matter of notation). To obtain the fourth one, we used the fact that $W_{\alpha}^{\beta}(y-x)=W_{\alpha}^{\beta}(x-y)$ for all $\alpha,\beta\in\{1,2,\ldots\}$. Moreover, we interchanged the order of summation and integration.\\
\\
We now combine (\ref{dS/dt first step total}) and (\ref{W star d rho/dt total}), and conclude that
\begin{align}\label{dS/dt second step total}
\dfrac{dS}{dt}=& \sum_{\alpha=1}^{\nu}\int_{\Omega}\Bigl[V^{\alpha}(x)+\sum_{\beta=1}^{\nu}\bigl(\dfrac12(W_{\beta}^{\alpha}+W_{\alpha}^{\beta})\star \rho^{\beta}\bigr)(x) \Bigr]\dfrac{\partial \rho^{\alpha}}{\partial t}(x)d\lambda^d(x).
\end{align}
We substitute (\ref{balance of mass constituent}), the balance of mass per component, in (\ref{dS/dt second step total}), by which we obtain
\begin{align}\label{dS/dt third step total}
\notag \dfrac{dS}{dt}=& - \sum_{\alpha=1}^{\nu} \int_{\Omega}\Bigl[V^{\alpha}(x)+\sum_{\beta=1}^{\nu}\bigl(\dfrac12(W_{\beta}^{\alpha}+W_{\alpha}^{\beta})\star \rho^{\beta}\bigr)(x) \Bigr]\nabla\cdot\bigl(\rho^{\alpha}(x)v^{\alpha}(x)\bigr)d\lambda^d(x)\\
\notag =& - \sum_{\alpha=1}^{\nu} \int_{\partial\Omega}\Bigl[V^{\alpha}(x)+\sum_{\beta=1}^{\nu}\bigl(\dfrac12(W_{\beta}^{\alpha}+W_{\alpha}^{\beta})\star \rho^{\beta}\bigr)(x) \Bigr]\rho^{\alpha}(x)v^{\alpha}(x)\cdot nd\lambda^{d-1}(x)\\
& + \sum_{\alpha=1}^{\nu}\int_{\Omega}\nabla\Bigl[V^{\alpha}(x)+\sum_{\beta=1}^{\nu}\bigl(\dfrac12(W_{\beta}^{\alpha}+W_{\alpha}^{\beta})\star \rho^{\beta}\bigr)(x) \Bigr]\cdot\rho^{\alpha}(x)v^{\alpha}(x)d\lambda^d(x).
\end{align}
Under the assumption that there is no flux of mass through the boundary of $\Omega$ for any of the constituents $\alpha$, we have $\rho^{\alpha}v^{\alpha}\cdot n=0$ at $\partial \Omega$ for all $\alpha\in\{1,2,\ldots,\nu\}$. This means that the boundary term in (\ref{dS/dt third step total}) vanishes. Taking also Part \ref{part symm interactions assumption grad stract mult-comp} of Assumption \ref{assumption gradient structure velocity multi-comp} (i.e. symmetric interactions) into consideration, (\ref{dS/dt third step total}) reads
\begin{align}\label{dS/dt fourth step total}
\notag \dfrac{dS}{dt}=& \sum_{\alpha=1}^{\nu}\int_{\Omega}\nabla\Bigl[V^{\alpha}(x)+\sum_{\beta=1}^{\nu}(W_{\beta}^{\alpha}\star \rho^{\beta})(x) \Bigr]\cdot\rho^{\alpha}(x)v^{\alpha}(x)d\lambda^d(x)\\
\notag =& \sum_{\alpha=1}^{\nu}\int_{\Omega}\Bigl[\nabla V^{\alpha}(x)+\sum_{\beta=1}^{\nu}(\nabla W_{\beta}^{\alpha}\star \rho^{\beta})(x) \Bigr]\cdot\rho^{\alpha}(x)v^{\alpha}(x)d\lambda^d(x)\\
\notag =& \sum_{\alpha=1}^{\nu}\int_{\Omega}v^{\alpha}(x)\cdot\rho^{\alpha}(x)v^{\alpha}(x)d\lambda^d(x)\\
\notag =& \sum_{\alpha=1}^{\nu}\int_{\Omega}|v^{\alpha}(x)|^2\rho^{\alpha}(x)d\lambda^d(x)\\
\geqslant& 0.
\end{align}
By (\ref{dS/dt fourth step total}) we thus have the following entropy inequality:
\begin{equation}\label{dS/dt positive total}
\dfrac{dS}{dt}\geqslant 0.
\end{equation}
\end{proof}

If we allow mass (and consequently also entropy) to escape or enter through the boundary of $\Omega$, we should consider the boundary term in (\ref{dS/dt third step total}). Let us define the entropy flux $j_{\eta}$ as
\begin{equation*}
j_{\eta}:= \dfrac12\sum_{\alpha=1}^{\nu}\sum_{\beta=1}^{\nu}\bigl(W_{\alpha}^{\beta}\star \rho^{\beta}\bigr)\rho^{\alpha}v^{\alpha},
\end{equation*}
Still assuming symmetric interactions, and using (\ref{partial entropy density}), (\ref{dS/dt third step total}) would now read
\begin{align}\label{dS/dt including boundary term}
\notag \dfrac{dS}{dt} &\geqslant  - \sum_{\alpha=1}^{\nu} \int_{\partial\Omega}\Bigl[V^{\alpha}(x)+\sum_{\beta=1}^{\nu}\bigl(\dfrac12(W_{\beta}^{\alpha}+W_{\alpha}^{\beta})\star \rho^{\beta}\bigr)(x) \Bigr]\rho^{\alpha}(x)v^{\alpha}(x)\cdot nd\lambda^{d-1}(x)\\
&= -\int_{\partial\Omega}\sum_{\alpha=1}^{\nu}\eta^{\alpha}(x)\rho^{\alpha}(x)v^{\alpha}(x)\cdot nd\lambda^{d-1}(x)- \int_{\partial\Omega}j_{\eta}(x)\cdot nd\lambda^{d-1}(x).
\end{align}
\begin{remark}
Note that this does \textbf{not} bring us to the setting of (\ref{global entropy inequality}) if there is no external volume supply of heat. This is because, in general
\begin{equation*}
\sum_{\alpha=1}^{\nu}\eta^{\alpha}\rho^{\alpha}v^{\alpha}\neq \eta\rho v.
\end{equation*}
Compare this to Remark \ref{remark how to derive entropy ineq whole mixture}. Using $\sum_{\alpha=1}^{\nu}\eta^{\alpha}\rho^{\alpha}v^{\alpha}$ in the entropy inequality fits the approach of \cite{GreenNaghdi}. This is however rejected by \cite{BedfordDrumheller, Holmes}, who go for using $\eta\rho v$, as it is mentioned in (\ref{global entropy inequality}).
\end{remark}

\begin{remark}
The symmetry of the interactions, as imposed by Part \ref{part symm interactions assumption grad stract mult-comp} of Assumption \ref{assumption gradient structure velocity multi-comp}, is crucial in the proof of Theorem \ref{theorem entropy ineq multi-component}. However, it is a quite restrictive assumption, that disallows many interesting settings and thus deserves further research. At a later stage we hope to include a drift in the interactions, which is such that we can still formulate an entropy inequality.
\end{remark}

\subsection{Generalization to discrete measures}\label{section generalization entropy inequality to discrete measures}
In this section, we make feasible that it is possible to derive (at least from a mathematical point of view) an analogon for discrete measures of the entropy inequalities mentioned in Sections \ref{section entropy inequality cont-in-time one population} and \ref{section entropy inequality cont-in-time multi-component}. We already announced that we would do so in Remark \ref{remark physical relevance generalization entropy inequality}.\\
\\
We consider a single population with corresponding discrete mass measure of the form
\begin{equation*}
\mu := \sum_{i\in\mathcal{J}} \delta_{x_i(t)},
\end{equation*}
where $\mathcal{J}\subset\mathbb{N}$ is some index set. The evolution in time of the centres $x_i$ is governed by the velocity field $v$ via
\begin{equation*}
\dfrac{d}{dt}x_i(t)=v(t,x_i(t)).
\end{equation*}
This velocity field is assumed to be of the form (cf. (\ref{velocity gradient structure}))
\begin{equation}\label{velocity gradient structure convolution with measure}
v:= \nabla V + \nabla W \star \mu,
\end{equation}
The convolution $\nabla W \star \mu$ is a generalized form of (\ref{def convolution with density}), defined as
\begin{equation}\label{def convolution measure}
(\nabla W \star \mu)(t,x):=\int_{\Omega\setminus\{x\}}\nabla W(x-y)d\mu(t,y),\hspace{1 cm}\text{for all }t\in(0,T),\text{ and }x\in\Omega.
\end{equation}
The point $x$ itself has been excluded from the domain of integration to avoid interaction of a point mass/pedestrian with itself. We assume again that $W$ is symmetric:
\begin{equation*}
W(\xi)=W(-\xi),\hspace{1 cm}\text{for all }\xi\in\mathbb{R}^d.
\end{equation*}
In the spirit of (\ref{entropy density single component}) we define the entropy density $\eta$ as
\begin{equation*}
\eta(t,x):= V(x)+\dfrac12 (W\star\mu)(t,x).
\end{equation*}
The corresponding entropy of the system in $\Omega$ is
\begin{equation*}
S(t)=\int_{\Omega}\eta(t,x)d\mu(t,x).
\end{equation*}
We explicitly restrict ourselves to the situation that all point masses remain in $\Omega$.\footnote{Allowing point masses to leave (or enter) the domain, means that a boundary measure, or a trace of the mass measure on the boundary, needs to be defined properly. It is all but trivial to do so for general measures.} This restriction implies that an integral with respect to the measure $\mu$ can be represented by a sum, in which all centres $x_i$ contribute. We have that
\begin{equation*}
S(t)=\sum_{i\in\mathcal{J}}\Big\{V(x_i)+\dfrac12 \sum_{\substack{j\in\mathcal{J}\\x_j\neq x_i}}W(x_i-x_j)\mathbf{1}_{x_j\in\Omega}  \Big\}\mathbf{1}_{x_i\in\Omega},
\end{equation*}
where all positions $x_i$ are time-dependent. The indicator function $\mathbf{1}_{x_i\in\Omega}$ is $1$ if $x_i\in\Omega$ is true, and $0$ otherwise.\\
The requirement that all centres $x_i$ remain in $\Omega$, makes that $\mathbf{1}_{x_i\in\Omega}$ never becomes $0$. For all $i\in\mathcal{J}$ thus $\mathbf{1}_{x_i\in\Omega}\equiv 1$ holds, for all $t\in[0,T]$. As a result, we lose time-dependence in these indicator functions, and we can just write
\begin{equation*}
S(t)=\sum_{i\in\mathcal{J}}\Big\{V(x_i)+\dfrac12 \sum_{\substack{j\in\mathcal{J}\\x_j\neq x_i}}W(x_i-x_j)\Big\}.
\end{equation*}
\begin{remark}
To derive an entropy inequality we examine the time derivative of $S$. Consider the `forbidden' situation that $x_i(t)$ leaves (or enters) the domain, say at time $t=t^*$. Then $\mathbf{1}_{x_i(t)\in\Omega}$ is discontinuous in $t=t^*$. We disallow this situation, because the time derivative of $S$ does not exist in such point (not even in a weak sense).
\end{remark}
We take the time derivative of $S(t)$:
\begin{eqnarray}
\nonumber \dfrac{dS}{dt} &=& \sum_{i\in\mathcal{J}}\Big\{\nabla V(x_i)\cdot\dfrac{dx_i}{dt}+\dfrac12 \sum_{\substack{j\in\mathcal{J}\\x_j\neq x_i}}\nabla_{x_i}W(x_i-x_j)\cdot\dfrac{dx_i}{dt}+\dfrac12 \sum_{\substack{j\in\mathcal{J}\\x_j\neq x_i}}\nabla_{x_j}W(x_i-x_j)\cdot\dfrac{dx_j}{dt}\Big\}\\
\nonumber &=& \sum_{i\in\mathcal{J}}\Big\{\nabla V(x_i)\cdot\dfrac{dx_i}{dt}+\dfrac12 \sum_{\substack{j\in\mathcal{J}\\x_j\neq x_i}}\nabla W(x_i-x_j)\cdot\dfrac{dx_i}{dt}-\dfrac12 \sum_{\substack{j\in\mathcal{J}\\x_j\neq x_i}}\nabla W(x_i-x_j)\cdot\dfrac{dx_j}{dt}\Big\}\\
\nonumber &=& \sum_{i\in\mathcal{J}}\nabla V(x_i)\cdot v(t,x_i)+\dfrac12 \sum_{\substack{i,j\in\mathcal{J}\\x_j\neq x_i}}\nabla W(x_i-x_j)\cdot v(t,x_i)-\dfrac12 \sum_{\substack{i,j\in\mathcal{J}\\x_j\neq x_i}}\nabla W(x_i-x_j)\cdot v(t,x_j)\\
\nonumber &=& \sum_{i\in\mathcal{J}}\nabla V(x_i)\cdot v(t,x_i)+\dfrac12 \sum_{\substack{i,j\in\mathcal{J}\\x_j\neq x_i}}\nabla W(x_i-x_j)\cdot v(t,x_i)+\dfrac12 \sum_{\substack{i,j\in\mathcal{J}\\x_j\neq x_i}}\nabla W(x_j-x_i)\cdot v(t,x_j)\\
\nonumber &=& \sum_{i\in\mathcal{J}}\nabla V(x_i)\cdot v(t,x_i)+ \sum_{\substack{i,j\in\mathcal{J}\\x_j\neq x_i}}\nabla W(x_i-x_j)\cdot v(t,x_i)\\
\nonumber &=& \sum_{i\in\mathcal{J}}v(t,x_i)\cdot \Big \{ \nabla V(x_i) + \sum_{\substack{j\in\mathcal{J}\\x_j\neq x_i}}\nabla W(x_i-x_j) \Big\}\\
\nonumber &=& \sum_{i\in\mathcal{J}}v(t,x_i)\cdot \Big \{ \nabla V(x_i) + (\nabla W\star\mu)(x_i) \Big\}\\
\nonumber &=& \int_{\Omega}|v(t,x)|^2 d\mu(t,x)\\
&\geqslant& 0.\label{entropy ineq discrete measure}
\end{eqnarray}
The seventh equality is due to (\ref{def convolution measure}). By (\ref{entropy ineq discrete measure}) we have derived an equivalent statement as in Theorem \ref{theorem entropy ineq one component}.\\
\\
Following similar lines of argument, an entropy inequality as in Theorem \ref{theorem entropy ineq multi-component} can be derived also if we allow subpopulations with both discrete and absolutely continuous mass measures. We will, however, not go into further details in this direction.

\section{Derivation of the time-discrete model}\label{section derivation time-discrete}
Inspired by \cite{Piccoli2010, PiccoliTosinMeasTh}, we derive in this section a discrete-in-time counterpart of the weak formulation presented in (\ref{Weak Form}) and Definition \ref{def weak solution}.\\
For this aim, we introduce a strictly increasing sequence of discrete points in time $\{t_n\}_{n\in\mathcal{N}}\subset[0,T]$. Here, $\mathcal{N}:=\{0,1,\ldots,N_T\}$ is the index set, with $N_T\in\mathbb{N}$. The set of points in time is chosen such that $t_0=0$ and $t_{N_T}=T$.\\
We define $\Delta t_n:=t_{n+1}-t_n$. Since $\{t_n\}_{n\in\mathcal{N}}$ is a strictly increasing sequence, $\Delta t_n>0$ holds for all $n$.\\
To emphasize that we are working in a time-discrete setting, we write $\mu_n^{\alpha}(\cdot)$ as the time-discrete equivalent of $\mu^{\alpha}(t_n,\cdot):\mathcal{B}(\Omega)\rightarrow\mathbb{R}^+$ for all $n\in\mathcal{N}$ and all $\alpha\in\{1,2,\ldots,\nu\}$. Analogously, we write $v_n^{\alpha}(\cdot)$ for $v^{\alpha}(t_n,\cdot):\Omega\rightarrow\mathbb{R}^d$.\\
\\
For arbitrary $n\in\mathcal{N}$, integration in time of (\ref{Weak Form}) over the interval $(t_n,t_{n+1})$ yields for all $\alpha\in\{1,2,\ldots,\nu\}$ and all $\psi^{\alpha}\in C^1_0(\bar{\Omega})$:
\begin{equation}\label{integration weak form over time interval discretization}
\int_{\Omega}\psi^{\alpha}(x)d\mu^{\alpha}(t_{n+1},x)-\int_{\Omega}\psi^{\alpha}(x)d\mu^{\alpha}(t_n,x)= \int_{t_n}^{t_{n+1}}\int_{\Omega}v^{\alpha}(t,x)\cdot\nabla\psi^{\alpha}(x)d\mu^{\alpha}(t,x)dt.
\end{equation}
Assuming all necessary regularity, we can expand the right-hand term in a Taylor series around $t_n$. For the sake of brevity we define $A(t):=\int_{t_n}^{t}\int_{\Omega}v^{\alpha}(\tilde{t},x)\cdot\nabla\psi^{\alpha}(x)d\mu^{\alpha}(\tilde{t},x)d\tilde{t}$. Note that $A(t_{n+1})$ is the right-hand side of (\ref{integration weak form over time interval discretization}). Now
\begin{equation*}
A(t_{n+1})= A(t_n)+\Delta t_n \dfrac{dA}{dt}\Big|_{t=t_n} + \mathcal{O}\bigl(\Delta t_n^2\bigr).
\end{equation*}
Note that $A(t_n)=0$ and $\dfrac{d}{dt}A(t)=\int_{\Omega}v^{\alpha}(t,x)\cdot\nabla\psi^{\alpha}(x)d\mu^{\alpha}(t,x)$. By writing $\mu_n^{\alpha}(\cdot)$ instead of $\mu^{\alpha}(t_n,\cdot)$, and $v_n^{\alpha}(\cdot)$ instead of $v^{\alpha}(t_n,\cdot)$ (above we already announced to do so), (\ref{integration weak form over time interval discretization}) transforms into
\begin{equation*}
\int_{\Omega}\psi^{\alpha}(x)d\mu_{n+1}^{\alpha}(x)-\int_{\Omega}\psi^{\alpha}(x)d\mu_n^{\alpha}(x)= \Delta t_n\int_{\Omega}v_n^{\alpha}(x)\cdot\nabla\psi^{\alpha}(x)d\mu_n^{\alpha}(x)+\mathcal{O}\bigl(\Delta t_n^2\bigr).
\end{equation*}
This can also be written as
\begin{equation}\label{Taylor approx evolution time discrete}
\int_{\Omega}\psi^{\alpha}(x)d\mu_{n+1}^{\alpha}(x)= \int_{\Omega}\Bigl(\psi^{\alpha}(x)+\Delta t_n v_n^{\alpha}(x)\cdot\nabla\psi^{\alpha}(x)\Bigr)d\mu_n^{\alpha}(x)+\mathcal{O}\bigl(\Delta t_n^2\bigr).
\end{equation}
If $v_n^{\alpha}$ is `well-behaved' we expand
\begin{equation}\label{Taylor expansion psi, v well-behaved}
\psi^{\alpha}\bigl(x+\Delta t_n v_n^{\alpha}(x)\bigr)=\psi^{\alpha}(x)+\Delta t_n v_n^{\alpha}(x)\cdot\nabla\psi^{\alpha}(x)+\mathcal{O}\bigl(\Delta t_n^2\bigr),
\end{equation}
or
\begin{equation}\label{Taylor expansion psi, v well-behaved, reordered}
\psi^{\alpha}(x)+\Delta t_n v_n^{\alpha}(x)\cdot\nabla\psi^{\alpha}(x)=\psi^{\alpha}\bigl(x+\Delta t_n v_n^{\alpha}(x)\bigr)+\mathcal{O}\bigr(\Delta t_n^2\bigr).
\end{equation}
By `well-behaved' we mean that $v_n^{\alpha}$ is (at least) $\mu_n^{\alpha}$-uniformly bounded. That is, for fixed $n\in\mathcal{N}$ there exists a non-negative constant $M_n$ such that $|v_n^{\alpha}(x)|<M_n$ for $\mu^{\alpha}_n$-almost every $x$, by which $\mu^{\alpha}_n$-almost everywhere: $\Delta t_n v_n^{\alpha}(x)=\mathcal{O}(\Delta t_n)$.\\
We substitute this in (\ref{Taylor approx evolution time discrete}) to obtain
\begin{eqnarray}
\nonumber \int_{\Omega}\psi^{\alpha}(x)d\mu_{n+1}^{\alpha}(x) &=& \int_{\Omega}\Bigl(\psi^{\alpha}\bigl(x+\Delta t_n v_n^{\alpha}(x)\bigr)+\mathcal{O}\bigr(\Delta t_n^2\bigr)\Bigr)d\mu_n^{\alpha}(x)+\mathcal{O}\bigl(\Delta t_n^2\bigr)\\
&=& \int_{\Omega}\psi^{\alpha}\bigl(x+\Delta t_n v_n^{\alpha}(x)\bigr)d\mu_n^{\alpha}(x)+\mathcal{O}\bigl(\Delta t_n^2\bigr),
\end{eqnarray}
where we have taken the $\mathcal{O}\bigr(\Delta t_n^2\bigr)$-terms outside the integral, and used that $\mu^{\alpha}_n(\Omega)$ is finite, since $\mu^{\alpha}(t_n,\cdot)$ is a finite measure for all choices of $n\in\mathcal{N}$.\\
\\
We neglect the $\mathcal{O}\bigr(\Delta t_n^2\bigr)$-part and obtain
\begin{equation}\label{integral evolution time discrete}
\int_{\Omega}\psi^{\alpha}(x)d\mu_{n+1}^{\alpha}(x) \approx  \int_{\Omega}\psi^{\alpha}\bigl(\chi^{\alpha}_n(x)\bigr)d\mu_n^{\alpha}(x).
\end{equation}
Although (\ref{integral evolution time discrete}) is an approximation, we treat it as an equality from now on. Whenever we refer to (\ref{integral evolution time discrete}), we thus mean the equality rather than the approximation. In (\ref{integral evolution time discrete}) we have moreover used the definition:
\begin{definition}[One-step motion mapping]\label{def one-step motion mapping}
The \textit{one-step motion mapping} $\chi_n^{\alpha}$ is defined by
\begin{equation}\label{one-step motion mapping}
\chi^{\alpha}_n(x):= x + \Delta t_n v_n^{\alpha}(x),
\end{equation}
for all $n\in\{0,1,\ldots,N_T-1\}$. For simplicity, it will henceforth just be called \textit{motion mapping}. It provides the position at time step $n+1$ of the point located in $x$ at time step $n$.
\end{definition}

\begin{assumption}[Properties of the motion mappings]\label{assumption time-discrete motion mapping invertible and Borel}
For all $n\in\{0,1,\ldots,N_T-1\}$ and each $\alpha\in\{1,2,\ldots,\nu\}$ we assume that $\chi^{\alpha}_n:\Omega\rightarrow\Omega$ is a \textit{homeomorphism}. This means that:
\begin{enumerate}[(i)]
  \item $\chi^{\alpha}_n$ is invertible,
  \item $\chi^{\alpha}_n$ is continuous,
  \item $(\chi^{\alpha}_n)^{-1}$ is continuous.
\end{enumerate}
\end{assumption}

\begin{remark}
From the proof of Lemma \ref{Diffeomorphism maps Borel to Borel} we extract the statement that a continuous mapping is Borel. Regarding Assumption \ref{assumption time-discrete motion mapping invertible and Borel}, this implies that $\chi^{\alpha}_n$ and $(\chi^{\alpha}_n)^{-1}$ are Borel. Respectively, that is:
\begin{enumerate}[(i)]
  \item for all $\Omega'\in\mathcal{B}(\Omega)$ we have that $(\chi^{\alpha}_n)^{-1}(\Omega')\in\mathcal{B}(\Omega)$;
  \item for all $\Omega'\in\mathcal{B}(\Omega)$ we have that $\chi^{\alpha}_n(\Omega')\in\mathcal{B}(\Omega)$.
\end{enumerate}
Note that Assumption \ref{assumption time-discrete motion mapping invertible and Borel} is the somewhat weaker counterpart of the assumptions on the motion mappings in Section \ref{section kinematics mixture}.
\end{remark}

\begin{remark}\label{remark relax psi in C^1_0}
The expression in (\ref{integral evolution time discrete}) makes sense even if we do not restrict ourselves to taking only $\psi^{\alpha}\in C^1_0(\bar{\Omega})$. In the sequel we allow $\psi^{\alpha}$ to be any function that is integrable on $\Omega$ with respect to the measure $\mu^{\alpha}_{n+1}$, that is: $\psi^{\alpha}\in L^1_{\mu^{\alpha}_{n+1}}(\Omega)$.
\end{remark}
Following Remark \ref{remark relax psi in C^1_0}, we can take $\psi^{\alpha}=\mathbf{1}_{\Omega'}$ the characteristic function for any $\Omega'\in\mathcal{B}(\Omega)$.  As a result, (\ref{integral evolution time discrete}) reduces to
\begin{equation}\label{push forward measure subset}
\mu^{\alpha}_{n+1}(\Omega')=\mu^{\alpha}_n\bigl( (\chi^{\alpha}_n)^{-1}(\Omega') \bigr).
\end{equation}

\begin{definition}[Push forward]\label{def push forward}
The measure $\eta_{n+1}$ is called the push forward of the measure $\eta_{n}$ via the motion mapping $\chi^{\alpha}_n$, notation:
\begin{equation}\label{push forward measure hash notation}
\eta_{n+1}=\chi^{\alpha}_n \# \eta_n,
\end{equation}
if
\begin{equation*}
\eta_{n+1}(\Omega')=\eta_n\bigl( (\chi^{\alpha}_n)^{-1}(\Omega') \bigr),
\end{equation*}
is satisfied for all $\Omega'\in\mathcal{B}(\Omega)$.
\end{definition}
We have now derived the time-discrete version of the problem formulated in Section \ref{section weak formulation}:
\begin{definition}[Time-discrete solution]\label{def time-discrete solution}
The vector of time-discrete measures:
\begin{equation*}
\Bigl((\mu^1_n)_{n\in\mathcal{N}}, (\mu^2_n)_{n\in\mathcal{N}},\ldots, (\mu^{\nu}_n)_{n\in\mathcal{N}}\Bigr),
\end{equation*}
is called time-discrete solution, if:
\begin{enumerate}[(i)]
  \item $\mu^{\alpha}_{n=0}=\mu_0^{\alpha}$ is satisfied for each $\alpha\in\{1,2,\dots,\nu\}$, for some given set of initial measures $\mu_0^{\alpha}$ that are positive and finite,
  \item the evolution of $\mu^{\alpha}_n$ is for each $\alpha$ determined by the push forward $\mu^{\alpha}_{n+1}=\chi^{\alpha}_n \# \mu^{\alpha}_n$,
  \item for each $\alpha$, $\mu^{\alpha}_n$ is positive and finite for all $n$.
\end{enumerate}
\end{definition}
We refer to finding a time-discrete solution as Problem $(\mathcal{P})$.
\section{Solvability of Problem $(\mathcal{P})$ and properties of the solution}\label{section well-posedness time-discrete}
In this section, we prove that there exists a unique time-discrete solution in the sense of Definition \ref{def time-discrete solution}.

\subsection{Solvability}
\begin{theorem}[Global existence of time-discrete solutions]\label{Thm existence time-discrete}
Suppose that for each $n\in\mathcal{N}$ there exist constants $c_n>0$ and $C_n>0$, such that for each $\alpha\in\{1,2,\ldots,\nu\}$
\begin{equation}\label{hypothesis upper and lower bound lambda(chi inv Omega)}
c_n \lambda^d(\Omega') \leqslant \lambda^d\bigl((\chi^{\alpha}_n)^{-1}(\Omega')\bigr)\leqslant C_n \lambda^d(\Omega'), \hspace{1 cm}\text{for each }\Omega'\in\mathcal{B}(\Omega).
\end{equation}
Suppose furthermore that for each $\alpha$ the initial measure $\mu^{\alpha}_0$ is given in its refined Lebesgue decomposition (cf. Corollary \ref{Corollary refined Lebesgue decomposition})
\begin{equation*}
\mu^{\alpha}_0= \mu^{\alpha}_{\text{ac},0}+ \mu^{\alpha}_{\text{d},0}+ \mu^{\alpha}_{\text{sc},0},
\end{equation*}
where $\mu^{\alpha}_{\text{ac},0}\ll\lambda^d$, $\mu^{\alpha}_{\text{d},0}$ is discrete w.r.t. $\lambda^d$ and $\mu^{\alpha}_{\text{sc},0}$ is singular continuous w.r.t. $\lambda^d$. Assume that $\mu^{\alpha}_{\text{ac},0}$, $\mu^{\alpha}_{\text{d},0}$ and $\mu^{\alpha}_{\text{sc},0}$ are positive and finite measures.\\
\\
Then a time-discrete solution as defined in Definition \ref{def time-discrete solution} exists and it is of the form
\begin{equation}\label{unique Lebesgue decomposition time-discrete measure}
\mu^{\alpha}_n= \mu^{\alpha}_{\text{ac},n}+\mu^{\alpha}_{\text{d},n}+\mu^{\alpha}_{\text{sc},n},
\end{equation}
where $\mu^{\alpha}_{\text{ac},n}\ll\lambda^d$, $\mu^{\alpha}_{\text{d},n}$ is discrete w.r.t. $\lambda^d$, $\mu^{\alpha}_{\text{sc},n}$ is singular continuous w.r.t. $\lambda^d$, for all $n\in\mathcal{N}$, and each component is positive and finite.
\end{theorem}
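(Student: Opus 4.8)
The plan is to build the solution by iterating the push-forward explicitly, so that existence is essentially immediate from Definition \ref{def push forward}; the real work lies in showing that the three-fold structure in (\ref{unique Lebesgue decomposition time-discrete measure}) is preserved at each step. I would argue by induction on $n$, the base case $n=0$ being exactly the given decomposition of $\mu^{\alpha}_0$. For the induction step I set $\mu^{\alpha}_{n+1} := \chi^{\alpha}_n \# \mu^{\alpha}_n$ and first check that this is again a positive, finite, $\sigma$-additive measure: $\sigma$-additivity is inherited because preimages commute with countable disjoint unions, positivity because $\mu^{\alpha}_n \geqslant 0$, and finiteness because $\chi^{\alpha}_n$ is a bijection of $\Omega$, so that $\mu^{\alpha}_{n+1}(\Omega)=\mu^{\alpha}_n\big((\chi^{\alpha}_n)^{-1}(\Omega)\big)=\mu^{\alpha}_n(\Omega)<\infty$. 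Note that $(\chi^{\alpha}_n)^{-1}(\Omega')$ and $\chi^{\alpha}_n(\Omega')$ lie in $\mathcal{B}(\Omega)$ by Assumption \ref{assumption time-discrete motion mapping invertible and Borel}, so all expressions involved are well-defined.

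Next I would use the additivity of the push-forward, $\chi^{\alpha}_n \#(\nu_1+\nu_2)=\chi^{\alpha}_n \# \nu_1+\chi^{\alpha}_n \# \nu_2$, to push each summand of $\mu^{\alpha}_n=\mu^{\alpha}_{\text{ac},n}+\mu^{\alpha}_{\text{d},n}+\mu^{\alpha}_{\text{sc},n}$ forward separately, defining $\mu^{\alpha}_{\text{ac},n+1}:=\chi^{\alpha}_n \# \mu^{\alpha}_{\text{ac},n}$ and analogously for the other two parts. I then verify that each retains its type. For the absolutely continuous part: if $\lambda^d(\Omega')=0$, the upper bound in (\ref{hypothesis upper and lower bound lambda(chi inv Omega)}) gives $\lambda^d\big((\chi^{\alpha}_n)^{-1}(\Omega')\big)\leqslant C_n\lambda^d(\Omega')=0$, whence $\mu^{\alpha}_{\text{ac},n}\big((\chi^{\alpha}_n)^{-1}(\Omega')\big)=0$ by absolute continuity, i.e.\ $\mu^{\alpha}_{\text{ac},n+1}\ll\lambda^d$. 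For the discrete part, writing $\mu^{\alpha}_{\text{d},n}=\sum_i a_i\delta_{x_i}$ by Lemma \ref{lemma characterization discrete measure}, a direct computation using bijectivity gives $\chi^{\alpha}_n \# \delta_{x_i}=\delta_{\chi^{\alpha}_n(x_i)}$, so $\mu^{\alpha}_{\text{d},n+1}=\sum_i a_i\delta_{\chi^{\alpha}_n(x_i)}$ is again a sum of Diracs, hence discrete.

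The most delicate case, and the one where the lower bound $c_n$ is genuinely needed, is the singular continuous part. Let $B\in\mathcal{B}(\Omega)$ be the set of full $\mu^{\alpha}_{\text{sc},n}$-measure with $\lambda^d(B)=0$ from Definition \ref{def singular cont measure}. I would take $B':=\chi^{\alpha}_n(B)$ as the candidate support for $\mu^{\alpha}_{\text{sc},n+1}$. Applying the lower bound in (\ref{hypothesis upper and lower bound lambda(chi inv Omega)}) to $\Omega'=B'$ and using $(\chi^{\alpha}_n)^{-1}(B')=B$ (bijectivity) yields $c_n\lambda^d(B')\leqslant\lambda^d(B)=0$, so $\lambda^d(B')=0$ since $c_n>0$; moreover $\mu^{\alpha}_{\text{sc},n+1}(\Omega\setminus B')=\mu^{\alpha}_{\text{sc},n}\big((\chi^{\alpha}_n)^{-1}(\Omega\setminus B')\big)=\mu^{\alpha}_{\text{sc},n}(\Omega\setminus B)=0$. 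Atomlessness is preserved because $(\chi^{\alpha}_n)^{-1}(\{x\})$ is a single point, so $\mu^{\alpha}_{\text{sc},n+1}(\{x\})=0$ for every $x$; hence $\mu^{\alpha}_{\text{sc},n+1}$ is singular continuous. Positivity and finiteness of all three parts follow as before.

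Finally, having exhibited $\mu^{\alpha}_{n+1}$ as a sum of an absolutely continuous, a discrete, and a singular continuous measure, the uniqueness in Corollary \ref{Corollary refined Lebesgue decomposition} guarantees that these three are precisely its refined Lebesgue decomposition, which closes the induction. The main obstacle throughout is exactly the bookkeeping that no type ``leaks'' into another under the push-forward; the two-sided volume estimate (\ref{hypothesis upper and lower bound lambda(chi inv Omega)}) is what rules this out, the upper bound controlling the absolutely continuous part and the lower bound controlling the singular continuous part.
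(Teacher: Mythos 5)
Your proposal is correct and takes essentially the same route as the paper's own proof: induction on $n$, pushing each component of the refined Lebesgue decomposition forward separately, with the upper bound $C_n$ in (\ref{hypothesis upper and lower bound lambda(chi inv Omega)}) controlling the absolutely continuous part, the Dirac computation handling the discrete part, and the lower bound $c_n$ applied to the image of the null set $B$ handling the singular continuous part. The only cosmetic differences are that you verify $\sigma$-additivity of the push forward explicitly and close the induction by appealing to the uniqueness statement of Corollary \ref{Corollary refined Lebesgue decomposition}, steps the paper leaves implicit.
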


\begin{proof}
The proof of Theorem \ref{Thm existence time-discrete} partly follows the lines of arguments of \cite{Piccoli2010}.\\
Let $\alpha\in\{1,2,\ldots,\nu\}$ be fixed but arbitrary. The proof goes by induction. The statement in (\ref{unique Lebesgue decomposition time-discrete measure}) is true for $n=0$ due to the given initial condition. Moreover, each of the components $\mu^{\alpha}_{\text{ac},0}$, $\mu^{\alpha}_{\text{d},0}$ and $\mu^{\alpha}_{\text{sc},0}$ is positive and finite.\\
\\
The induction hypothesis is that for some (arbitrary, fixed) $n\in\{0,1,\ldots,N_T-1\}$ the time-discrete solution exists, that it is of the form (\ref{unique Lebesgue decomposition time-discrete measure}), and that each of the three measures $\mu^{\alpha}_{\text{ac},n}$, $\mu^{\alpha}_{\text{d},n}$ and $\mu^{\alpha}_{\text{sc},n}$ in the corresponding refined Lebesgue decomposition is positive and finite.\\
\\
We now prove that if the induction hypothesis holds for this $n$, then it also holds for $n+1$.
\begin{enumerate}
  \item By the definition of the push forward operator formulated in (\ref{push forward measure subset}), for any $\Omega'\in\mathcal{B}(\Omega)$ the following holds:
        \begin{eqnarray*}
        \mu^{\alpha}_{n+1}(\Omega')&=&\mu^{\alpha}_n\bigl( (\chi^{\alpha}_n)^{-1}(\Omega') \bigr)\\
                            &=& \mu^{\alpha}_{\text{ac},n}\bigl( (\chi^{\alpha}_n)^{-1}(\Omega') \bigr) + \mu^{\alpha}_{\text{d},n}\bigl( (\chi^{\alpha}_n)^{-1}(\Omega') \bigr)
                            + \mu^{\alpha}_{\text{sc},n}\bigl( (\chi^{\alpha}_n)^{-1}(\Omega') \bigr).
        \end{eqnarray*}
      We can thus write
        \begin{equation*}
        \mu^{\alpha}_{n+1}= \chi^{\alpha}_n \# \mu^{\alpha}_{\text{ac},n} +\chi^{\alpha}_n \# \mu^{\alpha}_{\text{d},n} +\chi^{\alpha}_n \# \mu^{\alpha}_{\text{sc},n}.
        \end{equation*}
  \item We now define
        \begin{equation*}
        \mu^{\alpha}_{\text{ac},n+1}:=\chi^{\alpha}_n \# \mu^{\alpha}_{\text{ac},n}.
        \end{equation*}
      We wish to show that $\mu^{\alpha}_{\text{ac},n+1}$ is absolutely continuous w.r.t. $\lambda^d$. Let $\Omega'\in\mathcal{B}(\Omega)$ be such that $\lambda^d(\Omega')=0$. By (\ref{hypothesis upper and lower bound lambda(chi inv Omega)}):
      \begin{equation*}
      \lambda^d\bigl( (\chi^{\alpha}_n)^{-1}(\Omega') \bigr)\leqslant C_n \lambda^d(\Omega'),
      \end{equation*}
      and thus $\lambda^d(\Omega')=0$ implies $\lambda^d\bigl( (\chi^{\alpha}_n)^{-1}(\Omega') \bigr)=0$. It is part of the induction hypothesis that $\mu^{\alpha}_{\text{ac},n}\ll \lambda^d$, and thus it follows from $\lambda^d\bigl( (\chi^{\alpha}_n)^{-1}(\Omega') \bigr)=0$ that $\mu^{\alpha}_{\text{ac},n}\bigl( (\chi^{\alpha}_n)^{-1}(\Omega') \bigr)=0$. Thus
      \begin{equation*}
      \mu^{\alpha}_{\text{ac},n+1}(\Omega')=\mu^{\alpha}_{\text{ac},n}\bigl( (\chi^{\alpha}_n)^{-1}(\Omega') \bigr)=0,
      \end{equation*}
      by which we have proven that $\mu^{\alpha}_{\text{ac},n+1}\ll\lambda^d$.\\
      \\
      Since $\mu^{\alpha}_{\text{ac},n}$ is positive by the induction hypothesis, we also have that
      \begin{equation*}
      \mu^{\alpha}_{\text{ac},n+1}(\Omega')=\mu^{\alpha}_{\text{ac},n}\bigl( (\chi^{\alpha}_n)^{-1}(\Omega') \bigr)\geqslant0, \hspace{1 cm} \text{for all } \Omega'\in\mathcal{B}(\Omega),
      \end{equation*}
      and thus $\mu^{\alpha}_{\text{ac},n+1}$ is positive.\\
      Similarly, finiteness of $\mu^{\alpha}_{\text{ac},n+1}$ follows from finiteness of $\mu^{\alpha}_{\text{ac},n}$:
      \begin{equation*}
      \mu^{\alpha}_{\text{ac},n+1}(\Omega)=\mu^{\alpha}_{\text{ac},n}\bigl( (\chi^{\alpha}_n)^{-1}(\Omega) \bigr)=\mu^{\alpha}_{\text{ac},n}(\Omega)<\infty,
      \end{equation*}
      where $(\chi^{\alpha}_n)^{-1}(\Omega)=\Omega$ due to the assumed invertibility of the motion mapping.\label{Thm existence time-discrete Part Abs Cont}
  \item Similarly, define
        \begin{equation*}
        \mu^{\alpha}_{\text{d},n+1}:=\chi^{\alpha}_n \# \mu^{\alpha}_{\text{d},n}.
        \end{equation*}
      By the induction hypothesis $\mu^{\alpha}_{\text{d},n}$ is a positive, finite and discrete measure. Lemma \ref{lemma characterization discrete measure} provides that we can thus write $\mu^{\alpha}_{\text{d},n} = \sum_{i\in\mathcal{J}}a_i \delta_{x_{i}}$, for some countable index set $\mathcal{J}\subset\mathbb{N}$, a set $\{x_i\}_{i\in\mathcal{J}}\subset \Omega$ and a set of corresponding nonnegative coefficients $\{a_i\}_{i\in\mathcal{J}}\subset \mathbb{R}$, such that $\sum_{i\in\mathcal{J}}a_i<\infty$.\\
      By definition of $ \mu^{\alpha}_{\text{d},n+1}$, we have that for any $\Omega'\in\mathcal{B}(\Omega)$
        \begin{align*}
        \mu^{\alpha}_{\text{d},n+1}(\Omega')= \mu^{\alpha}_{\text{d},n}\bigl( (\chi^{\alpha}_n)^{-1}(\Omega')\bigr) &= \sum_{i\in\mathcal{J}}a_i \delta_{x_{i}}\bigl( (\chi^{\alpha}_n)^{-1}(\Omega')\bigr)\\
        &= \sum_{i\in\mathcal{J}}a_i \mathbf{1}_{x_{i}\in( (\chi^{\alpha}_n)^{-1}(\Omega'))}\\
        &= \sum_{i\in\mathcal{J}}a_i \mathbf{1}_{\chi^{\alpha}_n(x_{i})\in\Omega'}\\
        &= \sum_{i\in\mathcal{J}}a_i \delta_{\chi^{\alpha}_n(x_{i})}(\Omega').
        \end{align*}
      This proves that $\mu^{\alpha}_{\text{d},n+1}$ is a discrete measure with respect to $\lambda^d$.\\
      \\
      Positivity of $\mu^{\alpha}_{\text{d},n+1}$ follows from positivity of the Dirac measure and the fact that $a_i\geqslant 0$ for all $i\in\mathcal{J}$. Since $\chi^{\alpha}_n$ maps homeomorphically from $\Omega$ to $\Omega$, obviously $\{x_i\}_{i\in\mathcal{J}}\subset \Omega$ implies $\{\chi^{\alpha}_n(x_{i})\}_{i\in\mathcal{J}}\subset \Omega$. As a result
      \begin{equation*}
      \mu^{\alpha}_{\text{d},n+1}(\Omega)= \sum_{i\in\mathcal{J}}a_i \delta_{\chi^{\alpha}_n(x_{i})}(\Omega) = \sum_{i\in\mathcal{J}}a_i<\infty,
      \end{equation*}
      and thus $\mu^{\alpha}_{\text{d},n+1}$ is also finite.\label{Thm existence time-discrete Part Discrete}
  \item Finally, we define
        \begin{equation*}
        \mu^{\alpha}_{\text{sc},n+1}:=\chi^{\alpha}_n \# \mu^{\alpha}_{\text{sc},n}.
        \end{equation*}
        We want to prove that this measure is singular continuous w.r.t. $\lambda^d$.\\
        For any $x\in\Omega$ the definition of the push forward implies that $\mu^{\alpha}_{\text{sc},n+1}(x)= \mu^{\alpha}_{\text{sc},n}\bigl( (\chi^{\alpha}_n)^{-1}(x) \bigr)$. As $\mu^{\alpha}_{\text{sc},n}$ is by the induction hypothesis singular continuous,
        \begin{equation*}
        \mu^{\alpha}_{\text{sc},n}\bigl( (\chi^{\alpha}_n)^{-1}(x) \bigr)=0,\hspace{1 cm} \text{for any }(\chi^{\alpha}_n)^{-1}(x)\in\Omega.
        \end{equation*}
        Thus $\mu^{\alpha}_{\text{sc},n+1}(x)=0$ for all $x\in\Omega$.\\
        Let the set $B_n\in\mathcal{B}(\Omega)$ be such that $\mu^{\alpha}_{\text{sc},n}(\Omega\setminus B_n)= \lambda^d(B_n)=0$, which exists by definition of singular continuous measures (see Definition \ref{def singular cont measure}). Because
        \begin{equation*}
        \Omega\setminus B_n= \bigl(\chi^{\alpha}_n\bigr)^{-1}\Bigl(\chi^{\alpha}_n\bigl(\Omega\setminus B_n\bigr)\Bigr)
        \end{equation*}
        the following identity is true:
        \begin{eqnarray*}
        \mu^{\alpha}_{\text{sc},n}\bigl(\Omega\setminus B_n\bigr)&=& \mu^{\alpha}_{\text{sc},n}\Bigl(\bigl(\chi^{\alpha}_n\bigr)^{-1}\Bigl(\chi^{\alpha}_n\bigl(\Omega\setminus B_n\bigr)\Bigr)\Bigr)\\
        &=& \mu^{\alpha}_{\text{sc},n+1}\Bigl(\chi^{\alpha}_n\bigl(\Omega\setminus B_n\bigr)\Bigr)\\
        &=& \mu^{\alpha}_{\text{sc},n+1}\Bigl(\Omega\setminus \chi^{\alpha}_n\bigl(B_n\bigr)\Bigr).
        \end{eqnarray*}
        In the last step we used that $\chi^{\alpha}_n(\Omega)=\Omega$ (due to invertibility of $\chi^{\alpha}_n$). The bottom line is that
        \begin{equation*}
        \mu^{\alpha}_{\text{sc},n+1}\Bigl(\Omega\setminus \chi^{\alpha}_n\bigl(B_n\bigr)\Bigr) = \mu^{\alpha}_{\text{sc},n}\bigl(\Omega\setminus B_n\bigr) = 0.
        \end{equation*}
        The last equality follows from the way we have chosen $B_n$.\\
        By hypothesis of the theorem, see (\ref{hypothesis upper and lower bound lambda(chi inv Omega)}), we have that
        \begin{equation}\label{bound lambda(chi Bn)}
        \lambda^d\Bigl(\chi^{\alpha}_n\bigl(B_n\bigr)\Bigr)\leqslant \dfrac{1}{c_n}\lambda^d\Bigl(\bigl(\chi^{\alpha}_n\bigr)^{-1}\Bigl(\chi^{\alpha}_n\bigl(B_n\bigr)\Bigr)\Bigr)= \dfrac{1}{c_n}\lambda^d(B_n)=0,
        \end{equation}
        where it is important that $c_n>0$, and $\lambda^d(B_n)=0$ by definition of $B_n$. Consequently, (\ref{bound lambda(chi Bn)}) proves that $\lambda^d\Bigl(\chi^{\alpha}_n\bigl(B_n\bigr)\Bigr)=0$.\\
        \\
        Define $B_{n+1}:= \chi^{\alpha}_n\bigl(B_n\bigr)$. From the fact that $\mu^{\alpha}_{\text{sc},n+1}(x)=0$ for all $x\in\Omega$, and
        \begin{equation*}
        \mu^{\alpha}_{\text{sc},n+1}\Bigl(\Omega\setminus B_{n+1}\Bigr)=\lambda^d\Bigl(B_{n+1}\Bigr) = 0,
        \end{equation*}
        it follows that $\mu^{\alpha}_{\text{sc},n+1}$ is a singular continuous measure w.r.t. $\lambda^d$.\\
        \\
        Since $\mu^{\alpha}_{\text{sc},n}$ is positive by the induction hypothesis, we obtain
        \begin{equation*}
        \mu^{\alpha}_{\text{sc},n+1}(\Omega')=\mu^{\alpha}_{\text{sc},n}\bigl( (\chi^{\alpha}_n)^{-1}(\Omega') \bigr)\geqslant0,
        \end{equation*}
        for all $\Omega'\in\mathcal{B}(\Omega)$. Thus, $\mu^{\alpha}_{\text{sc},n+1}$ is positive. Similarly, $\mu^{\alpha}_{\text{sc},n+1}$ is finite because $\mu^{\alpha}_{\text{sc},n}$ is finite:
        \begin{equation*}
        \mu^{\alpha}_{\text{sc},n+1}(\Omega)=\mu^{\alpha}_{\text{sc},n}\bigl( (\chi^{\alpha}_n)^{-1}(\Omega) \bigr)=\mu^{\alpha}_{\text{sc},n}(\Omega)<\infty.
        \end{equation*}
        \label{Thm existence time-discrete Part Singular Continuous}
  \item We have now proven that if a time-discrete solution of the form (\ref{unique Lebesgue decomposition time-discrete measure}) exists for $n$, it also exists for $n+1$. Positivity and finiteness of $\mu^{\alpha}_{n+1}$ follow from positivity and finiteness of its three components, which we have proven above. The induction argument guarantees that (\ref{unique Lebesgue decomposition time-discrete measure}) holds for all $n\in\mathcal{N}$.
\end{enumerate}
\end{proof}

\begin{theorem}[Uniqueness of global time-discrete solutions]\label{Thm uniqueness time-discrete}
Assume the hypotheses of Theorem \ref{Thm existence time-discrete}. Then the global time-discrete solution:
\begin{equation*}
\mu^{\alpha}_n= \mu^{\alpha}_{\text{ac},n}+\mu^{\alpha}_{\text{d},n}+\mu^{\alpha}_{\text{sc},n},\hspace{1 cm}\text{for all }n\in\mathcal{N}
\end{equation*}
is unique.
\end{theorem}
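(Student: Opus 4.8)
The plan is to prove uniqueness by induction on the time index $n$, exploiting the fact that the push-forward recursion in Definition \ref{def time-discrete solution} is entirely deterministic. Suppose $(\mu^1_n,\ldots,\mu^{\nu}_n)_{n\in\mathcal{N}}$ and $(\tilde{\mu}^1_n,\ldots,\tilde{\mu}^{\nu}_n)_{n\in\mathcal{N}}$ are two global time-discrete solutions in the sense of Definition \ref{def time-discrete solution}, both issuing from the same prescribed initial data $\mu^{\alpha}_0$. I want to show $\mu^{\alpha}_n=\tilde{\mu}^{\alpha}_n$ for every $n\in\mathcal{N}$ and every $\alpha\in\{1,2,\ldots,\nu\}$. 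The base case is immediate: at $n=0$ both solutions equal $\mu^{\alpha}_0$ by Part (i) of Definition \ref{def time-discrete solution}, so they coincide.

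For the inductive step, assume $\mu^{\beta}_n=\tilde{\mu}^{\beta}_n$ for all $\beta\in\{1,2,\ldots,\nu\}$. The key observation is that the velocity field $v^{\alpha}_n$ is, by (\ref{Def v decomposition})--(\ref{Def vsoc sum of integrals}), a single-valued functional of the tuple $(\mu^1_n,\ldots,\mu^{\nu}_n)$: the desired part $v^{\alpha}_{\text{des}}$ is fixed, while the social part is an integral against these measures. Since the measures coincide for the two solutions at step $n$, the velocities $v^{\alpha}_n$ coincide, and hence so do the one-step motion mappings $\chi^{\alpha}_n(x)=x+\Delta t_n v^{\alpha}_n(x)$ of Definition \ref{def one-step motion mapping}. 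Applying \emph{the same} mapping to \emph{the same} measure via the push-forward formula (\ref{push forward measure subset}) then gives, for every $\Omega'\in\mathcal{B}(\Omega)$, $\mu^{\alpha}_{n+1}(\Omega')=\mu^{\alpha}_n\bigl((\chi^{\alpha}_n)^{-1}(\Omega')\bigr)=\tilde{\mu}^{\alpha}_n\bigl((\chi^{\alpha}_n)^{-1}(\Omega')\bigr)=\tilde{\mu}^{\alpha}_{n+1}(\Omega')$, so $\mu^{\alpha}_{n+1}=\tilde{\mu}^{\alpha}_{n+1}$ and the induction closes.

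It remains to observe that the refined Lebesgue decomposition displayed in the statement is itself unique. This is automatic once the total measures agree: for each fixed $n$ and $\alpha$, Corollary \ref{Corollary refined Lebesgue decomposition} guarantees that the splitting of $\mu^{\alpha}_n$ into $\mu^{\alpha}_{\text{ac},n}+\mu^{\alpha}_{\text{d},n}+\mu^{\alpha}_{\text{sc},n}$ is unique. In particular, the three pieces built in the existence proof as the push-forwards $\chi^{\alpha}_{n-1}\#\mu^{\alpha}_{\text{ac},n-1}$, $\chi^{\alpha}_{n-1}\#\mu^{\alpha}_{\text{d},n-1}$ and $\chi^{\alpha}_{n-1}\#\mu^{\alpha}_{\text{sc},n-1}$ must coincide with the genuine absolutely continuous, discrete and singular continuous parts of $\mu^{\alpha}_n$, so no ambiguity in the decomposition can arise.

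The main obstacle here is conceptual rather than computational: the whole argument hinges on verifying that the motion mapping carries no freedom, i.e. that the velocity field is a genuine (single-valued) functional of the current measures, so that no branching can occur at any time step. Once this is granted, uniqueness reduces to a short induction requiring no analytic estimates. I would emphasize that, unlike in the existence proof, the two-sided bounds (\ref{hypothesis upper and lower bound lambda(chi inv Omega)}) on $(\chi^{\alpha}_n)^{-1}$ play no role in uniqueness; they were needed only to keep the absolutely continuous and singular continuous classes invariant under the push-forward.
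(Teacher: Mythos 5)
Your proof is correct and follows essentially the same route as the paper: induction on $n$, using that the push forward $\mu^{\alpha}_{n+1}=\chi^{\alpha}_n\#\mu^{\alpha}_n$ is deterministic, with uniqueness of the three-part splitting delegated to Corollary \ref{Corollary refined Lebesgue decomposition}. The only difference is that you make explicit that $v^{\alpha}_n$ (hence $\chi^{\alpha}_n$) is a single-valued functional of the step-$n$ measures, a point the paper leaves implicit by treating the motion mappings as given data; this is a sound refinement, not a divergence.
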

\begin{proof}
Uniqueness of the time-discrete solution follows from the fact that $\mu^{\alpha}_{n+1}=\chi^{\alpha}_n \# \mu^{\alpha}_n$ defines the push forward unambiguously. To see this, assume that a unique $\mu^{\alpha}_{n}$ exists, and there are two measures $\mu^{\alpha}_{1,n+1}$ and $\mu^{\alpha}_{2,n+1}$ such that $\mu^{\alpha}_{i,n+1}=\chi^{\alpha}_n \# \mu^{\alpha}_{n}$ for each $i\in\{1,2\}$. Then we have that for any $\Omega'\in\mathcal{B}(\Omega)$ the following holds:
\begin{equation*}
\mu^{\alpha}_{1,n+1}(\Omega')=\mu^{\alpha}_n\bigl( (\chi^{\alpha}_n)^{-1}(\Omega') \bigr)=\mu^{\alpha}_{2,n+1}(\Omega'),
\end{equation*}
thus $\mu^{\alpha}_{1,n+1}\equiv\mu^{\alpha}_{2,n+1}$.\\
It now follows by induction that if the initial measure $\mu^{\alpha}_{0}$ is unique, consequently $\mu^{\alpha}_{n}$ is defined uniquely for all $n\in\mathcal{N}$. Note that the decomposition of $\mu^{\alpha}_{n}$ in its three parts is unique by Corollary \ref{Corollary refined Lebesgue decomposition}.
\end{proof}

\begin{remark}[Connections to Reference \cite{Piccoli2010}]
The results of \cite{Piccoli2010} are a special case of our setting; we already said so in our comments at the end of Section \ref{section weak formulation}.\\
Let $m_0$ be a discrete measure: $m_0:=\sum_{j=1}^N \delta_{P_j}$, and let $M_0$ be an absolutely continuous measure (with density $\rho(t,x)$). There are two ways to recover their situation. One way is to set $\nu=1$ and consider
\begin{equation*}
\mu^{1}_0= \theta m_0 + (1-\theta)M_0,
\end{equation*}
where $\theta\in[0,1]$ is a tuning parameter.\\
The second way to recover the results of \cite{Piccoli2010} is by setting $\nu=2$, and then by taking
\begin{eqnarray*}
\mu^{1}_0 &=& \theta m_0,\\
\mu^{2}_0 &=& (1-\theta)M_0,
\end{eqnarray*}
henceforth only considering the total mass measure and the barycentric velocity.
\end{remark}

\subsection{Basic properties of the time-discrete solution}
In this section we will formulate and prove a number of properties of the time-discrete solution provided by Theorems \ref{Thm existence time-discrete} and \ref{Thm uniqueness time-discrete}. These properties follow mainly from the subsequent steps done in the proofs of those theorems.

\begin{corollary}[Conservation of mass]\label{corollary conservation of mass discretized}
Assume the hypotheses of Theorem \ref{Thm existence time-discrete}. For each $n\in\mathcal{N}$ the initial mass is conserved by each of the three components of the time-discrete solution provided by Theorems \ref{Thm existence time-discrete} and \ref{Thm uniqueness time-discrete}. That is, for all $n\in\mathcal{N}$ and each $\alpha\in\{1,2,\ldots,\nu\}$:
\begin{enumerate}[(i)]
  \item $\mu^{\alpha}_{\text{ac},n}(\Omega)=\mu^{\alpha}_{\text{ac},0}(\Omega)$,
  \item $\mu^{\alpha}_{\text{d},n}(\Omega)=\mu^{\alpha}_{\text{d},0}(\Omega)$,
  \item $\mu^{\alpha}_{\text{sc},n}(\Omega)=\mu^{\alpha}_{\text{sc},0}(\Omega)$.
\end{enumerate}
As a result, also $\mu^{\alpha}_n(\Omega)=\mu^{\alpha}_0(\Omega)$.
\end{corollary}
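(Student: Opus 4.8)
The plan is to exploit the fact that the evolution of each part of the refined Lebesgue decomposition (\ref{unique Lebesgue decomposition time-discrete measure}) is governed by its own push forward under the homeomorphism $\chi^{\alpha}_n$, combined with the single structural observation that a homeomorphism of $\Omega$ onto $\Omega$ satisfies $(\chi^{\alpha}_n)^{-1}(\Omega)=\Omega$. In fact, most of the work has already been carried out inside the proof of Theorem \ref{Thm existence time-discrete}, where the finiteness of each component was verified by evaluating the corresponding push forward on the whole domain. The present corollary is essentially a repackaging of those intermediate computations.

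First I would fix $\alpha\in\{1,2,\ldots,\nu\}$ and recall from the construction in Theorem \ref{Thm existence time-discrete} that the three components evolve separately, via
\begin{equation*}
\mu^{\alpha}_{\text{ac},n+1}=\chi^{\alpha}_n \# \mu^{\alpha}_{\text{ac},n},\qquad \mu^{\alpha}_{\text{d},n+1}=\chi^{\alpha}_n \# \mu^{\alpha}_{\text{d},n},\qquad \mu^{\alpha}_{\text{sc},n+1}=\chi^{\alpha}_n \# \mu^{\alpha}_{\text{sc},n}.
\end{equation*}
Writing $\mu_{\text{c}}$ generically for any one of these three measures, I would apply the defining relation of the push forward (Definition \ref{def push forward}, cf. (\ref{push forward measure subset})) with $\Omega'=\Omega$ to obtain
\begin{equation*}
\mu_{\text{c},n+1}(\Omega)=\mu_{\text{c},n}\bigl((\chi^{\alpha}_n)^{-1}(\Omega)\bigr).
\end{equation*}
By Assumption \ref{assumption time-discrete motion mapping invertible and Borel}, $\chi^{\alpha}_n:\Omega\to\Omega$ is a homeomorphism, hence a bijection of $\Omega$ onto itself, so $(\chi^{\alpha}_n)^{-1}(\Omega)=\Omega$; this gives $\mu_{\text{c},n+1}(\Omega)=\mu_{\text{c},n}(\Omega)$, i.e. one step preserves the total mass of each component. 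An elementary induction on $n$ then yields (i)--(iii), and summing these three identities while invoking the decomposition (\ref{unique Lebesgue decomposition time-discrete measure}) and $\sigma$-additivity gives $\mu^{\alpha}_n(\Omega)=\mu^{\alpha}_0(\Omega)$.

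I do not expect a genuine obstacle here: the conclusion follows immediately once the identity $(\chi^{\alpha}_n)^{-1}(\Omega)=\Omega$ is in place. The only point deserving a moment's care is that the conservation must be established separately for each of the three parts before summing, since the push forward acts componentwise; this is exactly guaranteed by the form of the solution proven in Theorem \ref{Thm existence time-discrete}. As a sanity check for the discrete part, one may alternatively write $\mu^{\alpha}_{\text{d},n}=\sum_{i\in\mathcal{J}}a_i\delta_{x_i}$ and observe that its total mass equals $\sum_{i\in\mathcal{J}}a_i$, a quantity independent of the positions of the centres and therefore unchanged when the centres are displaced by $\chi^{\alpha}_n$, which re-confirms (ii) directly.
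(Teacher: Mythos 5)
Your proposal is correct and follows essentially the same route as the paper's own proof: both reduce the claim to the one-step identity $\mu^{\alpha}_{\omega,n+1}(\Omega)=\mu^{\alpha}_{\omega,n}\bigl((\chi^{\alpha}_n)^{-1}(\Omega)\bigr)=\mu^{\alpha}_{\omega,n}(\Omega)$ for each component $\omega\in\{\text{ac, d, sc}\}$, using the invertibility of $\chi^{\alpha}_n$ to get $(\chi^{\alpha}_n)^{-1}(\Omega)=\Omega$, and then conclude by induction and by summing the three parts. Your closing remark about the discrete part being a sum $\sum_{i\in\mathcal{J}}a_i$ independent of the centres is a harmless extra confirmation, not a deviation.
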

\begin{proof}
For each $n\in\{0,1,\ldots,N_T-1\}$ and $\alpha\in\{1,2,\ldots,\nu\}$, consider the measure $\mu^{\alpha}_{\omega, n}$, where $\omega\in\{\text{ac, d, sc}\}$. By definition of $\mu^{\alpha}_{\omega, n+1}$ (see the constructive proof of Theorem \ref{Thm existence time-discrete}, Parts \ref{Thm existence time-discrete Part Abs Cont}, \ref{Thm existence time-discrete Part Discrete} and \ref{Thm existence time-discrete Part Singular Continuous}):
\begin{equation*}
\mu^{\alpha}_{\omega,n+1}:=\chi^{\alpha}_n \# \mu^{\alpha}_{\omega,n}.
\end{equation*}
For each $n$ we thus have
\begin{equation*}
\mu^{\alpha}_{\omega,n+1}(\Omega)=\mu^{\alpha}_{\omega,n}\bigl( (\chi^{\alpha}_n)^{-1}(\Omega) \bigr),
\end{equation*}
and $(\chi^{\alpha}_n)^{-1}(\Omega)=\Omega$ due to the invertibility of $\chi^{\alpha}_n$. This implies that $\mu^{\alpha}_{\omega,n+1}(\Omega)=\mu^{\alpha}_{\omega,n}(\Omega)$ for all $n\in\{0,1,\ldots,N_T\}$, and by an inductive argument: $\mu^{\alpha}_{\omega,n}(\Omega)=\mu^{\alpha}_{\omega,0}(\Omega)$ for each $n$.\\
Since $\mu^{\alpha}_n = \mu^{\alpha}_{\text{ac},n}+\mu^{\alpha}_{\text{d},n}+\mu^{\alpha}_{\text{sc},n}$ for all $n\in\mathbb{N}$, the above implies trivially that
\begin{equation*}
\mu^{\alpha}_n(\Omega)=\mu^{\alpha}_0(\Omega),\hspace{1 cm}\text{for all }n\in\mathcal{N}.
\end{equation*}
\end{proof}

\begin{corollary}
Assume the hypotheses of Theorem \ref{Thm existence time-discrete}. If $\mu^{\alpha}_{\text{ac},0}\equiv 0$, $\mu^{\alpha}_{\text{d},0}\equiv 0$, or $\mu^{\alpha}_{\text{sc},0}\equiv 0$ respectively, then for each $n$ the corresponding component of $\mu^{\alpha}_n$ vanishes. That is, for each $\alpha$, the following statements are true:
\begin{enumerate}[(i)]
  \item $\mu^{\alpha}_{\text{ac},0}\equiv 0$ implies $\mu^{\alpha}_{\text{ac},n}\equiv 0$ for all $n\in\mathcal{N}$,
  \item $\mu^{\alpha}_{\text{d},0}\equiv 0$ implies $\mu^{\alpha}_{\text{d},n}\equiv 0$ for all $n\in\mathcal{N}$,
  \item $\mu^{\alpha}_{\text{sc},0}\equiv 0$ implies $\mu^{\alpha}_{\text{sc},n}\equiv 0$ for all $n\in\mathcal{N}$.
\end{enumerate}
\end{corollary}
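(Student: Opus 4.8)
The plan is to exploit the fact that the construction in the proof of Theorem~\ref{Thm existence time-discrete} treats the three components of the refined Lebesgue decomposition independently: each is obtained as the push forward of the corresponding component at the previous time step. Concretely, for $\omega\in\{\text{ac},\text{d},\text{sc}\}$ the construction sets $\mu^{\alpha}_{\omega,n+1}:=\chi^{\alpha}_n \# \mu^{\alpha}_{\omega,n}$, so that the evolution of (say) the absolutely continuous part never mixes with the discrete or singular continuous parts. This decoupling is precisely what makes the statement almost immediate.

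I would argue by induction on $n$, fixing $\alpha$ and one of the labels $\omega\in\{\text{ac},\text{d},\text{sc}\}$. The base case $n=0$ holds by hypothesis, since we assume $\mu^{\alpha}_{\omega,0}\equiv 0$. For the inductive step, suppose $\mu^{\alpha}_{\omega,n}\equiv 0$, i.e. $\mu^{\alpha}_{\omega,n}(\Omega')=0$ for every $\Omega'\in\mathcal{B}(\Omega)$. Using the definition of the push forward (Definition~\ref{def push forward}), for an arbitrary $\Omega'\in\mathcal{B}(\Omega)$ one computes
\begin{equation*}
\mu^{\alpha}_{\omega,n+1}(\Omega')=\mu^{\alpha}_{\omega,n}\bigl( (\chi^{\alpha}_n)^{-1}(\Omega') \bigr)=0,
\end{equation*}
where the last equality follows because $(\chi^{\alpha}_n)^{-1}(\Omega')\in\mathcal{B}(\Omega)$ (by Assumption~\ref{assumption time-discrete motion mapping invertible and Borel}) and $\mu^{\alpha}_{\omega,n}$ is the zero measure on all of $\mathcal{B}(\Omega)$. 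Hence $\mu^{\alpha}_{\omega,n+1}\equiv 0$, completing the induction.

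Since the argument applies verbatim to each of the three choices $\omega\in\{\text{ac},\text{d},\text{sc}\}$ and to each fixed $\alpha$, all three statements (i)--(iii) follow simultaneously. I do not anticipate any genuine obstacle here: the only point to verify is that the push forward of the zero measure is again the zero measure, which is transparent from its definition. The measurability of $(\chi^{\alpha}_n)^{-1}(\Omega')$, needed merely so that the expression $\mu^{\alpha}_{\omega,n}\bigl((\chi^{\alpha}_n)^{-1}(\Omega')\bigr)$ is well defined, is already guaranteed by the homeomorphism assumption on the motion mappings, so nothing further is required.
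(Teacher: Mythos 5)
Your proof is correct and follows essentially the same route as the paper's: both exploit that the construction in Theorem \ref{Thm existence time-discrete} evolves each component $\mu^{\alpha}_{\omega,n}$ ($\omega\in\{\text{ac},\text{d},\text{sc}\}$) separately by the push forward $\mu^{\alpha}_{\omega,n+1}=\chi^{\alpha}_n \# \mu^{\alpha}_{\omega,n}$, and then conclude by induction that the push forward of the zero measure is zero. Nothing is missing.
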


\begin{proof}
For any $\alpha\in\{1,2,\ldots,\nu\}$, let $\omega\in\{\text{ac, d, sc}\}$ be arbitrary. Then the corresponding part $\mu^{\alpha}_{\omega,n+1}$ in the refined Lebesgue decomposition of $\mu^{\alpha}_{n+1}$ follows uniquely from $\mu^{\alpha}_{\omega,n}$ by the push forward $\mu^{\alpha}_{\omega,n+1}=\chi^{\alpha}_n \# \mu^{\alpha}_{\omega,n}$. See Parts \ref{Thm existence time-discrete Part Abs Cont}, \ref{Thm existence time-discrete Part Discrete} and \ref{Thm existence time-discrete Part Singular Continuous} of the proof of Theorem \ref{Thm existence time-discrete}.\\
\\
Assume that $\mu^{\alpha}_{\omega,n}\equiv0$. Then for each $\Omega'\in\mathcal{B}(\Omega)$
\begin{equation*}
\mu^{\alpha}_{\omega,n+1}(\Omega')=\mu^{\alpha}_{\omega, n}\bigl( (\chi^{\alpha}_n)^{-1}(\Omega') \bigr)=0.
\end{equation*}
We thus conclude that $\mu^{\alpha}_{\omega,n+1}\equiv0$.\\
It follows by an inductive argument that $\mu^{\alpha}_{\omega,n}\equiv0$ for all $n\in\mathcal{N}$, if $\mu^{\alpha}_{\omega,0}\equiv0$ is given.
\end{proof}

\begin{remark}
If $\mu^{\alpha}_{\text{ac},0}\equiv0$, then the assumption that there is a $C_n>0$ such that
\begin{equation*}
\lambda^d\bigl((\chi^{\alpha}_n)^{-1}(\Omega')\bigr)\leqslant C_n \lambda^d(\Omega')
\end{equation*}
can be removed from the theorem. This is because, even without this assumption, the push forward of $\mu^{\alpha}_{\text{ac},n}\equiv0$ is an absolutely continuous measure (of course, more specifically $\mu^{\alpha}_{\text{ac},n+1}\equiv0$).\\
Due to similar arguments, $\mu^{\alpha}_{\text{sc},0}\equiv0$ allows us to remove the assumption that there exists a $c_n>0$ for which
\begin{equation*}
c_n \lambda^d(\Omega')\leqslant\lambda^d\bigl((\chi^{\alpha}_n)^{-1}(\Omega')\bigr).
\end{equation*}
\end{remark}

\begin{remark}
If we set $\mu^{\alpha}_{\text{ac},0}\equiv0$ and $\mu^{\alpha}_{\text{sc},0}\equiv0$, then Theorem \ref{Thm existence time-discrete} provides that $\mu^{\alpha}_n$ is discrete w.r.t. $\lambda^d$ for all $n\in\mathcal{N}$. We have shown in Lemma \ref{lemma characterization discrete measure} that any discrete measure w.r.t. $\lambda^d$ can be written as a linear combination of Dirac measures. Note however, that in $\mu^{\alpha}_n$ both the centres of the Dirac masses \textit{and} the coefficients might in principle depend on $n$. (Since the index set $\mathcal{J}$ is necessarily \textit{countable}, is suffices to allow only the positions and coefficients to be $n$-dependent.)
\end{remark}
However, due to the specific structure of the push forward, we have the following corollary of Theorem \ref{Thm existence time-discrete}:
\begin{corollary}\label{corollary sum of diracs preserves constant coeff}
If the initial measure is a linear combination of Dirac measures: $\mu^{\alpha}_0=\sum_{i\in\mathcal{J}}a_i\delta_{x_{i,0}}$ for some countable index set $\mathcal{J}$, then for each $n$ the measure is of the form $\mu^{\alpha}_n=\sum_{i\in\mathcal{J}}a_i\delta_{x_{i,n}}$, with the same (constant) coefficients $a_i$.
\end{corollary}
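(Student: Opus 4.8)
The plan is to proceed by induction on $n$, reusing essentially verbatim the computation already carried out in Part \ref{Thm existence time-discrete Part Discrete} of the proof of Theorem \ref{Thm existence time-discrete}. The guiding observation is that the push forward $\chi^{\alpha}_n \#(\cdot)$ merely transports each Dirac mass along the motion mapping while leaving its weight untouched; hence the coefficients $a_i$ are carried along passively and only the centres evolve. This is exactly why the corollary holds, and it is the structural fact I would make explicit.

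For the base case $n=0$ there is nothing to prove, since the hypothesis is precisely $\mu^{\alpha}_0=\sum_{i\in\mathcal{J}}a_i\delta_{x_{i,0}}$. For the inductive step I would assume $\mu^{\alpha}_n=\sum_{i\in\mathcal{J}}a_i\delta_{x_{i,n}}$ with the prescribed coefficients $a_i$, and then evaluate the push forward (\ref{push forward measure subset}) on an arbitrary $\Omega'\in\mathcal{B}(\Omega)$:
\[
\mu^{\alpha}_{n+1}(\Omega')=\mu^{\alpha}_n\bigl((\chi^{\alpha}_n)^{-1}(\Omega')\bigr)=\sum_{i\in\mathcal{J}}a_i\mathbf{1}_{x_{i,n}\in(\chi^{\alpha}_n)^{-1}(\Omega')}=\sum_{i\in\mathcal{J}}a_i\mathbf{1}_{\chi^{\alpha}_n(x_{i,n})\in\Omega'}=\sum_{i\in\mathcal{J}}a_i\delta_{\chi^{\alpha}_n(x_{i,n})}(\Omega').
\]
Defining the updated centres recursively by $x_{i,n+1}:=\chi^{\alpha}_n(x_{i,n})$ then yields $\mu^{\alpha}_{n+1}=\sum_{i\in\mathcal{J}}a_i\delta_{x_{i,n+1}}$ with identically the same coefficients $a_i$, which closes the induction.

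The single identity doing all the work is $\mathbf{1}_{x\in(\chi^{\alpha}_n)^{-1}(\Omega')}=\mathbf{1}_{\chi^{\alpha}_n(x)\in\Omega'}$, which is nothing more than the definition of preimage and requires only that $\chi^{\alpha}_n$ is a genuine map $\Omega\to\Omega$. I do not anticipate any real obstacle here: the entire content is bookkeeping. The only minor points to verify are that each $x_{i,n+1}$ indeed lies in $\Omega$ (immediate from $\chi^{\alpha}_n(\Omega)=\Omega$, a consequence of the invertibility in Assumption \ref{assumption time-discrete motion mapping invertible and Borel}) and that the coefficient sum stays finite (inherited from $\sum_{i\in\mathcal{J}}a_i<\infty$, which holds because $\mu^{\alpha}_0$ is a finite measure). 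It is worth emphasising, as the actual point of the corollary, that the constancy of the $a_i$ is a purely structural feature of the push forward: unlike the absolutely continuous and singular continuous parts, the discrete part needs none of the Lebesgue-comparison hypotheses (\ref{hypothesis upper and lower bound lambda(chi inv Omega)}) invoked elsewhere in Theorem \ref{Thm existence time-discrete}.
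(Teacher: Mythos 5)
Your proof is correct and follows essentially the same route as the paper: the paper's own proof likewise extracts from Part \ref{Thm existence time-discrete Part Discrete} of the proof of Theorem \ref{Thm existence time-discrete} that the push forward of $\sum_{i\in\mathcal{J}}a_i\delta_{x_{i,n}}$ is $\sum_{i\in\mathcal{J}}a_i\delta_{\chi^{\alpha}_n(x_{i,n})}$ and then concludes by induction, with $x_{i,n}$ given by the composition of the motion mappings $\chi^{\alpha}_{n-1}\circ\cdots\circ\chi^{\alpha}_0$ applied to $x_{i,0}$. The only difference is cosmetic: you re-derive the indicator-function computation explicitly rather than citing it, which changes nothing of substance.
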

\begin{proof}
We extract from Part \ref{Thm existence time-discrete Part Discrete} of the proof of Theorem \ref{Thm existence time-discrete} that for any $n\in\{0,1,\ldots,N_T-1\}$: $\mu^{\alpha}_{n} = \sum_{i\in\mathcal{J}}a_i \delta_{x_{i}}$ implies $\mu^{\alpha}_{n+1}(\Omega')= \sum_{i\in\mathcal{J}}a_i \delta_{\chi^{\alpha}_n(x_{i})}(\Omega')$. Assume that $\mu^{\alpha}_0=\sum_{i\in\mathcal{J}}a_i\delta_{x_{i,0}}$ has been given. By inductive reasoning we can now deduce that $\mu^{\alpha}_{n} = \sum_{i\in\mathcal{J}}a_i \delta_{x_{i,n}}$ for all $n\in\mathcal{N}$, where
\begin{equation*}
x_{i,n}:= \bigl(\chi^{\alpha}_{n-1} \circ \chi^{\alpha}_{n-2} \circ\ldots \circ \chi^{\alpha}_0\bigr)(x_{i,0}).
\end{equation*}
\end{proof}

\begin{lemma}\label{lemma sum of diracs obey dx/dt=v}
If $\mu^{\alpha}_{n} = \sum_{i\in\mathcal{J}}a_i \delta_{x_{i,n}}$ for each $n\in\mathcal{N}$ then, for any $i\in\mathcal{J}$, the position $x_{i,n}$ satisfies a discretized version of $dx_i(t)/dt=v^{\alpha}(t,x_i(t))$, evaluated at $t=t_n$ for all $n\in\{0,1,\ldots,N_T-1\}$.
\end{lemma}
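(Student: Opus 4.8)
The plan is to read off the evolution of a single centre $x_{i,n}$ directly from the push-forward structure already established, and then to recognise the resulting recursion as an explicit (forward Euler) time-discretisation of the ODE $dx_i(t)/dt = v^{\alpha}(t,x_i(t))$ evaluated at $t=t_n$.

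First I would invoke Corollary \ref{corollary sum of diracs preserves constant coeff}, which tells us that the coefficients $a_i$ stay constant in $n$ and that the centres are propagated by $x_{i,n} = \bigl(\chi^{\alpha}_{n-1}\circ\cdots\circ\chi^{\alpha}_0\bigr)(x_{i,0})$. By the composition structure of this formula, it gives the one-step relation $x_{i,n+1} = \chi^{\alpha}_n(x_{i,n})$ for every $i\in\mathcal{J}$ and every $n\in\{0,1,\ldots,N_T-1\}$. (Equivalently, one can extract this one-step relation directly from Part \ref{Thm existence time-discrete Part Discrete} of the proof of Theorem \ref{Thm existence time-discrete}.)

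Next I would substitute the explicit form of the one-step motion mapping from Definition \ref{def one-step motion mapping}, namely $\chi^{\alpha}_n(x) = x + \Delta t_n\, v_n^{\alpha}(x)$, into the relation above. This yields
\begin{equation*}
x_{i,n+1} = x_{i,n} + \Delta t_n\, v_n^{\alpha}(x_{i,n}),
\end{equation*}
and, since $\Delta t_n > 0$ for all $n$, dividing by $\Delta t_n$ gives
\begin{equation*}
\dfrac{x_{i,n+1}-x_{i,n}}{\Delta t_n} = v_n^{\alpha}(x_{i,n}) = v^{\alpha}(t_n, x_{i,n}).
\end{equation*}
The left-hand side is precisely the forward difference quotient approximating $dx_i/dt$ at $t=t_n$, while the right-hand side is the velocity field evaluated at $(t_n, x_{i,n})$; hence the recursion is exactly the forward Euler discretisation of $dx_i(t)/dt = v^{\alpha}(t,x_i(t))$ at $t=t_n$, which is what we set out to show.

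There is essentially no hard analytic obstacle here; the result is a bookkeeping consequence of the constructions already in place. The only point deserving care is the identification $x_{i,n+1} = \chi^{\alpha}_n(x_{i,n})$: \emph{a priori} the equality $\sum_{i\in\mathcal{J}} a_i \delta_{x_{i,n+1}} = \sum_{i\in\mathcal{J}} a_i \delta_{\chi^{\alpha}_n(x_{i,n})}$ pins down only the measure, not the labelling of individual atoms. This ambiguity is harmless, because Corollary \ref{corollary sum of diracs preserves constant coeff} fixes a canonical labelling of the atoms through the composed motion mappings; adopting that labelling makes the one-step relation hold atom by atom, and the discretised ODE then follows immediately.
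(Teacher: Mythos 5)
Your proposal is correct and follows essentially the same route as the paper's proof: both rest on the one-step relation $x_{i,n+1}=\chi^{\alpha}_n(x_{i,n})$ extracted from Corollary \ref{corollary sum of diracs preserves constant coeff} together with Definition \ref{def one-step motion mapping}, identifying the resulting recursion $x_{i,n+1}=x_{i,n}+\Delta t_n v_n^{\alpha}(x_{i,n})$ as the forward-difference (Euler) discretisation of $dx_i/dt=v^{\alpha}(t,x_i(t))$ at $t=t_n$; the paper merely presents the two ingredients in the opposite order, first justifying the forward-difference scheme via a Taylor expansion and then matching it to the push forward. Your additional remark on the atom-labelling ambiguity is a sensible clarification that the paper leaves implicit.
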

\begin{proof}
For any $i\in\mathcal{J}$, a Taylor series expansion of $x_i$ around $t=t_n$ yields (evaluated in $t=t_{n+1}$):
\begin{equation*}
x_i(t_{n+1})=x_i(t_n)+\Delta t_n\dfrac{d}{dt}x_i(t_n)+\mathcal{O}(\Delta t_n^2),
\end{equation*}
by which we find the expression
\begin{equation}\label{forward differences O(Delta t ^2)}
\dfrac{d}{dt}x_i(t_n)=\dfrac{x_i(t_{n+1})-x_i(t_n)}{\Delta t_n}+\mathcal{O}(\Delta t_n^2).
\end{equation}
The discretized version of $dx_i(t)/dt=v^{\alpha}(t,x_i(t))$ is now obtained by substitution of (\ref{forward differences O(Delta t ^2)}), and after neglecting the $\mathcal{O}(\Delta t_n^2)$-part. Write $x_{i,n}$ as the discrete-in-time equivalent of $x_i(t_n)$, and furthermore use $v_n^{\alpha}(x_{i,n})=v^{\alpha}(t_n,x_{i,n})$ like before. The discretized version becomes
\begin{equation*}
\dfrac{x_{i,n+1}-x_{i,n}}{\Delta t_n}=v_n^{\alpha}(x_{i,n}),
\end{equation*}
or
\begin{equation}\label{discretized version dx/dt=v}
x_{i,n+1}=x_{i,n}+\Delta t_nv_n^{\alpha}(x_{i,n}).
\end{equation}
We have seen in the proofs of Theorem \ref{Thm existence time-discrete} and Corollary \ref{corollary sum of diracs preserves constant coeff} that $x_{i,n+1}$ and $x_{i,n}$ are related via
\begin{equation*}
x_{i,n+1}=\chi^{\alpha}_n(x_{i,n}),
\end{equation*}
and by definition of the motion mapping (see Definition \ref{def one-step motion mapping}) this is exactly (\ref{discretized version dx/dt=v}).
\end{proof}

\begin{remark}
In Remark \ref{remark sum of diracs preserves constant coeff} we have already anticipated the statements of Corollary \ref{corollary sum of diracs preserves constant coeff} and Lemma \ref{lemma sum of diracs obey dx/dt=v}.
\end{remark}

\begin{lemma}\label{lemma diracs do not merge}
Assume that, for each $n\in\mathcal{N}$ and $\alpha\in\{1,2,\ldots,\nu\}$, the velocity field $v_n^{\alpha}$ is Lipschitz continuous, with Lipschitz constant strictly smaller than $1/\Delta t_n$. That is, there is a constant $0\leqslant K_n < 1/\Delta t_n$, such that
\begin{equation*}
|v_{n}^{\alpha}(x)-v_{n}^{\alpha}(y)|\leqslant K_{n}|x-y|, \hspace{1 cm}\text{for all }x,y\in\Omega.
\end{equation*}
For each $n\in\mathcal{N}$, the measure $\mu^{\alpha}_{\text{d},n}$ is discrete, and thus it is the linear combination of Dirac masses (Lemma \ref{lemma characterization discrete measure}). Let $\{x_{i,n}\}_{i\in\mathcal{J}}$ be the corresponding set of centres (at time-slice $n$). Now, if $\{x_{i,0}\}_{i\in\mathcal{J}}$ consists of distinct elements, then also $\{x_{i,n}\}_{i\in\mathcal{J}}$ consists of distinct elements, for each $n\in\mathcal{N}$.
\end{lemma}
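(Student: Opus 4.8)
The plan is to reduce the claim to the injectivity of each one-step motion mapping $\chi_n^{\alpha}$ and then to derive this injectivity directly from the Lipschitz bound. First I would recall from Corollary \ref{corollary sum of diracs preserves constant coeff} that the centres evolve by $x_{i,n+1}=\chi_n^{\alpha}(x_{i,n})$, so that two centres can coincide at step $n+1$ only if two distinct centres at step $n$ are mapped to the same point by $\chi_n^{\alpha}$. Hence it suffices to show that $\chi_n^{\alpha}$ is injective on $\Omega$ for every $n$; an induction on $n$, with the hypothesis that $\{x_{i,0}\}_{i\in\mathcal{J}}$ consists of distinct elements serving as the base case, then yields distinctness of $\{x_{i,n}\}_{i\in\mathcal{J}}$ for all $n\in\mathcal{N}$.

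The key step is the injectivity estimate. Suppose $x,y\in\Omega$ satisfy $\chi_n^{\alpha}(x)=\chi_n^{\alpha}(y)$. By the definition of the motion mapping (Definition \ref{def one-step motion mapping}) this reads $x+\Delta t_n v_n^{\alpha}(x)=y+\Delta t_n v_n^{\alpha}(y)$, which rearranges to
\begin{equation*}
x-y=\Delta t_n\bigl(v_n^{\alpha}(y)-v_n^{\alpha}(x)\bigr).
\end{equation*}
Taking norms and invoking the assumed Lipschitz continuity of $v_n^{\alpha}$ gives
\begin{equation*}
|x-y|=\Delta t_n\,|v_n^{\alpha}(y)-v_n^{\alpha}(x)|\leqslant \Delta t_n K_n |x-y|.
\end{equation*}
Since by hypothesis $K_n<1/\Delta t_n$, we have $\Delta t_n K_n<1$, so the inequality $|x-y|\leqslant \Delta t_n K_n |x-y|$ forces $|x-y|=0$, i.e. $x=y$. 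This proves that $\chi_n^{\alpha}$ is injective.

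Combining the two steps, if $x_{i,n}\neq x_{j,n}$ for $i\neq j$, then $x_{i,n+1}=\chi_n^{\alpha}(x_{i,n})\neq\chi_n^{\alpha}(x_{j,n})=x_{j,n+1}$, which closes the induction. I do not expect a genuine obstacle here: the whole argument is the standard observation that a map of the form identity-plus-a-strict-contraction is injective. The only point requiring a little care is bookkeeping, namely making sure that it is the strict inequality $\Delta t_n K_n<1$ (and not merely $\leqslant 1$) that is used, since the borderline case $K_n=1/\Delta t_n$ would permit a nonzero solution of $|x-y|=\Delta t_n K_n|x-y|$ and hence allow two centres to merge.
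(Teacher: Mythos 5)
Your proof is correct and uses essentially the same argument as the paper: the key inequality $|x-y|=\Delta t_n|v_n^{\alpha}(y)-v_n^{\alpha}(x)|\leqslant \Delta t_n K_n|x-y|<|x-y|$ is exactly the one the paper exploits, the only difference being that the paper phrases it as a contradiction (locating a first time-slice $\tilde{n}$ at which two centres would merge) while you phrase it as direct injectivity of $\chi_n^{\alpha}$ plus induction. Your observation about the necessity of the strict inequality $K_n<1/\Delta t_n$ matches the paper's emphasis as well.
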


\begin{proof}
The proof goes by contradiction.\\
Assume that $n\in\{1,2,\ldots,N_T\}$ is such that not all elements of $\{x_{i,n}\}_{i\in\mathcal{J}}$ are distinct, and, more specifically, let $j,k\in\mathcal{J}$ satisfy $j\neq k$ and $x_{j,n}=x_{k,n}$.\\
We know that $x_{i,n}= \bigl(\chi^{\alpha}_{n-1} \circ \chi^{\alpha}_{n-2} \circ\ldots \circ \chi^{\alpha}_0\bigr)(x_{i,0})$ for all $i\in\mathcal{J}$, as this was shown in the proof of Corollary \ref{corollary sum of diracs preserves constant coeff}.\\
By assumption of the lemma $x_{j,0}\neq x_{k,0}$. For $x_{j,n}=x_{k,n}$ to hold, there must therefore be an $\tilde{n}\in\{0,1,\ldots,n-1\}$ such that
\begin{equation*}
x_{j,\tilde{n}}\neq x_{k,\tilde{n}},
\end{equation*}
but
\begin{equation*}
\chi^{\alpha}_{\tilde{n}}(x_{j,\tilde{n}})=\chi^{\alpha}_{\tilde{n}}(x_{k,\tilde{n}}).
\end{equation*}
By definition of the push forward, the last equivalence can be written as
\begin{equation*}
x_{j,\tilde{n}}+\Delta t_{\tilde{n}}v_{\tilde{n}}^{\alpha}(x_{j,\tilde{n}})=x_{k,\tilde{n}}+\Delta t_{\tilde{n}}v_{\tilde{n}}^{\alpha}(x_{k,\tilde{n}}),
\end{equation*}
or
\begin{equation}\label{Diracs merge}
\Delta t_{\tilde{n}}\bigl(v_{\tilde{n}}^{\alpha}(x_{j,\tilde{n}})-v_{\tilde{n}}^{\alpha}(x_{k,\tilde{n}})\bigr)=x_{k,\tilde{n}}-x_{j,\tilde{n}}.
\end{equation}
By assumption of the lemma, $v_{\tilde{n}}^{\alpha}$ is Lipschitz continuous, and there is a $K_{\tilde{n}}<1/\Delta t_{\tilde{n}}$ such that
\begin{equation*}
|v_{\tilde{n}}^{\alpha}(x)-v_{\tilde{n}}^{\alpha}(y)|\leqslant K_{\tilde{n}}|x-y|.
\end{equation*}
In particular, this implies that
\begin{equation*}
|v_{\tilde{n}}^{\alpha}(x_{j,\tilde{n}})-v_{\tilde{n}}^{\alpha}(x_{k,\tilde{n}})|\leqslant K_{\tilde{n}}|x_{j,\tilde{n}}-x_{k,\tilde{n}}|< \dfrac{1}{\Delta t_{\tilde{n}}} |x_{j,\tilde{n}}-x_{k,\tilde{n}}|.
\end{equation*}
However, as a result of (\ref{Diracs merge}) we already have that
\begin{equation*}
|v_{\tilde{n}}^{\alpha}(x_{j,\tilde{n}})-v_{\tilde{n}}^{\alpha}(x_{k,\tilde{n}})|= \dfrac{1}{\Delta t_{\tilde{n}}}|x_{j,\tilde{n}}-x_{k,\tilde{n}}|,
\end{equation*}
and thus we have a contradiction. This finishes the proof.
\end{proof}

\subsection{Relaxing the conditions on the motion mapping and velocity fields}\label{section relax conditions on motion mapping}
Up to now we have made a number of assumptions with respect to the one-step motion mapping $\chi^{\alpha}_n$, and the (discretized) velocity field $v^{\alpha}_n$. We first give a short recapitulation of these assumptions here. Afterwards we indicate which of these assumptions can be relaxed, and in what way this can be done. The results of this section have a preliminary character: we expect that more is to be discovered as soon as more analysis effort is invested in this direction.\\
\\
The following restrictions on the motion mapping and velocity field were imposed so far:
\begin{enumerate}
  \item In Section \ref{section derivation time-discrete}, we assumed $v^{\alpha}_n$ to be $\mu^{\alpha}_n$-uniformly bounded. We needed this to be able to write the Taylor expansion in (\ref{Taylor expansion psi, v well-behaved}). This assumption means that, for each $n\in\mathcal{N}$ and $\alpha\in\{1,2,\ldots,\nu\}$, there exists a constant $M_n$ for which
        \begin{equation*}
        |v_n^{\alpha}(x)|<M_n, \hspace{1 cm}\text{for }\mu^{\alpha}_n\text{-almost every }x\in\Omega.
        \end{equation*}\label{enumerate assumptions 1}
  \item By Assumption \ref{assumption time-discrete motion mapping invertible and Borel}, we postulated that $\chi^{\alpha}_n:\Omega\rightarrow\Omega$ is a homeomorphism, for all $n\in\{0,1,\ldots,N_T-1\}$ and $\alpha\in\{1,2,\ldots,\nu\}$.\label{enumerate assumptions 2}
  \item In case $\mu^{\alpha}_{\text{ac}, 0}\not\equiv 0$ and $\mu^{\alpha}_{\text{sc}, 0}\not\equiv 0$, we need for all $n\in\{0,1,\ldots,N_T-1\}$ strictly positive constants $c_n$ and $C_n$, such that
        \begin{equation*}
        c_n \lambda^d(\Omega') \leqslant \lambda^d\bigl((\chi^{\alpha}_n)^{-1}(\Omega')\bigr)\leqslant C_n \lambda^d(\Omega'), \hspace{1 cm}\text{for each }\Omega'\in\mathcal{B}(\Omega),
        \end{equation*}
        to prove existence and uniqueness of the time-discrete solution (see Theorems \ref{Thm existence time-discrete} and \ref{Thm uniqueness time-discrete}).\label{enumerate assumptions 3}
  \item The Lipschitz continuity of the velocity field $v_n^{\alpha}$ is demanded in the hypothesis of Lemma \ref{lemma diracs do not merge}. More specifically, we assume that for each $n\in\{0,1,\ldots,N_T-1\}$ a constant $K_n<1/\Delta t_n$ exists, such that
      \begin{equation*}
        |v_{n}^{\alpha}(x)-v_{n}^{\alpha}(y)|\leqslant K_{n}|x-y|, \hspace{1 cm}\text{for all }x,y\in\Omega.
        \end{equation*}
        This restriction is only needed if $\mu^{\alpha}_{\text{d}, 0}\not\equiv 0$, to make sure that no two centres of Dirac masses can `merge'.\label{enumerate assumptions 4}
\end{enumerate}
Before we can relax these conditions, we introduce the following concept:
\begin{definition}[Support of a measure]
The \textit{support} of a (positive, finite) measure $\mu$ on $\mathcal{B}(\Omega)$ is defined as the set
\begin{equation*}
\supp \mu := \{x\in\Omega\,|\, \mu(\Omega')>0 \text{ for all }\Omega'\in\mathcal{B}(\Omega)\text{ open such that }x\in\Omega' \}.
\end{equation*}
\end{definition}
One can show (see \cite{Parthasarathy}, pp.~27--28) that the support is a closed set of full measure, that is
\begin{equation*}
\mu(\supp \mu)=\mu(\Omega).
\end{equation*}
As a result
\begin{equation}\label{complement support is null set}
\mu(\Omega\setminus \supp \mu)=\mu(\Omega)-\mu(\Omega\cap\supp\mu)=\mu(\Omega)-\mu(\supp\mu)=0.
\end{equation}
For any measurable set $\Omega'\subset\Omega\setminus \supp \mu$ it follows that
\begin{equation}\label{no mass in set outside support}
\mu(\Omega')\leqslant\mu(\Omega\setminus\supp\mu)=0.
\end{equation}
We have introduced the concept of a support, because it is in fact only important to consider the properties of the motion mapping $\chi^{\alpha}_n$ on $\supp \mu_n^{\alpha}$. As (\ref{no mass in set outside support}) shows, there is no mass contained in any region outside the support. For our purposes, it is irrelevant whether the image of such $\Omega'\subset\Omega\setminus\supp\mu_n^{\alpha}$ is again a subset of $\Omega$ or not. In order to have mass conservation within $\Omega$ the only thing that matters is whether $\chi_n^{\alpha}(\Omega')\subset\Omega$ for all $\Omega'\subset\supp\mu^{\alpha}_n$.\\
\\
We do not change the restriction in Part \ref{enumerate assumptions 1} of the list above. However, we remark that if we demand that $|v_n^{\alpha}(x)|<M_n$ holds for all $x\in\supp \mu_n^{\alpha}$, then it also holds for $\mu_n^{\alpha}$-a.e. $x$. Indeed, if $|v_n^{\alpha}(x)|<M_n$ for all $x\in\supp \mu_n^{\alpha}$, then
\begin{equation*}
\Omega_M:=\big\{x\in\Omega\,\big|\,M_n\leqslant|v_n^{\alpha}(x)|\big\}\subset\Omega\setminus\supp\mu_n^{\alpha},
\end{equation*}
thus, cf. (\ref{complement support is null set}),
\begin{equation*}
\mu_n^{\alpha}(\Omega_M)\leqslant\mu_n^{\alpha}(\Omega\setminus\supp\mu_n^{\alpha})=0.
\end{equation*}
In Part \ref{enumerate assumptions 2} of the list of assumptions, we pay special attention to the restriction of $\chi^{\alpha}_n$ to $\supp \mu_n^{\alpha}$, that is, $\bar{\chi}^{\alpha}_n:=\chi^{\alpha}_n\big|_{\supp \mu_n^{\alpha}}$. Moreover, we replace the assumption that the motion mapping is a homeomorphism by:
\begin{assumption}\label{assumption motion mapping including support}
For all $n\in\{0,1,\ldots,N_T-1\}$ and each $\alpha\in\{1,2,\ldots,\nu\}$ we assume that the motion mapping $\chi^{\alpha}_n:\Omega\rightarrow\mathbb{R}^d$ is such that the following properties are satisfied:
\begin{enumerate}[(i)]
  \item the range of $\bar{\chi}^{\alpha}_n:=\chi^{\alpha}_n\big|_{\supp \mu_n^{\alpha}}$ is contained in $\Omega$. That is: $\bar{\chi}^{\alpha}_n:\supp \mu_n^{\alpha}\rightarrow\Omega$;\label{assumption motion mapping including support part 1}
  \item for all $\Omega'\in\mathcal{B}(\Omega)$ we have that $(\chi^{\alpha}_n)^{-1}(\Omega')$ is measurable;
  \item for all $\Omega'\in\mathcal{B}(\Omega)$ we have that $\chi^{\alpha}_n(\Omega')$ is measurable.
\end{enumerate}
Here, $(\chi^{\alpha}_n)^{-1}$ should not be understood as the inverse mapping. The set $(\chi^{\alpha}_n)^{-1}(\Omega')$ is the pre-image of $\Omega'$. We do not assume invertibility of the motion mapping.
\end{assumption}
Note that throughout this thesis, anywhere we used the inverse mapping $(\chi^{\alpha}_n)^{-1}$, we should now read it in the pre-image sense. This is e.g. also the case in the definition of the push forward, Definition \ref{def push forward}.\\
\\
Under the new set of restrictions on the motion mapping, we still want the Theorems and Lemmas of Section \ref{section well-posedness time-discrete} to hold. We want this especially for Theorems \ref{Thm existence time-discrete} and \ref{Thm uniqueness time-discrete} (global existence and uniqueness of the time-discrete solution). To achieve this, Part \ref{enumerate assumptions 3} of the list of assumptions needs reconsideration. We make it slightly stronger:
\begin{assumption}
Suppose that for each $n\in\{0,1,\ldots,N_T-1\}$ there exist constants $c_n>0$ and $C_n>0$, such that for each $\alpha\in\{1,2,\ldots,\nu\}$
\begin{eqnarray}
c_n \lambda^d\bigl(\chi^{\alpha}_n(\Omega')\bigr) &\leqslant& \lambda^d (\Omega'),\label{stronger assumption motion mapping re. sing cont}\\
\lambda^d\bigl((\chi^{\alpha}_n)^{-1}(\Omega')\bigr) &\leqslant& C_n \lambda^d(\Omega'),
\end{eqnarray}
hold for all $\Omega'\in\mathcal{B}(\Omega)$. Again, $(\chi^{\alpha}_n)^{-1}(\Omega')$ is the pre-image of $\Omega'$.
\end{assumption}
Note that (\ref{stronger assumption motion mapping re. sing cont}) implies the left inequality of
\begin{equation*}
c_n \lambda^d(\Omega') \leqslant \lambda^d\bigl((\chi^{\alpha}_n)^{-1}(\Omega')\bigr)\leqslant C_n \lambda^d(\Omega'),
\end{equation*}
by substituting $(\chi^{\alpha}_n)^{-1}(\Omega')$ for $\Omega'$, and by using $\chi^{\alpha}_n\bigl((\chi^{\alpha}_n)^{-1}(\Omega')\bigr)=\Omega'$.\\
\\
We still demand that the velocity field is a Lipschitz continuous function if $\mu^{\alpha}_{\text{d}, 0}\not\equiv 0$ (Restriction \ref{enumerate assumptions 4}). We relax this restriction in this sense, that it is no longer required for all $x,y\in\Omega$. It suffices to demand this property only for all $x,y\in\supp\mu^{\alpha}_{\text{d},n}$. Note that $\supp\mu^{\alpha}_{\text{d},n}$ consists exactly of those points in which the Dirac measures of $\mu^{\alpha}_{\text{d},n}$ are centered. In the proof of Lemma \ref{lemma diracs do not merge} $v_n^{\alpha}$ is only evaluated in these points. The proof is therefore still valid under the proposed weaker assumption.\\
\\
The results of Section \ref{section well-posedness time-discrete} are still valid under the new assumptions proposed in this Section. We only need to modify some minor details in the proofs of Theorem \ref{Thm existence time-discrete} and Corollary \ref{corollary conservation of mass discretized}. These modifications are addressed in Appendix \ref{Appendix modification proofs}.

\begin{remark}\label{remark conditions for non-merging diracs and bounded velocity}
In Lemma \ref{lemma diracs do not merge} we have shown that the centres of the Dirac measures cannot merge if the velocity field is a Lipschitz continuous function. We acknowledge that it is equally important that two Dirac masses that belong to distinct subpopulations cannot merge. Moreover, the integrand in the social velocity terms, see (\ref{Def vsoc sum of integrals}), typically has a singularity in 0. This means that at timeslice $n$, the velocity field $v_n^{\alpha}(x)$ is unbounded if $x$ is arbitrarily close to a Dirac mass. We thus propose as an extra demand that for all $\alpha\in\{1,2,\ldots,\nu\}$ and for each $n\in\mathbb{N}$
\begin{equation*}
\supp \mu^{\alpha}_n \cap \supp \mu^{\beta}_{\text{d},n} = \emptyset, \hspace{1 cm}\text{for all }\beta\in\{1,2,\ldots,\nu\}\text{ such that }\beta\neq\alpha,
\end{equation*}
and
\begin{equation*}
\supp \mu^{\alpha}_{\text{d},n}\cap (\supp \mu^{\alpha}_{\text{ac},n}\cap\supp \mu^{\alpha}_{\text{sc},n}) = \emptyset.
\end{equation*}
If this condition is obeyed, no Dirac measures from distinct populations can be centred in the same point. Since the support of a measure is a closed set, we furthermore prevent $v_n^{\alpha}(x)$ from being unbounded for all $x\in\supp\mu^{\alpha}_n$. What happens outside $\supp\mu^{\alpha}_n$ is unimportant. We come back to this issue in Section \ref{section numerical scheme}.
\end{remark}

\section{Reformulation of the proposed velocity fields in the time-discrete setting}\label{section time-discrete vel field}
In Sections \ref{section derivation time-discrete} and \ref{section well-posedness time-discrete} we have used the velocity fields $v^{\alpha}_n$ without specifying their actual form. In this section we want to do so. It is quite self-evident that we just take the proposed form of Section \ref{section specification velocity} and determine its time-discrete counterpart.\\
For $\alpha\in\{1,2,\ldots,\nu\}$, we thus take $v^{\alpha}_n$ as the superposition of the desired and the social velocity like in (\ref{Def v decomposition}):
\begin{equation*}
v^{\alpha}_n(x):= v_{\text{des}}^{\alpha}(x)+v_{\text{soc},n}^{\alpha}(x), \hspace{1 cm} \text{for all }x\in\Omega.
\end{equation*}
For $v_{\text{soc},n}^{\alpha}$ we have the following expression:
\begin{equation*}
v_{\text{soc},n}^{\alpha}(x):= \sum_{\beta=1}^{\nu}\int_{\Omega \setminus \{x\}} f^{\alpha}_{\beta}(|y-x|)g(\theta_{xy}^{\alpha})\dfrac{y-x}{|y-x|}d\mu^{\beta}_n(y),\hspace{1 cm}\text{for all }\alpha\in\{1,2,\ldots,\nu\}.
\end{equation*}

\section{Entropy inequality for the time-discrete problem}\label{section entropy inequality disc-in-time}
In Section \ref{section derivation time-discrete} we have derived a time-discrete version of our problem. We derived an entropy inequality for the continuous-in-time problem in Section \ref{section entropy inequality cont-in-time}. Its discrete-in-time counterpart is presented in this section. We again work with absolutely continuous mass measures only, and treat (for clarity) the single-population case first.

\subsection{One population}\label{section entropy inequality disc-in-time one population}
In Section \ref{section derivation time-discrete} we derived, omitting $\mathcal{O}(\Delta t_n^2)$-terms, (\ref{integral evolution time discrete}): an equation relating the situation at time $t_n$ to the one at time $t_{n+1}$. If the time-discrete mass measures are absolutely continuous, (\ref{integral evolution time discrete}) reads
\begin{equation}\label{time discrete evolution}
\int_{\Omega}\psi(x)\rho_{n+1}(x)d\lambda^d(x) =  \int_{\Omega}\psi\bigl(\chi_n(x)\bigr)\rho_n(x)d\lambda^d(x),
\end{equation}
where we can take any $\psi\in L^1_{\mu_{n+1}}(\Omega)$ (cf. Remark \ref{remark relax psi in C^1_0}). We deduce our discrete-in-time entropy inequality here, taking (\ref{time discrete evolution}) as a starting point.\\
\\
The subscript $n$ in the sequel denotes that we consider the time-discrete version of a function at time $t_n$. Recall that $\chi_n$ denotes the one-step push forward (see Definition \ref{def one-step motion mapping}), defined by
\begin{equation*}
\chi_n(x):= x + \Delta t_n v_n(x),
\end{equation*}
Let $S_n$ be the time-discrete equivalent of the entropy $S(t_n)$. The following theorem holds:

\begin{theorem}[Discrete-in-time entropy inequality]\label{theorem time-discrete entropy ineq one component}
Assume that the time-discrete velocity $v_n$ is of the form
\begin{equation}\label{vn V plus W star rho}
v_n:=\nabla V + \nabla W\star \rho_n,
\end{equation}
where $W$ is such that $W(\xi)=W(-\xi)$ for all $\xi\in\mathbb{R}^d$. Assume moreover that the evolution of the system is governed by (\ref{time discrete evolution}), and that the entropy is defined by
\begin{equation}\label{def Sn}
S_n:=\int_{\Omega} \bigl( V(x)+\dfrac12 (W\star \rho_n)(x) \bigr) \rho_n(x)d\lambda^d(x), \hspace{1 cm}\text{for all }n\in\mathcal{N}.
\end{equation}
Then for each $n\in\{0,1,\ldots,N_T-1\}$ the following inequality holds, up to $\mathcal{O}(\Delta t_n^2)$-terms:
\begin{equation}\label{Sn+1 >= Sn one population}
S_{n+1}\geqslant S_n.
\end{equation}
\end{theorem}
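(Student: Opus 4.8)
The plan is to follow the template of the continuous-in-time proof of Theorem~\ref{theorem entropy ineq one component}, with the time derivative replaced by a direct computation of the increment $S_{n+1}-S_n$, the balance of mass replaced by the one-step evolution relation (\ref{time discrete evolution}), and the chain rule replaced by a first-order Taylor expansion of the motion mapping $\chi_n(x)=x+\Delta t_n v_n(x)$. Writing $S_n=\int_\Omega V\rho_n\,d\lambda^d+\tfrac12\int_\Omega(W\star\rho_n)\rho_n\,d\lambda^d$, I would treat the linear (confinement) part and the quadratic (interaction) part separately.

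For the linear part, apply (\ref{time discrete evolution}) with the choice $\psi=V$ (legitimate by Remark~\ref{remark relax psi in C^1_0}, since $V$ need not be compactly supported), giving $\int_\Omega V\rho_{n+1}\,d\lambda^d=\int_\Omega V(\chi_n(x))\rho_n(x)\,d\lambda^d(x)$. Taylor-expanding $V(\chi_n(x))=V(x)+\Delta t_n\nabla V(x)\cdot v_n(x)+\mathcal{O}(\Delta t_n^2)$ yields
\[\int_\Omega V\rho_{n+1}\,d\lambda^d=\int_\Omega V\rho_n\,d\lambda^d+\Delta t_n\int_\Omega\nabla V\cdot v_n\,\rho_n\,d\lambda^d+\mathcal{O}(\Delta t_n^2).\]
For the quadratic part the idea is to push forward both factors at once: since $\mu_{n+1}=\chi_n\#\mu_n$, the product measure satisfies $\mu_{n+1}\otimes\mu_{n+1}=(\chi_n\times\chi_n)\#(\mu_n\otimes\mu_n)$, so
\[\int_\Omega\!\int_\Omega W(x-y)\rho_{n+1}(x)\rho_{n+1}(y)\,d\lambda^d(x)\,d\lambda^d(y)=\int_\Omega\!\int_\Omega W\bigl(\chi_n(x)-\chi_n(y)\bigr)\rho_n(x)\rho_n(y)\,d\lambda^d(x)\,d\lambda^d(y).\]

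Expanding $W(\chi_n(x)-\chi_n(y))=W(x-y)+\Delta t_n\nabla W(x-y)\cdot\bigl(v_n(x)-v_n(y)\bigr)+\mathcal{O}(\Delta t_n^2)$ and using the symmetry $W(\xi)=W(-\xi)$ (hence $\nabla W$ odd) to relabel $x\leftrightarrow y$ in the $v_n(y)$-term — exactly the manipulation of (\ref{W star d rho/dt}) — collapses the two cross terms into one, so that
\[\tfrac12\int_\Omega(W\star\rho_{n+1})\rho_{n+1}\,d\lambda^d=\tfrac12\int_\Omega(W\star\rho_n)\rho_n\,d\lambda^d+\Delta t_n\int_\Omega(\nabla W\star\rho_n)\cdot v_n\,\rho_n\,d\lambda^d+\mathcal{O}(\Delta t_n^2).\]
Adding the two contributions, the zeroth-order terms reassemble into $S_n$ and the first-order terms combine into $\Delta t_n\int_\Omega\bigl(\nabla V+\nabla W\star\rho_n\bigr)\cdot v_n\,\rho_n\,d\lambda^d$. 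By the assumed form (\ref{vn V plus W star rho}) the bracket is exactly $v_n$, so
\[S_{n+1}-S_n=\Delta t_n\int_\Omega|v_n(x)|^2\rho_n(x)\,d\lambda^d(x)+\mathcal{O}(\Delta t_n^2).\]
Since the integrand is nonnegative, this is precisely (\ref{Sn+1 >= Sn one population}) up to $\mathcal{O}(\Delta t_n^2)$-terms.

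The main obstacle I foresee is not the algebra but the justification that the Taylor remainders, which here sit inside (double) integrals against $\rho_n$, are genuinely of order $\Delta t_n^2$ uniformly. This requires the $\mu_n$-uniform boundedness of $v_n$ already assumed in Section~\ref{section derivation time-discrete}, together with enough regularity of $V$ and $W$ (bounded second derivatives on the relevant range) so that the second-order contributions can be pulled out of the finite-mass integrals, in the same spirit as the control of the $\mathcal{O}(\Delta t_n^2)$-terms when deriving (\ref{integral evolution time discrete}).
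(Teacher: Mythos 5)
Your proposal is correct: the assembled identity $S_{n+1}-S_n=\Delta t_n\int_{\Omega}|v_n|^2\rho_n\,d\lambda^d+\mathcal{O}(\Delta t_n^2)$ is exactly what the paper obtains, and every intermediate step of yours checks out (including the factor bookkeeping: the $\tfrac12$ prefactor against the doubled cross term gives precisely $\Delta t_n\int(\nabla W\star\rho_n)\cdot v_n\,\rho_n\,d\lambda^d$). However, your handling of the interaction term is organized differently from the paper's. The paper chooses the $\rho_{n+1}$-dependent test function $\psi:=V+\tfrac12(W\star\rho_{n+1})$, applies (\ref{time discrete evolution}) once to the whole entropy, and is then forced into a two-stage expansion: first Taylor in the argument $\chi_n(x)$, then a secondary application of (\ref{time discrete evolution}) with $\psi(y)=W(x-y)$ to convert $(W\star\rho_{n+1})(x)$ into $(W\star\rho_n)(x)$ (equations (\ref{W star rho n+1 x})--(\ref{W star rho n+1 x first order})), where the resulting $\mathcal{O}(\Delta t_n)$ error is admissible only because it multiplies another $\Delta t_n$. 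You instead treat $\int_{\Omega}(W\star\rho)\rho\,d\lambda^d$ as a bilinear form and push both arguments forward at once, i.e.\ you use $\mu_{n+1}\otimes\mu_{n+1}=(\chi_n\times\chi_n)\#(\mu_n\otimes\mu_n)$ (equivalently, two applications of (\ref{time discrete evolution}) plus Fubini) and expand $W\bigl(\chi_n(x)-\chi_n(y)\bigr)$ directly. This buys a shorter and manifestly symmetric computation with no nested expansions, and it would carry over to the multi-component Theorem \ref{theorem time-discrete entropy ineq multi-component} just as well; what the paper's route buys is a closer structural parallel with the continuous-time proof of Theorem \ref{theorem entropy ineq one component}, where the entropy density itself plays the role of the test function. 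Both arguments share the decisive ingredients (oddness of $\nabla W$, the $x\leftrightarrow y$ relabelling, recognition of the bracket as $v_n$), and both leave the same unproven regularity hypothesis on the remainders — uniform control requiring bounded second derivatives of $V$ and $W$ and the $\mu_n$-uniform bound on $v_n$ — which you, unlike the paper, state explicitly; that candour is a point in your favour rather than a gap.
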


\begin{proof}
Fix $n\in\{0,1,\ldots,N_T-1\}$ and take
\begin{equation*}
\psi:= V+\dfrac12 (W\star \rho_{n+1}).
\end{equation*}
By (\ref{time discrete evolution}), we derive
\begin{align}\label{Sn+1 first step}
\notag S_{n+1} &= \int_{\Omega} \bigl( V(x)+\dfrac12 (W\star \rho_{n+1})(x) \bigr) \rho_{n+1}(x)d\lambda^d(x)\\
&= \int_{\Omega} \bigl( V(\chi_n(x))+\dfrac12 (W\star \rho_{n+1})(\chi_n(x)) \bigr) \rho_{n}(x)d\lambda^d(x).
\end{align}
Note that, cf. (\ref{Taylor expansion psi, v well-behaved}) -- (\ref{Taylor expansion psi, v well-behaved, reordered}):
\begin{align}
V(\chi_n(x)) =& V(x)+ \Delta t_n v_n(x)\cdot \nabla V(x) + \mathcal{O}(\Delta t_n^2),\label{V chi x}\\
\notag (W\star \rho_{n+1})(\chi_n(x)) =& (W\star \rho_{n+1})(x) + \Delta t_n v_n(x)\cdot \nabla (W\star \rho_{n+1})(x) + \mathcal{O}(\Delta t_n^2)\\
=& (W\star \rho_{n+1})(x) + \Delta t_n v_n(x)\cdot (\nabla W\star \rho_{n+1})(x) + \mathcal{O}(\Delta t_n^2),\label{W star rho n+1 chi x}\\
\notag (W\star \rho_{n+1})(x) =& \int_{\Omega} W(x-y)\rho_{n+1}(y)d\lambda^d(y)\\
\notag =& \int_{\Omega} W(x-\chi_n(y))\rho_{n}(y)d\lambda^d(y)\\
\notag =& \int_{\Omega} W(x-y-\Delta t_n v_n(y))\rho_{n}(y)d\lambda^d(y)\\
\notag =& \int_{\Omega} W(x-y)\rho_{n}(y)d\lambda^d(y)\\
\notag &-\Delta t_n \int_{\Omega} v_n(y)\cdot \nabla W(x-y)\rho_n(y)d\lambda^d(y)+\mathcal{O}(\Delta t_n^2)\\
=& (W\star \rho_{n})(x) -\Delta t_n \int_{\Omega} v_n(y)\cdot \nabla W(x-y)\rho_n(y)d\lambda^d(y)+\mathcal{O}(\Delta t_n^2). \label{W star rho n+1 x}
\end{align}
In the second equality of (\ref{W star rho n+1 x}) we have used (\ref{time discrete evolution}) with $\psi(y)=W(x-y)$ (in this scope, $x$ is regarded as a parameter). Note that (\ref{W star rho n+1 x}) can also be written as
\begin{equation}\label{W star rho n+1 x first order}
(W\star \rho_{n+1})(x)= (W\star \rho_{n})(x) +\mathcal{O}(\Delta t_n),
\end{equation}
which we use now. Combination of (\ref{W star rho n+1 chi x}) -- (\ref{W star rho n+1 x first order}) yields
\begin{align}\label{W star rho n+1 chi x full expansion}
\notag (W\star \rho_{n+1})(\chi_n(x)) =& (W\star \rho_{n+1})(x) + \Delta t_n v_n(x)\cdot \nabla (W\star \rho_{n+1})(x) + \mathcal{O}(\Delta t_n^2)\\
\notag =& (W\star \rho_{n})(x) -\Delta t_n \int_{\Omega} v_n(y)\cdot \nabla W(x-y)\rho_n(y)d\lambda^d(y)\\
\notag & + \Delta t_n v_n(x)\cdot \nabla \Bigl( (W\star \rho_{n})(x) +\mathcal{O}(\Delta t_n) \Bigr) + \mathcal{O}(\Delta t_n^2)\\
\notag =& (W\star \rho_{n})(x) -\Delta t_n \int_{\Omega} v_n(y)\cdot \nabla W(x-y)\rho_n(y)d\lambda^d(y)\\
& + \Delta t_n v_n(x)\cdot\int_{\Omega} \nabla W(x-y) \rho_{n}(y)d\lambda^d(y) + \mathcal{O}(\Delta t_n^2).
\end{align}
Substituting (\ref{V chi x}) and (\ref{W star rho n+1 chi x full expansion}) into (\ref{Sn+1 first step}), we find
\begin{align}\label{Sn+1 second step}
\notag S_{n+1} =&  \int_{\Omega} \bigl( V(\chi_n(x))+\dfrac12 (W\star \rho_{n+1})(\chi_n(x)) \bigr) \rho_{n}(x)d\lambda^d(x)\\
\notag =& \int_{\Omega} \bigl( V(x)+ \Delta t_n v_n(x)\cdot \nabla V(x) \bigr) \rho_{n}(x)d\lambda^d(x)\\
\notag & + \dfrac12 \int_{\Omega} (W\star \rho_{n})(x)\rho_{n}(x)d\lambda^d(x)\\
\notag & - \dfrac12 \Delta t_n \int_{\Omega}\int_{\Omega} v_n(y)\cdot \nabla W(x-y) \rho_n(y)d\lambda^d(y)\rho_{n}(x)d\lambda^d(x)\\
& + \dfrac12 \Delta t_n \int_{\Omega}v_n(x)\cdot \int_{\Omega} \nabla W(x-y) \rho_{n}(y)d\lambda^d(y)\rho_{n}(x)d\lambda^d(x) + \mathcal{O}(\Delta t_n^2).
\end{align}
Since $W(\xi)=W(-\xi)$ for all $\xi\in\mathbb{R}^d$, we have that $\nabla W(\xi)=-\nabla W(-\xi)$. Using some elementary calculus, we derive
\begin{align}
\notag & - \dfrac12 \Delta t_n \int_{\Omega}\int_{\Omega} v_n(y)\cdot \nabla W(x-y) \rho_n(y)d\lambda^d(y)\rho_{n}(x)d\lambda^d(x)\\
\notag =& - \dfrac12 \Delta t_n \int_{\Omega}v_n(y)\cdot\int_{\Omega} \nabla W(x-y)\rho_{n}(x)d\lambda^d(x)\rho_n(y)d\lambda^d(y)\\
\notag =& \dfrac12 \Delta t_n \int_{\Omega}v_n(y)\cdot\int_{\Omega} \nabla W(y-x)\rho_{n}(x)d\lambda^d(x)\rho_n(y)d\lambda^d(y)\\
=& \dfrac12 \Delta t_n \int_{\Omega}v_n(x)\cdot\int_{\Omega} \nabla W(x-y)\rho_{n}(y)d\lambda^d(y)\rho_n(x)d\lambda^d(x).
\end{align}
In the last step we changed notation, replacing $x$ by $y$ and \textit{vice versa}. In this final expression we recognize the last term of (\ref{Sn+1 second step}), which can in short be written as
\begin{equation*}
\dfrac12 \Delta t_n \int_{\Omega}v_n(x)\cdot(\nabla W\star\rho_{n})(x)\rho_n(x)d\lambda^d(x).
\end{equation*}
The expression for $S_{n+1}$ in (\ref{Sn+1 second step}) now transforms into
\begin{align}\label{Sn+1 third step}
\notag S_{n+1} =& \int_{\Omega} \bigl( V(x)+ \Delta t_n v_n(x)\cdot \nabla V(x) \bigr) \rho_{n}(x)d\lambda^d(x)\\
\notag & + \dfrac12 \int_{\Omega} (W\star \rho_{n})(x)\rho_{n}(x)d\lambda^d(x)\\
\notag & + \Delta t_n \int_{\Omega}v_n(x)\cdot \int_{\Omega} \nabla W(x-y) \rho_{n}(y)d\lambda^d(y)\rho_{n}(x)d\lambda^d(x) + \mathcal{O}(\Delta t_n^2)\\
\notag =& \int_{\Omega} \bigl( V(x)+ \Delta t_n v_n(x)\cdot \nabla V(x) \bigr) \rho_{n}(x)d\lambda^d(x)\\
\notag & + \dfrac12 \int_{\Omega} (W\star \rho_{n})(x)\rho_{n}(x)d\lambda^d(x)\\
\notag & + \Delta t_n \int_{\Omega}v_n(x)\cdot (\nabla W\star\rho_{n})(x)\rho_{n}(x)d\lambda^d(x) + \mathcal{O}(\Delta t_n^2)\\
\notag =& \int_{\Omega} \bigl( V(x)+ \dfrac12(W\star \rho_{n})(x) \bigr) \rho_{n}(x)d\lambda^d(x)\\
& + \Delta t_n\int_{\Omega} v_n(x)\cdot \bigl(\nabla V(x)+(\nabla W\star\rho_{n})(x)\bigr)\rho_{n}(x)d\lambda^d(x) + \mathcal{O}(\Delta t_n^2).
\end{align}
Now we recognize the definitions of the entropy $S_n$, (\ref{def Sn}), and the velocity $v_n$, (\ref{vn V plus W star rho}), in the last equality of (\ref{Sn+1 third step}). We thus proceed:
\begin{align}
\notag S_{n+1} =& S_n + \Delta t_n\int_{\Omega} |v_n(x)|^2\rho_{n}(x)d\lambda^d(x) + \mathcal{O}(\Delta t_n^2)\\
 \geqslant& S_n +  \mathcal{O}(\Delta t_n^2).
\end{align}
Neglecting $\mathcal{O}(\Delta t_n^2)$-terms, we thus have that
\begin{equation*}
S_{n+1}\geqslant S_n,\hspace{1 cm}\text{for all }n\in\{0,1,\ldots,N_T-1\}.
\end{equation*}
\end{proof}

\begin{remark}
Note that (\ref{Sn+1 >= Sn one population}) is the discrete-in-time counterpart of (\ref{dS/dt positive}), up to $\mathcal{O}(\Delta t_n^2)$-approximation.
\end{remark}

\begin{remark}
Unlike the continuous-in-time case, no boundary terms appear in this time-discrete context. In Section \ref{section entropy inequality cont-in-time one population}, indeed these terms did appear, and we either assumed them to vanish (isolated system), or we incorporated them in the entropy inequality. However, in the proof of Theorem \ref{theorem time-discrete entropy ineq one component} no boundary terms are encountered. This is because, originally, the test functions $\psi$ were taken such that they vanish on $\partial \Omega$. This implicitly assumes that the time-discrete model is more suitable for describing an isolated system.
\end{remark}

\subsection{Multi-component crowd}
For a crowd that consists of multiple subpopulations, for each constituent of the mixture a governing equation like (\ref{time discrete evolution}) holds:
\begin{equation}\label{time discrete evolution constituent}
\int_{\Omega}\psi^{\alpha}(x)\rho^{\alpha}_{n+1}(x)d\lambda^d(x) =  \int_{\Omega}\psi^{\alpha}\bigl(\chi^{\alpha}_n(x)\bigr)\rho^{\alpha}_n(x)d\lambda^d(x),
\end{equation}
for any $\psi^{\alpha}\in L^1_{\mu^{\alpha}_{n+1}}(\Omega)$. The one-step push forward $\chi^{\alpha}_n$, is defined by
\begin{equation*}
\chi^{\alpha}_n(x):= x + \Delta t_n v^{\alpha}_n(x),
\end{equation*}
where we take the time-discrete velocities $v^{\alpha}_n$ of the form
\begin{equation}\label{def vel field time-discr entropy ineq multi-comp crowd}
v^{\alpha}_n:=\nabla V^{\alpha} + \sum_{\beta=1}^{\nu}\nabla W_{\beta}^{\alpha} \star \rho^{\beta}_n,\hspace{1 cm}\text{for all }\alpha\in\{1,2,\ldots,\nu\}.
\end{equation}
We assume that for all $\alpha, \beta\in\{1,2,\ldots,\nu\}$: $W_{\beta}^{\alpha}(\xi)=W_{\beta}^{\alpha}(-\xi)$ holds for all $\xi\in\mathbb{R}^d$, and $W_{\beta}^{\alpha}\equiv W_{\alpha}^{\beta}$.\\
Very much as in Section \ref{section entropy inequality disc-in-time one population} we define the time-discrete entropy for the multi-component crowd to be
\begin{equation}\label{def Sn total}
S_n:=\sum_{\alpha=1}^{\nu}\int_{\Omega} \Bigl[ V^{\alpha}(x)+\dfrac12 \sum_{\beta=1}^{\nu}(W_{\beta}^{\alpha}\star \rho^{\beta}_n)(x) \Bigr] \rho^{\alpha}_n(x)d\lambda^d(x), \hspace{1 cm}\text{for all }n\in\mathcal{N}.
\end{equation}

\begin{theorem}[Discrete-in-time entropy inequality]\label{theorem time-discrete entropy ineq multi-component}
Assume that $v^{\alpha}_n$ satisfies (\ref{def vel field time-discr entropy ineq multi-comp crowd}) and the accompanying conditions on $W_{\beta}^{\alpha}$.\\
Assume furthermore that the evolution of the system is governed by (\ref{time discrete evolution constituent}), and that the entropy is defined by (\ref{def Sn total}).\\
Then for each $n\in\{0,1,\ldots,N_T-1\}$ the following inequality holds, up to $\mathcal{O}(\Delta t_n^2)$-terms:
\begin{equation}\label{Sn+1 >= Sn multi-component}
S_{n+1}\geqslant S_n.
\end{equation}
\end{theorem}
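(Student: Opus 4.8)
The plan is to follow the single-population argument of Theorem~\ref{theorem time-discrete entropy ineq one component} essentially verbatim, paying attention only to the extra bookkeeping forced by the double sum over constituents. First I would fix $n\in\{0,1,\ldots,N_T-1\}$ and, for each $\alpha$, insert into the definition \eqref{def Sn total} of $S_{n+1}$ the test function
\begin{equation*}
\psi^{\alpha}:= V^{\alpha}+\dfrac12\sum_{\beta=1}^{\nu}(W_{\beta}^{\alpha}\star\rho^{\beta}_{n+1}).
\end{equation*}
Applying the evolution law \eqref{time discrete evolution constituent} componentwise then replaces every integral against $\rho^{\alpha}_{n+1}$ by an integral against $\rho^{\alpha}_n$ whose integrand is evaluated at $\chi^{\alpha}_n(x)$, exactly as in \eqref{Sn+1 first step}.

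Next I would Taylor-expand each piece to first order in $\Delta t_n$, using $\chi^{\alpha}_n(x)=x+\Delta t_n v^{\alpha}_n(x)$. Three expansions are needed, mirroring \eqref{V chi x}--\eqref{W star rho n+1 x}: the confinement term $V^{\alpha}(\chi^{\alpha}_n(x))$; the \emph{outer} shift of the convolution $(W^{\alpha}_{\beta}\star\rho^{\beta}_{n+1})(\chi^{\alpha}_n(x))$, which produces a factor $v^{\alpha}_n\cdot(\nabla W^{\alpha}_{\beta}\star\rho^{\beta}_{n+1})$; and, crucially, the \emph{inner} update $(W^{\alpha}_{\beta}\star\rho^{\beta}_{n+1})(x)$ itself. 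The last one is obtained by a second application of \eqref{time discrete evolution constituent}, now for the $\beta$-component with the parameter-dependent test function $\psi^{\beta}(y)=W^{\alpha}_{\beta}(x-y)$, giving $(W^{\alpha}_{\beta}\star\rho^{\beta}_{n})(x)-\Delta t_n\int_{\Omega}v^{\beta}_n(y)\cdot\nabla W^{\alpha}_{\beta}(x-y)\rho^{\beta}_n(y)\,d\lambda^d(y)+\mathcal{O}(\Delta t_n^2)$. Since $(W^{\alpha}_{\beta}\star\rho^{\beta}_{n+1})=(W^{\alpha}_{\beta}\star\rho^{\beta}_{n})+\mathcal{O}(\Delta t_n)$ (cf. \eqref{W star rho n+1 x first order}), inside the $\mathcal{O}(\Delta t_n)$ outer-shift term I may freely replace $\rho^{\beta}_{n+1}$ by $\rho^{\beta}_n$.

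The heart of the argument is collecting the first-order terms. Two kinds of $\Delta t_n$-contributions carrying $\nabla W^{\alpha}_{\beta}$ appear: one from the outer shift, weighted by $v^{\alpha}_n(x)$, and one from the inner update, weighted by $v^{\beta}_n(y)$. I would show these combine into the single term $\Delta t_n\sum_{\alpha,\beta}\int_{\Omega}v^{\alpha}_n(x)\cdot(\nabla W^{\alpha}_{\beta}\star\rho^{\beta}_n)(x)\rho^{\alpha}_n(x)\,d\lambda^d(x)$, exactly as the factor $\tfrac12$ disappeared in the single-component proof. This is where the two structural hypotheses enter: $\nabla W^{\alpha}_{\beta}(\xi)=-\nabla W^{\alpha}_{\beta}(-\xi)$ (from evenness of $W^{\alpha}_{\beta}$) handles the sign when swapping $x\leftrightarrow y$ in the inner-update term, while the symmetry $W^{\alpha}_{\beta}\equiv W^{\beta}_{\alpha}$ lets me simultaneously relabel $\alpha\leftrightarrow\beta$ so that the relabelled term matches the outer-shift term. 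This is precisely the reindexing already carried out in \eqref{W star d rho/dt total} in the proof of Theorem~\ref{theorem entropy ineq multi-component}, and I expect the careful double-sum bookkeeping --- keeping track of which velocity and which spatial variable each factor carries --- to be the main obstacle.

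Finally, assembling everything I would recognise the zeroth-order part as $S_n$ from \eqref{def Sn total} and the combined first-order part as $\Delta t_n\sum_{\alpha}\int_{\Omega}v^{\alpha}_n(x)\cdot\bigl(\nabla V^{\alpha}(x)+\sum_{\beta}(\nabla W^{\alpha}_{\beta}\star\rho^{\beta}_n)(x)\bigr)\rho^{\alpha}_n(x)\,d\lambda^d(x)$. By the velocity ansatz \eqref{def vel field time-discr entropy ineq multi-comp crowd} the bracket is exactly $v^{\alpha}_n(x)$, so this term equals $\Delta t_n\sum_{\alpha}\int_{\Omega}|v^{\alpha}_n(x)|^2\rho^{\alpha}_n(x)\,d\lambda^d(x)\geqslant0$. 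Hence $S_{n+1}=S_n+\Delta t_n\sum_{\alpha}\int_{\Omega}|v^{\alpha}_n|^2\rho^{\alpha}_n\,d\lambda^d+\mathcal{O}(\Delta t_n^2)$, and neglecting the $\mathcal{O}(\Delta t_n^2)$-terms yields \eqref{Sn+1 >= Sn multi-component}.
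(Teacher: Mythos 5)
Your proposal is correct and follows essentially the same route as the paper's own proof: the same test functions $\psi^{\alpha}$, the same three Taylor expansions (confinement term, outer shift, inner update via a second application of (\ref{time discrete evolution constituent}) with $\psi^{\beta}(y)=W^{\alpha}_{\beta}(x-y)$), and the same reindexing $x\leftrightarrow y$, $\alpha\leftrightarrow\beta$ in which evenness of $W^{\alpha}_{\beta}$ fixes the sign and the symmetry $W^{\alpha}_{\beta}\equiv W^{\beta}_{\alpha}$ collapses the averaged kernel $\tfrac12(\nabla W^{\alpha}_{\beta}+\nabla W^{\beta}_{\alpha})$ back onto the velocity ansatz. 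Your identification of exactly where each of the two structural hypotheses enters matches the paper precisely, so there is nothing to add.
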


\begin{proof}
Fix $n\in\{0,1,\ldots,N_T-1\}$ and take $\psi^{\alpha}:= V+\dfrac12 \sum_{\beta=1}^{\nu}(W_{\beta}^{\alpha}\star \rho^{\beta}_n)$ for each $\alpha\in\{1,2,\ldots,\nu\}$. Using (\ref{time discrete evolution constituent}), we derive that
\begin{align}\label{Sn+1 first step total}
\notag S_{n+1} &= \sum_{\alpha=1}^{\nu}\int_{\Omega} \Bigl[ V^{\alpha}(x)+\dfrac12 \sum_{\beta=1}^{\nu}(W_{\beta}^{\alpha}\star \rho^{\beta}_{n+1})(x) \Bigr] \rho^{\alpha}_{n+1}(x)d\lambda^d(x)\\
&= \sum_{\alpha=1}^{\nu}\int_{\Omega} \Bigl[ V^{\alpha}(\chi^{\alpha}_n(x))+\dfrac12 \sum_{\beta=1}^{\nu}(W_{\beta}^{\alpha}\star \rho^{\beta}_{n+1})(\chi^{\alpha}_n(x)) \Bigr] \rho^{\alpha}_n(x)d\lambda^d(x).
\end{align}
For fixed $\alpha,\beta\in\{1,2,\ldots,\nu\}$ similar expressions as in (\ref{V chi x}), (\ref{W star rho n+1 chi x}), (\ref{W star rho n+1 x}) hold. They can be derived in an identical manner. Therefore, we list here only the results:
\begin{align}
V^{\alpha}(\chi^{\alpha}_n(x)) =& V^{\alpha}(x)+ \Delta t_n v^{\alpha}_n(x)\cdot \nabla V^{\alpha}(x) + \mathcal{O}(\Delta t_n^2),\label{V chi x alpha}\\
\notag (W_{\beta}^{\alpha}\star \rho^{\beta}_{n+1})(\chi^{\alpha}_n(x)) =& (W_{\beta}^{\alpha}\star \rho^{\beta}_{n+1})(x) + \Delta t_n v^{\alpha}_n(x)\cdot \nabla( W_{\beta}^{\alpha}\star \rho^{\beta}_{n+1})(x) + \mathcal{O}(\Delta t_n^2)\\
=& (W_{\beta}^{\alpha}\star \rho^{\beta}_{n+1})(x) + \Delta t_n v^{\alpha}_n(x)\cdot (\nabla W_{\beta}^{\alpha}\star \rho^{\beta}_{n+1})(x) + \mathcal{O}(\Delta t_n^2),\label{W star rho n+1 chi x alpha}\\
(W_{\beta}^{\alpha}\star \rho^{\beta}_{n+1})(x) =& (W_{\beta}^{\alpha}\star \rho^{\beta}_{n})(x) -\Delta t_n \int_{\Omega} v^{\beta}_n(y)\cdot \nabla W_{\beta}^{\alpha}(x-y)\rho^{\beta}_n(y)d\lambda^d(y)+\mathcal{O}(\Delta t_n^2). \label{W star rho n+1 x alpha}
\end{align}
We combine (\ref{W star rho n+1 chi x alpha}) and (\ref{W star rho n+1 x alpha}), and (omitting the details) obtain the analogon of (\ref{W star rho n+1 chi x full expansion}):
\begin{align}\label{W star rho n+1 chi x alpha full expansion}
\notag (W_{\beta}^{\alpha}\star \rho^{\beta}_{n+1})(\chi^{\alpha}_n(x)) =& (W_{\beta}^{\alpha}\star \rho^{\beta}_{n})(x) -\Delta t_n \int_{\Omega} v^{\beta}_n(y)\cdot \nabla W_{\beta}^{\alpha}(x-y)\rho^{\beta}_n(y)d\lambda^d(y)\\
& + \Delta t_n v^{\alpha}_n(x)\cdot\int_{\Omega} \nabla W_{\beta}^{\alpha}(x-y) \rho^{\beta}_{n}(y)d\lambda^d(y) + \mathcal{O}(\Delta t_n^2).
\end{align}
By substitution of (\ref{V chi x alpha}) and (\ref{W star rho n+1 chi x alpha full expansion}) into (\ref{Sn+1 first step total}) we get
\begin{align}\label{Sn+1 second step total}
\notag S_{n+1} =&  \sum_{\alpha=1}^{\nu}\int_{\Omega} \Bigl[ V^{\alpha}(\chi^{\alpha}_n(x))+\dfrac12 \sum_{\beta=1}^{\nu}(W_{\beta}^{\alpha}\star \rho^{\beta}_{n+1})(\chi^{\alpha}_n(x)) \Bigr] \rho^{\alpha}_n(x)d\lambda^d(x)\\
\notag =& \sum_{\alpha=1}^{\nu}\int_{\Omega} \bigl( V^{\alpha}(x)+ \Delta t_n v^{\alpha}_n(x)\cdot \nabla V^{\alpha}(x) \bigr) \rho^{\alpha}_{n}(x)d\lambda^d(x)\\
\notag & + \dfrac12\sum_{\alpha=1}^{\nu} \int_{\Omega} \sum_{\beta=1}^{\nu}(W_{\beta}^{\alpha}\star \rho^{\beta}_{n})(x)\rho^{\alpha}_{n}(x)d\lambda^d(x)\\
\notag & - \dfrac12\Delta t_n \sum_{\alpha=1}^{\nu} \int_{\Omega}\sum_{\beta=1}^{\nu}\int_{\Omega} v^{\beta}_n(y)\cdot \nabla W_{\beta}^{\alpha}(x-y)\rho^{\beta}_n(y)d\lambda^d(y)\rho^{\alpha}_{n}(x)d\lambda^d(x)\\
& + \dfrac12 \Delta t_n \sum_{\alpha=1}^{\nu} \int_{\Omega} \sum_{\beta=1}^{\nu} v^{\alpha}_n(x)\cdot \int_{\Omega} \nabla W_{\beta}^{\alpha}(x-y) \rho^{\beta}_{n}(y)d\lambda^d(y)\rho^{\alpha}_{n}(x)d\lambda^d(x) + \mathcal{O}(\Delta t_n^2).
\end{align}
We now use that $W_{\beta}^{\alpha}(\xi)=W_{\beta}^{\alpha}(-\xi)$ for all $\xi\in\mathbb{R}^d$. This implies that $\nabla W_{\beta}^{\alpha}(\xi)=-\nabla W_{\beta}^{\alpha}(-\xi)$. We obtain
\begin{align}
\notag & - \dfrac12\Delta t_n \sum_{\alpha=1}^{\nu} \int_{\Omega}\sum_{\beta=1}^{\nu}\int_{\Omega} v^{\beta}_n(y)\cdot \nabla W_{\beta}^{\alpha}(x-y)\rho^{\beta}_n(y)d\lambda^d(y)\rho^{\alpha}_{n}(x)d\lambda^d(x)\\
\notag =& - \dfrac12\Delta t_n \sum_{\alpha=1}^{\nu} \sum_{\beta=1}^{\nu}\int_{\Omega}v^{\beta}_n(y)\cdot \int_{\Omega} \nabla W_{\beta}^{\alpha}(x-y)\rho^{\alpha}_{n}(x)d\lambda^d(x)\rho^{\beta}_n(y)d\lambda^d(y)\\
\notag =& \dfrac12\Delta t_n \sum_{\alpha=1}^{\nu} \sum_{\beta=1}^{\nu}\int_{\Omega}v^{\beta}_n(y)\cdot \int_{\Omega} \nabla W_{\beta}^{\alpha}(y-x)\rho^{\alpha}_{n}(x)d\lambda^d(x)\rho^{\beta}_n(y)d\lambda^d(y)\\
\notag =& \dfrac12\Delta t_n \sum_{\beta=1}^{\nu} \sum_{\alpha=1}^{\nu}\int_{\Omega}v^{\alpha}_n(x)\cdot \int_{\Omega} \nabla W_{\alpha}^{\beta}(x-y)\rho^{\beta}_{n}(y)d\lambda^d(y)\rho^{\alpha}_n(x)d\lambda^d(x)\\
=& \dfrac12\Delta t_n \sum_{\alpha=1}^{\nu}\int_{\Omega}\sum_{\beta=1}^{\nu} v^{\alpha}_n(x)\cdot (\nabla W_{\alpha}^{\beta}\star\rho^{\beta}_{n})(x)\rho^{\alpha}_n(x)d\lambda^d(x).
\end{align}
In the third step we changed notation: we replaced $x$ by $y$, $\alpha$ by $\beta$, and \textit{vice versa}.\\
\\
The expression for $S_{n+1}$ in (\ref{Sn+1 second step total}) transforms into
\begin{align}\label{Sn+1 third step total}
\notag S_{n+1} =& \sum_{\alpha=1}^{\nu}\int_{\Omega} \bigl( V^{\alpha}(x)+ \Delta t_n v^{\alpha}_n(x)\cdot \nabla V^{\alpha}(x) \bigr) \rho^{\alpha}_{n}(x)d\lambda^d(x)\\
\notag & + \dfrac12\sum_{\alpha=1}^{\nu} \int_{\Omega} \sum_{\beta=1}^{\nu}(W_{\beta}^{\alpha}\star \rho^{\beta}_{n})(x)\rho^{\alpha}_{n}(x)d\lambda^d(x)\\
\notag & + \Delta t_n \sum_{\alpha=1}^{\nu}\int_{\Omega}\sum_{\beta=1}^{\nu} v^{\alpha}_n(x)\cdot \Bigl(\dfrac12(\nabla W_{\beta}^{\alpha} + \nabla W_{\alpha}^{\beta}) \star\rho^{\beta}_{n}\Bigr)(x)\rho^{\alpha}_n(x)d\lambda^d(x)  + \mathcal{O}(\Delta t_n^2)\\
\notag =& \sum_{\alpha=1}^{\nu}\int_{\Omega} \Bigl[ V^{\alpha}(x)+\dfrac12 \sum_{\beta=1}^{\nu}(W_{\beta}^{\alpha}\star \rho^{\beta}_n)(x) \Bigr] \rho^{\alpha}_n(x)d\lambda^d(x)\\
\notag &+ \Delta t_n \sum_{\alpha=1}^{\nu}\int_{\Omega}v^{\alpha}_n(x)\cdot \Bigl[ \nabla V^{\alpha}(x)+\dfrac12 \sum_{\beta=1}^{\nu}\Bigl((\nabla W_{\beta}^{\alpha} + \nabla W_{\alpha}^{\beta}) \star\rho^{\beta}_{n}\Bigr)(x) \Bigr] \rho^{\alpha}_n(x)d\lambda^d(x) + \mathcal{O}(\Delta t_n^2)\\
=& S_n + \Delta t_n \sum_{\alpha=1}^{\nu}\int_{\Omega}|v^{\alpha}_n(x)|^2 \rho^{\alpha}_n(x)d\lambda^d(x) + \mathcal{O}(\Delta t_n^2),
\end{align}
where the last step is only valid under the assumption that the interactions $W^{\alpha}_{\beta}$ are symmetric (this is a hypothesis of the theorem).\\
\\
If we omit the $ \mathcal{O}(\Delta t_n^2)$-terms, we deduce from (\ref{Sn+1 third step total}) that
\begin{equation*}
S_{n+1}\geqslant S_n,\hspace{1 cm}\text{for all }n\in\{0,1,\ldots,N_T-1\},
\end{equation*}
which finishes the proof.
\end{proof}

\begin{remark}
Note that (\ref{Sn+1 >= Sn multi-component}) is the discrete-in-time counterpart of (\ref{dS/dt positive total}), again up to $\mathcal{O}(\Delta t_n^2)$-approximation.
\end{remark}

\subsection{Generalization to discrete measures}\label{section generalization entropy inequality time-discr to discrete measures}
Following the lines of Section \ref{section generalization entropy inequality to discrete measures}, we generalize the entropy inequality for the time-discrete model (cf. Theorem \ref{theorem time-discrete entropy ineq one component}) to the situation in which the discrete-in-time mass measure is of a discrete type.\\
\\
We consider a single population with discrete-in-time mass measure
\begin{equation*}
\mu_n := \sum_{i\in\mathcal{J}} \delta_{x_{i,n}}.
\end{equation*}
The push forward of the centres $x_{i,n}$ is given by (\ref{one-step motion mapping}) such that
\begin{equation}\label{evolution centres push-forward}
x_{i,n+1}:=\chi_n(x_{i,n})= x_{i,n} + \Delta t_n v_n(x_{i,n}),\hspace{1 cm}\text{for all }i\in\mathcal{J}.
\end{equation}
Here, the velocity field is the time-discrete version of (\ref{velocity gradient structure convolution with measure}):
\begin{equation*}
v_n:= \nabla V + \nabla W \star \mu_n,
\end{equation*}
where the convolution is given by
\begin{equation*}
(\nabla W \star \mu_n)(x):=\int_{\Omega\setminus\{x\}}\nabla W(x-y)d\mu_n(y),\hspace{1 cm}\text{for all }x\in\Omega.
\end{equation*}
We assume again that $W$ is symmetric:
\begin{equation*}
W(\xi)=W(-\xi),\hspace{1 cm}\text{for all }\xi\in\mathbb{R}^d.
\end{equation*}
The corresponding entropy of the system in $\Omega$ is
\begin{equation*}
S_n=\int_{\Omega} \Big\{ V(x)+\dfrac12 (W\star\mu_n)(x)\Big\}d\mu_n(x),\hspace{1 cm}\text{for all }n\in\mathcal{N}.
\end{equation*}
We again explicitly restrict ourselves to the situation that $x_{i,n}\in\Omega$ for all $i\in\mathcal{J}$ and all $n\in\mathcal{N}$; cf. Section \ref{section generalization entropy inequality to discrete measures}.\\
\\
Our aim is to derive the entropy inequality of Section \ref{section entropy inequality disc-in-time one population} for the proposed microscopic measure. That is, up to $\mathcal{O}(\Delta t_n^2)$-terms, we have that
\begin{equation*}
S_{n+1}\geqslant S_n,\hspace{1 cm}\text{for all }n\in\{0,1,\ldots,N_T-1\}.
\end{equation*}
Using the definition of $\mu_n$ (in terms of the sum of Dirac measures), the relation between $x_{i,n+1}$ and $x_{i,n}$ in (\ref{evolution centres push-forward}), we derive that
\begin{eqnarray}
\nonumber S_{n+1} &=& \sum_{i\in\mathcal{J}}\Big\{V(x_{i,n+1})+\dfrac12 \sum_{\substack{j\in\mathcal{J}\\x_{j,n+1}\neq x_{i,n+1}}}W(x_{i,n+1}-x_{j,n+1})\Big\}\\
\nonumber &=& \sum_{i\in\mathcal{J}}\Big\{V\big(x_{i,n} + \Delta t_n v_n(x_{i,n})\big)+\dfrac12 \sum_{\substack{j\in\mathcal{J}\\x_{j,n+1}\neq x_{i,n+1}}}W\big(x_{i,n} + \Delta t_n v_n(x_{i,n})-x_{j,n} - \Delta t_n v_n(x_{j,n})\big)\Big\}.\\
\label{Sn+1 discrete measure first step}
\end{eqnarray}

\begin{remark}
Note that the statement $x_{j,n+1}\neq x_{i,n+1}$ is equivalent to $x_{j,n}\neq x_{i,n}$, if $v_n$ is a Lipschitz continuous function with the Lipschitz constant strictly smaller than $1/\Delta t_n$. The implication
\begin{equation*}
\big[x_{j,n+1}\neq x_{i,n+1}\big] \Longrightarrow \big[x_{j,n}\neq x_{i,n} \big]
\end{equation*}
follows trivially from the unambiguity of the push forward, whereas the implication
\begin{equation*}
\big[x_{j,n}\neq x_{i,n}\big] \Longrightarrow \big[x_{j,n+1}\neq x_{i,n+1} \big]
\end{equation*}
is a consequence of similar arguments as in the proof of Lemma \ref{lemma diracs do not merge}.
\end{remark}

We can thus proceed on (\ref{Sn+1 discrete measure first step}):
\begin{eqnarray}
\nonumber S_{n+1} &=&  \sum_{i\in\mathcal{J}}\Big\{V\big(x_{i,n} + \Delta t_n v_n(x_{i,n})\big)+\dfrac12 \sum_{\substack{j\in\mathcal{J}\\x_{j,n}\neq x_{i,n}}}W\big(x_{i,n} + \Delta t_n v_n(x_{i,n})-x_{j,n} - \Delta t_n v_n(x_{j,n})\big)\Big\}\\
\nonumber &=& \sum_{i\in\mathcal{J}}\Big\{V(x_{i,n}) + \Delta t_n v_n(x_{i,n})\cdot \nabla V(x_{i,n}) +\mathcal{O}(\Delta t_n^2)\Big\}\\
\nonumber && +\dfrac12 \sum_{\substack{i,j\in\mathcal{J}\\x_{j,n}\neq x_{i,n}}}\Big\{W(x_{i,n}-x_{j,n}) + \Delta t_n \big(v_n(x_{i,n})-v_n(x_{j,n})\big)\cdot \nabla W(x_{i,n}-x_{j,n})+\mathcal{O}(\Delta t_n^2)\Big\}.\\
\label{Sn+1 discrete measure second step}
\end{eqnarray}
Here we have used Taylor series expansions around $x_{i,n}$ (in $V$) and around $x_{i,n}-x_{j,n}$ (in $W$), respectively. Note also that $\nabla W(x_{i,n}-x_{j,n})=-\nabla W(x_{j,n}-x_{i,n})$. We can thus write:
\begin{eqnarray}
\nonumber S_{n+1} &=& \sum_{i\in\mathcal{J}}\Big\{V(x_{i,n}) + \dfrac12 \sum_{\substack{j\in\mathcal{J}\\x_{j,n}\neq x_{i,n}}}W(x_{i,n}-x_{j,n})\Big\}\\
\nonumber && +\Delta t_n \sum_{i\in\mathcal{J}}v_n(x_{i,n})\cdot \nabla V(x_{i,n})\\
\nonumber && +\dfrac12\Delta t_n \sum_{\substack{i,j\in\mathcal{J}\\x_{j,n}\neq x_{i,n}}} v_n(x_{i,n})\cdot \nabla W(x_{i,n}-x_{j,n})\\
\nonumber && +\dfrac12\Delta t_n \sum_{\substack{i,j\in\mathcal{J}\\x_{j,n}\neq x_{i,n}}} v_n(x_{j,n})\cdot \nabla W(x_{j,n}-x_{i,n}) + \mathcal{O}(\Delta t_n^2)\\
\nonumber &=& S_n  +\Delta t_n \sum_{i\in\mathcal{J}}v_n(x_{i,n})\cdot \nabla V(x_{i,n})\\
\nonumber && +\Delta t_n \sum_{\substack{i,j\in\mathcal{J}\\x_{j,n}\neq x_{i,n}}} v_n(x_{i,n})\cdot \nabla W(x_{i,n}-x_{j,n}) + \mathcal{O}(\Delta t_n^2)\\
\nonumber &=& S_n  +\Delta t_n \sum_{i\in\mathcal{J}}v_n(x_{i,n})\cdot \Big\{\nabla V(x_{i,n})+ \sum_{\substack{j\in\mathcal{J}\\x_{j,n}\neq x_{i,n}}} \nabla W(x_{i,n}-x_{j,n})\Big\} + \mathcal{O}(\Delta t_n^2)\\
\label{Sn+1 discrete measure third step}
\end{eqnarray}
Omitting $\mathcal{O}(\Delta t_n^2)$-terms, (\ref{Sn+1 discrete measure third step}) we arrive at
\begin{eqnarray}
\nonumber S_{n+1} &=&  S_n  +\Delta t_n \sum_{i\in\mathcal{J}}v_n(x_{i,n})\cdot \Big\{\nabla V(x_{i,n})+ (\nabla W\star\mu_n)(x_{i,n})\Big\}\\
\nonumber &=& S_n +\Delta t_n \int_{\Omega}|v_n(x)|^2d\mu_n(x)\\
&\geqslant& S_n.\label{Sn+1 discrete measure fourth step}
\end{eqnarray}
The inequality in (\ref{Sn+1 discrete measure fourth step}) is an analogon of the statement of Theorem \ref{theorem time-discrete entropy ineq one component} for a discrete mass measure. We do not give details on the multi-component case.

\section{Two-scale phenomena}\label{section two-scales no sing cont}
At this stage, we want to point out which specific types of measures we have in mind to use in our model. Throughout the previous sections we have already given clues about the choices we intend to make, for example in Sections \ref{section introduction modelling approaches}, \ref{section Mass measure}, \ref{section generalization mixture} and \ref{section weak formulation}. As anticipated, we are mainly interested in two sorts of measures.\\
\\
The first option is a discrete measure (or: microscopic mass measure), as introduced already in Section \ref{section micro mass measure}. In general, a discrete measure may be the (weighted) sum of a countable yet infinite number of Dirac measures (cf. Lemma \ref{lemma characterization discrete measure}). We restrict ourselves here to a finite - but arbitrary - number of point masses (read: people). This approach enables us to trace the exact motion of a particular individual.\\
\\
Secondly, if we work with an absolutely continuous mass measure (w.r.t. the Lebesgue measure), then we are provided with a density by the Radon-Nikodym Theorem. This option (a macroscopic mass measure) has already been introduced in Section \ref{section macro mass measure}. In fact, this perspective corresponds to considering the crowd as a fluid (or a large `cloud'). We decide to do so if the number of people is large, and if we, moreover, are not interested in what happens exactly at the individual's level. Such local fluctuations have been averaged out.\\
\\
We only allow these two options, which means that from now on we explicitly exclude singular continuous measures from our area of interest. This is because it is not self-evident what interpretation we should give to such measure. This decision has been anticipated in Section \ref{section generalization mixture}.\\
\\
In Section \ref{section introduction}, we have already indicated that we are especially interested in the interplay between microscopic and macroscopic mass measures. Piccoli et al. \cite{Piccoli2010} do so by including a discrete and an absolutely continuous part in one measure. Furthermore a tuning parameter (taking values from 0 to 1) is used to enable a transition from fully microscopic to fully macroscopic. Speaking in terms of mixture theory (cf. Section \ref{section mixture theory and thermodynamics}), this means that the discrete and absolutely continuous part together describe a single constituent. They can not act independently. We already remarked in Section \ref{section weak formulation} that we consider this to be not very useful. In the numerical scheme that is described in Section \ref{section numerical scheme}, we therefore do not include this superposition of a micro and a macro part in one measure. If desired, it can however be incorporated without too much difficulty.\\
\\
Our intention is a bit different: We namely want to distinguish between subpopulations and assign to them a measure that corresponds to their `character'. More specifically, we want to use a discrete measure only if this subpopulation consists of a limited number of individuals which are of special interest. This is why we only consider discrete measures that are constituted of a \textit{finite} number of Dirac measures. Large crowds are in our framework automatically represented by an absolutely continuous mass measure. Such subpopulations consist of people that are not special (otherwise they should have been included in an independent discrete measure), and therefore it suffices to be able to observe only global behaviour. It is of no use to `do effort' to capture information from the microscopic level in a discrete measure, if we are not interested in that information any way. The use of both microscopic and macroscopic mass measures in one unified framework, makes that we work in a \textit{two-scale} setting. Our perspective is twofold: on one hand we observe from a macroscopic point of view, while on the other hand we can detect behaviour at an individual's level if we have identified this individual as `special'. 
\begin{remark}
In the future, we possibly want to come back to modelling the cloud as a particle system, consisting of a large number of individuals. We then aim at comparing the particle system's results to the results of this thesis. Note that these two approaches are incorporated in the measure-theoretical framework we are dealing with.
\end{remark}
By introducing separate mass measures, we allow the subpopulations to evolve \textbf{differently}. For example, we assign to each subpopulation its own desired velocity field. That is, for each subpopulation there is a separate goal that it wants to achieve. Moreover, it is possible to have asymmetric interactions. We have already shortly referred to what we called "predator-prey relations" in Remark \ref{remark predator-prey}. In such asymmetric situation preys are only repelled from predators, but predators are attracted to preys.\\
\\
Nature turns out to provide very clear illustrations of systems that are of a two-scale character and at the same time display asymmetric interactions of the "predator-prey" type. We point out two examples. Fascinating interaction takes place between flocks of small birds (starlings, say) and one or more larger predator birds (e.g. hawks), as is illustrated by Figure \ref{Figure flock attack}. The huge number of starlings makes it nearly impossible to distinguish between individuals in the flock. One clearly perceives a continuum-like cloud of birds; modelling this cloud by means of an absolutely continuous measure thus seems natural. As the hawk attacks the starlings, the flock reacts to the approaching enemy as if there were some macroscopic coordination. Increase and decrease of the density in the flock can be observed. It is striking to see that the group does not fall apart.\footnote{Movies of such phenomena are available on the internet, see e.g. \texttt{www.youtube.com/watch?v=b8eZJnbDHIg\&feature=related} .} Similar effects can be seen in shoals of fish being attacked. The interaction between the predator fish and the shoal triggers complex structures to appear and enables sudden transitions between macroscopic patterns. The phenomena we refer to are described e.g. in \cite{SwarmScienceNews,VicsekCollMotion}.\\
\\
Two-scale effects also occur in human crowds, when there is special interaction between a specific individual (or a limited number of them) and the rest of the crowd. These phenomena might be harder to visualize (because the individual and the group members have the same size), but this does not mean that they are not there. Phenomena of the "predator-prey" type occur, for example when a crowd is attacked by some criminal or terrorist. More `friendly' interplay is present when the special individuals have the role of tourist guides, leaders, firemen, safety guards et cetera. Our aim in Section \ref{section numerical scheme} and further is to capture these two-scale phenomena with our model.

\begin{figure}[h]
\centering
\vspace{0 cm}
\includegraphics[width=0.6\linewidth]{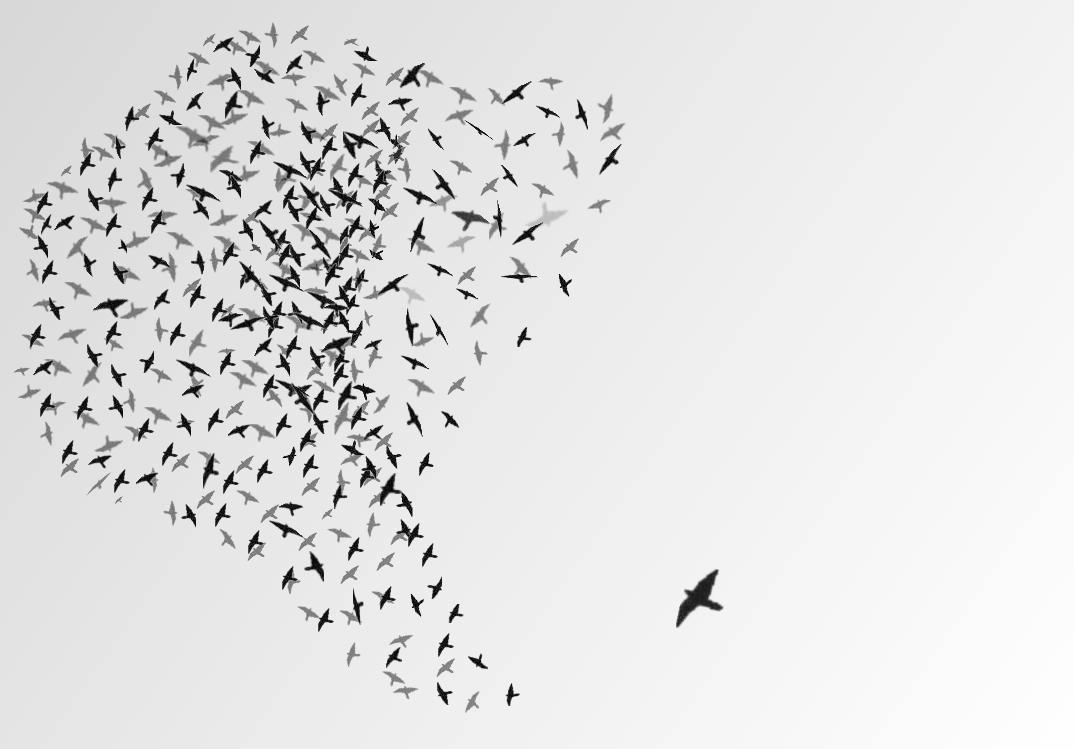}\\
\vspace{0 cm}
\caption{Illustration of two-scale, asymmetric interaction between a single predator bird and a flock of smaller individuals.}\label{Figure flock attack}
\end{figure}

\newpage
\chapter{Numerical scheme}\label{section numerical scheme}
In this section, we propose a numerical scheme for finding solutions of the time-discrete model of Section \ref{section derivation time-discrete}. See Definition \ref{def time-discrete solution}. The scheme was originally developed in \cite{Piccoli2010}.\\
\\
We consider the set $\Omega\subset \mathbb{R}^2$ such that $\Omega:=[0,\mathcal{L}]\times[0,\mathcal{W}]$, where $\mathcal{L}, \mathcal{W}\in (0,\infty)$. We restrict ourselves to the situation in which each component of the crowd is either discrete (concentrated in a finite number of points) or absolutely continuous (w.r.t. $\lambda^d$), as was motivated in Section \ref{section two-scales no sing cont}.  Note that we have seen in the proof of Theorem \ref{Thm existence time-discrete} that the push forward of a discrete measure is again discrete (and similar for the push forward of an absolutely continuous measure). The mass measure of each constituent is thus of the same type throughout time.\\

\section{Types of measures}
\subsection{Discrete measure}
Suppose that we have a discrete mass measure for constituent $\alpha$, consisting of $N^{\alpha}$ distinct Dirac distributions. We choose the particular form
\begin{equation*}
\mu^{\alpha}_n := M^{\alpha}\sum_{i=1}^{N^{\alpha}} \delta_{x^{\alpha}_{i,n}},
\end{equation*}
for the mass measure at time slice $n$. Here, $M^{\alpha}\in\mathbb{R}^+$ is a proportionality constant that makes it possible to compare a sum of Dirac measures (they measure \textit{the number of people}) to an absolutely continuous measure (that measures \textit{kilograms of people}). By  $\{x^{\alpha}_{i,n}\}_{i=1}^{N^{\alpha}} \subset \Omega$ we denote the set of (time-dependent) centres at time $t=t_n$.

\subsection{Absolutely continuous measure: spatial approximation of density}
If the constituent $\alpha$ has a corresponding density $\rho^{\alpha}_n$, then we need to approximate $\rho^{\alpha}_n$ in space. Therefore, we fix $N_{\mathcal{L}}, N_{\mathcal{W}}\in\mathbb{N}$ and we subdivide $\Omega$ into $N_{\mathcal{L}}\cdot N_{\mathcal{W}}$ cells. Define $h_{\mathcal{L}}:= \mathcal{L}/N_{\mathcal{L}}$ and $h_{\mathcal{W}}:= \mathcal{W}/N_{\mathcal{W}}$, the horizontal and vertical grid size, respectively. The size of a cell is thus $h_{\mathcal{L}}h_{\mathcal{W}}$. Each of the cells $E_{j,k}$ is given an index $(j,k)\in \mathcal{K}:=\big\{1,2,\ldots,N_{\mathcal{L}}\big\} \times\big\{1,2,\ldots,N_{\mathcal{W}}\big\}$, such that
\begin{equation*}
 E_{j,k}:= \big[(j-1)h_{\mathcal{L}}, jh_{\mathcal{L}}\big]\times \big[(k-1)h_{\mathcal{W}}, kh_{\mathcal{W}}\big].
\end{equation*}
Note that the boundaries of these cells overlap each other. This is, however, not a serious problem, since the boundaries are null sets w.r.t. $\lambda^d$. Up to a null set the cells are mutually disjoint.\\
We sketch such grid in Figure \ref{figure grid}.

\begin{figure}[ht]
\centering
\begin{tikzpicture}
    \draw[->,thick] (0,0)--(10,0) ;
    \draw[->,thick] (0,0)--(0,7);

    \draw (0,0) rectangle (5,3);
    \draw[-] (1.25,0) -- (1.25,3);
    \draw[-] (2.5,0) -- (2.5,3);
    \draw[-] (3.75,0) -- (3.75,3);

    \draw[-] (0,1) -- (5,1);
    \draw[-] (0,2) -- (5,2);

    \draw (0,5) rectangle (8.75, 6);
    \draw (7.5, 0) rectangle (8.75, 6);

    \draw[-,thick] (0,6) -- (8.75, 6);
    \draw[-,thick] (8.75,0) -- (8.75, 6);

    \draw[-] (7.5, 1) -- (8.75, 1);
    \draw[-] (7.5, 2) -- (8.75, 2);
    \draw[-] (7.5, 3) -- (8.75, 3);
    \draw[-] (7.5, 4) -- (8.75, 4);

    \draw[-] (1.25, 5) -- (1.25, 6);
    \draw[-] (2.5, 5) -- (2.5, 6);
    \draw[-] (3.75, 5) -- (3.75, 6);
    \draw[-] (5, 5) -- (5, 6);
    \draw[-] (6.25, 5) -- (6.25, 6);

    \draw[dotted] (5, 1) -- (7.5, 1);
    \draw[dotted] (5, 2) -- (7.5, 2);
    \draw[dotted] (5, 3) -- (7.5, 3);
    \draw[dotted] (6.875, 4) -- (7.5, 4);

    \draw[dotted] (1.25, 3) -- (1.25, 5);
    \draw[dotted] (2.5, 3) -- (2.5, 5);
    \draw[dotted] (3.75, 3) -- (3.75, 5);
    \draw[dotted] (5, 3) -- (5, 5);
    \draw[dotted] (6.25, 4.5) -- (6.25, 5);

    \draw[<->] (1.25,-0.25)-- node[anchor=north]{$h_{\mathcal{L}}$} (2.5,-0.25);
    \draw[<->] (-0.25,1)-- node[anchor=east]{$h_{\mathcal{W}}$} (-0.25,2);
    \draw[<->] (0,-1)-- node[anchor=north]{${\mathcal{L}}$} (8.75,-1);
    \draw[<->] (-1,0)-- node[anchor=east]{${\mathcal{W}}$} (-1,6);

    \node at (0.625,0.5){$E_{1,1}$};
    \node at (4.275,2.5){$E_{j,k}$};
    \node at (8.125,0.5){$E_{N_{\mathcal{L}},1}$};
    \node at (0.625,5.5){$E_{1,N_{\mathcal{W}}}$};
    \node at (8.125,5.5){$E_{N_{\mathcal{L}},N_{\mathcal{W}}}$};

\end{tikzpicture}
\caption{The spatial discretization of the domain by subdivision into $N_{\mathcal{L}}\cdot N_{\mathcal{W}}$ cells.}\label{figure grid}
\end{figure}
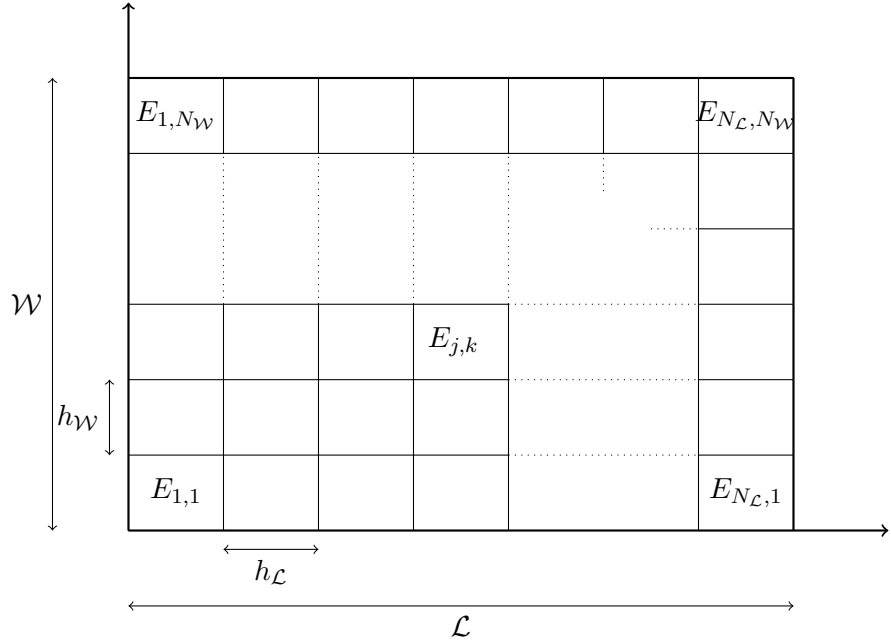

We approximate the density $\rho^{\alpha}_n$ by a piecewise constant function $\tilde{\rho}^{\alpha}_n$. This means that for each $(j,k)\in\mathcal{K}$ there is a $\rho^{\alpha}_{(j,k),n}\in\mathbb{R}^+$ such that
\begin{equation*}
\tilde{\rho}^{\alpha}_n(x) := \rho^{\alpha}_{(j,k),n},
\end{equation*}
for all $x$ in the interior of $E_{j,k}$. Since the boundaries of the cells form a null set, it is not important how we define $\tilde{\rho}^{\alpha}_n$ there. 

\section{Calculating velocities}\label{section calculating velocities}
In Section \ref{section time-discrete vel field} we have proposed time-discrete velocity fields $v^{\alpha}_n$ defined by
\begin{equation}\label{v before approx}
v^{\alpha}_n(x):= v_{\text{des}}^{\alpha}(x)+\sum_{\beta=1}^{\nu}\int_{\Omega \setminus \{x\}} f^{\alpha}_{\beta}(|y-x|)g(\theta_{xy}^{\alpha})\dfrac{y-x}{|y-x|}d\mu^{\beta}_n(y), \hspace{1 cm} \text{for all }x\in\Omega.
\end{equation}
We approximate $v^{\alpha}_n$ by $\tilde{v}^{\alpha}_n$. Where and how an approximation is needed, depends on the types of measures associated to $\alpha$ and $\beta$. We discuss this aspect in more detail in Sections \ref{subsect alpha discrete} and \ref{subsect alpha abs cont}.

\subsection{Discrete $\mu^{\alpha}_n$}\label{subsect alpha discrete}
If $\mu^{\alpha}_n$ is a discrete measure, then the evaluation of $v^{\alpha}_n$ is required in each of the points $x^{\alpha}_{i,n}\in\Omega$. The desired velocity $v_{\text{des}}^{\alpha}$ can be evaluated in these points without any difficulty. However, attention has to be paid regarding the integral terms with respect to $\mu^{\beta}_n$ that arise in the social velocity part of (\ref{v before approx}). We distinguish between the cases:
\begin{enumerate}[(i)]
  \item $\mu^{\beta}_n$ is discrete;
  \item $\mu^{\beta}_n$ is absolutely continuous.
\end{enumerate}
Firstly, let $\mu^{\beta}_n$ for each $n$ be given as
\begin{equation*}
\mu^{\beta}_n := M^{\beta}\sum_{i=1}^{N^{\beta}} \delta_{x^{\beta}_{i,n}},
\end{equation*}
for some $M^{\beta}\in\mathbb{R}^+$, $N^{\beta}\in\mathbb{N}$ and a set of centres $\{x^{\beta}_{i,n}\}_{i=1}^{N^{\beta}} \subset \Omega$. The corresponding interaction term can be calculated exactly as follows:
\begin{equation}\label{interaction integral disc disc}
\int_{\Omega \setminus \{x\}} f^{\alpha}_{\beta}(|y-x|)g(\theta_{xy}^{\alpha})\dfrac{y-x}{|y-x|}d\mu^{\beta}_n(y) = M^{\beta}\sum_{\substack{i=1\\x^{\beta}_{i,n}\neq x}}^{N^{\beta}} f^{\alpha}_{\beta}(|x^{\beta}_{i,n}-x|)g(\theta_{xx^{\beta}_{i,n}}^{\alpha})\dfrac{x^{\beta}_{i,n}-x}{|x^{\beta}_{i,n}-x|}.
\end{equation}
Any of the points $x^{\alpha}_{i,n}$ can be taken as a choice of $x$.
\begin{remark}\label{remark take vel and initial condition s.t. diracs do not coincide}
Note that the exclusion of $\{x\}$ from the domain of integration is intended to avoid interaction of a point mass with itself for the case $\alpha=\beta$. However, in the right-hand side of (\ref{interaction integral disc disc}) the contribution of interactions are also excluded if the positions of two distinct point masses coincide. That is, if we have $x^{\alpha}_{i,n}=x^{\beta}_{j,n}$ for $\alpha=\beta$ but $i\neq j$ (two distinct points from the same subpopulation), or $x^{\alpha}_{i,n}=x^{\beta}_{j,n}$ where $\alpha\neq\beta$ (two coinciding points from different subpopulations). We require our velocity field and initial conditions to be such that neither of these two situations occur; see also Remark \ref{remark conditions for non-merging diracs and bounded velocity}.
\end{remark}
Secondly, we consider the interaction integral for the case that $\mu^{\beta}_n$ is absolutely continuous w.r.t. $\lambda^d$. The approximation $\tilde{\rho}^{\beta}_n$ of the density is used to obtain
\begin{eqnarray}
\nonumber \int_{\Omega \setminus \{x\}} f^{\alpha}_{\beta}(|y-x|)g(\theta_{xy}^{\alpha})\dfrac{y-x}{|y-x|}d\mu^{\beta}_n(y) &=& \int_{\Omega} f^{\alpha}_{\beta}(|y-x|)g(\theta_{xy}^{\alpha})\dfrac{y-x}{|y-x|}\rho^{\beta}_n(y)d\lambda^d(y)\\
\nonumber &\approx& \int_{\Omega} f^{\alpha}_{\beta}(|y-x|)g(\theta_{xy}^{\alpha})\dfrac{y-x}{|y-x|}\tilde{\rho}^{\beta}_n(y)d\lambda^d(y)\\
\nonumber &=& \sum_{(j,k)\in\mathcal{K}} \rho^{\beta}_{(j,k),n} \int_{E_{j,k}} f^{\alpha}_{\beta}(|y-x|)g(\theta_{xy}^{\alpha})\dfrac{y-x}{|y-x|}d\lambda^d(y).\\
\label{approximation integral by subst piecewise const density}
\end{eqnarray}
The integrals over $E_{j,k}$ are approximated by a two-dimensional form of the trapezoid rule, using the four vertices of the rectangle (Newton-Cotes). Let these vertices be called $y_{j,k}^i$, for $i\in\{1,2,3,4\}$. For example, if $y_{j,k}^4$ is the top right vertex, then $y_{j,k}^4:=\big(jh_{\mathcal{L}}, kh_{\mathcal{W}}\big)$.\\
We obtain
\begin{equation}\label{approximation integral by using four vertices}
\int_{E_{j,k}} f^{\alpha}_{\beta}(|y-x|)g(\theta_{xy}^{\alpha})\dfrac{y-x}{|y-x|}d\lambda^d(y) \approx \dfrac{h_{\mathcal{L}}h_{\mathcal{W}}}{4}\sum_{i=1}^4 f^{\alpha}_{\beta}(|y_{j,k}^i-x|)g(\theta_{x y_{j,k}^i}^{\alpha})\dfrac{y_{j,k}^i-x}{|y_{j,k}^i-x|}.
\end{equation}
However, if $x=y_{j,k}^i$ for some $i$, we need to adapt this approximation, since typically $f^{\alpha}_{\beta}$ has a singularity at 0. We then choose to use the interaction with the midpoint of the cell instead. Let $y_{j,k}^{\text{(m)}}:= \big((j-\frac12)h_{\mathcal{L}}, (k-\frac12)h_{\mathcal{W}}\big)$ denote the midpoint of $E_{j,k}$. Consequently, we use the approximation
\begin{equation}\label{approximation integral by using midpoint}
\int_{E_{j,k}} f^{\alpha}_{\beta}(|y-x|)g(\theta_{xy}^{\alpha})\dfrac{y-x}{|y-x|}d\lambda^d(y) \approx h_{\mathcal{L}}h_{\mathcal{W}} f^{\alpha}_{\beta}(|y_{j,k}^{\text{(m)}}-x|)g(\theta_{x y_{j,k}^{\text{(m)}}}^{\alpha})\dfrac{y_{j,k}^{\text{(m)}}-x}{|y_{j,k}^{\text{(m)}}-x|},
\end{equation}
for any $x$ coinciding with a vertex of the cell $E_{j,k}$.
\begin{remark}\label{remark particle vertex coincide only num problem}
If the position $x$ coincides with a vertex of a cell, this causes a problem solely from the numerical point of view. The general scheme (trapezoid rule using the four vertices) is in that case no longer applicable for approximating the value of the integral over that cell. The concerning singularity does not cause a problem from the perspective of mathematical analysis. Namely, if we demand $f^{\alpha}_{\beta}(s)\sim 1/s$, and assume that the density is uniformly bounded, then
\begin{eqnarray}
\nonumber \Big|\int_{\Omega \setminus \{x\}} f^{\alpha}_{\beta}(|y-x|)g(\theta_{xy}^{\alpha})\dfrac{y-x}{|y-x|}d\mu^{\beta}_n(y)\Big| &\leqslant& \int_{\Omega} \Big|f^{\alpha}_{\beta}(|y-x|)g(\theta_{xy}^{\alpha})\dfrac{y-x}{|y-x|}\rho^{\beta}_n(y)\Big|d\lambda^d(y)\\
\nonumber &=& \int_{\Omega} \Big|f^{\alpha}_{\beta}(|y-x|)\Big|\,\underbrace{\Big|g(\theta_{xy}^{\alpha})\Big|}_{\leqslant 1}\,\underbrace{\Big|\dfrac{y-x}{|y-x|}\Big|}_{=1}\,\underbrace{\Big|\rho^{\beta}_n(y)\Big|}_{\leqslant \|\rho^{\beta}_n\|_{\infty}}d\lambda^d(y)\\
\nonumber &\leqslant& \|\rho^{\beta}_n\|_{\infty}\int_{\Omega} \Big|f^{\alpha}_{\beta}(|y-x|)\Big|d\lambda^d(y)\\
 &\leqslant& \|\rho^{\beta}_n\|_{\infty}\int_{B(x,R)} \Big|f^{\alpha}_{\beta}(|y-x|)\Big|d\lambda^d(y).
\end{eqnarray}
Here, $B(x,R)$ is the unit ball in $\mathbb{R}^2$ around $x$ with radius $R:=\sqrt{\mathcal{L}^2+\mathcal{W}^2}$. Due to the choice of this specific radius it is guaranteed that $\Omega\subset B(x,R)$ for each $x\in\Omega$. Let us now only take the repulsive part around $x$ into consideration. The contribution of the attraction-part (if present, cf. Section \ref{section specification velocity} and Figure \ref{Figure graphs fAR fR}) is finite any way.\\
Let $R_r$ denote the radius of repulsion. We have
\begin{eqnarray}
\nonumber \|\rho^{\beta}_n\|_{\infty}\int_{B(x,R_r)} \Big|f^{\alpha}_{\beta}(|y-x|)\Big|d\lambda^d(y) &=& 2\pi \|\rho^{\beta}_n\|_{\infty}\int_0^{R_r} \Big|f^{\alpha}_{\beta}(r)\Big|rdr\\
\nonumber &=& 2\pi \|\rho^{\beta}_n\|_{\infty}\int_0^{R_r} \Big(\dfrac{R_r}{r}-1 \Big)rdr\\
\nonumber &=& \pi R_r^2 \|\rho^{\beta}_n\|_{\infty}\\
&<& \infty.
\end{eqnarray}
We have taken the repulsive part of $f^{\alpha}_{\beta}$ in its most simple form, but yet satisfying $f^{\alpha}_{\beta}(s)\sim 1/s$.\footnote{In fact, it suffices to impose the weaker demand that $f^{\alpha}_{\beta}(s)\sim s^{-\gamma}$ for any $\gamma<2$} A multiplicative constant can of course be added without loosing finiteness of the integral.
\end{remark}
\begin{remark}
Contrary to what is stated in Remark \ref{remark take vel and initial condition s.t. diracs do not coincide}, here we do not wish to forbid the situation that a point mass coincides with the vertex of a cell. If two distinct point masses are located in the same position (the case considered in Remark \ref{remark take vel and initial condition s.t. diracs do not coincide}), this corresponds to a physically impossible situation. Although probably only rarely a point mass will be located exactly on a cell's vertex, this is \textit{not} physically undesirable, and we \textit{do} want to allow for it. For those situations, we have therefore introduced an alternative approximation (\ref{approximation integral by using midpoint}) to circumvent the occurring problems in the numerical scheme.
\end{remark}

\subsection{Absolutely continuous $\mu^{\alpha}_n$}\label{subsect alpha abs cont}
If $\mu^{\alpha}_n$ is absolutely continuous with respect to $\lambda^d$, then $v^{\alpha}_n$ is approximated by a piecewise constant function. Let $\tilde{v}^{\alpha}_n(y_{j,k}^{\text{(m)}})$ be (an approximation of) the velocity of subpopulation $\alpha$ in the midpoint of cell $E_{j,k}$. Then for each $x$ in the interior of $E_{j,k}$, the approximated velocity field $\tilde{v}^{\alpha}_n(x)$ is given by
\begin{equation*}
\tilde{v}^{\alpha}_n(x)\equiv\tilde{v}^{\alpha}_n(y_{j,k}^{\text{(m)}}).
\end{equation*}
In each midpoint $y_{j,k}^{\text{(m)}}$ evaluating the desired velocity $v_{\text{des}}^{\alpha}$ does not cause any problem. In order to deal with the integral terms in the social velocity, we again take into consideration which type of measure $\mu^{\beta}_n$ is.\\
\\
Firstly, we treat the case when $\mu^{\beta}_n$ is a discrete measure. We specifically suppose that $\mu^{\beta}_n$ has the form
\begin{equation*}
\mu^{\beta}_n := M^{\beta}\sum_{i=1}^{N^{\beta}} \delta_{x^{\beta}_{i,n}}.
\end{equation*}
If, for a given cell $E_{j,k}$, the midpoint $y_{j,k}^{\text{(m)}}$ does not coincide with any of the centre points $x^{\beta}_{i,n}$, then
\begin{equation}\label{integral evaluated in midpoint transform to sum diracs}
\int_{\Omega \setminus \big\{y_{j,k}^{\text{(m)}}\big\}} f^{\alpha}_{\beta}(|y-y_{j,k}^{\text{(m)}}|)g(\theta_{y_{j,k}^{\text{(m)}}y}^{\alpha})\dfrac{y-y_{j,k}^{\text{(m)}}}{|y-y_{j,k}^{\text{(m)}}|}d\mu^{\beta}_n(y) = M^{\beta}\sum_{i=1}^{N^{\beta}} f^{\alpha}_{\beta}(|x^{\beta}_{i,n}-y_{j,k}^{\text{(m)}}|) g(\theta_{y_{j,k}^{\text{(m)}}x^{\beta}_{i,n}}^{\alpha})\dfrac{x^{\beta}_{i,n}-y_{j,k}^{\text{(m)}}}{|x^{\beta}_{i,n}-y_{j,k}^{\text{(m)}}|}.
\end{equation}
If by chance one of the centres, say $x^{\beta}_{m,n}$, of a Dirac mass coincides with the midpoint $y_{j,k}^{\text{(m)}}$, then we have to adapt our scheme. We replace the direct interaction between these two points, by an average over the interaction between $x^{\beta}_{m,n}$ and each of the four vertices $y_{j,k}^i$ of the cell. That is, we use
\begin{equation*}
\dfrac14 \sum_{i=1}^4 f^{\alpha}_{\beta}(|x^{\beta}_{m,n}-y_{j,k}^{i}|) g(\theta_{y_{j,k}^{i}x^{\beta}_{m,n}}^{\alpha})\dfrac{x^{\beta}_{m,n}-y_{j,k}^{i}}{|x^{\beta}_{m,n}-y_{j,k}^{i}|}.
\end{equation*}
Using this expression, we obtain the following approximation:
\begin{eqnarray}
\nonumber &&\int_{\Omega \setminus \big\{y_{j,k}^{\text{(m)}}\big\}} f^{\alpha}_{\beta}(|y-y_{j,k}^{\text{(m)}}|)g(\theta_{y_{j,k}^{\text{(m)}}y}^{\alpha})\dfrac{y-y_{j,k}^{\text{(m)}}}{|y-y_{j,k}^{\text{(m)}}|}d\mu^{\beta}_n(y)\\ \nonumber &\approx& M^{\beta}\sum_{\substack{i=1\\i\neq m}}^{N^{\beta}} f^{\alpha}_{\beta}(|x^{\beta}_{i,n}-y_{j,k}^{\text{(m)}}|) g(\theta_{y_{j,k}^{\text{(m)}}x^{\beta}_{i,n}}^{\alpha})\dfrac{x^{\beta}_{i,n}-y_{j,k}^{\text{(m)}}}{|x^{\beta}_{i,n}-y_{j,k}^{\text{(m)}}|}\\
&&+\dfrac{M^{\beta}}{4} \sum_{i=1}^4 f^{\alpha}_{\beta}(|x^{\beta}_{m,n}-y_{j,k}^{i}|) g(\theta_{y_{j,k}^{i}x^{\beta}_{m,n}}^{\alpha})\dfrac{x^{\beta}_{m,n}-y_{j,k}^{i}}{|x^{\beta}_{m,n}-y_{j,k}^{i}|}.\label{approximation integral in midpoint by using average over four vertices}
\end{eqnarray}
We have merely overcome problems that might occur in particular if the spatial grid is too coarse. For a sufficiently fine grid, the condition in Remark \ref{remark conditions for non-merging diracs and bounded velocity} makes sure that there is a neighbourhood of zero density around the position of each Dirac mass. It is not important whether the velocity is properly defined in regions of zero density, or not.\\
\\
We look now to the case when $\mu^{\beta}_n$ is an absolutely continuous measure. We follow (\ref{approximation integral by subst piecewise const density}) and (\ref{approximation integral by using four vertices}), and derive
\begin{eqnarray}
\notag && \int_{\Omega \setminus \big\{y_{j,k}^{\text{(m)}}\big\}} f^{\alpha}_{\beta}(|y-y_{j,k}^{\text{(m)}}|)g(\theta_{y_{j,k}^{\text{(m)}}y}^{\alpha})\dfrac{y-y_{j,k}^{\text{(m)}}}{|y-y_{j,k}^{\text{(m)}}|}d\mu^{\beta}_n(y)\\ &\approx& \dfrac{h_{\mathcal{L}}h_{\mathcal{W}}}{4}\sum_{(j,k)\in\mathcal{K}} \rho^{\beta}_{(j,k),n} \sum_{i=1}^4 f^{\alpha}_{\beta}(|y_{j,k}^i-y_{j,k}^{\text{(m)}}|)g(\theta_{y_{j,k}^{\text{(m)}} y_{j,k}^i}^{\alpha})\dfrac{y_{j,k}^i-y_{j,k}^{\text{(m)}}}{|y_{j,k}^i-y_{j,k}^{\text{(m)}}|}.\label{interaction integral cont cont}
\end{eqnarray}
Keep in mind that the midpoint $y_{j,k}^{\text{(m)}}$ can never coincide with any of the vertices of a cell in $\Omega$.

\subsection{Summary: The approximation $\tilde{v}^{\alpha}_n$}\label{subsect summary tilde v}
We now shortly summarize the way we have defined $\tilde{v}^{\alpha}_n$ in Sections \ref{subsect alpha discrete} and \ref{subsect alpha abs cont}.
\begin{itemize}
  \item If $\mu^{\alpha}_n$ is discrete, we need the value of $\tilde{v}^{\alpha}_n$ in the corresponding centres $x^{\alpha}_{i,n}$ of the Dirac masses.\\
  Calculation of the value of $v_{\text{des}}^{\alpha}$ is straightforward. Those terms in the social velocity for which the measure $\mu^{\beta}_n$ is discrete are calculated according to (\ref{interaction integral disc disc}). If, on the other hand, the measure $\mu^{\beta}_n$ is absolutely continuous, then the approximation of (\ref{approximation integral by subst piecewise const density}) is used. Regarding the approximation of the integrals over all cells $E_{j,k}$, we distinguish between two cases:
  \begin{itemize}
    \item $x^{\alpha}_{i,n}$ does not coincide with one of the four vertices of cell $E_{j,k}$. Use the two-dimensional trapezoid rule as given by (\ref{approximation integral by using four vertices}).
    \item $x^{\alpha}_{i,n}$ coincides with one of the four vertices of cell $E_{j,k}$. Use the midpoint of the cell to approximate the integral; see (\ref{approximation integral by using midpoint}).
  \end{itemize}
  \item If $\mu^{\alpha}_n$ is absolutely continuous, the velocity in each interior point of a cell $E_{j,k}$ is approximated by the velocity in the midpoint of that cell. Evaluation of $v_{\text{des}}^{\alpha}$ in the midpoint $y_{j,k}^{\text{(m)}}$ is again straightforward. If the measure $\mu^{\beta}_n$ in the social velocity is discrete, then we distinguish between two cases:
      \begin{itemize}
        \item None of the Dirac-centres $x^{\beta}_{i,n}$ coincides with the midpoint $y_{j,k}^{\text{(m)}}$. Then use (\ref{integral evaluated in midpoint transform to sum diracs}) to evaluate the integral.
        \item One of the centres $x^{\beta}_{i,n}$ is located exactly in the midpoint $y_{j,k}^{\text{(m)}}$. Use the approximation given in (\ref{approximation integral in midpoint by using average over four vertices}).
      \end{itemize}
      If $\mu^{\beta}_n$ is absolutely continuous, then the approximation as given by (\ref{interaction integral cont cont}).
\end{itemize}

\section{Push forward of the mass measures}
The only thing that is still to be included in our numerical scheme, is the push forward of the measure $\mu^{\alpha}_n$ to obtain $\mu^{\alpha}_{n+1}$. We use $\tilde{v}^{\alpha}_n$ (see Section \ref{section calculating velocities}, in particular Section \ref{subsect summary tilde v}) to obtain a modified version of the motion mapping (cf. Definition \ref{def one-step motion mapping}):
\begin{equation*}
\tilde{\chi}^{\alpha}_n(x):= x + \Delta t_n \tilde{v}_n^{\alpha}(x).
\end{equation*}
If $\mu^{\alpha}_n$ is a discrete measure, the push forward can be found in a natural way, as was already suggested in Corollary \ref{corollary sum of diracs preserves constant coeff}. If $\mu^{\alpha}_n$ is given by
\begin{equation*}
\mu^{\alpha}_n := M^{\alpha}\sum_{i=1}^{N^{\alpha}} \delta_{x^{\alpha}_{i,n}},
\end{equation*}
then the numerical approximation of the push forward is
\begin{equation*}
\mu^{\alpha}_n := M^{\alpha}\sum_{i=1}^{N^{\alpha}} \delta_{\tilde{\chi}^{\alpha}_n(x^{\alpha}_{i,n})}.
\end{equation*}
Determining the push forward of this discrete mass measure boils down to updating the centres of the Dirac masses in the following way:
\begin{equation*}
x^{\alpha}_{i,n+1}:=\tilde{\chi}^{\alpha}_n(x^{\alpha}_{i,n})= x^{\alpha}_{i,n} + \Delta t_n \tilde{v}_n^{\alpha}(x^{\alpha}_{i,n}).
\end{equation*}
If $\mu^{\alpha}_n$ is absolutely continuous, then a little more effort is required. Since $\tilde{v}^{\alpha}_n$ is defined such that it is constant within a cell, the push forward of a cell $E_{j,k}$ is a translation by the vector $\Delta t_n \tilde{v}^{\alpha}_n(y_{j,k}^{\text{(m)}})$. Indeed we have
\begin{eqnarray*}
\tilde{\chi}^{\alpha}_n(E_{j,k})&=& \big\{\tilde{\chi}^{\alpha}_n(x)\,\big|\,x\in E_{j,k}\big\}\\
&=& \big\{x + \Delta t_n \tilde{v}_n^{\alpha}(x)\,\big|\,x\in E_{j,k}\big\}\\
&=& \big\{x + \Delta t_n \tilde{v}_n^{\alpha}(y_{j,k}^{\text{(m)}})\,\big|\,x\in E_{j,k}\big\}\\
&=& E_{j,k}+ \Delta t_n \tilde{v}_n^{\alpha}(y_{j,k}^{\text{(m)}}).
\end{eqnarray*}
The push forward of $E_{j,k}$ is indicated in Figure \ref{figure push-forward cell}.
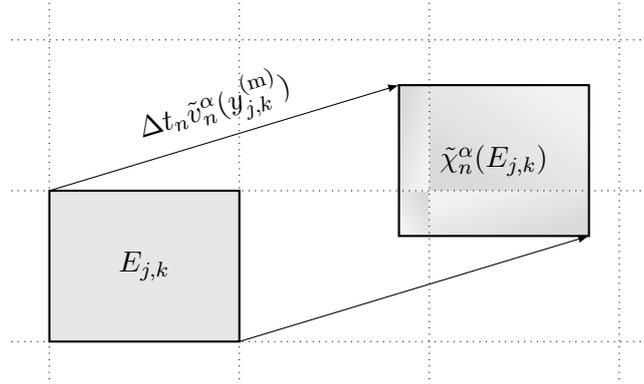
\begin{figure}[ht]
\centering
\begin{tikzpicture}
    \shade[right color=gray!10, left color=gray!30, shading angle=-45](4.6,1.4) rectangle (5,2);
    \shade[top color=gray!10, bottom color=gray!30, shading angle=-45](5,2) rectangle (7.1, 3.4);
    \shade[right color=gray!10, left color=gray!30, shading angle=45](4.6,2) rectangle (5,3.4);
    \shade[right color=gray!30, left color=gray!10, shading angle=45](5,1.4) rectangle (7.1, 2);

    \fill[gray!20] (0,0) rectangle (2.5,2);
    \draw[thick] (0,0) rectangle (2.5,2);
    \draw[dotted] (0,-0.5) -- (0,4.5);
    \draw[dotted] (2.5,-0.5) -- (2.5,4.5);
    \draw[dotted] (5,-0.5) -- (5,4.5);
    \draw[dotted] (7.5,-0.5) -- (7.5,4.5);

    \draw[dotted] (-0.5,0) -- (8,0);
    \draw[dotted] (-0.5,2) -- (8,2);
    \draw[dotted] (-0.5,4) -- (8,4);

    \draw[thick] (4.6,1.4) rectangle (7.1,3.4);

    \draw[arrows={-latex},thin] (2.5,0) -- (7.1,1.4);
    \draw[arrows={-latex},thin] (0,2) -- (4.6,3.4) node[midway, sloped, above]{$\Delta t_n \tilde{v}^{\alpha}_n(y_{j,k}^{\text{(m)}})$};

    \node at (1.25,1){$E_{j,k}$};
    \node at (5.85,2.4){$\tilde{\chi}^{\alpha}_n(E_{j,k})$};

\end{tikzpicture}
\caption{The push forward of a cell $E_{j,k}$ governed by the approximated velocity field. The `image' $\tilde{\chi}^{\alpha}_n(E_{j,k})$ clearly lies within four cells of the spatial domain.}\label{figure push-forward cell}
\end{figure}
We want to keep the spatial grid fixed in time. Also, we require our approximated density $\tilde{\rho}^{\alpha}_n$ to be constant within a cell, for every $n$. In general, the push forward of $E_{j,k}$ will not exactly coincide with a cell of our grid: cf. Figure \ref{figure push-forward cell}, where $\tilde{\chi}^{\alpha}_n(E_{j,k})$ lies in four cells. We thus propose the following update of the density:
\begin{equation}\label{update density}
\tilde{\rho}^{\alpha}_{n+1}(x)\equiv\rho^{\alpha}_{(j,k),n+1}:=\dfrac{1}{\lambda^d\big(E_{j,k}\big)}\sum_{(p,q)\in\mathcal{K}}\rho^{\alpha}_{(p,q),n}\lambda^d\big(E_{j,k}\cap \tilde{\chi}^{\alpha}_n(E_{p,q})\big),
\end{equation}
for all $x$ in the interior of $E_{j,k}$.\\
A positive contribution is given by each cell $E_{p,q}$ that is (partially) mapped into $E_{j,k}$, since only then $E_{j,k}\cap \tilde{\chi}^{\alpha}_n(E_{p,q})$ is non-empty, and thus has positive Lebesgue measure. Note that $\rho^{\alpha}_{(p,q),n}\lambda^d\big(E_{j,k}\cap \tilde{\chi}^{\alpha}_n(E_{p,q})\big)$ is exactly the mass that is transferred from $E_{p,q}$ into $E_{j,k}$. The density of $E_{j,k}$ then follows from adding the mass contributions of all cells $E_{p,q}$, and dividing by the area of $E_{j,k}$. This makes (\ref{update density}) a natural way of updating densities.\\
\\
The numerical approximation of the push forward of $\mu^{\alpha}_n$ is fully determined by the iterative scheme for its density $\tilde{\rho}^{\alpha}_n$ as given by (\ref{update density}).

\begin{remark}
Using the update of the density in (\ref{update density}), $\tilde{\rho}^{\alpha}_{n+1}$ is by definition non-negative, as it is a sum of non-negative terms. Furthermore, the total mass contained in $\Omega$ is conserved in time. We namely have that
\begin{eqnarray*}
\int_{\Omega}\tilde{\rho}^{\alpha}_{n+1}(x)d\lambda^d(x) &=& \sum_{(j,k)\in\mathcal{K}}\rho^{\alpha}_{(j,k),n+1}\lambda^d\big(E_{j,k}\big)\\
&=& \sum_{(j,k)\in\mathcal{K}}\lambda^d\big(E_{j,k}\big)\dfrac{1}{\lambda^d\big(E_{j,k}\big)} \sum_{(p,q)\in\mathcal{K}}\rho^{\alpha}_{(p,q),n}\lambda^d\big(E_{j,k}\cap \tilde{\chi}^{\alpha}_n(E_{p,q})\big)\\
&=& \sum_{(p,q)\in\mathcal{K}}\rho^{\alpha}_{(p,q),n}\sum_{(j,k)\in\mathcal{K}}\lambda^d\big(E_{j,k}\cap \tilde{\chi}^{\alpha}_n(E_{p,q})\big)\\
&=& \sum_{(p,q)\in\mathcal{K}}\rho^{\alpha}_{(p,q),n}\lambda^d\big(\tilde{\chi}^{\alpha}_n(E_{p,q})\big)\\
&=& \sum_{(p,q)\in\mathcal{K}}\rho^{\alpha}_{(p,q),n}\lambda^d\big(E_{p,q}\big)\\
&=& \int_{\Omega}\tilde{\rho}^{\alpha}_{n}(x)d\lambda^d(x).
\end{eqnarray*}
We have used in the fifth step that $\tilde{\chi}^{\alpha}_n$ is a translation, due to which $\lambda^d\big(\tilde{\chi}^{\alpha}_n(E_{p,q})\big)=\lambda^d\big(E_{p,q}\big)$.\\
Conservation of total mass in $\Omega$ is only guaranteed under the assumption that no cell (of non-zero density) is mapped outside $\Omega$; cf. Part (\ref{assumption motion mapping including support part 1}) of Assumption \ref{assumption motion mapping including support}.
\end{remark}

\newpage
\chapter{Numerical illustration: simulation results}\label{section numerical illustration}
In this section we present the results of our simulation experiments. Except for some simple test cases, we only consider two-scale situations. That is, in each experiment there are two subpopulations, one of which is discrete and the other is absolutely continuous. The component indexed $\alpha=1$ is a collection of discrete individuals. In our simulations, component 1 consists mostly of one individual only, and exceptionally of two. The component with index $\alpha=2$ is a macroscopic crowd.\\
\\
To avoid effects at the boundaries (which we have not specified so far), the domain $\Omega$ is `sufficiently large'. That is, the size of $\Omega$ is such that no mass reaches the boundary. No problems occur as long as the cells of the spatial grid on the periphery of $\Omega$ have zero density, and the positions of the individuals remain in $\Omega$.\footnote{We admit that this is a rather pragmatic solution, but for the moment it is the best we can do.}\\
\\
In the following sections, the results of the simulation are presented by giving a graphical representation of the crowd in the domain. The positions of individuals are marked by red bullets, whereas the density of the macroscopic crowd is indicated by a grey shading. The darker the colour, the higher the density; a colour bar at the right-hand side of the graph shows what shading corresponds to a certain density.\\
\\ 
Note that we essentially use the scheme of \cite{Piccoli2010, PiccoliTosinMeasTh}. The convergence and stability of the numerical solution is proven in these papers. We expect similar properties to hold for our setting. We, however, omit to give further details in this direction and focus directly on simulation results.\\
The results we present will be interpreted only on a qualitative basis. At a later stage we will try to make some of these results quantitative by recovering experimental data by S. Hoogendoorn, W. Daamen and M. Campanella (Delft University of Technology, see Section \ref{section introduction people in the field} of the Introduction) regarding experiments in a corridor.

\section{Reference setting}
For attraction-repulsion interactions, we use the following expression in the interaction integral:
\begin{eqnarray}
f^{\text{AR}}(s)&:=&\left\{
  \begin{array}{ll}
    F^{\text{AR}} \Bigl(1-\dfrac{R_r^{\text{AR}}}{s}\Bigr), & \mbox{if $0<s\leqslant R_r^{\text{AR}}$;} \\
    \\
    \dfrac{-F^{\text{AR}}}{R_r^{\text{AR}}(R_a^{\text{AR}}-R_r^{\text{AR}})} \big(s-R_r^{\text{AR}}\big)\big(s-R_a^{\text{AR}}\big), & \mbox{if $R_r^{\text{AR}}<s\leqslant R_a^{\text{AR}}$;} \\
    \\
    0, & \mbox{if $s>R_a^{\text{AR}}$.}
  \end{array}
\right.\label{fAR explicit}
\end{eqnarray}
The radii $R_r^{\text{AR}}$ and $R_a^{\text{AR}}$ are such that $0<R_r^{\text{AR}}<R_a^{\text{AR}}$. The factor $F^{\text{AR}}\in[0,\infty)$ is a positive scaling constant. Note that this is exactly the function plotted in Figure \ref{Figure graphs fAR fR} (left). Note that $f^{\text{AR}}$ is differentiable in $s=R_r^{\text{AR}}$.\\
\\
Similarly, we define for purely repulsive interaction
\begin{eqnarray}
f^{\text{R}}(s)&:=&\left\{
  \begin{array}{ll}
    F^{\text{R}} \Bigl(1-\dfrac{R_r^{\text{R}}}{s}\Bigr), & \mbox{if $0<s\leqslant R_r^{\text{R}}$;} \\
    0, & \mbox{if $s>R_r^{\text{R}}$.}
  \end{array}
\right.\label{fR explicit}
\end{eqnarray}
Again, we take $R_r^{\text{R}}>0$ and the scaling constant $F^{\text{R}}\in[0,\infty)$. This function is plotted in Figure \ref{Figure graphs fAR fR} (right).\\
\\
Unless indicated otherwise, we use a set of `standard' parameters in our simulations. As dimensions of the domain $\Omega$ we take
\begin{equation*}
\mathcal{L}= \mathcal{W}= 50.
\end{equation*}
The proportionality constant $M^{\alpha}$ is only present for $\alpha=1$ and is assigned the value
\begin{equation*}
M^{\alpha}=60,\hspace{1 cm}\text{for }\alpha=1.
\end{equation*}
The desired velocity is independent of the space variable, and only the direction is different for the two subpopulations. We take
\begin{eqnarray*}
v_{\text{des}}^{1}(x)&\equiv& -1.34\, e_1,\hspace{1 cm}\text{for all }x\in\Omega,\\
v_{\text{des}}^{2}(x)&\equiv& 1.34\, e_1,\hspace{1 cm}\text{for all }x\in\Omega.
\end{eqnarray*}
Here $e_1$ is the unit vector in the direction that corresponds to the horizontal axis in our graphical representation.\\
Regarding the interaction functions, we choose the following reference parameters:
\begin{eqnarray*}
F^{\text{AR}}&=& 0.03,\\
F^{\text{R}}&=& 0.03,\\
R_r^{\text{AR}}&=& 1.5,\\
R_a^{\text{AR}}&=& 3,\\
R_r^{\text{R}}&=& 4,\\
\sigma &=& 0.5.
\end{eqnarray*}
Attraction-repulsion interactions take place within one subpopulation, whereas repulsion-only takes place in the interactions between distinct subpopulations.\\
\\
In the sequel we indicate clearly where we deviate from these standard parameters.

\section{Two basic critical examples: one micro, one macro}\label{section numerics basic examples}
We investigate here whether the simulation results are mathematically and physically acceptable.\\
\\
Consider two individuals both having desired velocity in the direction of $e_1$. The magnitude of the desired velocities is equal and constant, but they are directed oppositely. The initial positions of the two individuals differ only in the $e_1$ direction, and they are located such that they initially want to move towards each other due to their desired velocities. Their interaction is purely of repulsive nature, and has the same parameter values for any of the two. We expect these individuals to approach each other, up to a certain distance. At this distance the desired velocity in one direction is in balance with the (oppositely directed) repulsive effect in the social velocity. The total velocity is thus zero for both individuals.\\
Let $x_i$ and $x_j$ denote the positions of the two pedestrians. For simplicity we take $\sigma=1$ and $M^i=M^j=1$. The velocity of pedestrian $i$ is given by
\begin{equation}\label{velocity two particles}
v(x_i)=v_{\text{des}}^i+f^{\text{R}}(|x_j-x_i|)\dfrac{x_j-x_i}{|x_j-x_i|},
\end{equation}
with $f^{\text{R}}$ as in (\ref{fR explicit}). The velocity is zero if
\begin{equation*}
-f^{\text{R}}(|x_j-x_i|)\dfrac{x_j-x_i}{|x_j-x_i|}=v_{\text{des}}^i.
\end{equation*}
A necessary condition is
\begin{equation}\label{nec condition zero velocity}
\Bigl|f^{\text{R}}(|x_j-x_i|)\dfrac{x_j-x_i}{|x_j-x_i|}\Bigr|=|v_{\text{des}}^i|.
\end{equation}
Note that $(x_j-x_i)/|x_j-x_i|$ is a unit vector. Considering (\ref{fR explicit}), condition (\ref{nec condition zero velocity}) reads
\begin{equation*}
F^{\text{R}} \Bigl(\dfrac{R_r^{\text{R}}}{|x_j-x_i|}-1\Bigr)=|v_{\text{des}}^i|,
\end{equation*}
if $|x_j-x_i|\leqslant R^{\text{R}}_r$. This yields that
\begin{equation}\label{predicted equil distance}
|x_j-x_i|=\dfrac{F^{\text{R}}}{|v_{\text{des}}^i|+F^{\text{R}}}R_r^{\text{R}},
\end{equation}
at the point where the velocity is zero.\\
\\
In Figure \ref{figure graph two particles approach} we show the outcome of this simulation. The individual placed initially on the left has desired velocity $v_{\text{des}}\equiv 1.34\, e_1$, while the one on the right has desired velocity $v_{\text{des}}\equiv -1.34\,e_1$. Furthermore we take $F^{\text{R}}=1$. In Figure \ref{figure graph distances}, the mutual distance is plotted against the time. Indeed the equilibrium distance is the one predicted by (\ref{predicted equil distance}).
\begin{figure}[h]
\centering
\vspace{-6 cm}
\begin{tabular}{lr}
\hspace{-1 cm}\includegraphics[width=0.6\linewidth]{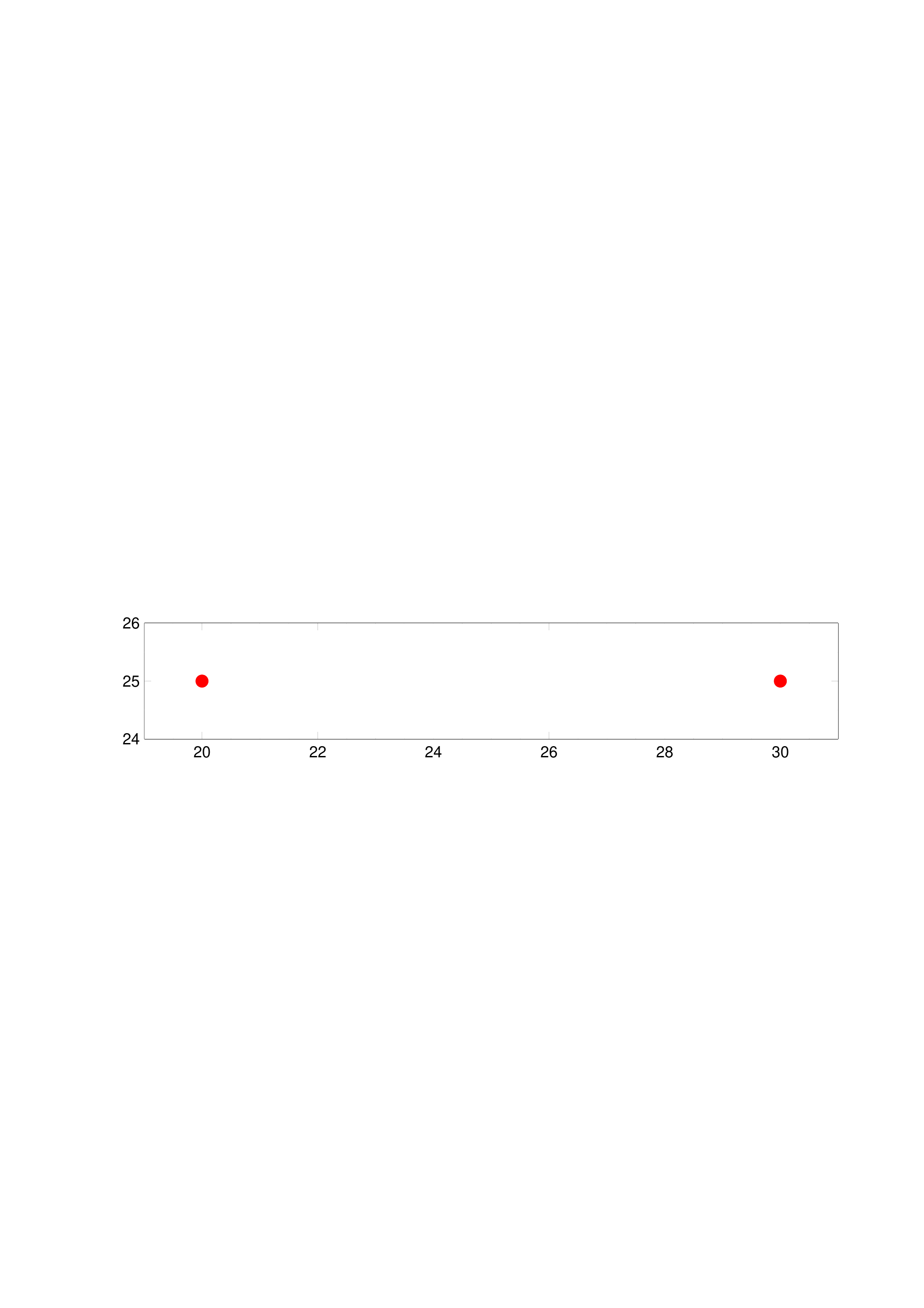}
&
\hspace{-1.5 cm}\includegraphics[width=0.6\linewidth]{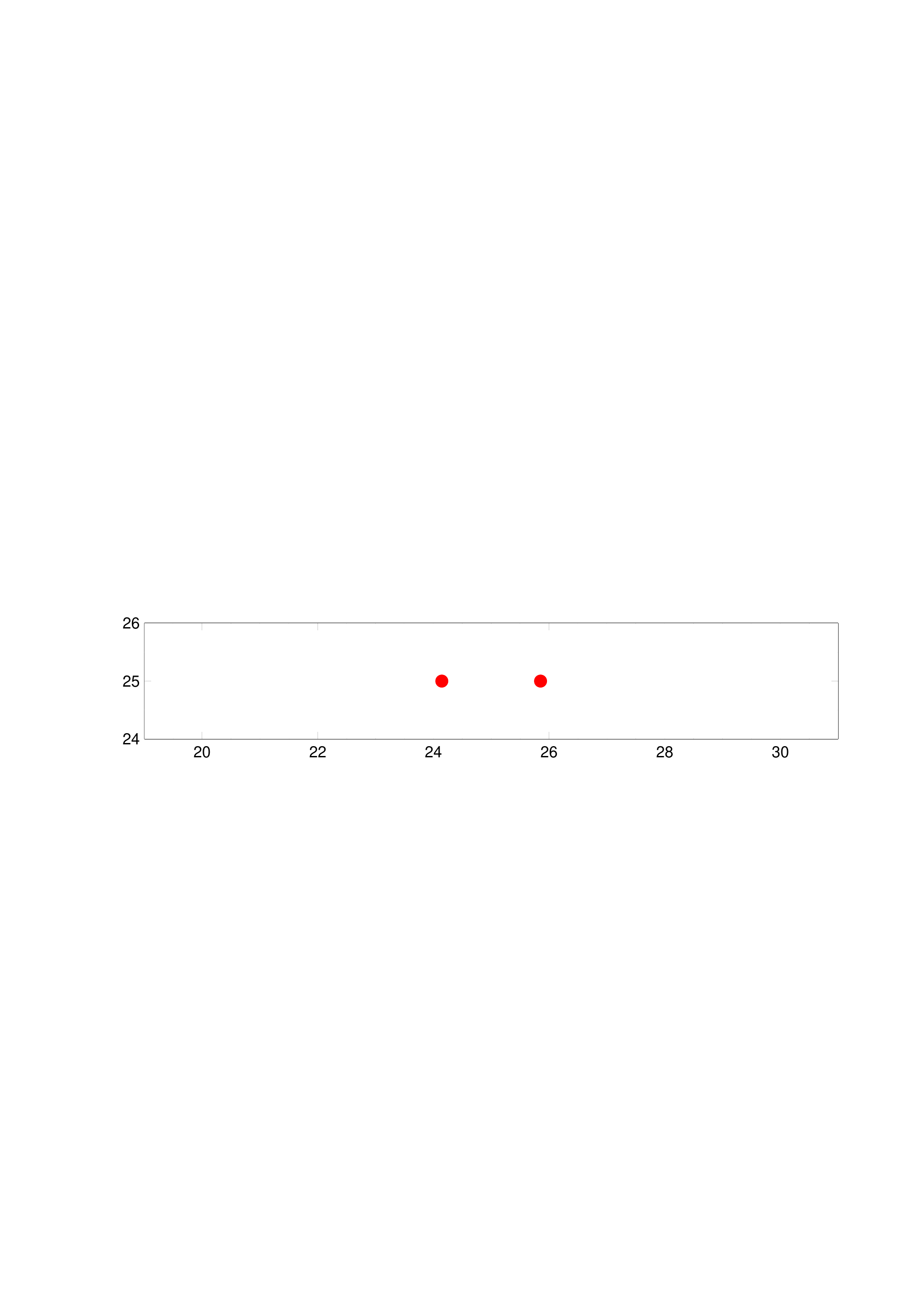}
\end{tabular}
\vspace{-5.5 cm}
\caption{Two individuals approaching each other until a certain minimal distance is reached. The images are taken at $t=0$ (left), $t=15$ (right).}\label{figure graph two particles approach}
\end{figure}

\begin{figure}[h]
\centering
\vspace{-4 cm}
\hspace{-1 cm}\includegraphics[width=0.6\linewidth]{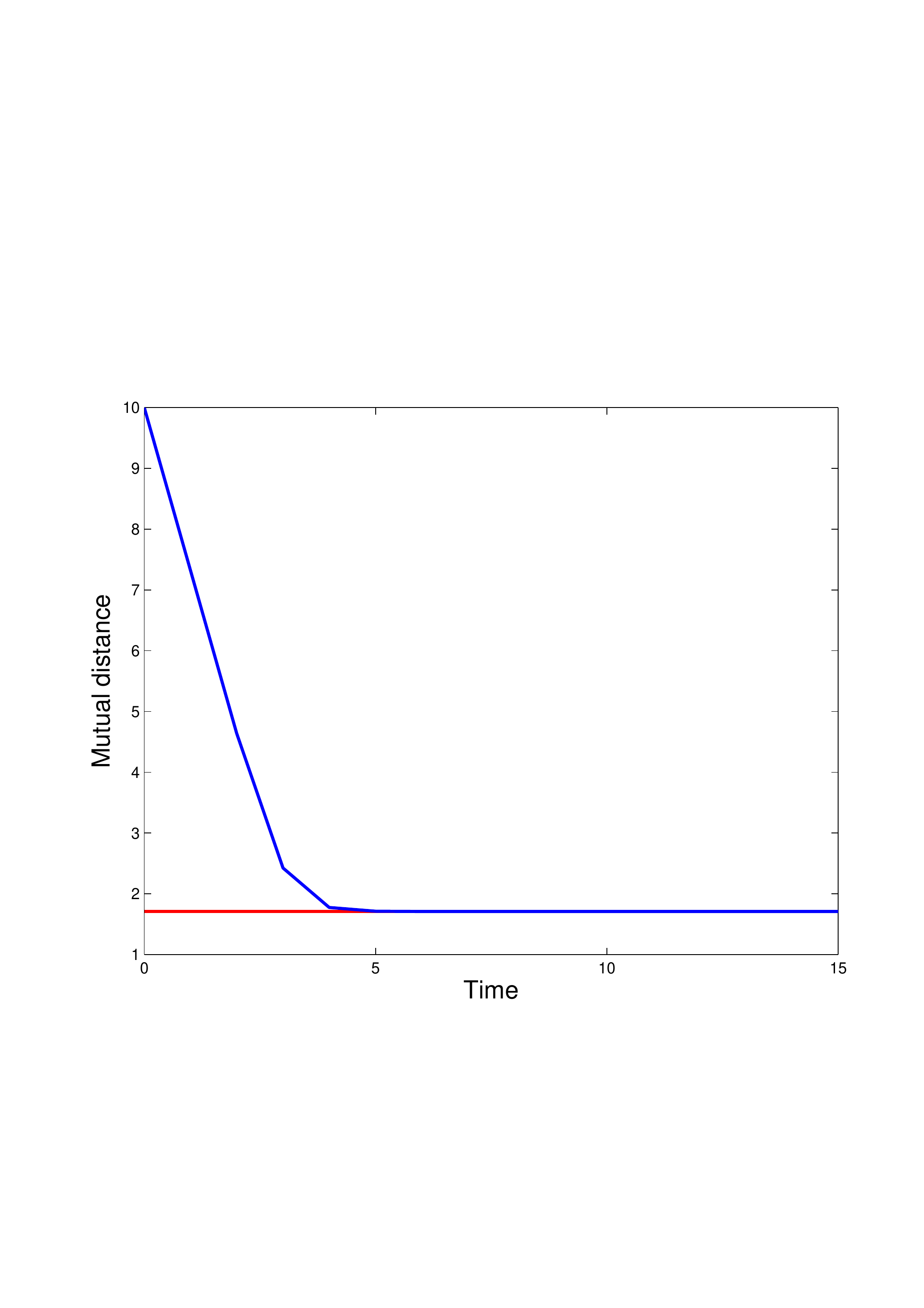}\\
\vspace{-3 cm}
\caption{The mutual distance between two approaching individuals (blue). The predicted minimal distance is indicated in red.}\label{figure graph distances}
\end{figure}

Note that the setting of Figure \ref{figure graph two particles approach} is unrealistic and thus undesirable. In everyday life these two pedestrians would namely both move a bit aside and then walk straight ahead without any constraints. In Section \ref{section introduction modelling approaches} of the Introduction, we have indicated already that a small amount of random noise can be used to avoid these deadlocks. We expect that the equilibrium configuration is instable, and that deadlocks do neither occur if we perturb the initial data (in the direction perpendicular to the desired velocities).\\
\\
In the experiment presented in Figure \ref{figure graph cloud to ball}, we observe that a macroscopic crowd tends to form a circular configuration. The crowd has no desired velocity: $v_{\text{des}}\equiv 0$, and therefore $\sigma=1$ is taken. This typical behaviour is well-known from molecular dynamics. The mutual interactions favour the minimization of the ratio circumference to area. A mathematical proof is given in \cite{Fetecau}, for the same macroscopic (continuous-in-time) equation of mass conservation. The interaction potential in \cite{Fetecau} is not the same as the one used here, but we expect that similar results can be derived along comparable lines of argument.
\begin{figure}[h]
\centering
\vspace{-4 cm}
\begin{tabular}{lr}
\hspace{-1 cm}\includegraphics[width=0.6\linewidth]{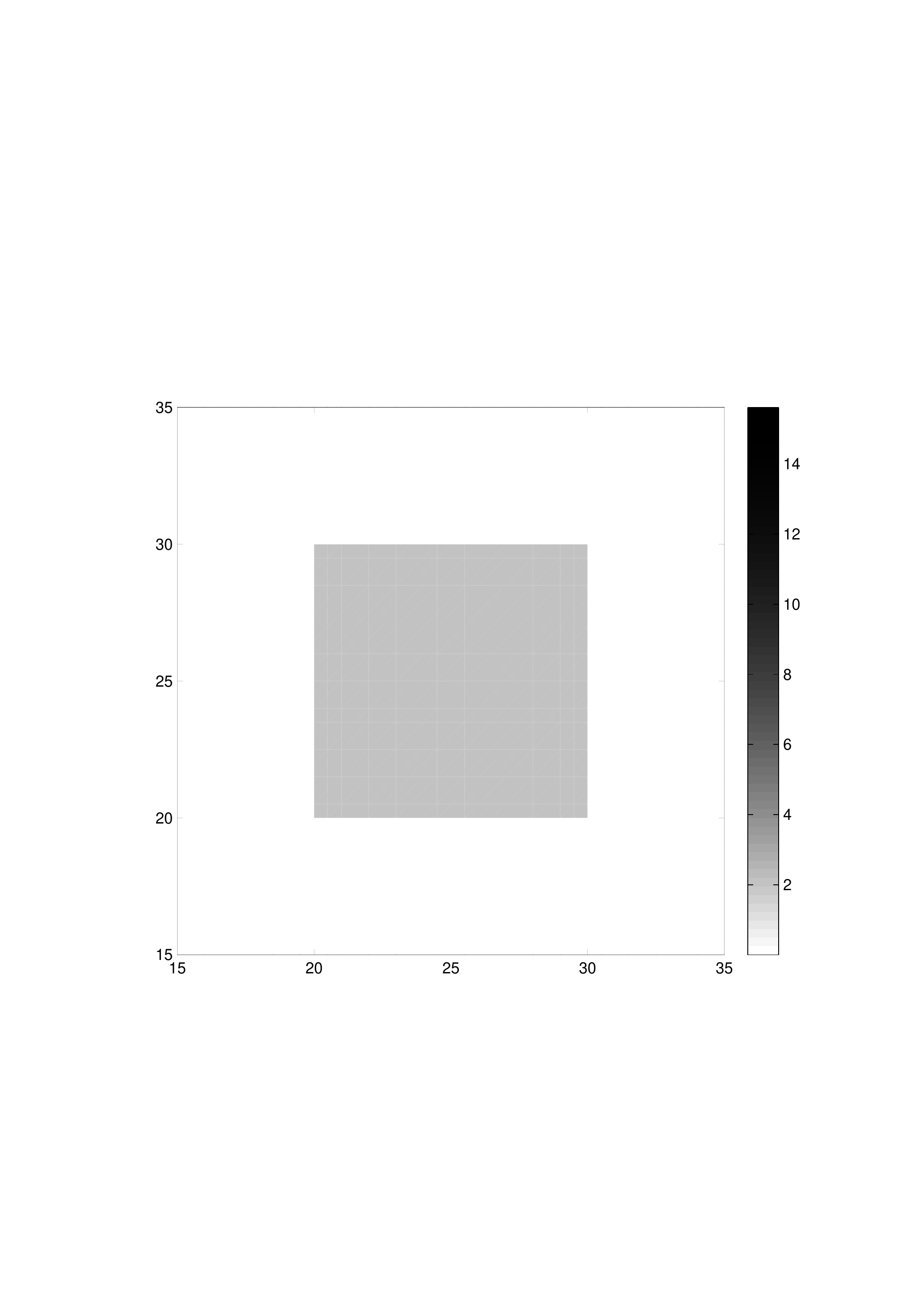}
&
\hspace{-1.5 cm}\includegraphics[width=0.6\linewidth]{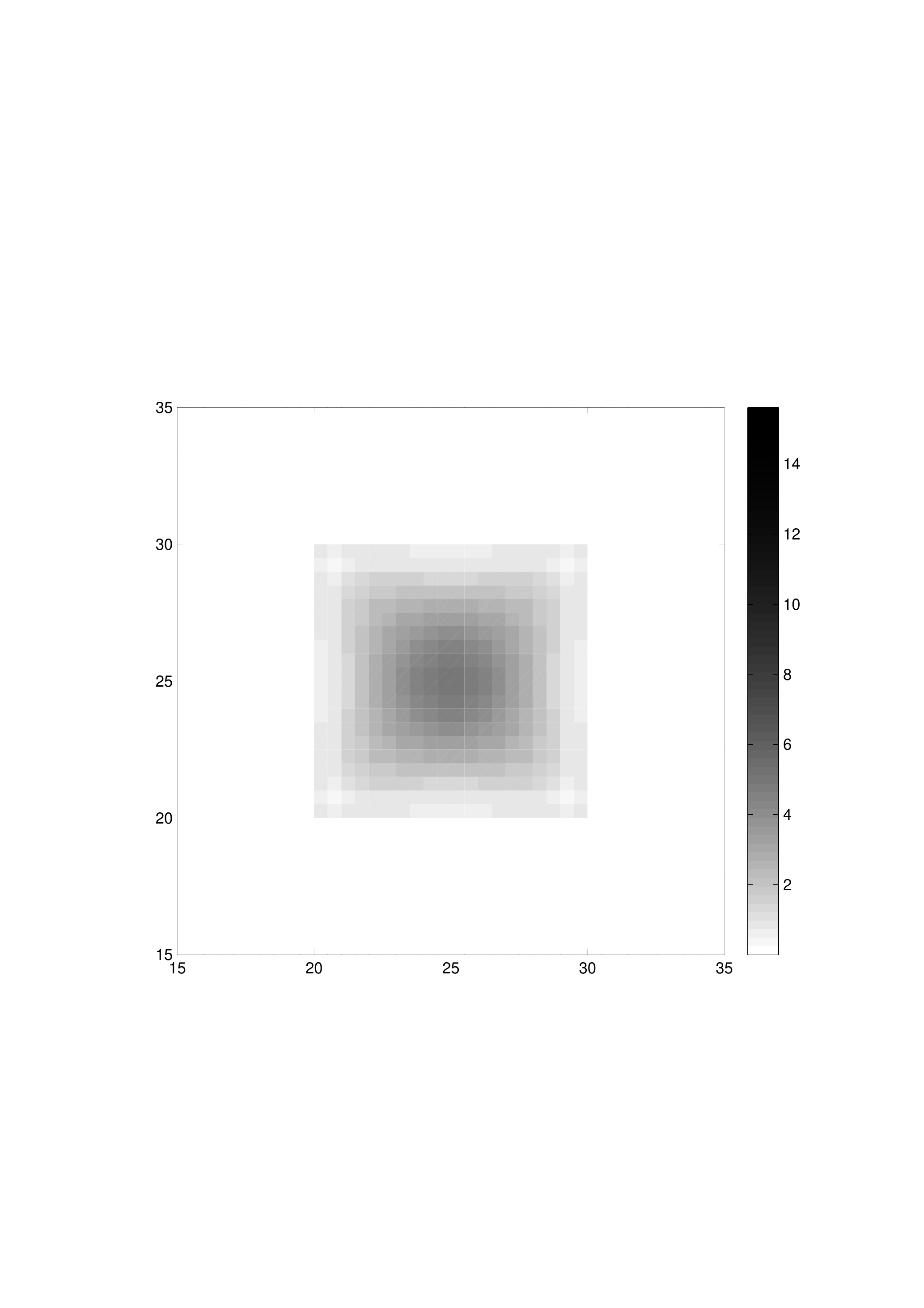}
\vspace{-7 cm}\\
\hspace{-1 cm}\includegraphics[width=0.6\linewidth]{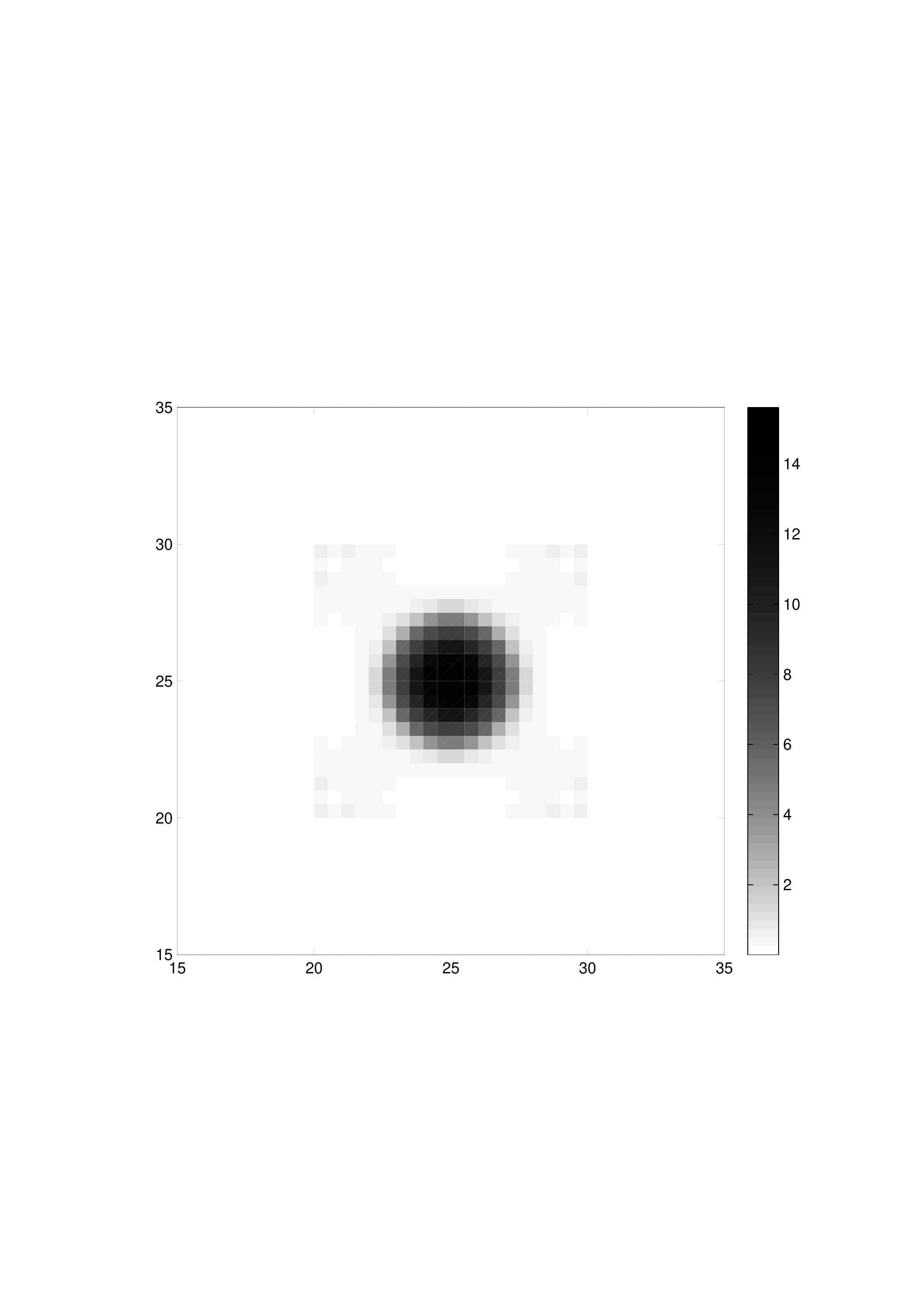} &\hspace{-3 cm}\\
\end{tabular}
\vspace{-3 cm}
\caption{The simulation of a macroscopic crowd's motion. Initially, the crowd forms a square of uniform density $\rho\equiv2$. Its configuration evolves into a circular shape. The images are taken at $t=0$ (top left), $t=30$ (top right), $t=60$ (bottom left).}\label{figure graph cloud to ball}
\end{figure}

\section{Two-scale interactions of repulsive nature}\label{section numerics two-scale interactions one ind}
In Figure \ref{figure graph head-on interaction}, we show the interaction between a macroscopic crowd and an individual that wishes to approach it, and eventually forces itself a way through. This is a situation in which we try to mimic the two-scale (`predator-prey') behaviour described in Section \ref{section two-scales no sing cont}.
\begin{figure}[h]
\centering
\vspace{-5 cm}
\begin{tabular}{lr}
\hspace{-1 cm}\includegraphics[width=0.6\linewidth]{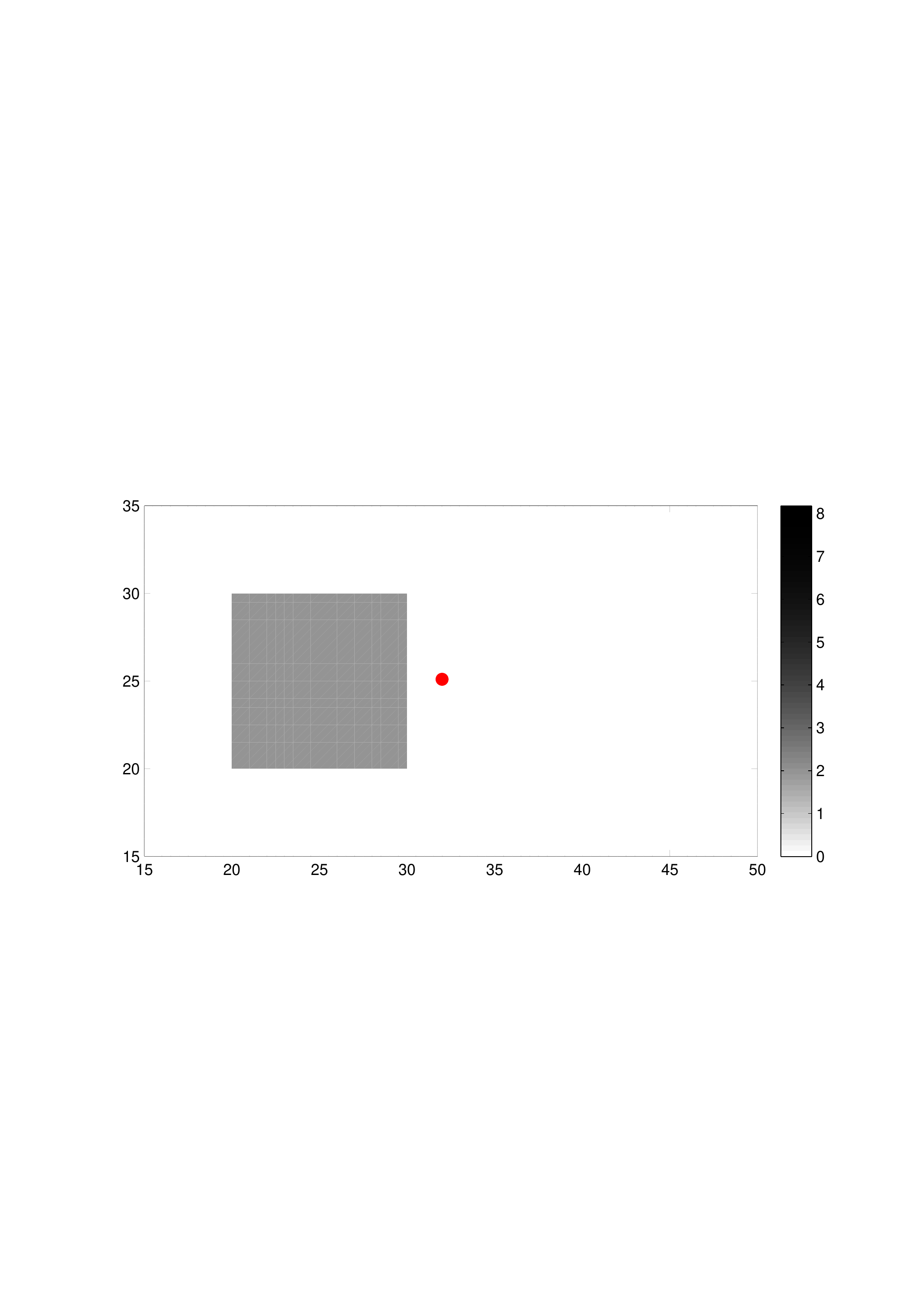}
&
\hspace{-1.5 cm}\includegraphics[width=0.6\linewidth]{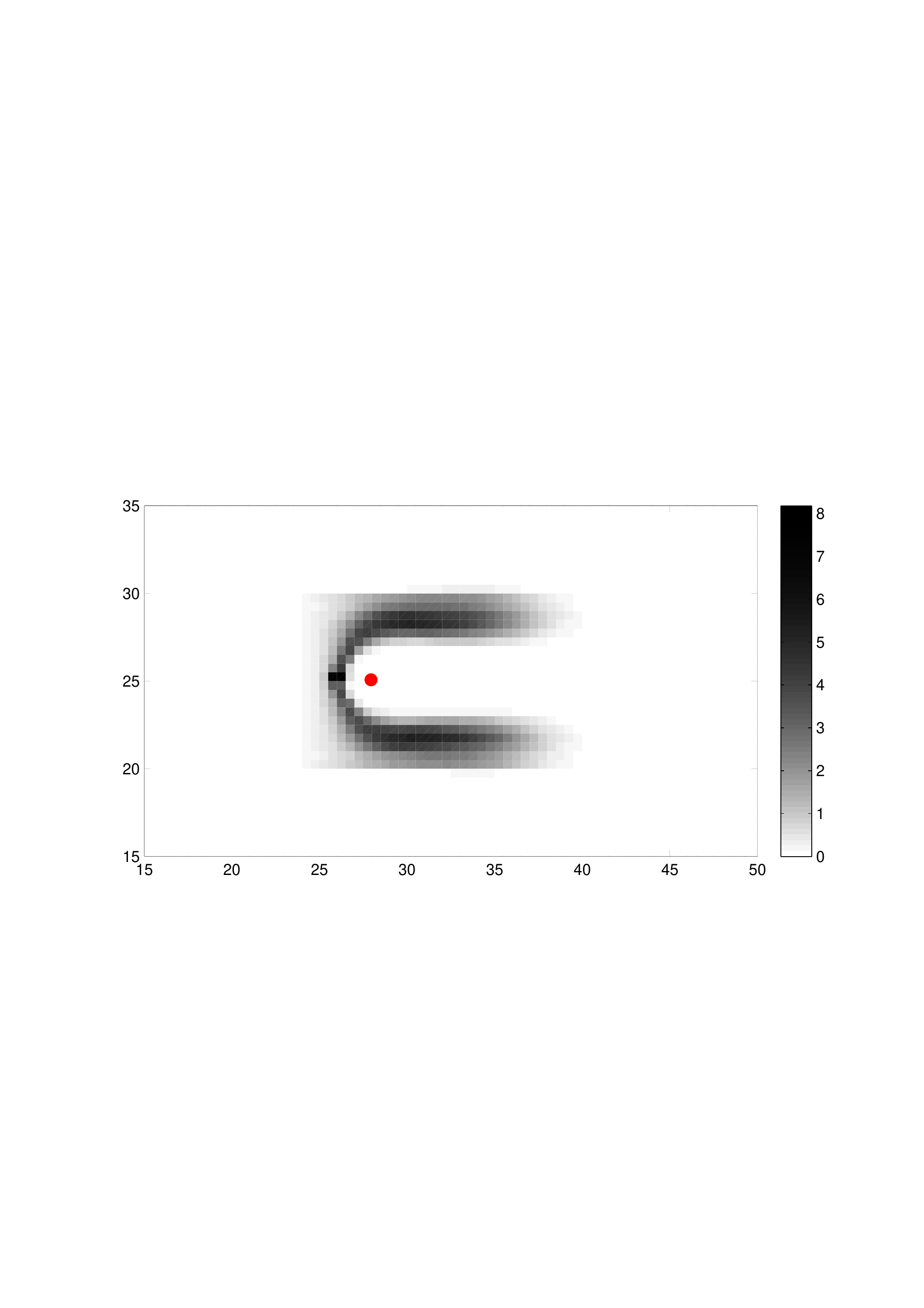}
\vspace{-9 cm}\\
\hspace{-1 cm}\includegraphics[width=0.6\linewidth]{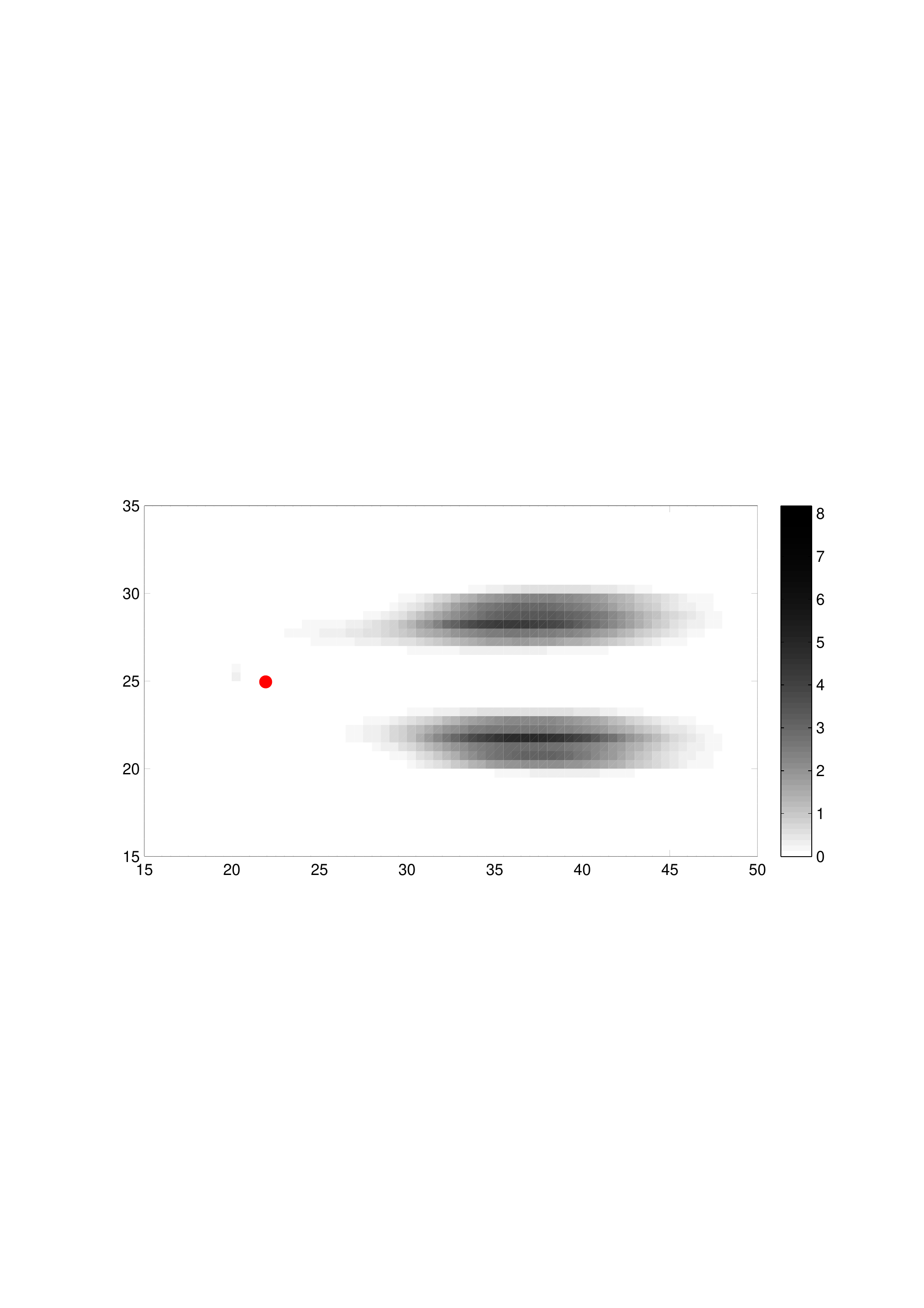} &  \hspace{-3 cm}\\
\end{tabular}
\vspace{-4.5 cm}
\caption{The interaction between a discrete individual and a macroscopic crowd. The images are taken at $t=0$ (top left), $t=5$ (top right), $t=10$ (bottom left).}\label{figure graph head-on interaction}
\end{figure}

A circular area of (nearly) zero density is formed around the individual. The size of this region typically depends on $R_r^{\text{R}}$. At a distance shorter than $R_r^{\text{R}}$ mass is driven away from the individual. If we decrease the radius $R_r^{\text{R}}$, also the `empty zone' around the individual decreases; see Figure \ref{figure graph compare head-on interaction} for this effect. We estimate the distance in front of the individual that is empty, and use the ideas of Section \ref{section numerics basic examples} to do so. Let us consider the point $x$ in the crowd that is right in front of the individual. Disregarding the effect of the rest of the macroscopic crowd, the velocity of $x$ is similar to the one given in (\ref{velocity two particles}):
\begin{equation*}
v(x)=v_{\text{des}}^2+f^{\text{R}}(|z-x|)\dfrac{z-x}{|z-x|}.
\end{equation*}
Here, the variable $z$ is used to denote the position of the individual. The desired velocity of the individual $v_{\text{des}}^1$ is coupled to $v_{\text{des}}^2$ via the relation: $v_{\text{des}}^1=-v_{\text{des}}^2$. Assume that the individual moves at this speed\footnote{This is generally not the case.} and that $x$ remains right in front of $z$. Then the mass located in $x$ is forced to move also with velocity $v_{\text{des}}^1$. This leads to the following equation:
\begin{equation}\label{velocity two particles}
v(x)=v_{\text{des}}^2+f^{\text{R}}(|z-x|)\dfrac{z-x}{|z-x|}=v_{\text{des}}^1=-v_{\text{des}}^2.
\end{equation}
Following the lines of Section \ref{section numerics basic examples}, we derive that a necessary condition is that
\begin{equation}\label{predicted radius}
|z-x|=\dfrac{MF^{\text{R}}}{2|v_{\text{des}}^2|+MF^{\text{R}}}R_r^{\text{R}}.
\end{equation}
According to this estimate the size of the empty area around the individual scales linearly with $R_r^{\text{R}}$ (note that $|z-x|\leqslant R_r^{\text{R}}$). In Figure \ref{figure graph compare head-on interaction} the configuration is given for two distinct values of $R_r^{\text{R}}$. Indeed the size of the low density zone decreases as $R_r^{\text{R}}$ decreases. The distance estimate in (\ref{predicted radius}) is indicated by a circle in blue. Right in front of the individual there is a spot of high density. This is the spot $x$ we were considering. We observe that $x$ is located just outside the blue circle, by which the estimate turns out to be quite good. We also observe that the estimate loses its value more to the sides of the individual.
\begin{figure}[h!]
\centering
\vspace{-5 cm}
\begin{tabular}{lr}
\hspace{-1 cm}\includegraphics[width=0.6\linewidth]{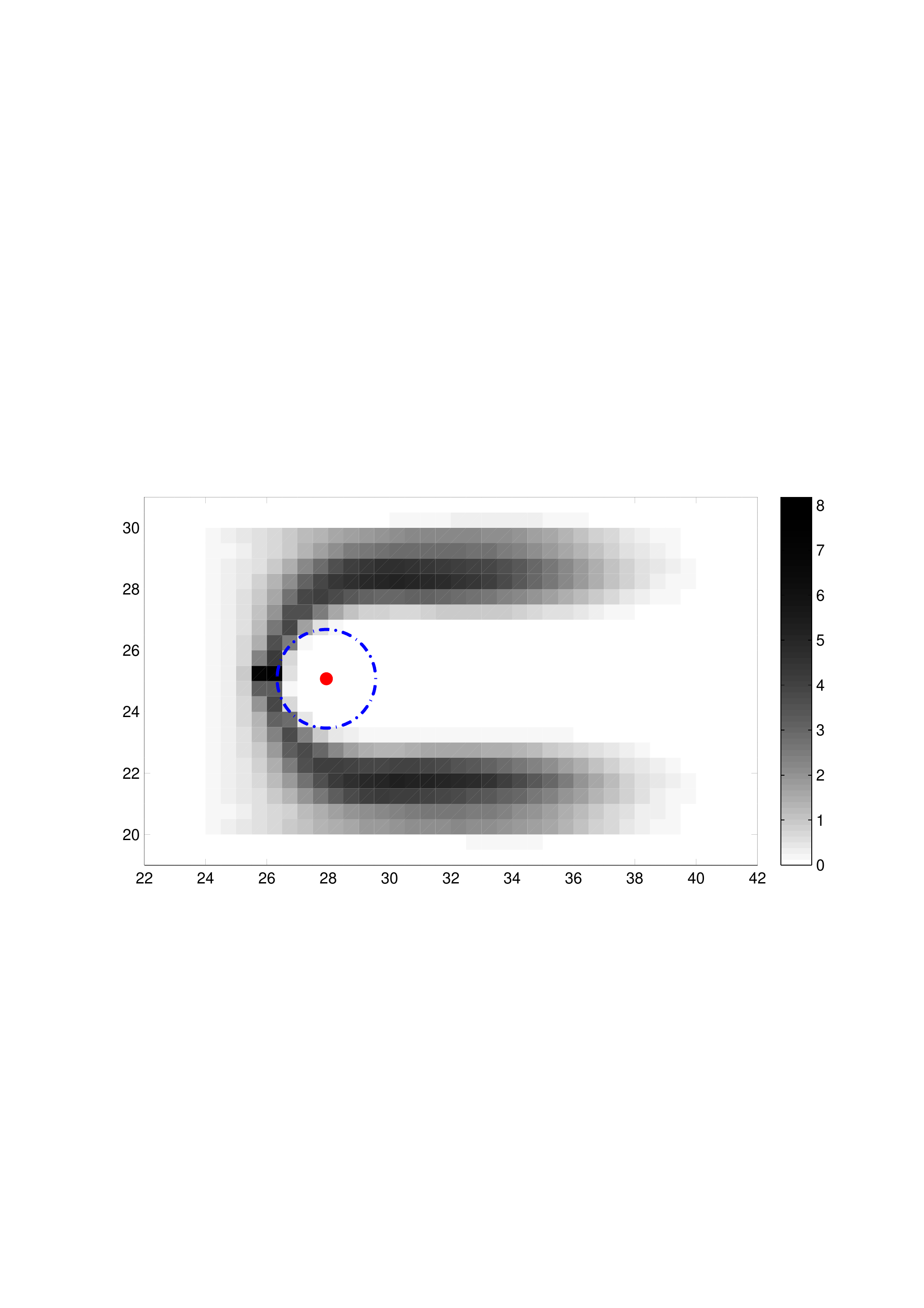}
&
\hspace{-1.5 cm}\includegraphics[width=0.6\linewidth]{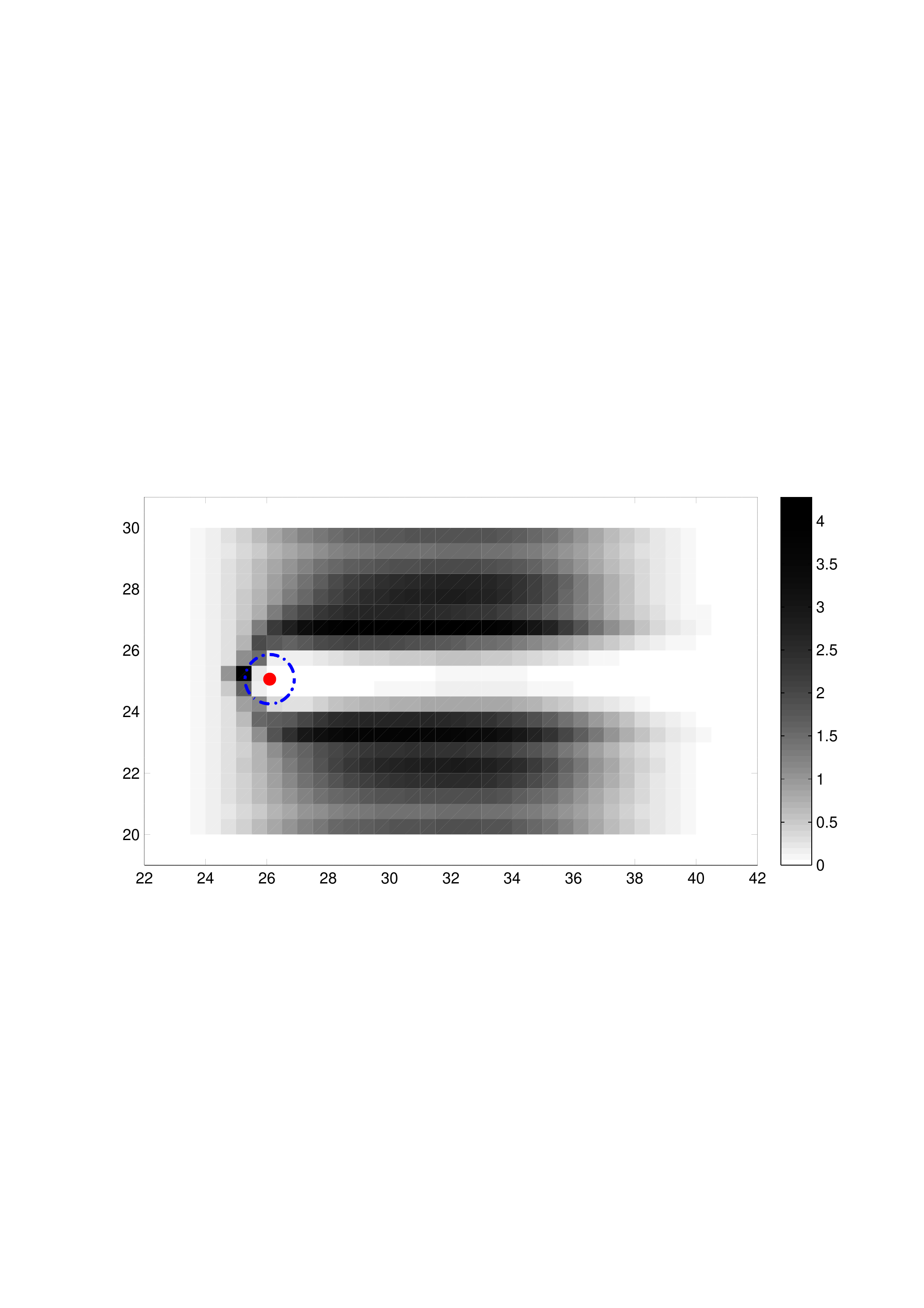}
\\
\end{tabular}
\vspace{-4.5 cm}
\caption{The size of the empty zone around the individual depends on $R_r^{\text{R}}$. On the left $R_r^{\text{R}}=4$ (this is the top right situation in Figure \ref{figure graph head-on interaction}), on the right $R_r^{\text{R}}=2$. In blue a circle is drawn with radius equal to the distance predicted in (\ref{predicted radius}). The images are both taken at $t=5$ starting from identical initial configurations.}\label{figure graph compare head-on interaction}
\end{figure}
\begin{figure}[h!]
\centering
\vspace{-5 cm}
\begin{tabular}{lr}
\hspace{-1 cm}\includegraphics[width=0.6\linewidth]{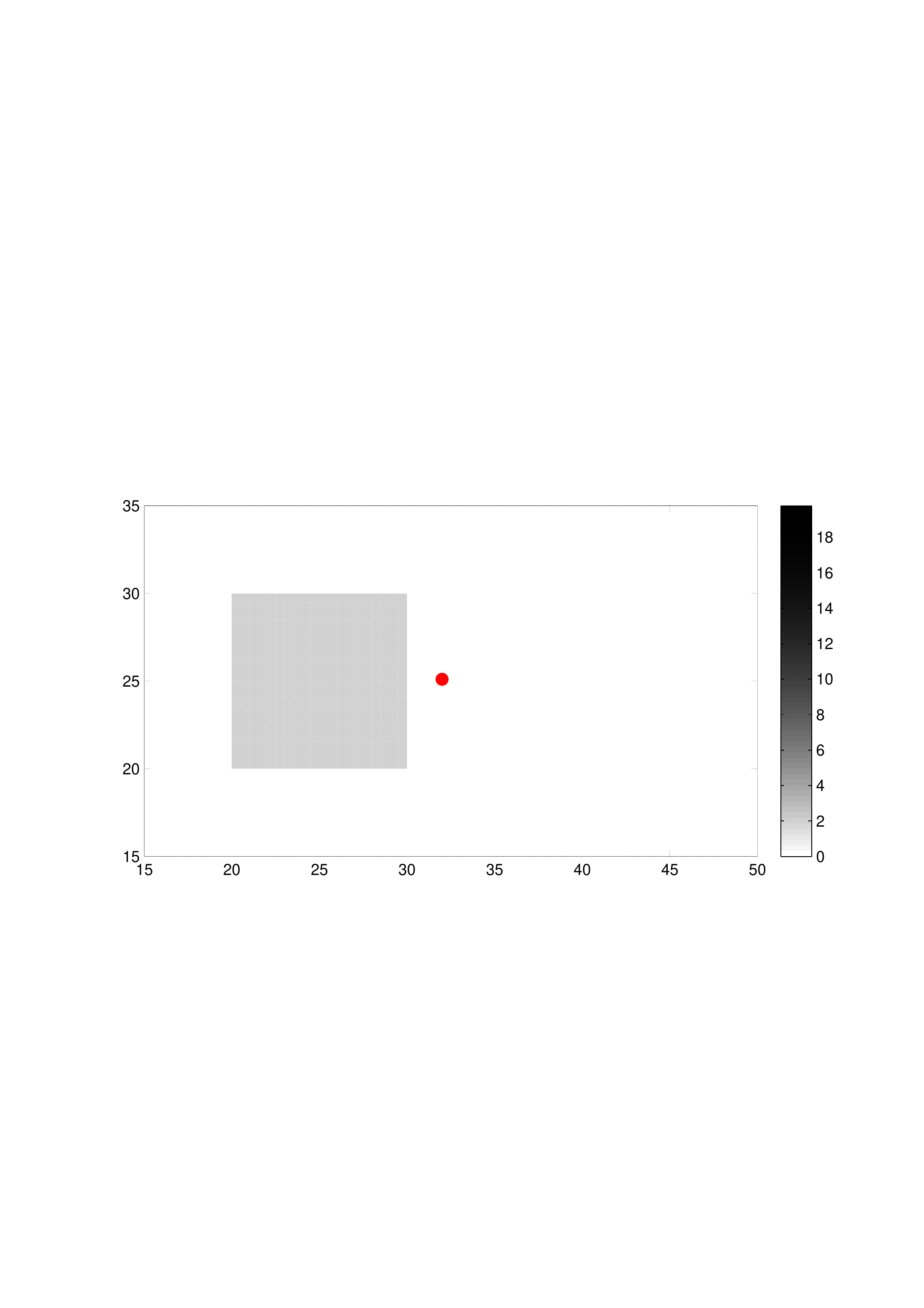}
&
\hspace{-1.5 cm}\includegraphics[width=0.6\linewidth]{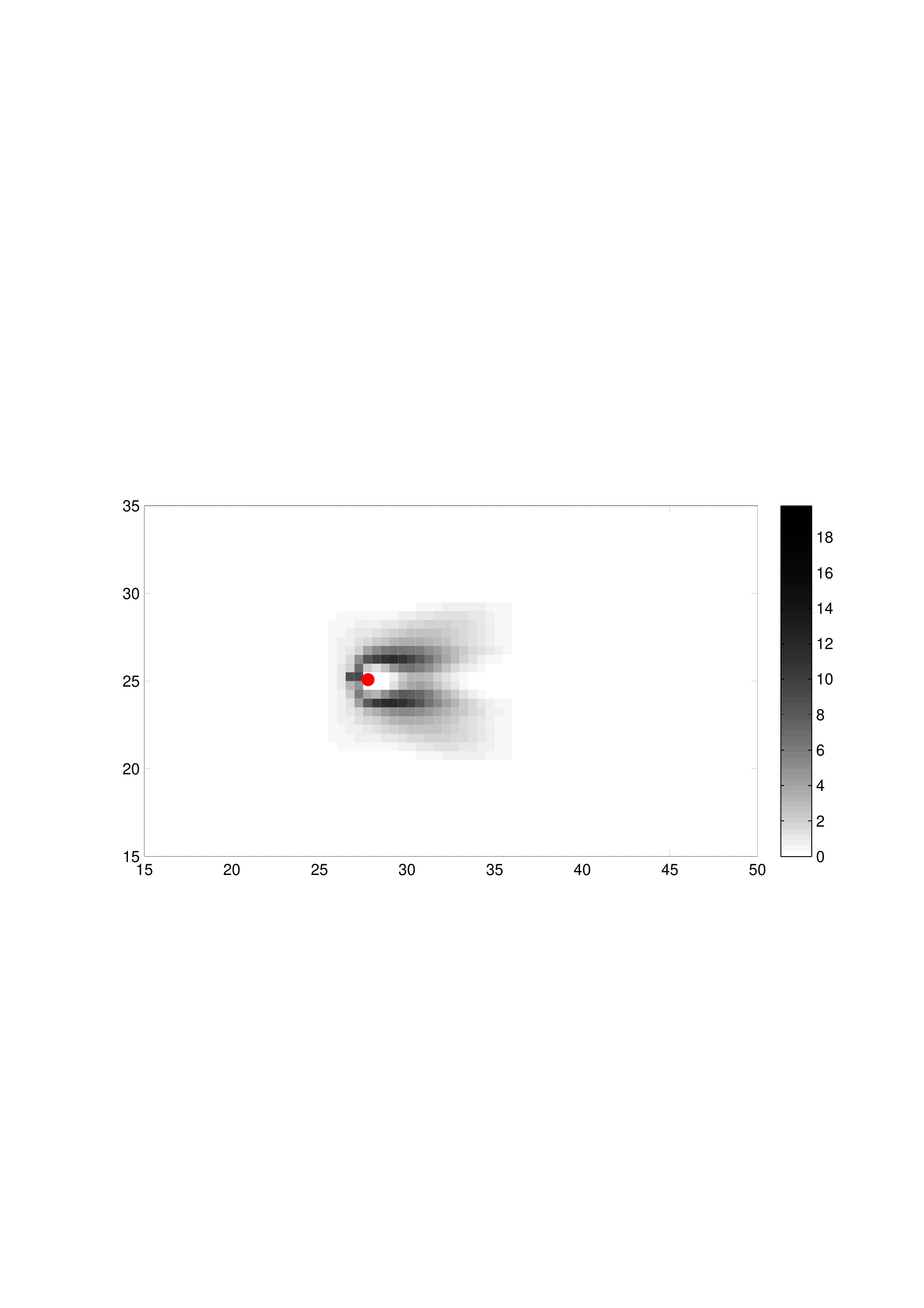}
\vspace{-9 cm}\\
\hspace{-1 cm}\includegraphics[width=0.6\linewidth]{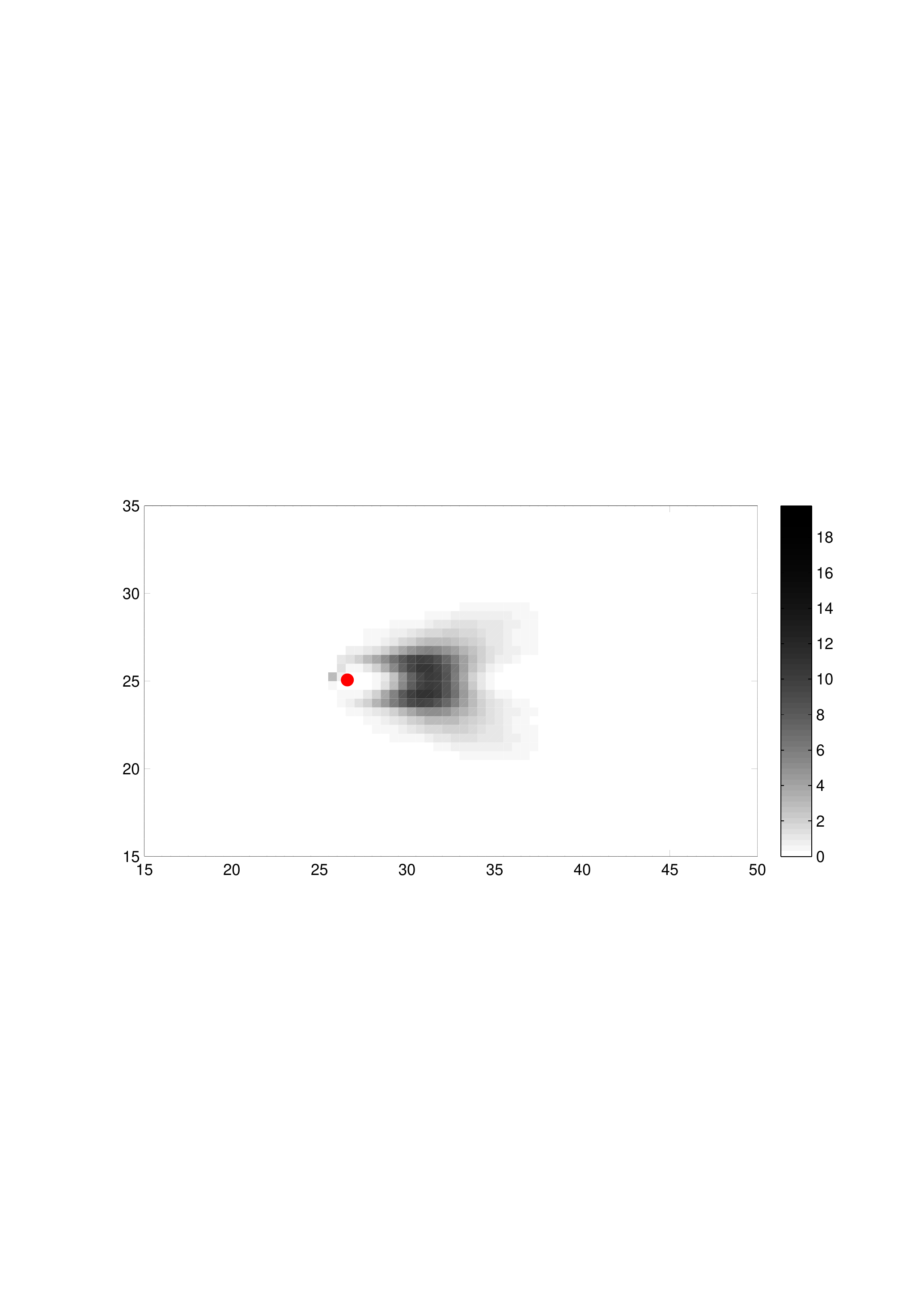} &  \hspace{-3 cm}\\
\end{tabular}
\vspace{-4.5 cm}
\caption{The interaction between a discrete individual and a macroscopic crowd. Modification of the radii of interaction makes that the crowd comes together after the individual has passed by. The images are taken at $t=0$ (top left), $t=4$ (top right), $t=5$ (bottom left).}\label{figure graph crowd together radii}
\end{figure}
After the individual passes by, we observe in Figure \ref{figure graph head-on interaction} that the crowd splits in two and that the two halves do not come together again. This effect depends on the choice of parameters. By adjusting the parameters we can achieve the crowd to `envelope' around the individual. To this end, we decrease the radius of repulsion of the interaction between the individual and the crowd. Also we increase the radius of attraction of the internal interactions in the crowd. To do so, we take $R_r^{\text{R}}=2$, $R_a^{\text{AR}}=5$. This modification makes that the macroscopic crowd does enclose the individual. The result is shown in Figure \ref{figure graph crowd together radii}.\\
We deduce from Figure \ref{figure graph compare head-on interaction} that the width of the empty zone is of the order of $2R_r^{\text{R}}$ (probably a little bit narrower, as also (\ref{predicted radius}) suggests). Let us define a critical radius $R_{\text{cr}}$ such that the width of the empty zone is $2R_{\text{cr}}$. We expect the crowd to come together again, if $R_a^{\text{AR}}\geqslant 2R_{\text{cr}}$; this is the situation in which the crowd in one half can also `feel' the part on the other side of the empty zone.

\begin{remark}
We have tried to achieve the effect of Figure \ref{figure graph crowd together radii} also by adding macroscopic mass above and below the initial square. That is, initially the crowd now has a rectangular shape twice as wide (in vertical direction in the graph). We hoped that the repulsive effects within the crowd would force the two halves to move towards each other again. However, this effect was not obtained.
\end{remark}

\begin{remark}
The desired velocity of the individual does not need to be necessarily in the direction of $e_1$. If the desired velocities of the individual and the macroscopic crowd are at an angle (not equal to $\pi$), the result does however not fundamentally differ from Figure \ref{figure graph head-on interaction}. The individual only leaves behind a diagonal trace in the crowd, instead of a horizontal one.
\end{remark}

\begin{figure}[h!]
\centering
\vspace{-4.5 cm}
\begin{tabular}{lr}
\hspace{-1 cm}\includegraphics[width=0.6\linewidth]{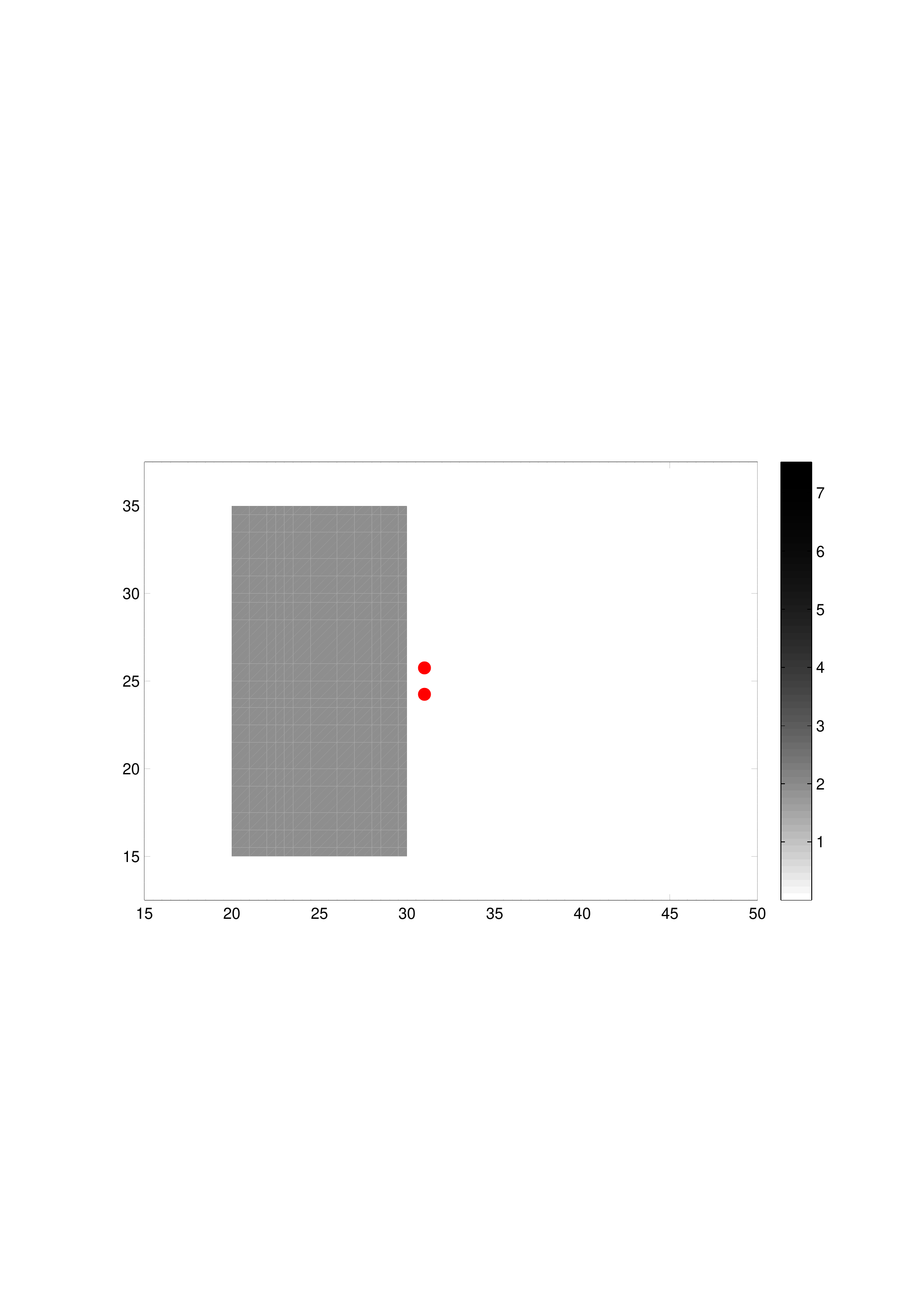}
&
\hspace{-1.5 cm}\includegraphics[width=0.6\linewidth]{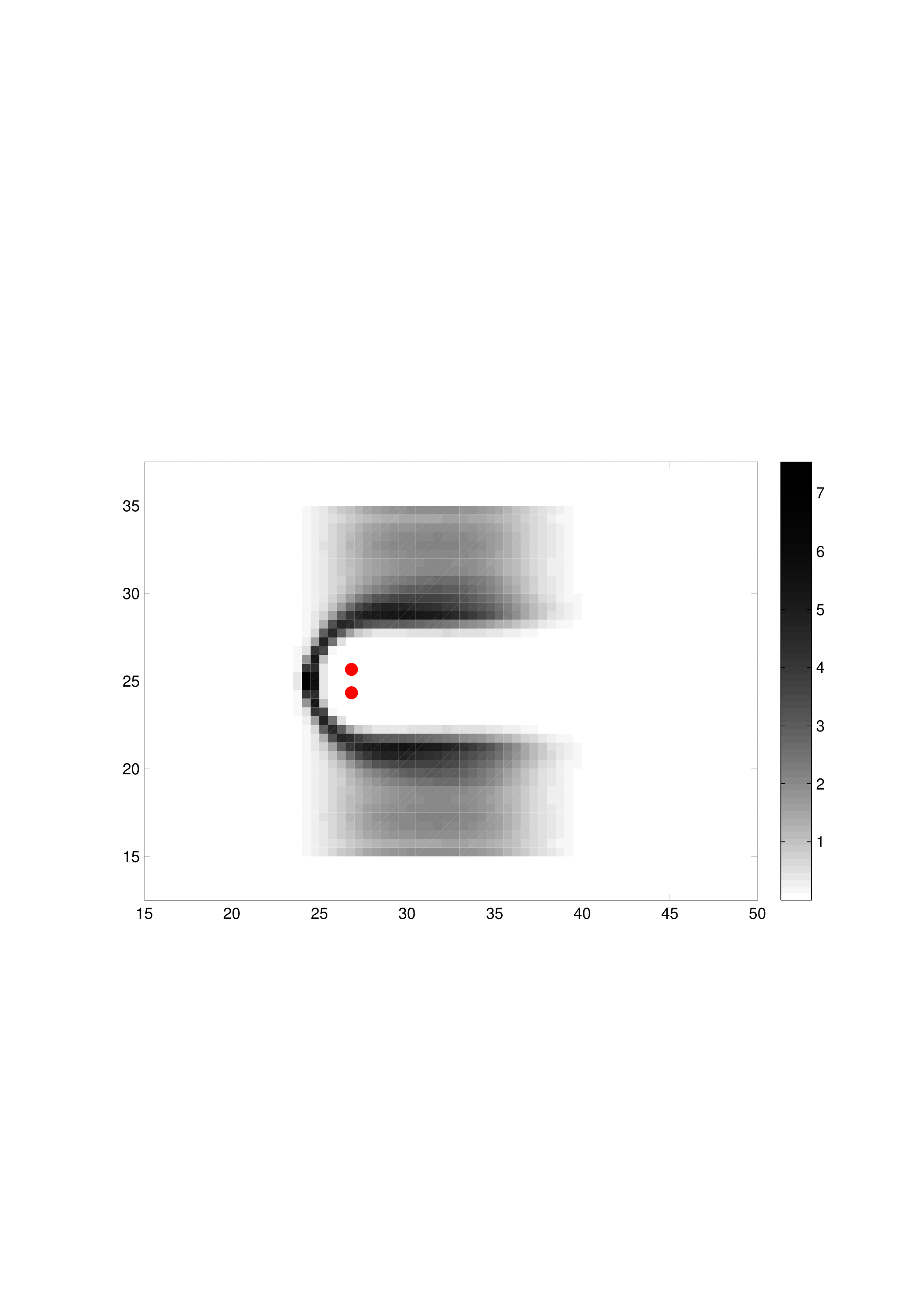}
\vspace{-8.5 cm}\\
\hspace{-1 cm}\includegraphics[width=0.6\linewidth]{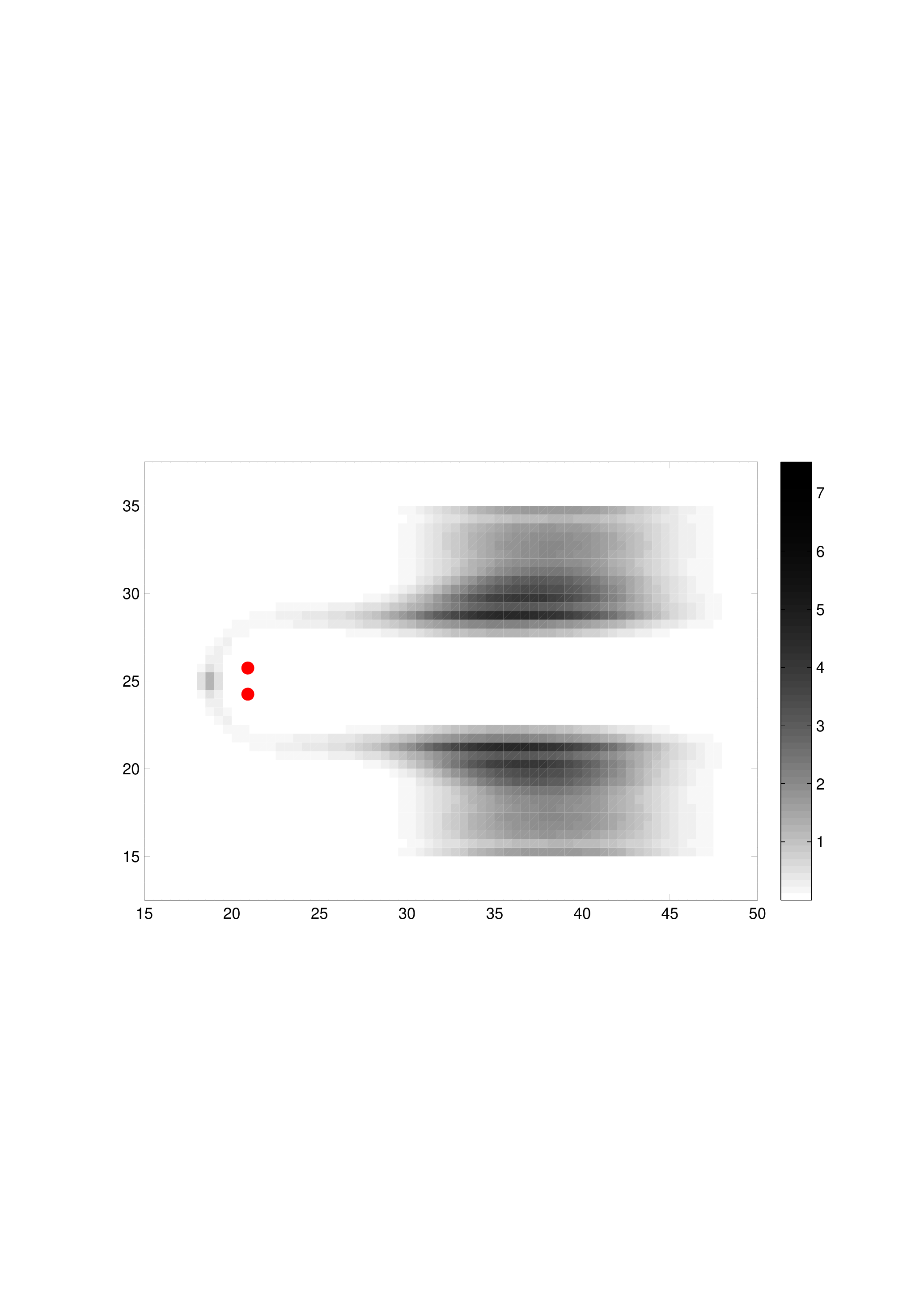} &  \hspace{-3 cm}\\
\end{tabular}
\vspace{-4 cm}
\caption{The interaction between two discrete individuals and a macroscopic crowd. The individuals are initially a distance of $R_r^{\text{AR}}$ apart. The images are taken at $t=0$ (top left), $t=5$ (top right), $t=10$ (bottom left).}\label{figure graph two particles together}
\end{figure}
\section{Interaction with two individuals}\label{section numerics two-scale interactions two ind}
Two-scale simulation experiments can involve two individuals instead of one. The outcome of such a simulation can be seen in Figure \ref{figure graph two particles together}. The two individuals initially are at a relatively short distance from one another. Their mutual distance remains small and macroscopic mass is not able to separate the two. The macroscopic crowd only moves around them. In fact, the impact of the two individuals is not fundamentally different from the impact a single individual of double mass would have.\\
\\
If we increase the initial distance between the two individuals, then the macroscopic mass does find a way to fill the space between them. It even forces the individuals more apart. This can be seen in Figure \ref{figure graph two particles further apart}. Note that it still requires some `effort' to separate the two individuals. In the top right image in Figure \ref{figure graph two particles further apart} a region of high density is present at the left-hand side of the opening between the individuals. We expect that it depends on the initial distance between the two individuals compared to $R_r^{\text{R}}$, whether macroscopic mass can pass through the opening between the two individuals or not.
\begin{figure}[h!]
\centering
\vspace{-5.1 cm}
\begin{tabular}{lr}
\hspace{-1 cm}\includegraphics[width=0.6\linewidth]{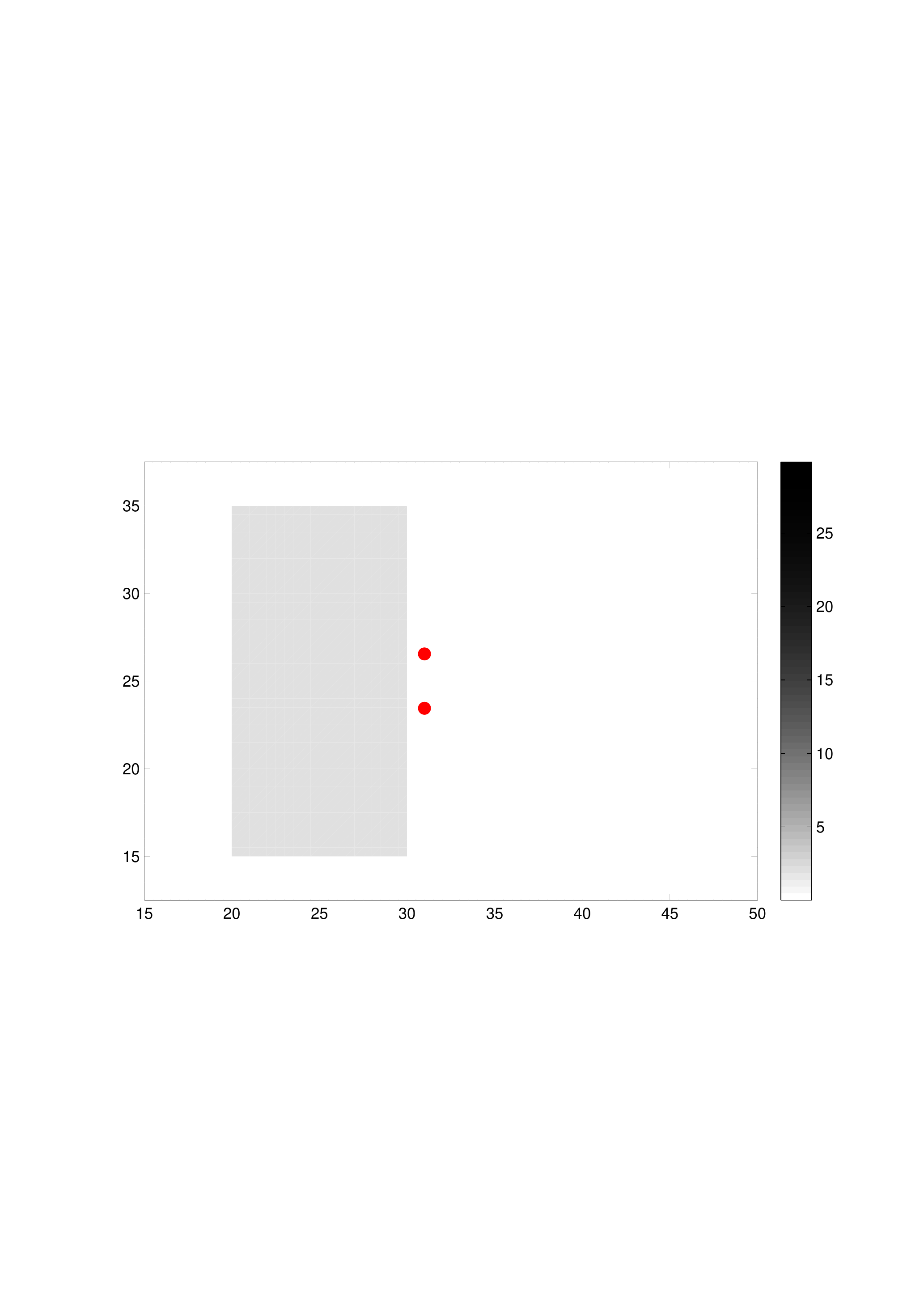}
&
\hspace{-1.5 cm}\includegraphics[width=0.6\linewidth]{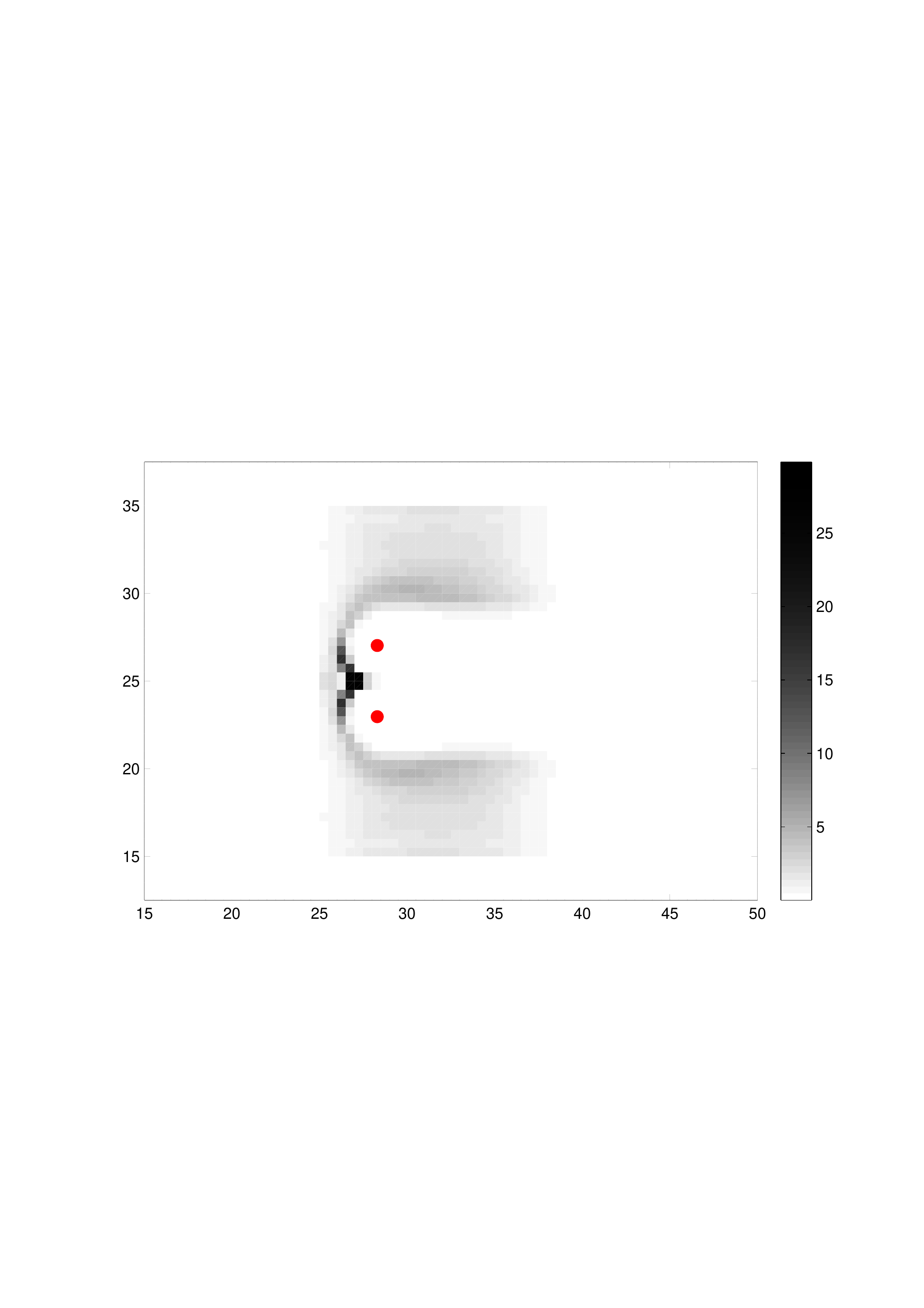}
\vspace{-8.5 cm}\\
\hspace{-1 cm}\includegraphics[width=0.6\linewidth]{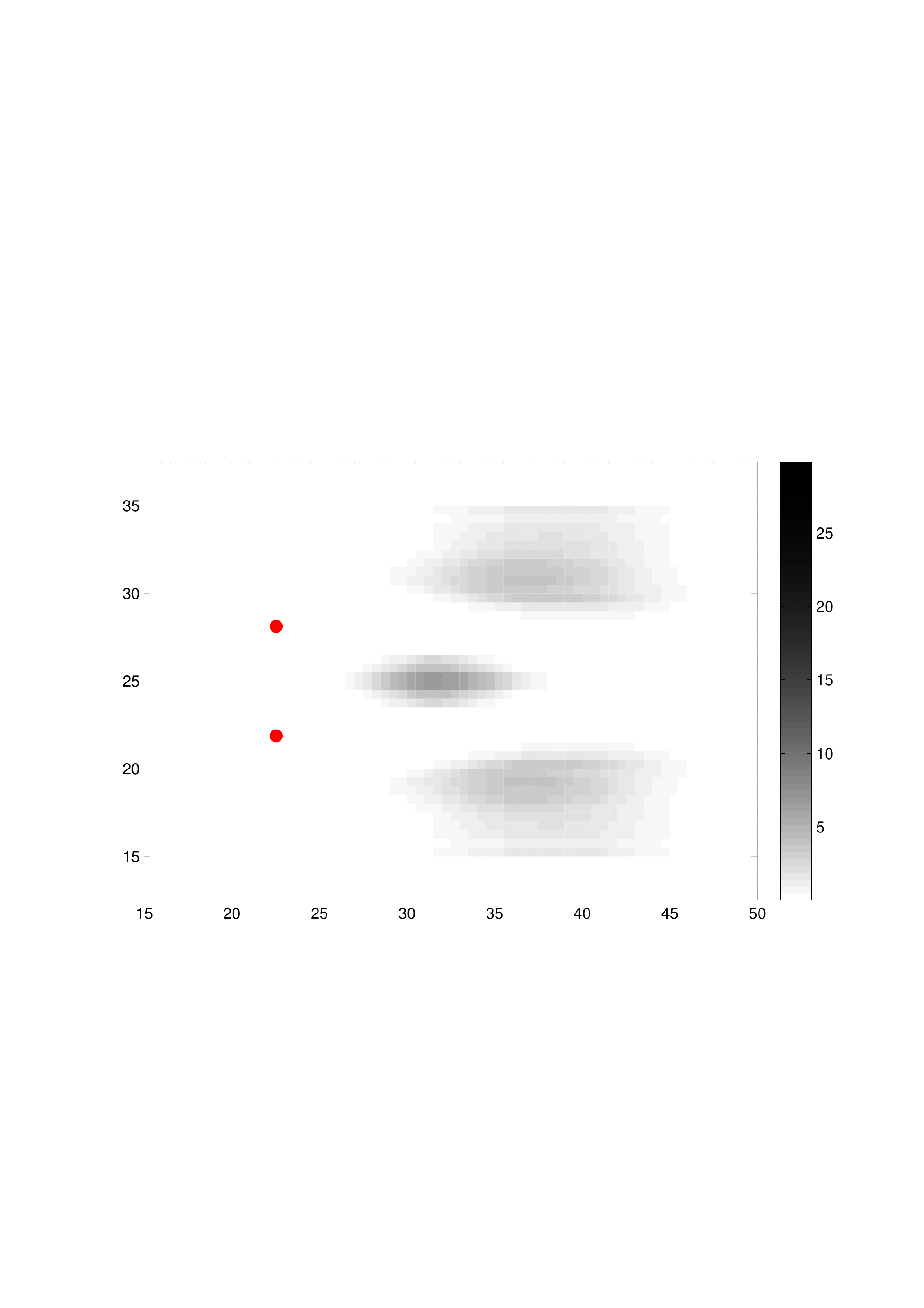} &  \hspace{-3 cm}\\
\end{tabular}
\vspace{-4 cm}
\caption{The interaction between two discrete individuals and a macroscopic crowd. Initially the distance between the individuals is slightly more than $R_a^{\text{AR}}$, and thus they do not `feel' each other. The images are taken at $t=0$ (top left), $t=5$ (top right), $t=10$ (bottom left).}\label{figure graph two particles further apart} \vspace{0.1cm}
\end{figure}

\section{Modelling leadership}\label{section numerics leader}
Can we capture `leadership' effects via our two-scale model? More precisely, can an individual be enabled to drag macroscopic mass along? We interpret such behaviour as a tendency to follow the individual, even though the macroscopic crowd in itself has no special desire to go in a certain direction. The individual then acts as a leader (or guide) for the group. To achieve this effect, in this section, we choose attraction-repulsion as the influence of the individual on the macroscopic crowd, instead of solely repulsive interactions.\\
\\
In Figure \ref{figure graph leader}, the individual is only driven by its desired velocity, which is directed to the left. The crowd does not affect its motion. Yet, the dynamics of the macroscopic crowd is influenced by the `target particle' due to attraction-repulsion. The crowd has no desired velocity, thus $\sigma=1$ is taken. All other conditions are as in the reference setting. Figure \ref{figure graph leader} shows that the individual is able to create a short slipstream of macroscopic mass, when moving through the crowd, but we can not convincingly call it a leader.
\begin{figure}[h!]
\centering
\vspace{-5 cm}
\begin{tabular}{lr}
\hspace{-1 cm}\includegraphics[width=0.6\linewidth]{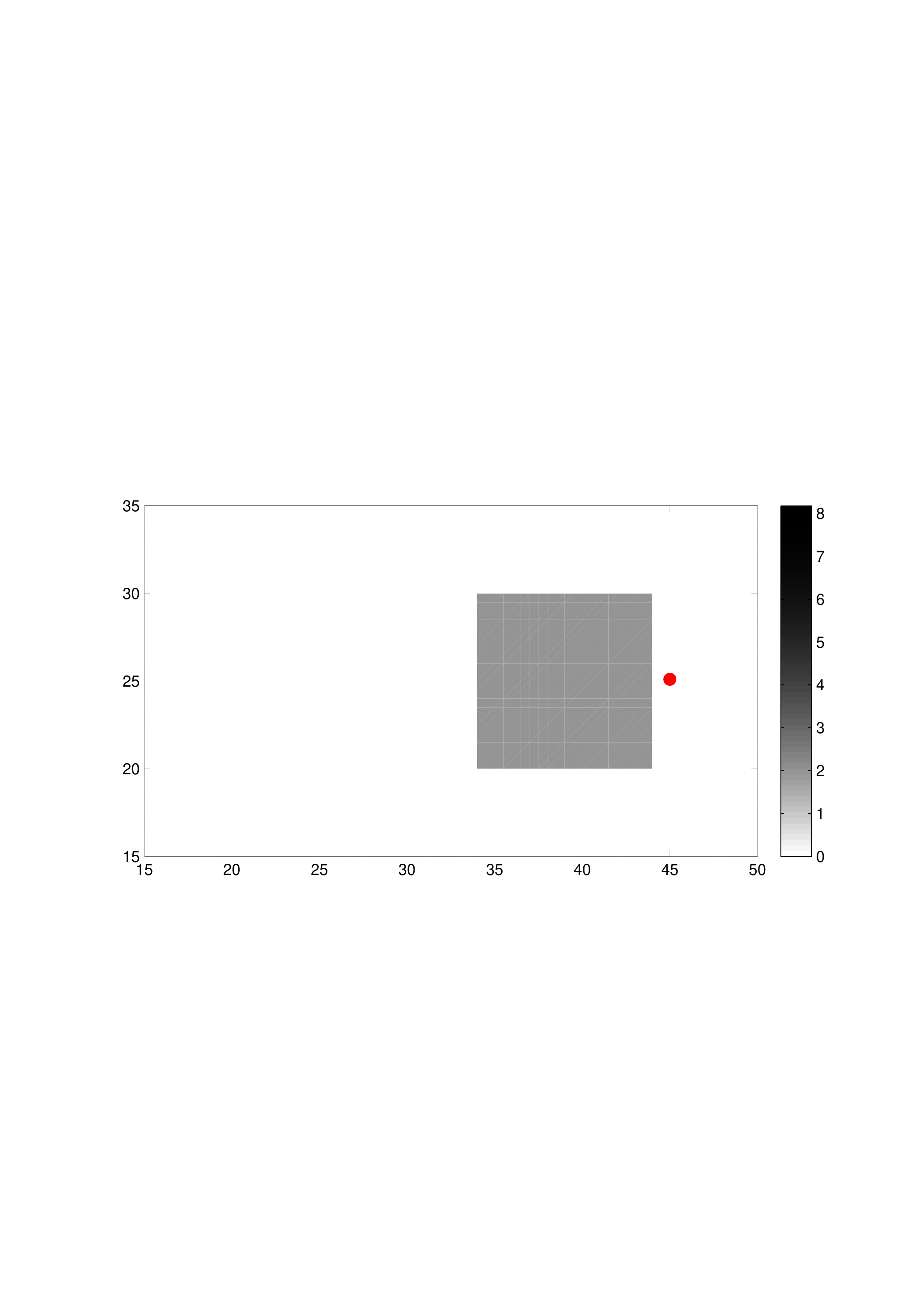}
&
\hspace{-1.5 cm}\includegraphics[width=0.6\linewidth]{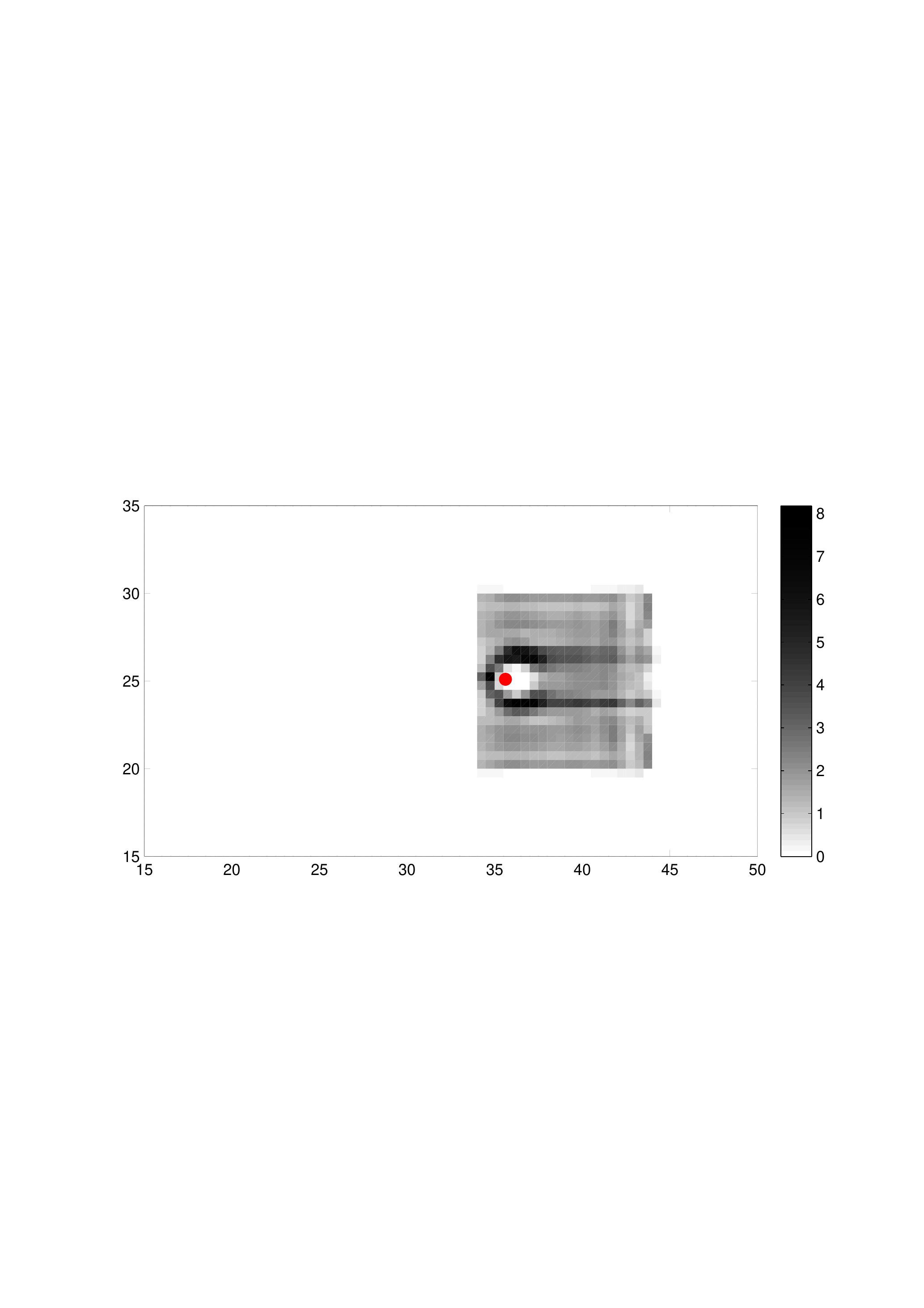}
\vspace{-9 cm}\\
\hspace{-1 cm}\includegraphics[width=0.6\linewidth]{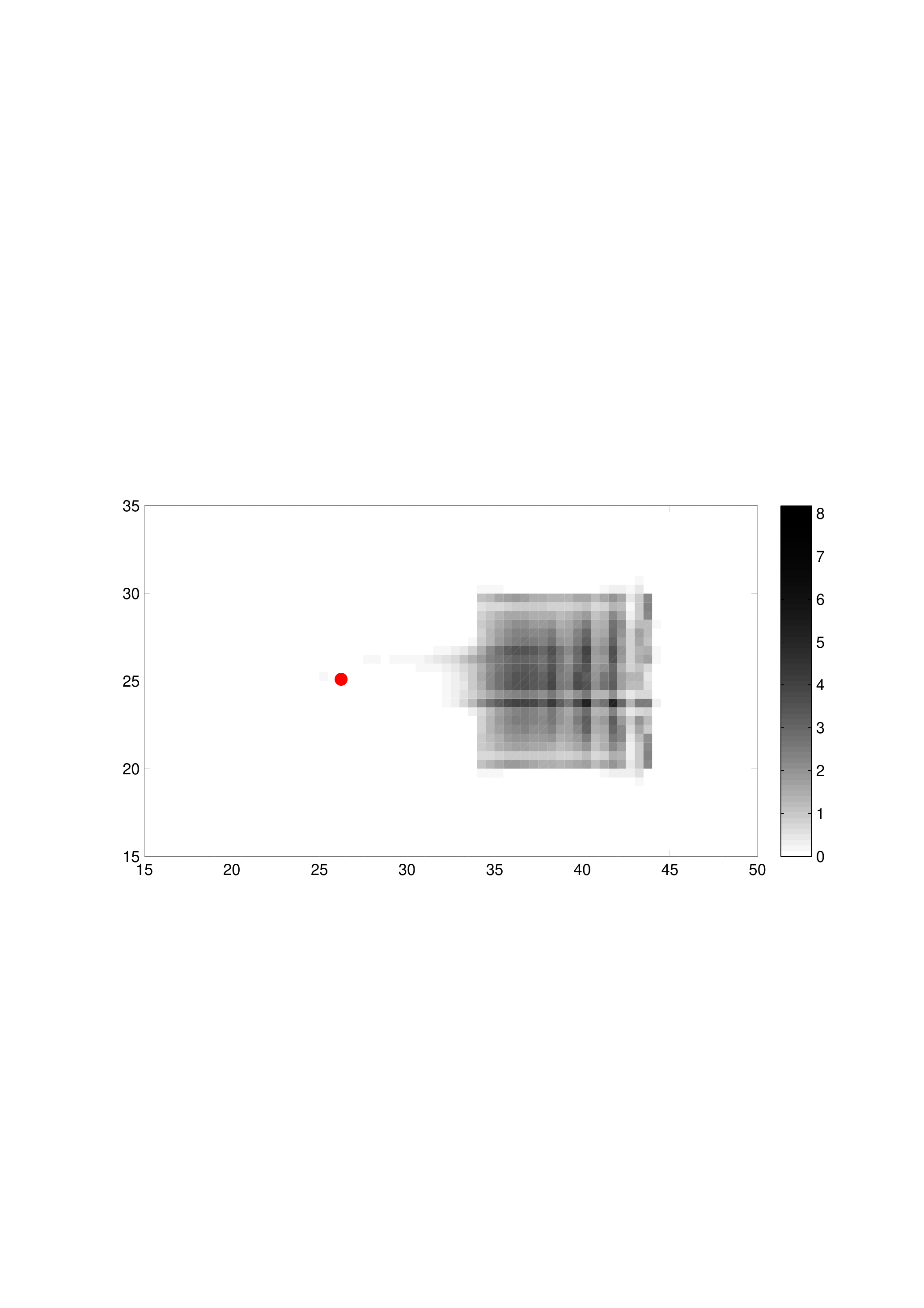} &  \hspace{-3 cm}\\
\end{tabular}
\vspace{-4.5 cm}
\caption{The interaction between a discrete individual and a macroscopic crowd. The attraction-repulsion influence of the individual on the macroscopic crowd makes that the crowd is slightly dragged along. A small amount of mass evades on the left-hand side of the crowd. The images are taken at $t=0$ (top left), $t=7$ (top right), $t=14$ (bottom left).}\label{figure graph leader}
\end{figure}
\begin{figure}[h!]
\centering
\vspace{-5.3 cm}
\begin{tabular}{lr}
\hspace{-1 cm}\includegraphics[width=0.6\linewidth]{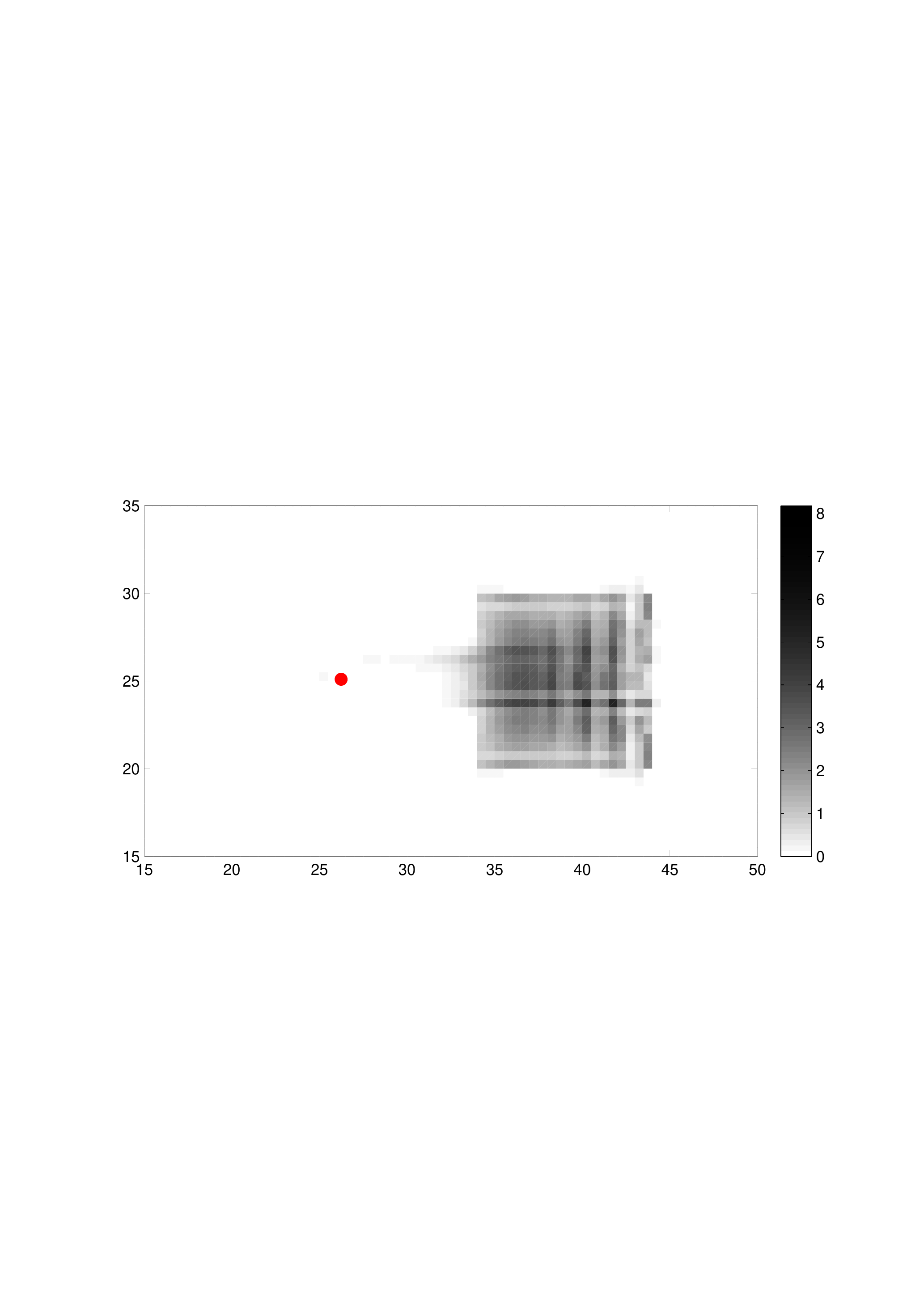}
&
\hspace{-1.5 cm}\includegraphics[width=0.6\linewidth]{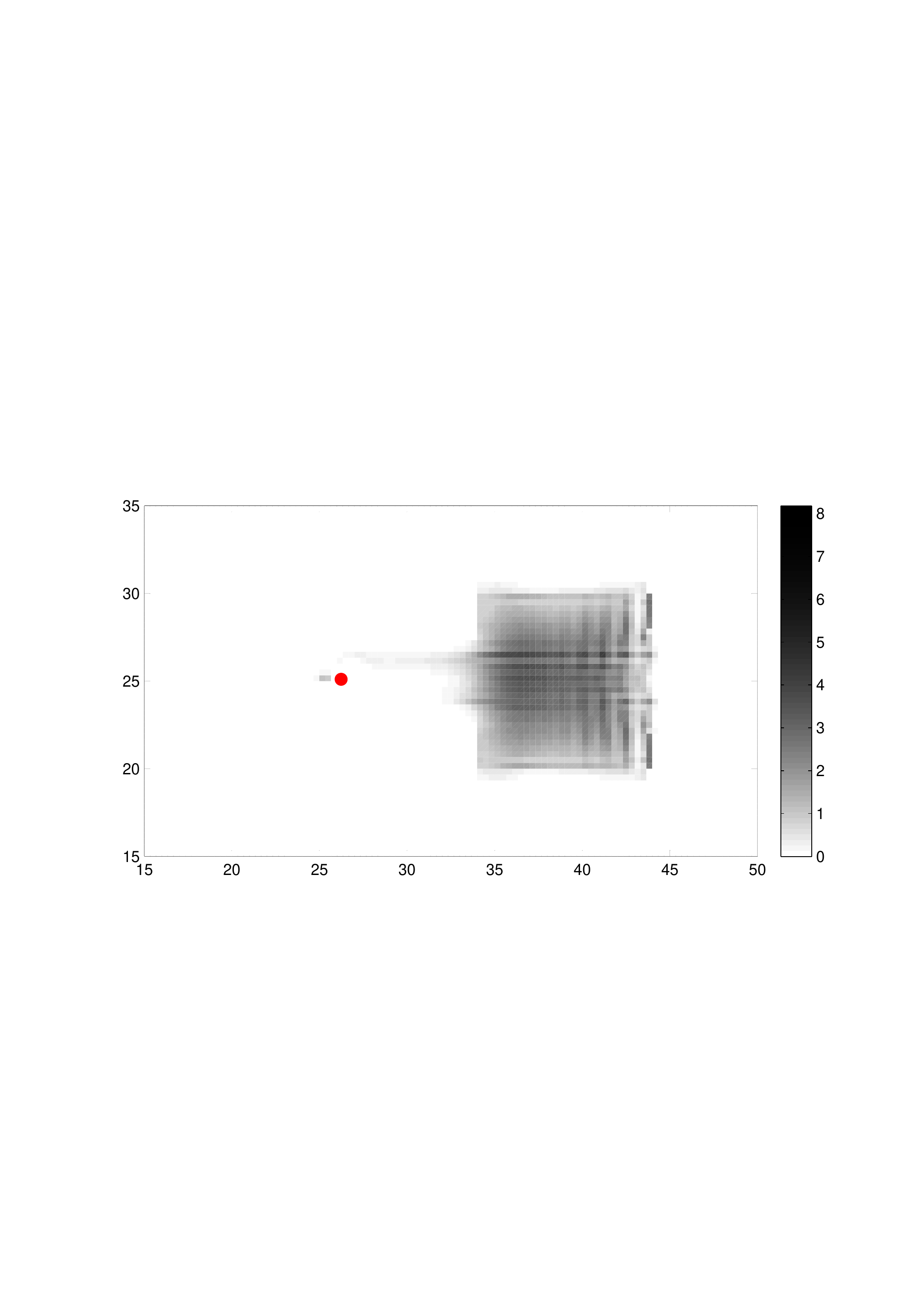}
\vspace{-9 cm}\\
\hspace{-1 cm}\includegraphics[width=0.6\linewidth]{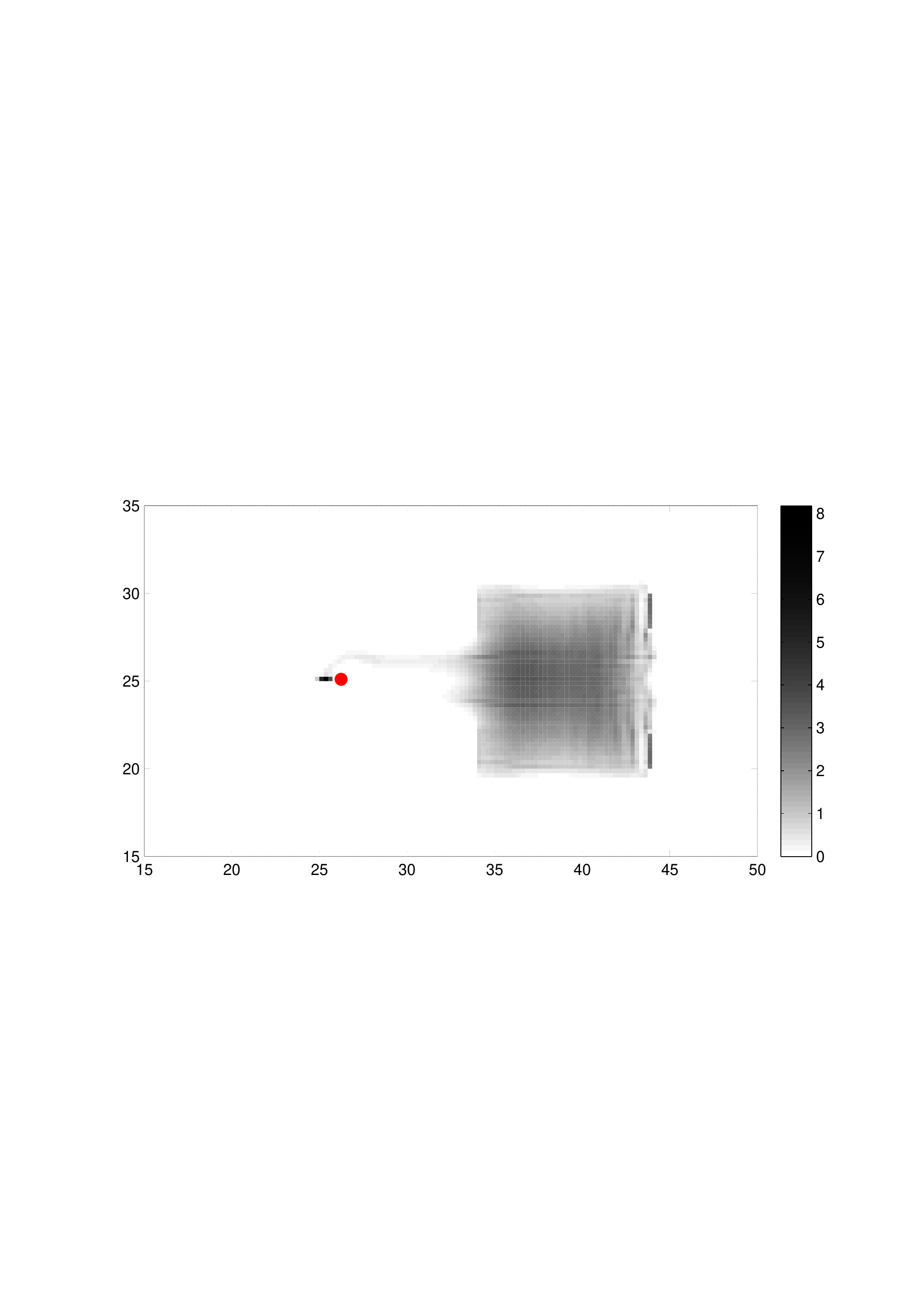} &  \hspace{-3 cm}\\
\end{tabular}
\vspace{-4.5 cm}
\caption{Decreasing mesh size for the setting of Figure \ref{figure graph leader} at time $t=14$. Top left: the original grid size of Figure \ref{figure graph leader}. Top right: the number of cells has been increased by a factor 1.5 in both horizontal and vertical direction. Bottom left: increase of the number of cells by a factor 2 compared to the top left situation. Oscillations in the density are diminished by decreasing grid size.}\label{figure graph leader finer mesh}
\end{figure}

The striking feature appearing in Figure \ref{figure graph leader} are the vertical lines, alternatingly of high and low density. These appear in the macroscopic crowd. Why does this happen? As Figure \ref{figure graph leader finer mesh} shows, these oscillations are apparently only a result of the numerical approximation. The density is smoothened if we decrease the grid size.\\
\\
We wish to enhance the effect of leadership (compared to Figure \ref{figure graph leader}) by adjusting parameters. Firstly, we increase the magnitude of the attraction-repulsion influence of the individual on the crowd. More specifically, we take $F^{\text{AR}}= 0.3$ instead of $F^{\text{AR}}= 0.03$; for interactions within the crowd $F^{\text{AR}}= 0.03$ is maintained. All other conditions are as in Figure \ref{figure graph leader}. In Figure \ref{figure graph leader attr} we see the effect of the performed parameter modification.\\
Note that within the environment of the individual, the density attains very high values, which is physically not acceptable. Let us concentrate on qualitative behaviour only. The individual successfully attracts a large part of the macroscopic crowd, at least initially. As time elapses, it loses control over the crowd. Maybe, this fact is due to his too high velocity. At first, the mass in its direct environment is distributed more or less uniformly over a circle around it; this mass shifts more and more to its rear end. Also, we observe a trace of mass left behind by the individual. Once this mass is at a distance greater than $R^{\text{AR}}_a$ from the individual, it remains near the place where it loses contact.\footnote{The scaling of the grey shading has been altered in order to be able to distinguish lower densities from zero density. Very high densities occurring in small regions dominate the scaling. The aforementioned trace of mass is therefore hardly visible, in spite of the fact the density is around $\rho\approx2$.}
\begin{figure}[h!]
\centering
\vspace{-5 cm}
\begin{tabular}{lr}
\hspace{-1 cm}\includegraphics[width=0.6\linewidth]{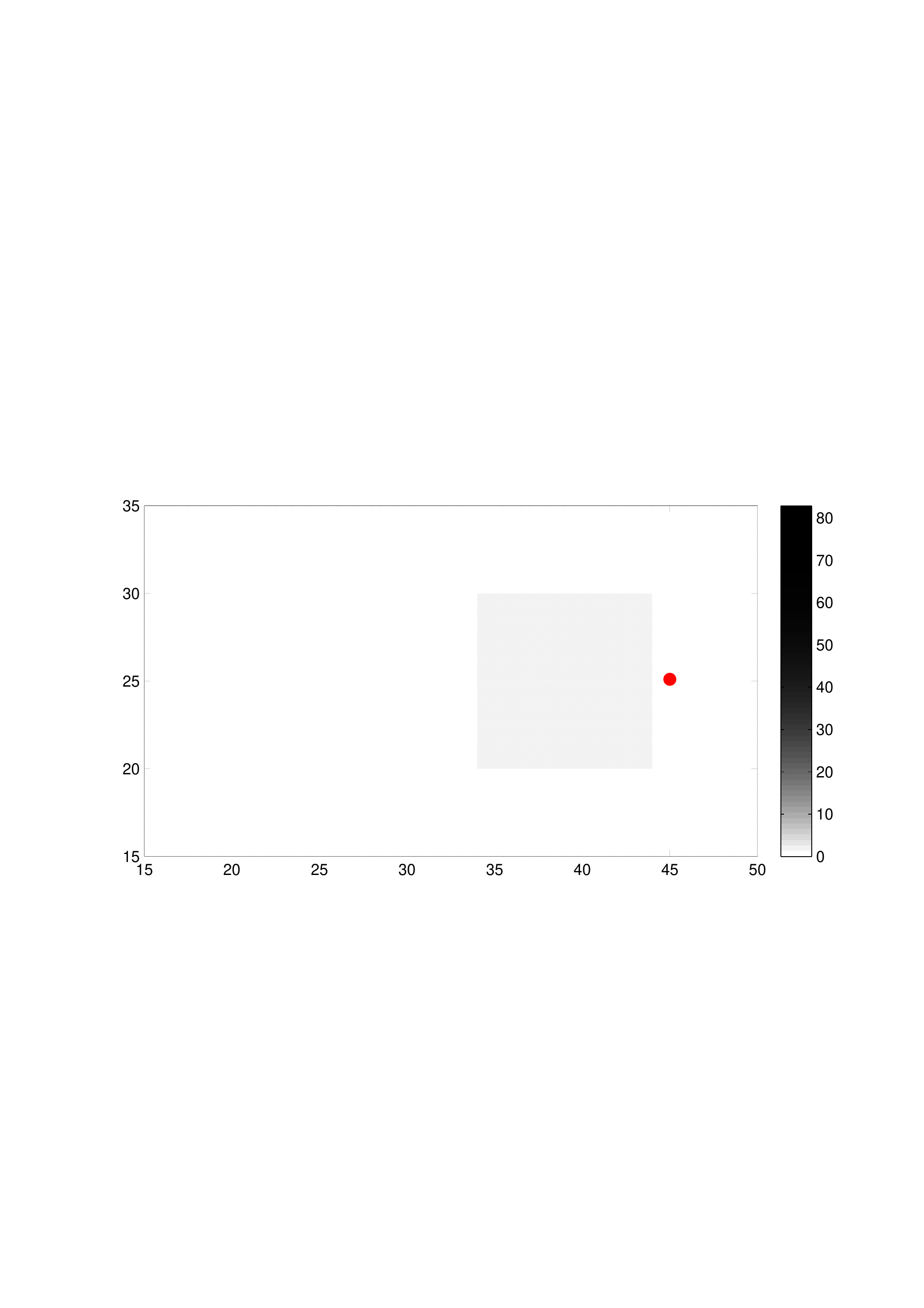}
&
\hspace{-1.5 cm}\includegraphics[width=0.6\linewidth]{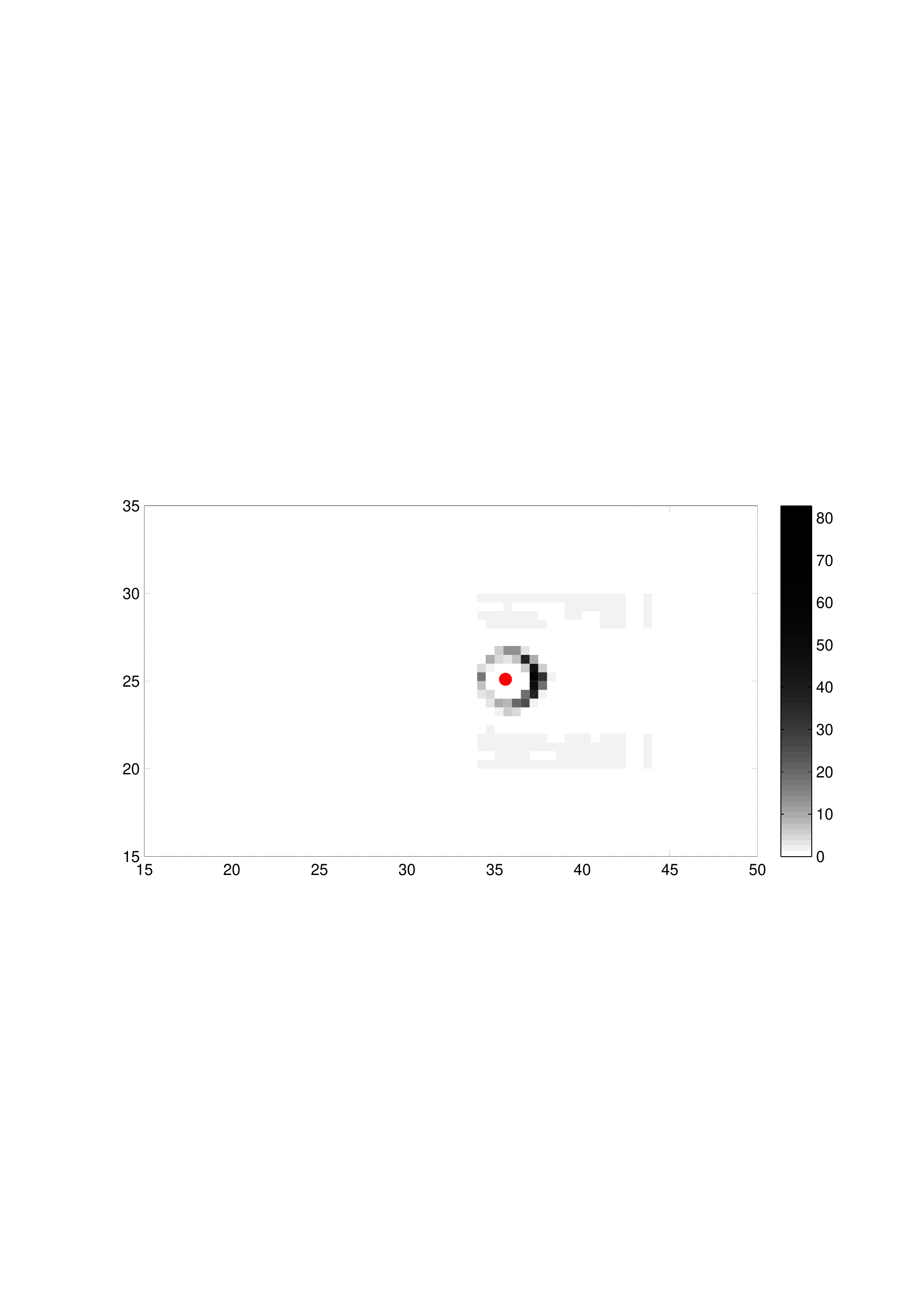}
\vspace{-9 cm}\\
\hspace{-1 cm}\includegraphics[width=0.6\linewidth]{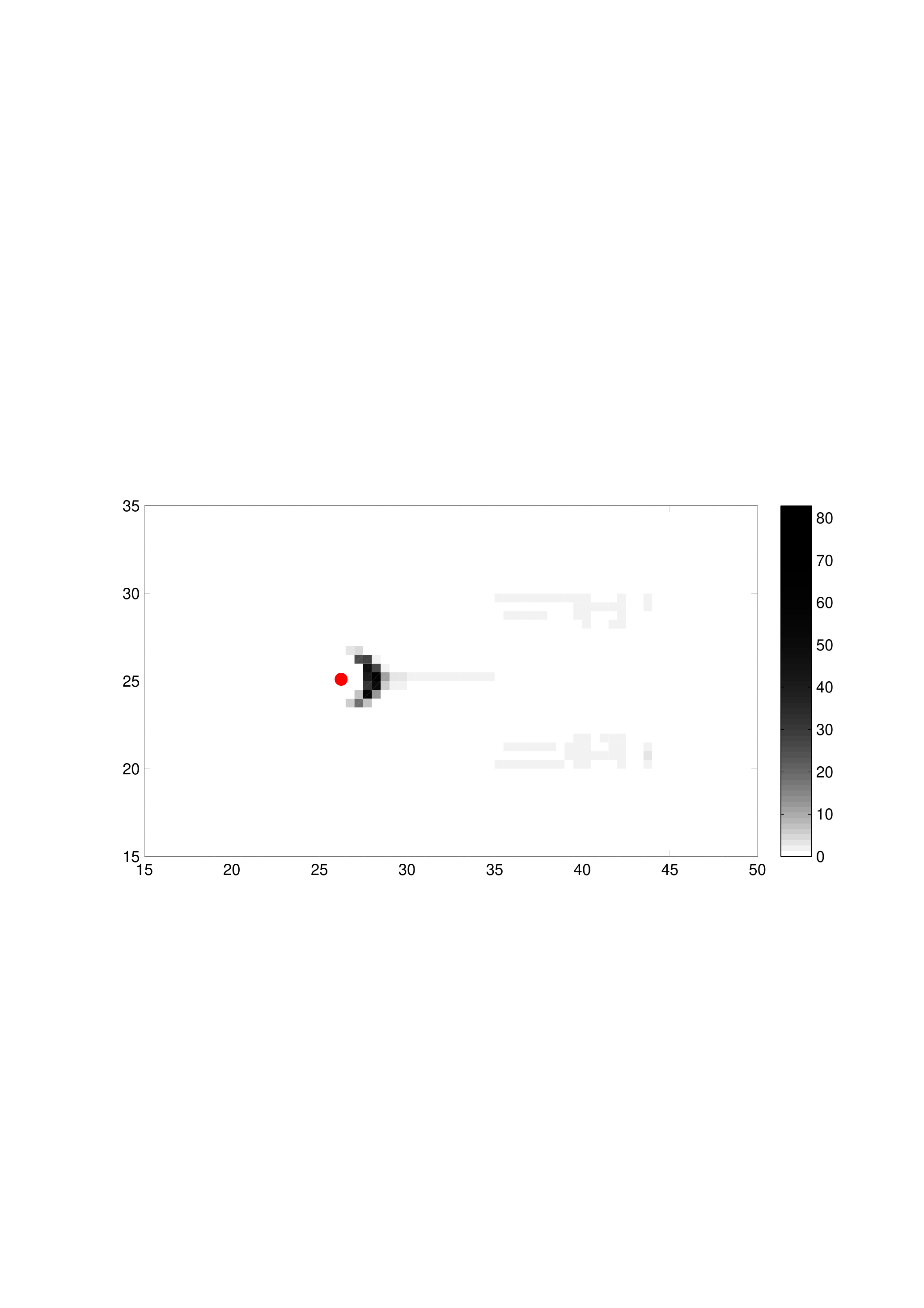} &  \hspace{-3 cm}\\
\end{tabular}
\vspace{-4.5 cm}
\caption{The interaction between a discrete individual and a macroscopic crowd. The influence of the individual on the crowd has been increased, and as a result the individual is able to drag the crowd along. The images are taken at $t=0$ (top left), $t=7$ (top right), $t=14$ (bottom left).}\label{figure graph leader attr}
\end{figure}

Effects of a similar kind can be obtained not only by modifying the interaction strength, but also by increasing the radius of attraction. In the attraction-repulsion influence of the individual on the crowd we restore the value $F^{\text{AR}}= 0.03$, and take $R^{\text{AR}}_a=6$. Indeed this enables the individual to act as a leader, cf. Figure \ref{figure graph leader radii}.\\
The macroscopic crowd is initially attracted to the individual, which is comparable to Figure \ref{figure graph leader attr}. This situation differs from Figure \ref{figure graph leader attr} in the sense that the macroscopic crowd is much less compressed. A possible explanation is twofold. The area of attraction around the leader is much larger, and moreover, the attraction towards the centre of this region is less strong.\\
Note that due to the weaker attraction-repulsion (compared to the previous case $F^{\text{AR}}=0.3$), the leader is less successful in keeping his followers with him. Finally, the macroscopic mass has moved on average a smaller distance to the left. However, a positive aspect of this setting is that the maximal density is nearly a factor 4 smaller than in Figure \ref{figure graph leader attr}.
\begin{figure}[h!]
\centering
\vspace{-5.5 cm}
\begin{tabular}{lr}
\hspace{-1 cm}\includegraphics[width=0.6\linewidth]{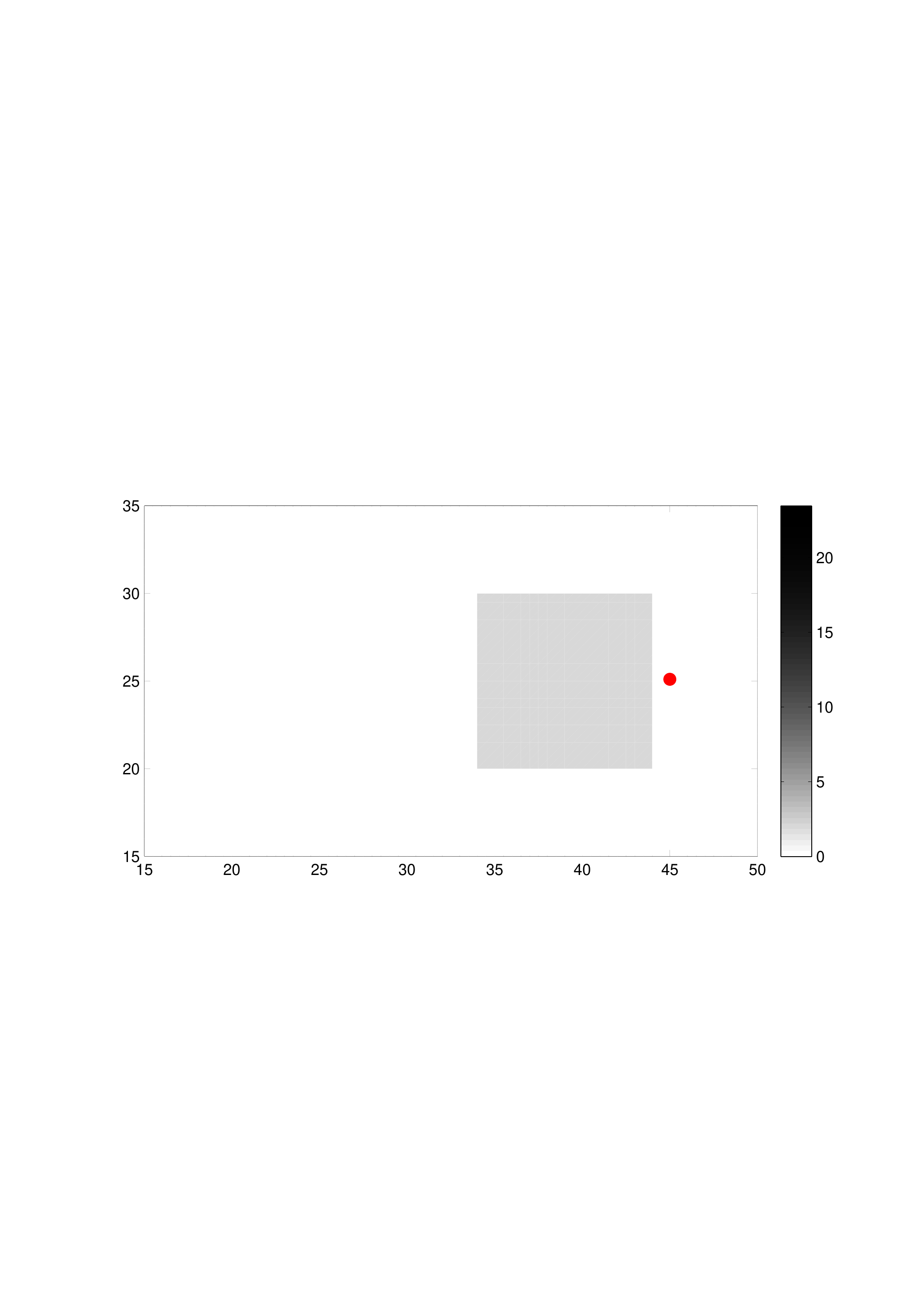}
&
\hspace{-1.5 cm}\includegraphics[width=0.6\linewidth]{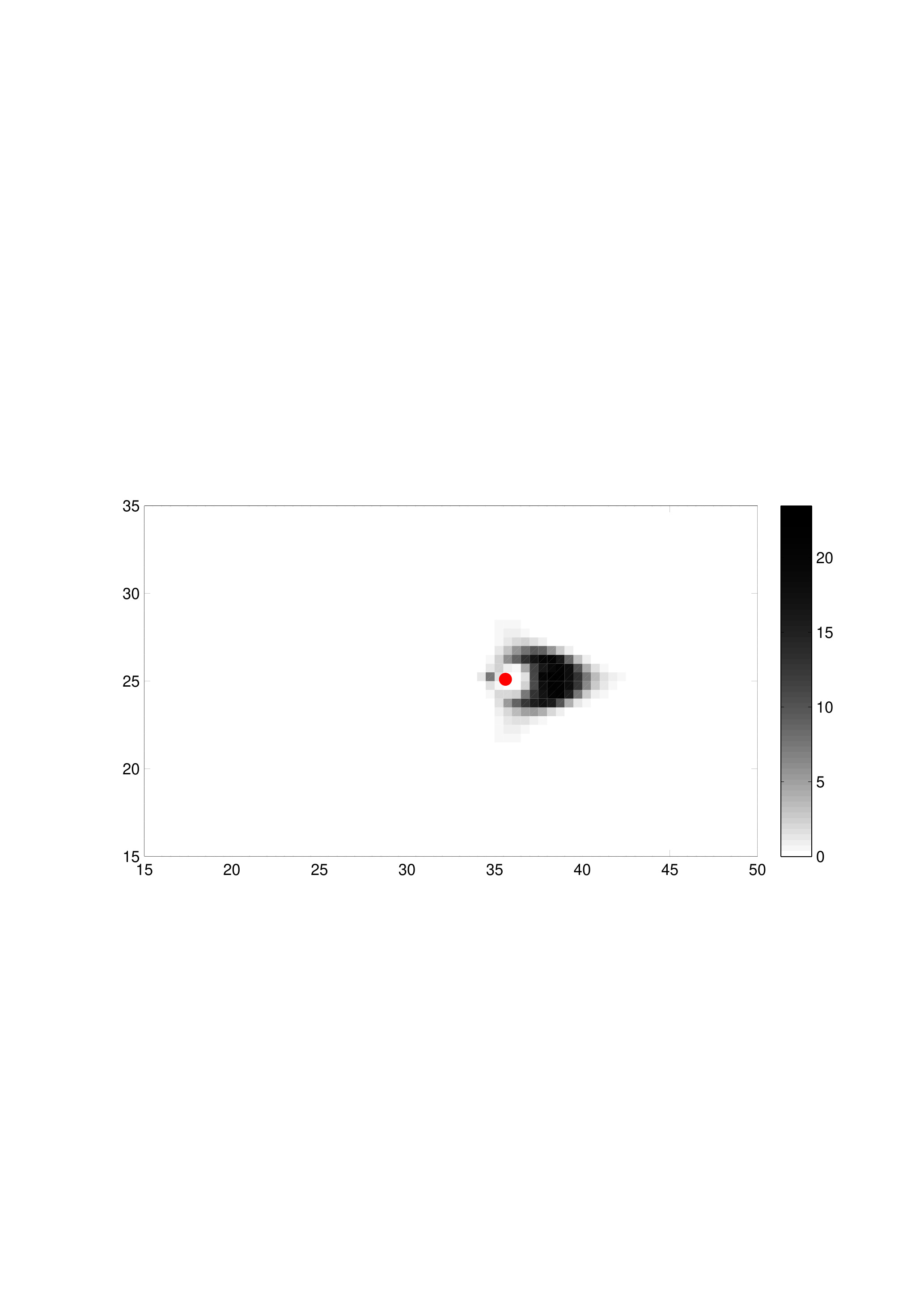}
\vspace{-9 cm}\\
\hspace{-1 cm}\includegraphics[width=0.6\linewidth]{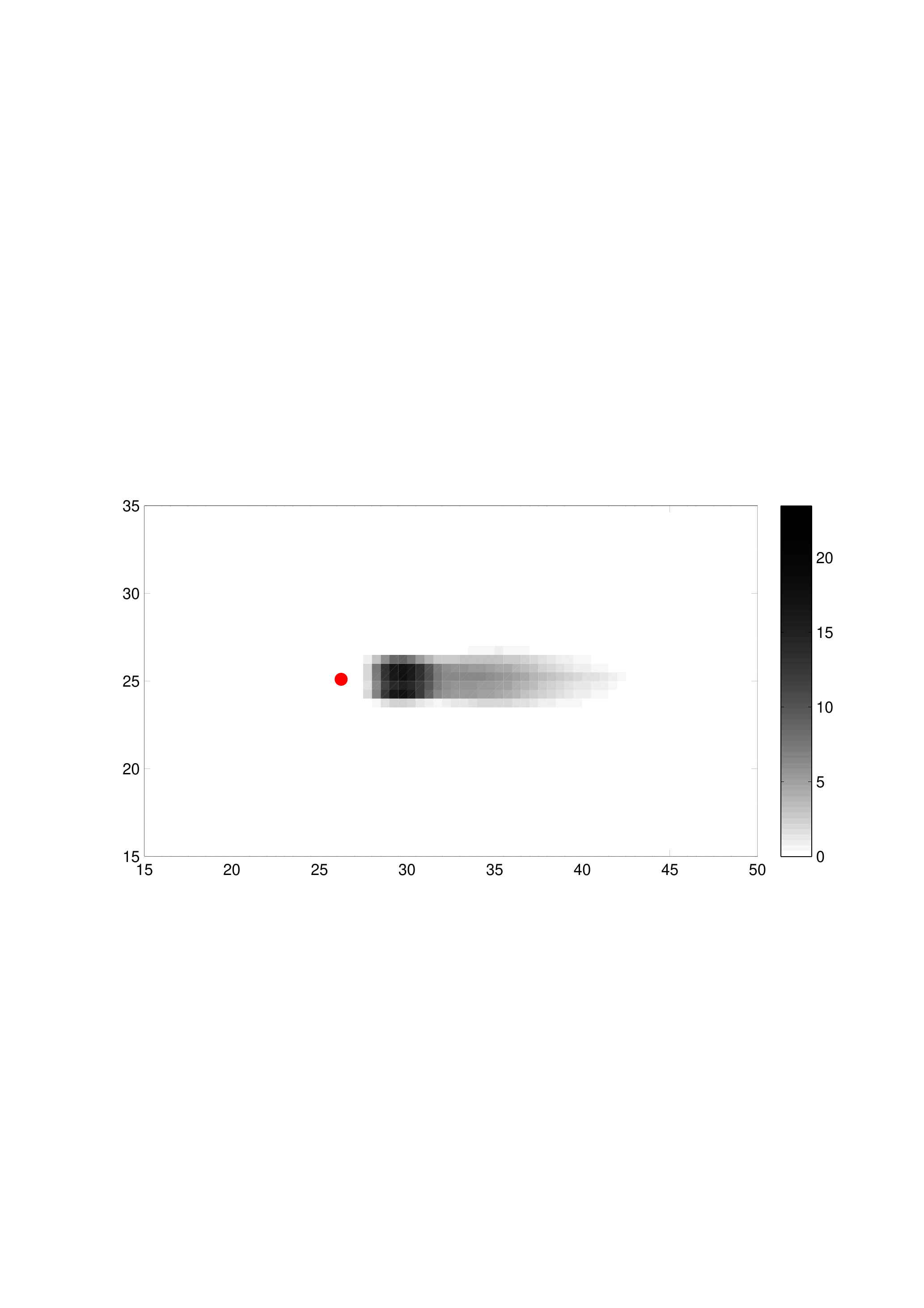} &  \hspace{-3 cm}\\
\end{tabular}
\vspace{-4.5 cm}
\caption{The interaction between a discrete individual and a macroscopic crowd. The radius of attraction has been increased, which enables the individual (temporarily) to take macroscopic crowd along. The images are taken at $t=0$ (top left), $t=7$ (top right), $t=14$ (bottom left).}\label{figure graph leader radii}
\end{figure}

\newpage
\chapter{Discussion}\label{section further work}
At this point, we wish to look back and list a few open issues with respect to modelling, analysis and simulation. We will investigate some of them in the near future.
\begin{itemize}
\item Via (mainly formal) calculations we arrived at a fully continuous-in-time model. Subsequently, we deduced a time-discrete version of this model, which we analyzed mathematically. We proved global existence and uniqueness of a time-discrete solution (in Theorems \ref{Thm existence time-discrete}--\ref{Thm uniqueness time-discrete}), and showed some properties of this solution (e.g. positivity and conservation of mass). A number of conditions on the (one-step) motion mappings and velocity fields were inevitably needed. Afterwards, we proposed a specific form of the velocity field, in which a dependence on the instantaneous configuration of the mass appears. Up to now, we have not proven that the chosen velocity field actually satisfies the imposed conditions. This is still to be done. We expect this step to be difficult, mainly due to the functional dependence of the velocity on the mass measures, and due to the fact that the dynamics of all subpopulations are coupled via their velocity fields.
\item Formulating and proving results like existence and uniqueness of solutions to the continuous-in-time problem, is our next goal. However, this will be a much more difficult task than treating the discrete-in-time model. For instance, the one-step motion mapping of the mass measures is linear in $v^{\alpha}_n$ for the time-discrete model, cf. Definition \ref{def one-step motion mapping}. This is because we linearized by using Taylor series approximations in order to obtain the time-discrete model. The motion mapping will in general no longer be linear in the velocity if we consider the continuous case. Moreover, it is all but clear whether we can even speak so easily about a \text{one-step} motion mapping and the accompanying push forward of the mass measure. These concepts are likely to translate into continuous-in-time counterparts, relating the configurations at time instances that are not an \textit{a priori} fixed time step apart.
\item We would like to enlarge the class of interaction potentials for which an entropy inequality can be derived. For the moment the potential $W_{\beta}^{\alpha}$ is only admissible if $W_{\beta}^{\alpha}(\xi)=W_{\beta}^{\alpha}(-\xi)$ for all $\xi\in\mathbb{R}^d$, and interactions are symmetric. The latter condition means that $W_{\beta}^{\alpha}\equiv W_{\alpha}^{\beta}$. See Sections \ref{section entropy inequality cont-in-time} and \ref{section entropy inequality disc-in-time}, and especially Assumption \ref{assumption gradient structure velocity multi-comp}.\\
    These restrictions on $W_{\beta}^{\alpha}$  are simply to strong for many interesting settings. An idea at least to circumvent the condition $W_{\beta}^{\alpha}(\xi)=W_{\beta}^{\alpha}(-\xi)$ is inspired by \cite{BodnarVelazquez} (especially Section 3.1 therein). We write $W_{\beta}^{\alpha}$ as the superposition of a symmetric part and a drift part:
    \begin{equation*}
    W_{\beta}^{\alpha}= W_{\beta,\text{ symm}}^{\alpha}+W_{\beta,\text{ drift}}^{\alpha},
    \end{equation*}
    where $W_{\beta,\text{ symm}}^{\alpha}(\xi)=W_{\beta,\text{ symm}}^{\alpha}(-\xi)$ is satisfied for all $\xi\in\mathbb{R}^d$. The second term $W_{\beta,\text{ drift}}^{\alpha}$ contains the deviations from symmetry. We are interested to see which restrictions on $W_{\beta,\text{ drift}}^{\alpha}$ are needed in order still to be able to derive an entropy inequality.\\
    However, some problems arise in the interpretation of such interactions. Consider the vector $\xi$ in its representation in polar coordinates. Without the condition that $W_{\beta}^{\alpha}(\xi)=W_{\beta}^{\alpha}(-\xi)$, the interaction potential $W_{\beta}^{\alpha}$ certainly depends not only on the length of $\xi$ but also on its angle (the azimuth). Since the velocity field involves $\nabla W_{\beta}^{\alpha}$, it inevitably has a non-zero azimuthal component. We should ask ourselves the question whether this is physically acceptable: are the interactions allowed to influence the velocity in a direction that is not parallel to the connection vector between two points?\\
    If we do not want to restrict ourselves to a gradient (in this case $\nabla W_{\beta}^{\alpha}$), we could replace it by a vector field of the form $\nabla W+\nabla\times\omega$ (Helmholtz decomposition). Under which conditions on the vector field $\omega$ can we still derive an entropy inequality?
\item A wider range of domains can be considered, if we know how to incorporate walls (outer boundaries) and internal obstacles in our domain. We have not treated this aspect so far, but it surely is very important to do so. In \cite{PiccoliTosin}, Piccoli proposes to project the obtained velocity on the space of admissible velocities. That is, those velocities that satisfy $v\cdot n\geqslant0$ at the boundaries. In practice, this means that outward pointing normal components of the velocity are to be set zero. Perhaps, we would like to have an alternative way, which fits better to our measure-theoretical framework. Probably one needs to define boundary measures and Cauchy fluxes, while also describing their throughput.
\item More fundamental mathematical challenges are related to \textit{discrete-to-continuum limits}. We would like to investigate in what sense we can approximate the absolutely continuous part of our mass measures by particle systems (discrete mass measures in the case of crowd dynamics). Especially, the limiting process $N\rightarrow \infty$ (where $N$ is the number of Dirac measures in the particle system) is relevant. Such approximation, in combination with the Wasserstein metric, might turn out to be useful for proving existence and uniqueness properties for the continuous-in-time model, and also to derive alternative numerical schemes for computing the density based on particle simulations.\\
    The question is which scaling one needs to obtain our macroscopic perspective in the limit. Moreover, we are interested in whether other types of scaling would lead to different limits, e.g. a \textit{mesoscopic} Boltzmann-type description (`intermediate level' between micro and macro, see e.g. \cite{Carrillo2010}) or the twofold micro-macro setting of \cite{Piccoli2010}. Perhaps there even exists a way to exploit the fact that we have multiple subpopulations. That is, we might apply different scalings to distinct subpopulations, such that e.g. for one subpopulation the limit is of Boltzmann-type, and for another the limit is a two-scale micro-macro description. What is the actual interpretation of such distinction between the two subpopulations?
\item Regarding our simulation results shown in Section \ref{section numerical illustration}: We observe that we have successfully circumvented problems that might occur at the boundary, by considering simulation runs as long as all mass is contained in the interior of $\Omega$. Such \textit{ad hoc} solution can be avoided in the future if we find appropriate tools for including boundary effects (see above). When outputting the configuration, our simulation tool is equipped to check whether the total mass is actually conserved. This turns out to be indeed the case for every instance of our simulation.
\item When examining the results of our simulation, several questions arise. All of them are the subject for further research:
    \begin{itemize}
    \item Figure \ref{figure graph cloud to ball} indicates that the equilibrium configuration for attraction-repulsion interactions is a ball. If instead, we test the same situation for purely attractive interactions, does the crowd shrink to a point? Is there a mathematical way to show that the limit is a Dirac mass? Note that in this case the right-hand inequality in (\ref{hypothesis upper and lower bound lambda(chi inv Omega)}) can no longer be satisfied.
    \item In Section \ref{section numerics two-scale interactions one ind}, we estimate the radius of the (nearly) empty area in front of the individual. It might be possible to characterize the shape and dimensions of the rest of the empty zone induced by the individual. In the same section, we show that a modification of parameters makes the crowd clog together again after the individual has gone past. What are the typical length and time scales at which this reunion takes place?
    \item The interaction between two individuals and a macroscopic crowd was simulated in Section \ref{section numerics two-scale interactions two ind}. If we zoom out, under what conditions can we deduce from the macro-crowd's behaviour that more than one individual was present (cf. Figure \ref{figure graph two particles further apart})? When do we obtain merely the same results as in the presence of one individual of double mass (cf. Figure \ref{figure graph two particles together}), and how does this effect depend on the initial mutual distance between the two individuals?
    \item In Section \ref{section numerics leader} we tried to create leaders. We observed (especially in Figures \ref{figure graph leader attr} and \ref{figure graph leader radii}) that the individual at first is able to take a part of the macroscopic crowd along, but that he loses control as time goes by. Regarding the long-time behaviour of the crowd, will eventually all mass be lost by the leader? When do leadership effects end? In Figure \ref{figure graph leader attr}, bottom left, the `tail' of mass behind the individual consists of two regions. Closest to the individual the density is relatively high, whereas after quite sudden transition the density is lower. What are the characteristics of both regions (shape, dimensions), and why is there an apparent strict separation between them?
    \item In all of the cases above: How exactly do the observed phenomena depend on the parameters? What can we say about the (in)stability of equilibrium configurations? Can we tune parameters in such a way that physically unacceptable densities do not occur? Can we identify a set of dimensionless quantities, that characterizes the dynamics in the model? Also the role of the initial conditions needs to be taken into consideration.
    \end{itemize}
\end{itemize}

\newpage
\chapter*{Acknowledgements}
\addcontentsline{toc}{chapter}{Acknowledgements}
I wish to express my appreciation to a number of people. First of all, to my supervisor Adrian Muntean for his guidance and support. He has shaped me as a mathematician, a scientist and in some sense also as a human being. "How is life?" - "Good and busy" has become a sort of mantra. Mark Peletier was and is a source of inspiration for me. Hearing him speak, convinced me again and again that there is much more to be explored about crowd dynamics.\\
\\
I would like to thank the Assessment Committee for my master's thesis; its members are: Adrian Muntean, Prof.dr.ir.~Harald van Brummelen (Dept. of Mathematics and Computer Science and Dept. of Mechanical Engineering), Dr.ir.~Huub ten Eikelder (Dept. of Biomedical Engineering), Dr.~Georg Prokert (Dept. of Mathematics and Computer Science) and Dr.ir.~Fons van de Ven (Dept. of Mathematics and Computer Science).\\
Fons van de Ven also deserves a personal word of praise for sharing his experience and participating very actively in the process leading to this thesis. Also, I am indebted to Michiel Renger~MSc for numerous discussions, often on the interpretation of certain concepts. I apologize for keeping you from your own work, Michiel!\\
The Particle Systems seminar at the Institute for Complex Molecular Systems (ICMS) has given me a better understanding of topics closely related to my problem. I acknowledge the contributions of all participants to the seminar.\\
\\
The help of Dr.ir.~Bart Markvoort (Dept. of Biomedical Engineering) to my simulation code was of vital importance. Without his program (that served as a basis for my two-scale experiments) I still would have been trying. Thank you!\\
My uncle Joep Julicher once again rendered a service to one of his relatives with his photo camera. I am most grateful for his contribution to my thesis.\\
\\
Thanks to all my fellow students, but particularly to Patrick van Meurs and Jeroen Bogers. I will miss the `Three Musketeers' of HG~8.64, and the Cup-a-Soup ritual at 12:30.\\
Furthermore, thanks are due to my friends (especially the ones studying humanities), for their neverending ridiculization of mathematics, mathematicians and science in general. Your attitude has truly provided me with the best motivation to be persistent.\\
\\
My gratitude (of course) goes to my parents, for their support that has always gone without saying, and to Janneke and Frank, Willem and Anneke for taking care of their little brother.\\
\\
I am extremely happy that the ICMS gives me the opportunity to continue my research for another four years.

\newpage
\appendix
\numberwithin{equation}{chapter}
\setcounter{theorem}{0}
\renewcommand{\thetheorem}{\Alph{chapter}.\arabic{theorem}}

\chapter{Proof of Theorem \ref{Thm Lebesgue decomp}}\label{Appendix Proof Lebesgue decomp}
This proof follows the lines indicated by \cite{EvansGariepy}, p.~42.
\begin{proof}
\begin{enumerate}
  \item Define the set $\mathcal{E}:=\{\Omega'\in \mathcal{B}(\Omega) \,\big|\, \lambda(\Omega\setminus \Omega')=0\}$. Note that $\mathcal{E}$ is non-empty since at least $\Omega\in\mathcal{E}$. Choose a sequence $\{B_k\}_{k\in\mathbb{N}^+}\subset\mathcal{E}$ such that
       \begin{equation*}
       \mu(B_k)\leqslant \inf_{\Omega'\in\mathcal{E}}\mu(\Omega')+\dfrac{1}{k},\hspace{1 cm}\text{for all }k\in\mathbb{N}^+.
       \end{equation*}
       Such sequence exists by definition of the infimum.\\
       Define $B:=\bigcap_{k=1}^{\infty}B_k$; $B$ is an element of $\mathcal{B}(\Omega)$, since any $\sigma$-algebra is closed w.r.t. countable intersections. By one of De Morgan's laws we have that $\Omega\setminus \bigl(\bigcap_{k=1}^{\infty}B_k \bigr)=\bigcup_{k=1}^{\infty}\bigl(\Omega\setminus B_k \bigr)$, for which we should note that this is not a disjoint union. Therefore we have
       \begin{equation*}
       0\leqslant\lambda(\Omega\setminus B)\leqslant \sum_{k=1}^{\infty}\lambda(\Omega\setminus B_k)=0,
       \end{equation*}
       since $\lambda(\Omega\setminus B_k)=0$ for all $k\in\mathbb{N}^+$. We have thus proven that $B\in\mathcal{E}$.\\
       In fact we even have that $\mu(B)=\inf_{\Omega'\in\mathcal{E}}\mu(\Omega')$. To see this, note that
       \begin{equation*}
       \mu(B)=\mu\Bigl(\bigcap_{k=1}^{\infty}B_k\Bigr)\leqslant \mu(B_j)\leqslant \inf_{\Omega'\in\mathcal{E}}\mu(\Omega')+\dfrac{1}{j},\hspace{1 cm}\text{for all }j\in\mathbb{N}^+.
       \end{equation*}
       Since this statement is true for any $j\in\mathbb{N}^+$, it follows that $\mu(B)\leqslant \inf_{\Omega'\in\mathcal{E}}\mu(\Omega')$. Due to the fact that $B\in\mathcal{E}$, it is clear that also $\mu(B)\geqslant \inf_{\Omega'\in\mathcal{E}}\mu(\Omega')$, thus $\mu(B)= \inf_{\Omega'\in\mathcal{E}}\mu(\Omega')$.
  \item Let $\mu_{\text{ac}}:\mathbb{B}(\Omega)\rightarrow \mathbb{R}^+$ be defined by $\mu_{\text{ac}}(\Omega')=\mu(\Omega'\cap B)$, and similarly $\mu_{\text{s}}:\mathbb{B}(\Omega)\rightarrow \mathbb{R}^+$ by $\mu_{\text{s}}(\Omega')=\mu\bigl(\Omega'\cap (\Omega\setminus B)\bigr)$. Note that it follows from this definition that $\mu=\mu_{\text{ac}}+\mu_{\text{s}}$.\\
      \\
      Let $A\in\mathcal{B}(\Omega)$, be such that $A\subset B$ and $\lambda(A)=0$. Assume that $\mu(A)>0$. Note that $B\setminus A\in\mathcal{B}(\Omega)$, and
      \begin{equation*}
      \lambda\bigl(\Omega\setminus(B\setminus A)\bigr)=\lambda\bigl((\Omega\setminus B)\cup A\bigr)=\lambda(\Omega\setminus B)+\lambda(A)=0.
      \end{equation*}
      Here we use that $\Omega\setminus B$ and $A$ are disjoint (since $A\subset B$) and that $\lambda(\Omega\setminus B)=0$ (since $B\in\mathcal{E}$). We conclude that $B\setminus A\in\mathcal{E}$.\\
      As a result of the fact that $A\subset B$, it holds that $\mu(B)=\mu(B\setminus A)+\mu(B\cap A)=\mu(B\setminus A)+\mu(A)$, and because we assumed that $\mu(A)>0$ it follows that $\mu(B\setminus A)<\mu(B)$. As $B\setminus A\in\mathcal{E}$, this contradicts $\mu(B)= \inf_{\Omega'\in \mathcal{E}} \mu(\Omega')$. Thus $\mu(A)=0$ must hold.\\
      Let $\Omega'\in \mathcal{B}(\Omega)$ now be such that $\lambda(\Omega')=0$. Then $\mu_{\text{ac}}(\Omega')=\mu(\Omega'\cap B)=0$ follows if we set $A:=\Omega'\cap B\subset B$ in the lines of arguments above. Hence, $\mu_{\text{ac}}\ll\lambda$.\\
      \\
      Furthermore $\mu_{\text{s}}(B)=\mu\bigl( B\cap(\Omega\setminus B)\bigr)=\mu(\emptyset)=0$, and, due to the fact that $B\in\mathcal{E}$, $\lambda(\Omega\setminus B)=0$. This implies $\mu_{\text{s}}\perp\lambda$.
  \item To prove uniqueness of this decomposition, assume that we have $\mu=\mu_{\text{ac}}^1+\mu_{\text{s}}^1$ and $\mu=\mu_{\text{ac}}^2+\mu_{\text{s}}^2$, where $\mu_{\text{ac}}^i\ll\lambda$ and $\mu_{\text{s}}^i\perp\lambda$ for $i\in\{1,2\}$. For all $\Omega'\in\mathcal{B}(\Omega)$ we thus have
      \begin{equation*}
      (\mu_{\text{ac}}^1-\mu_{\text{ac}}^2)(\Omega')=(\mu_{\text{s}}^2-\mu_{\text{s}}^1)(\Omega')=:\tilde{\mu}(\Omega').
      \end{equation*}
      Neither in the left-hand side, nor in the right-hand side we necessarily have a positive measure. However it can quite easily be seen that $(\mu_{\text{ac}}^1-\mu_{\text{ac}}^2)\ll\lambda$, because $\lambda(\Omega')=0$ implies $\mu_{\text{ac}}^i(\Omega')=0$ for $i=1,2$ (as $\mu_{\text{ac}}^i\ll\lambda$), and thus also $(\mu_{\text{ac}}^1-\mu_{\text{ac}}^2)(\Omega')=0$.\\
      \\
      Furthermore there exist $B_1, B_2\in\mathcal{B}(\Omega)$, such that $\mu_{\text{s}}^i(\Omega\setminus B_i)=\lambda(B_i)=0$ for all $i\in\{1,2\}$ and for all $\Omega'\in\mathcal{B}(\Omega)$. Define $\tilde{B}:=B_1\cup B_2$, and note that $\Omega\setminus \tilde{B}\subset \Omega\setminus B_i$ for each $i\in\{1,2\}$. It follows that $0\leqslant \mu_{\text{s}}^i(\Omega\setminus\tilde{B})\leqslant \mu_{\text{s}}^i(\Omega\setminus B_i)=0$, for $i\in\{1,2\}$, thus $\mu_{\text{s}}^i(\Omega\setminus\tilde{B})=0$ and thus
      \begin{equation*}
      (\mu_{\text{s}}^2-\mu_{\text{s}}^1)(\Omega\setminus\tilde{B})=0.
      \end{equation*}
      Also, $0\leqslant \lambda(\tilde{B})=\lambda(B_1)+\lambda(B_2)=0$, and thus
      \begin{equation*}
      \lambda(\tilde{B})=0,
      \end{equation*}
      from which we conclude that $(\mu_{\text{s}}^2-\mu_{\text{s}}^1)\perp\lambda$.\\
      \\
      We now have $\tilde{\mu}\ll\lambda$ and $\tilde{\mu}\perp\lambda$, and will show that this implies $\tilde{\mu}\equiv0$. Use statement (\ref{alt def sing meas Rudin}) from Lemma \ref{alternative def singular measures} to characterize singular measures, by which we have disjoint $A_1, A_2\in\mathcal{B}(\Omega)$ such that $\tilde{\mu}(\Omega')=\tilde{\mu}(\Omega'\cap A_1)$ and $\lambda(\Omega')=\lambda(\Omega'\cap A_2)$ for all $\Omega'\in\mathcal{B}(\Omega)$. For any $\Omega'\in\mathcal{B}(\Omega)$, we have $\lambda(\Omega')=\lambda(\Omega'\setminus A_2)+\lambda(\Omega'\cap A_2)= \lambda(\Omega'\setminus A_2)+\lambda(\Omega')$, which implies $\lambda(\Omega'\setminus A_2)=0$. Because $\tilde{\mu}$ is absolutely continuous w.r.t. $\lambda$, we also have $\tilde{\mu}(\Omega'\setminus A_2)=0$. Thus: $0=\tilde{\mu}(\Omega'\setminus A_2)= \tilde{\mu}\bigl((\Omega'\setminus A_2)\cap A_1\bigr)= \tilde{\mu}(\Omega'\cap A_1)=\tilde{\mu}(\Omega')$.\\
      Since $\tilde{\mu}(\Omega')=0$ for all $\Omega'\in\mathcal{B}(\Omega)$, we have that $\tilde{\mu}\equiv0$ and hence $\mu_{\text{ac}}^1\equiv\mu_{\text{ac}}^2$ and $\mu_{\text{s}}^1\equiv\mu_{\text{s}}^2$.
\end{enumerate}
\end{proof}

\newpage
\chapter{Proof of Theorem \ref{Thm decomp sing}}\label{Appendix Proof decomp sing}
The inspiration for this proof comes from \cite{Koralov}, pp.~45--46, although the theorem presented here is more general than the one stated in \cite{Koralov}.
\begin{proof}
\begin{enumerate}
\item Due to the fact that $\mu_{\text{s}}$ is singular w.r.t. $\lambda$, we are provided two disjoint sets $A_1, A_2\in\mathcal{B}(\Omega)$ such that $\mu_{\text{s}}(\Omega')=\mu_{\text{s}}(\Omega'\cap A_1)$ and $\lambda(\Omega')=\lambda(\Omega'\cap A_2)$ for all $\Omega'\in\mathcal{B}(\Omega)$. Now define the sequence of sets $\{B_n\}_{n\in\mathbb{N}^+}$ given by
\begin{eqnarray*}
B_1 &:=& \{x\in A_1 \,\big|\, \mu_{\text{s}}(x)\geqslant 1\},\\
B_n &:=& \{x\in A_1 \,\big|\, \dfrac{1}{n}\leqslant\mu_{\text{s}}(x)<\dfrac{1}{n-1}\}, \hspace{1 cm}\text{for }n\in\{2,3,\ldots\}.
\end{eqnarray*}
    Each $B_n$ contains only finitely many elements, because $\mu_{\text{s}}$ is a finite measure. It follows that $B:=\bigcup_{n=1}^{\infty}B_n$ is a countable set. Note that $B=\{x\in A_1 \,\big|\, \mu_{\text{s}}(x)>0\}$. Also note that $x\notin B$ implies $\mu_{\text{s}}(x)=0$. Indeed, if $x\in A_1\setminus B$ this is trivial. Furthermore, if $x\notin A_1$ then $\mu_{\text{s}}(x)=\mu_{\text{s}}(\{x\}\cap A_1)=\mu_{\text{s}}(\emptyset)=0$.
\item Let $\mu_{\text{d}}:\mathbb{B}(\Omega)\rightarrow \mathbb{R}^+$ be defined by $\mu_{\text{d}}(\Omega')=\mu_{\text{s}}(\Omega'\cap B)$, and similarly $\mu_{\text{sc}}:\mathbb{B}(\Omega)\rightarrow \mathbb{R}^+$ by $\mu_{\text{sc}}(\Omega')=\mu_{\text{s}}\bigl(\Omega'\cap (\Omega\setminus B)\bigr)$. Note that it follows from this definition that $\mu_{\text{s}}=\mu_{\text{d}}+\mu_{\text{sc}}$.\\
    \\
    It is readily seen that $\mu_{\text{d}}$ is discrete w.r.t. $\lambda$, because $B$ is a countable set of points in $\Omega$, for which we have
    \begin{equation*}
    \mu_{\text{d}}(\Omega\setminus B)= \mu_{\text{d}}\bigl((\Omega\setminus B)\cap B\bigr)=\mu_{\text{d}}(\emptyset)=0,
    \end{equation*}
    and (since $B\subset A_1$)
    \begin{equation*}
    \lambda(B)= \lambda(B\cap A_2)\leqslant \lambda(A_1\cap A_2)=\lambda(\emptyset)=0.
    \end{equation*}
    We now show that $\mu_{\text{sc}}$ is singular continuous w.r.t. $\lambda$. Let $x\in\Omega$ be arbitrary. $\mu_{\text{sc}}(x)=\mu_{\text{s}}\bigl(\{x\}\cap (\Omega\setminus B)\bigr)$. Thus, if $x\notin \Omega\setminus B$ then $\mu_{\text{sc}}(x)=\mu_{\text{sc}}(\emptyset)=0$. On the other hand, if $x\in\Omega\setminus B$ then $\mu_{\text{sc}}(x)=\mu_{\text{s}}(x)=0$ because $x\notin B$ (the last step has been shown above). Thus, $\mu_{\text{sc}}(x)=0$ for all $x\in \Omega$.\\
    Consider the set $A_1$. First of all, $\lambda(A_1)=\lambda(A_1\cap A_2)=\lambda(\emptyset)=0$. Moreover, $\mu_{\text{sc}}(\Omega\setminus A_1) =\mu_{\text{s}}\bigl((\Omega\setminus A_1)\cap(\Omega\setminus B)\bigr)=\mu_{\text{s}}(\Omega\setminus A_1)$, because $B\subset A_1$. Since $\mu_{\text{s}}$ is singular, we have $\mu_{\text{s}}(\Omega\setminus A_1)= \mu_{\text{s}}\bigl((\Omega\setminus A_1)\cap A_1\bigr)=\mu_{\text{s}}(\emptyset)=0$.\\
    We now have that $\mu_{\text{sc}}$ is singular continuous w.r.t. $\lambda$.
\item To prove uniqueness of this decomposition, assume that we have $\mu_{\text{s}}=\mu_{\text{d}}^1+\mu_{\text{sc}}^1$ and $\mu_{\text{s}}=\mu_{\text{d}}^2+\mu_{\text{sc}}^2$, where $\mu_{\text{d}}^i$ is discrete w.r.t. $\lambda$ and $\mu_{\text{sc}}^i$ is singular continuous w.r.t. $\lambda$ for $i\in\{1,2\}$. For all $\Omega'\in\mathcal{B}(\Omega)$ we thus have
      \begin{equation*}
      (\mu_{\text{d}}^1-\mu_{\text{d}}^2)(\Omega')=(\mu_{\text{sc}}^2-\mu_{\text{sc}}^1)(\Omega')=:\tilde{\mu}(\Omega').
      \end{equation*}
      Assume that $\tilde{A}_1$ and $\tilde{A}_2$ are the countable sets, such that $\mu_{\text{d}}^i(\Omega\setminus\tilde{A}_i)=\lambda(\tilde{A}_i)=0$ for $i\in\{1,2\}$. Define $A:=\tilde{A}_1\cup\tilde{A}_2$, which is obviously countable. We have for each $i\in\{1,2\}$, and all $\Omega\in\mathcal{B}(\Omega)$
      \begin{equation*}
      \mu_{\text{d}}^i(\Omega'\cap A)\leqslant \mu_{\text{d}}^i(\Omega')=\mu_{\text{d}}^i(\Omega'\cap A)+\mu_{\text{d}}^i(\Omega'\setminus A)\leqslant \mu_{\text{d}}^i(\Omega'\cap A)+\mu_{\text{d}}^i(\Omega\setminus \tilde{A}_i)=\mu_{\text{d}}^i(\Omega'\cap A),
      \end{equation*}
      and thus $\mu_{\text{d}}^i(\Omega')= \mu_{\text{d}}^i(\Omega'\cap A)$. This implies $\tilde{\mu}(\Omega')=\tilde{\mu}(\Omega'\cap A)$. Note that $\Omega'\cap A$ is a subset of $A$ and is thus a countable collection of points in $\Omega$, say $\{y_1,y_2,\ldots\}$. We can also write $\Omega'\cap A=\cup_{k=1}^{\infty}\{y_k\}$, which is a disjoint union. We thus have (for each $i\in\{1,2\}$): $\mu_{\text{sc}}^i(\Omega'\cap A) = \sum_{k=1}^{\infty}\mu_{\text{sc}}^i(y_k)=0$, because each term of the sum is zero by definition of singular continuous measures. As a result
      \begin{equation*}
      \tilde{\mu}(\Omega')=\tilde{\mu}(\Omega'\cap A)=\mu_{\text{sc}}^2(\Omega'\cap A)-\mu_{\text{sc}}^1(\Omega'\cap A)=0,
      \end{equation*}
      for all $\Omega'\in\mathcal{B}(\Omega)$, by which we have uniqueness of the decomposition.
\end{enumerate}
\end{proof}

\newpage
\chapter{Proof of Lemma \ref{Properties RN derivatives}}\label{Appendix Proof properpties RN deriv}
\section{Proof of Part \ref{Property three measures} of Lemma \ref{Properties RN derivatives}}
This proof is based on \cite{Halmos}, p.~133.
\begin{proof}
By assumption $\nu$ is a positive measure. Note that the Radon-Nikodym Theorem provides the existence of $d\nu/d\mu$. For simplicity of notation, let us write: $d\nu/d\mu=h$. We now first prove that the positivity of $\nu$ implies that $h\geqslant0$ almost everywhere w.r.t. $\mu$.\\
\\
The proof goes by contradiction. Assume there exists an $\tilde{\Omega}\in\mathcal{B}(\Omega)$ satisfying $\mu(\tilde{\Omega})>0$ and $h<0$ $\mu$-almost everywhere on $\tilde{\Omega}$. We introduce the following sets:
\begin{eqnarray}
\nonumber F_0 &:=& \{x\in\Omega\,\big|\,h(x)<0\},\\
\nonumber F_n &:=& \{x\in\Omega\,\big|\,h(x)\leqslant -\dfrac{1}{n}\}, \hspace{1 cm}\text{for all }n\in\mathbb{N}^+.
\end{eqnarray}
Consider the disjoint union $\tilde{\Omega}=(\tilde{\Omega}\setminus F_0) \cup (\tilde{\Omega}\cap F_0)$. From the assumption that $h<0$ $\mu$-almost everywhere in $\tilde{\Omega}$, it follows that $\mu(\tilde{\Omega}\setminus F_0)=0$. Note that since $\tilde{\Omega}$ has strictly positive measure $\mu(\tilde{\Omega})$, we get
\begin{equation*}
0<\mu(\tilde{\Omega})=\mu(\tilde{\Omega}\setminus F_0)+\mu(\tilde{\Omega}\cap F_0)=\mu(\tilde{\Omega}\cap F_0).
\end{equation*}
Furthermore, we have for all $n\in \mathbb{N}^+$
\begin{equation*}
0\leqslant \nu(\tilde{\Omega}\cap F_n)=\int_{\tilde{\Omega}\cap F_n}h d\mu \leqslant \int_{\tilde{\Omega}\cap F_n}-\dfrac{1}{n} d\mu = -\dfrac{1}{n}\mu(\tilde{\Omega}\cap F_n)\leqslant 0,
\end{equation*}
which implies that $\mu(\tilde{\Omega}\cap F_n)=0$ for any $n\in\mathbb{N}^+$.\\
\\
Note that $F_0 = \bigcup_{n=1}^{\infty}F_n$ holds.\footnote{If $\tilde{x}\in \bigcup_{n=1}^{\infty}F_n$ then there exists a specific $\tilde{n}\in\mathbb{N}^+$ such that $\tilde{x}\in F_{\tilde{n}}$. The following holds: $h(\tilde{x})\leqslant-1/\tilde{n}<0$, so $\tilde{x}\in F_0$. If $\tilde{x}\in F_0$ then $h(\tilde{x})$ is strictly negative, so $h(\tilde{x})\leqslant -1/\tilde{n}<0$ for $\tilde{n}:=\bigl\lceil 1/h(\tilde{x}) \bigr\rceil$. Thus: $\tilde{x}\in F_{\tilde{n}}\subset \bigcup_{n=1}^{\infty}F_n$.} Since $\mu$ is a positive measure, this leads to the following inequality:
\begin{equation*}
\mu(\tilde{\Omega}\cap F_0)=\mu\Biggr(\tilde{\Omega}\cap\Bigl(\bigcup_{n=1}^{\infty}F_n\Bigr)\Biggr)\leqslant \sum_{n=1}^{\infty}\mu(\tilde{\Omega}\cap F_n)=0,
\end{equation*}
where the last step follows from the fact that $\mu(\tilde{\Omega}\cap F_n)=0$ for any $n\in\mathbb{N}^+$. We now have a contradiction with the statement $\mu(\tilde{\Omega}\cap F_0)>0$. Thus, there is no such set $\tilde{\Omega}$ of positive measure, on which $h<0$ almost everywhere w.r.t. $\mu$. This implies that $h\geqslant0$ $\mu$-almost everywhere in $\Omega$.\\
\\
Now let $h$ satisfy $h(x)\geqslant0$ for all $x\in\Omega$ (note that it is possible to choose such a representant). Every nonnegative measurable function is the (pointwise) limit of an increasing sequence of nonnegative simple functions; assume that $\{h_k\}_{k=1}^{\infty}$ is such sequence. Simple functions are always measurable, and therefore the following is true:
\begin{equation*}
\lim_{k\rightarrow\infty}\int_{\Omega}h_kd\mu=\int_{\Omega}hd\mu,
\end{equation*}
see \cite{Halmos}, p.~112.\\
This limit is also valid if the domain of integration is any arbitrary $\Omega'\in\mathcal{B}(\Omega)$, as we will show now. Let $\mathbf{1}_{\Omega'}$ denote the characteristic function of $\Omega'$. The sequence $\{h_k\mathbf{1}_{\Omega'}\}$ is increasing and consists of nonnegative, measurable functions (products of two measurable functions, are again measurable). Furthermore $\mathbf{1}_{\Omega'}h$ is (by definition of $\{h_k\}$) the pointwise limit of the sequence $\{h_k\mathbf{1}_{\Omega'}\}$. Consequently (cf. \cite{Halmos}, p.~112): $\lim_{k\rightarrow\infty}\int_{\Omega}h_k\mathbf{1}_{\Omega'}d\mu=\int_{\Omega}h\mathbf{1}_{\Omega'}d\mu$, which can also be written as
\begin{equation*}
\lim_{k\rightarrow\infty}\int_{\Omega'}h_kd\mu=\int_{\Omega'}hd\mu, \hspace{1 cm}\text{for all }\Omega'\in\mathcal{B}(\Omega).
\end{equation*}
For convenience, write $d\mu/d\lambda=p$ (the existence of which is guaranteed by the Radon-Nikodym Theorem). Following the same arguments as for $h$, we can conclude that it is possible to take a representative of $p$, such that $p(x)\geqslant0$ for all $x\in\Omega$. The Radon-Nikodym Theorem furthermore ensures the measurability of $p$. The sequence $\{h_kp\}_{k=1}^{\infty}$ consists thus of measurable functions, as each element is the product of two measurable functions. Both $p$ and $h_k$ (for all $k\in\mathbb{N}^+$) are nonnegative, so the same holds for the elements of the sequence. The sequence is increasing, because $\{h_k\}$ is increasing. As a result, we can conclude that $\lim_{k\rightarrow\infty}\int_{\Omega}h_kpd\lambda=\int_{\Omega}hpd\lambda$ (cf. \cite{Halmos}, p.~112). Multiplication with $\mathbf{1}_{\Omega'}$ and some further reasoning yields, like before, that
\begin{equation*}
\lim_{k\rightarrow\infty}\int_{\Omega'}h_kpd\lambda=\int_{\Omega'}hpd\lambda, \hspace{1 cm}\text{for all }\Omega'\in\mathcal{B}(\Omega).
\end{equation*}
The functions $h_k$ are simple functions, so for each $k\in\mathbb{N}^+$ there exists a finite collection $\{A_i^{(k)}\}_{i=1}^{N^{(k)}}$ of mutually disjoint measurable subsets of $\Omega$, where $N^{(k)}\in\mathbb{N}^+$, such that
\begin{equation*}
h_k(x)=\sum_{i=1}^{N^{(k)}}\alpha_i^{(k)}\mathbf{1}_{A_i^{(k)}}(x), \hspace{1 cm}\text{for all }x\in\Omega,
\end{equation*}
for some collection of coefficients $\{\alpha_i^{(k)}\}_{i=1}^{N^{(k)}}\subset \mathbb{R}^+$. For any measurable set $A$, the following identity holds:
\begin{equation*}
\int_{\Omega'}\mathbf{1}_Ad\mu=\mu(\Omega'\cap A)=\int_{\Omega'\cap A}pd\lambda=\int_{\Omega'}\mathbf{1}_Apd\lambda, \hspace{1 cm}\text{for all }\Omega'\in\mathcal{B}(\Omega).
\end{equation*}
Therefore
\begin{eqnarray}
\nonumber \int_{\Omega'}h_kd\mu &=& \int_{\Omega'}\sum_{i=1}^{N^{(k)}}\alpha_i^{(k)}\mathbf{1}_{A_i^{(k)}}d\mu\\
\nonumber &=& \sum_{i=1}^{N^{(k)}}\alpha_i^{(k)}\int_{\Omega'}\mathbf{1}_{A_i^{(k)}}d\mu\\
\nonumber &=& \sum_{i=1}^{N^{(k)}}\alpha_i^{(k)}\int_{\Omega'}\mathbf{1}_{A_i^{(k)}}pd\lambda\\
\nonumber &=& \int_{\Omega'}\sum_{i=1}^{N^{(k)}}\alpha_i^{(k)}\mathbf{1}_{A_i^{(k)}}pd\lambda\\
\nonumber &=& \int_{\Omega'}h_kpd\lambda, \hspace{1 cm}\text{for all }\Omega'\in\mathcal{B}(\Omega)\text{ and for all }k\in\mathbb{N}^+.
\end{eqnarray}
$\int_{\Omega'}hd\mu$ and $\int_{\Omega'}hpd\lambda$ are thus limit values of the same sequence, which means they must be equal: $\nu(\Omega')=\int_{\Omega'}hd\mu=\int_{\Omega'}hpd\lambda$. From this, it follows that the integrand $hp$ is in fact the Radon-Nikodym derivative $d\nu/d\lambda$, which finishes the proof.
\end{proof}

\section{Proof of Part \ref{Property integration of g} of Lemma \ref{Properties RN derivatives}}
\begin{proof}
Define the positive part $g^+$ and the negative part $g^-$ of the function $g$ by
\begin{eqnarray}
\nonumber g^+(x)&:=& \max\{g(x),0\},\\
\nonumber g^-(x)&:=& -\min\{g(x),0\}.
\end{eqnarray}
These two functions are measurable, because $g$ is measurable (see \cite{Rudin}, p.~15).\\
\\
Assume for the moment that neither of $g^+$ and $g^-$ is identically zero.\\
\\
If we define for all $\Omega'\in\mathcal{B}(\Omega)$ the following:
\begin{equation*}
\gamma^{+/-}(\Omega'):= \int_{\Omega'}g^{+/-}d\mu,
\end{equation*}
then $\gamma^+$ and $\gamma^-$ are measures, e.g. due to \cite{Rudin} (p.~23). Note that by the superscript `$+/-$', we indicate that the statement holds in both the `$+$' case and the `$-$' case. Both $g^+$ and $g^-$ are nonnegative, so $\gamma^+$ and $\gamma^-$ are positive measures. $g^{+/-}(\Omega)>0$ follows from the assumption that $g^+$ and $g^-$ are not identically zero. Since $g\in L_{\mu}^1(\Omega)$, it follows that $\gamma^{+/-}(\Omega)<\infty$. The Radon-Nikodym Theorem can now be applied to the measures $\gamma^+$ and $\gamma^-$, and provides the existence of $d\gamma^{+/-}/d\mu$. The uniqueness of Radon-Nikodym derivatives implies that $d\gamma^{+/-}/d\mu=g^{+/-}$.\\
Now apply Part \ref{Property three measures} of Lemma \ref{Properties RN derivatives}, by setting $\nu=\gamma^+$ and $\nu=\gamma^-$ subsequently. Note that $\gamma^{+/-}\ll \mu$ is guaranteed by defining $\gamma^{+/-}$ as an integral with respect to $\mu$ (cf. \cite{Halmos}, p.~104). The result is
\begin{equation*}
\int_{\Omega'}g^{+/-}d\mu = \gamma^{+/-}(\Omega')= \int_{\Omega'}\dfrac{d\gamma^{+/-}}{d\lambda}d\lambda=
\int_{\Omega'}\dfrac{d\gamma^{+/-}}{d\mu}\dfrac{d\mu}{d\lambda}d\lambda = \int_{\Omega'}g^{+/-}\dfrac{d\mu}{d\lambda}d\lambda,
\end{equation*}
for all $\Omega'\in\mathcal{B}(\Omega)$, where Part \ref{Property three measures} has been used in the third equality.\\
\\
Note that
\begin{equation*}
\int_{\Omega'}g^{+/-}d\mu = \int_{\Omega'}g^{+/-}\dfrac{d\mu}{d\lambda}d\lambda
\end{equation*}
is automatically satisfied if $g^{+/-}\equiv0$. We can thus proceed without distinguishing between $g^{+/-}\equiv0$ and $g^{+/-}\not\equiv0$.\\
\\
Finally, we conclude that for all $\Omega'\in\mathcal{B}(\Omega)$
\begin{eqnarray}
\nonumber \int_{\Omega'}gd\mu &=& \int_{\Omega'}(g^+-g^-)d\mu\\
\nonumber &=& \int_{\Omega'}g^+d\mu-\int_{\Omega'}g^-d\mu\\
\nonumber &=& \int_{\Omega'}g^+\dfrac{d\mu}{d\lambda}d\lambda-\int_{\Omega'}g^-\dfrac{d\mu}{d\lambda}d\lambda\\
\nonumber &=& \int_{\Omega'}(g^+-g^-)\dfrac{d\mu}{d\lambda}d\lambda\\
\nonumber &=& \int_{\Omega'}g\dfrac{d\mu}{d\lambda}d\lambda,
\end{eqnarray}
by which we have the desired statement.
\end{proof}

\section{Proof of Part \ref{Property inverse} of Lemma \ref{Properties RN derivatives}}
\begin{proof}
Set $\nu=\lambda$ and apply Part \ref{Property three measures}, which states
\begin{equation*}
\dfrac{d\lambda}{d\lambda}=\dfrac{d\lambda}{d\mu}\dfrac{d\mu}{d\lambda}.
\end{equation*}
We now claim that $d\lambda/d\lambda\equiv1$, which can easily be shown. Obviously $\lambda\ll\lambda$, thus the Radon-Nikodym Theorem yields the following:
\begin{equation*}
\lambda(\Omega')=\int_{\Omega'}\dfrac{d\lambda}{d\lambda}d\lambda, \hspace{1 cm}\text{for all }\Omega'\in\mathcal{B}(\Omega).
\end{equation*}
However, also $\lambda(\Omega')=\int_{\Omega}\mathbf{1}_{\Omega'}d\lambda=\int_{\Omega'}1d\lambda$ holds. The uniqueness of Radon-Nikodym derivatives implies that $d\lambda/d\lambda\equiv1$. We thus have
\begin{equation*}
1=\dfrac{d\lambda}{d\mu}\dfrac{d\mu}{d\lambda},
\end{equation*}
and hence
\begin{equation*}
\dfrac{d\mu}{d\lambda}=\Bigl(\dfrac{d\lambda}{d\mu}\Bigr)^{-1}.
\end{equation*}
\end{proof}

\section{Proof of Part \ref{Property Cartesian product} of Lemma \ref{Properties RN derivatives}}
\begin{proof}

Let $\Omega'\in\mathcal{B}(\Omega_1\times\Omega_2)$ be arbitrary. Define the following set:
\begin{equation*}
\Omega'_1 := \{x_1\in\Omega_1 \,\big|\, (x_1,x_2)\in\Omega' \text{ for some } x_2\in\Omega_2\}.
\end{equation*}
Also define
\begin{equation*}
\Omega'_{x_1} := \{x_2\in\Omega_2 \,\big|\, (x_1,x_2)\in\Omega'\},
\end{equation*}
for any choice of $x_1\in\Omega_1$. The set $\Omega'_{x_1}$ is called the $\Omega_1$-\textit{section} of $\Omega'$ determined by $x_1$ (cf. \cite{Halmos}, p.~141). Note that $\Omega'_{x_1}$ is a subset of $\Omega_2$. In fact, $x_1$ can be regarded as a parameter here. It can be proved that every section of a measurable set is a measurable set (see \cite{Halmos}, p.~141).\\
\\
We first prove that $\nu_1\otimes\nu_2\ll \mu_1\otimes\mu_2$.\\
Assume that $\Omega'\in\mathcal{B}(\Omega_1\times\Omega_2)$ is such that $(\mu_1\otimes\mu_2)(\Omega')=0$. By definition (see \cite{Halmos}, p.~144) of the product measure $\mu_1\otimes\mu_2$ we have
\begin{equation*}
(\mu_1\otimes\mu_2)(\Omega') = \int_{\Omega'_1}\mu_2(\Omega'_{x_1})d\mu_1(x_1) = 0.
\end{equation*}
As a consequence (cf. \cite{Halmos}, p.~147)
\begin{equation*}
\mu_2(\Omega'_{x_1}) = 0, \hspace{1 cm}\text{for }\mu_1\text{-a.e. }x_1\in\Omega'_1.
\end{equation*}
Define
\begin{equation*}
\Omega_{\mu_2}^0:=\{x_1\in\Omega'_1\,\big|\,\mu_2(\Omega'_{x_1})\neq0\}.
\end{equation*}
Let $\tilde{x}_1$ be such that $\tilde{x}_1\notin \Omega_{\mu_2}^0$, then of course $\mu_2(\Omega'_{\tilde{x}_1})=0$. From the hypothesis that $\nu_2\ll\mu_2$, it follows that $\nu_2(\Omega'_{\tilde{x}_1})=0$. If we define
\begin{equation*}
\Omega_{\nu_2}^0:=\{x_1\in\Omega'_1\,\big|\,\nu_2(\Omega'_{x_1})\neq0\},
\end{equation*}
then we have thus proved that $\Omega_{\nu_2}^0 \subset \Omega_{\mu_2}^0$.\\
Since $\mu_2(\Omega'_{x_1}) = 0$ for $\mu_1$-a.e. $x_1\in\Omega'_1$, we clearly have that $\mu_1(\Omega_{\mu_2}^0)=0$. By hypothesis of the lemma $\nu_1$ is absolutely continuous w.r.t. $\mu_1$, and thus $\mu_1(\Omega_{\mu_2}^0) = 0$ implies $\nu_1(\Omega_{\mu_2}^0) = 0$. We now use $\Omega_{\nu_2}^0 \subset \Omega_{\mu_2}^0$, to conclude
\begin{equation*}
\nu_1(\Omega_{\nu_2}^0)\leqslant\nu_1(\Omega_{\mu_2}^0) = 0,
\end{equation*}
by which we find that $\nu_1(\Omega_{\nu_2}^0)= 0$. This statement can also be written as
\begin{equation*}
\nu_2(\Omega'_{x_1}) = 0, \hspace{1 cm}\text{for }\nu_1\text{-a.e. }x_1\in\Omega'_1.
\end{equation*}
It follows trivially that
\begin{equation*}
(\nu_1\otimes\nu_2)(\Omega') = \int_{\Omega'_1}\nu_2(\Omega'_{x_1})d\nu_1(x_1) = 0.
\end{equation*}
We have thus shown that $\nu_1\otimes\nu_2\ll \mu_1\otimes\mu_2$.\\
\\
We can now apply the Radon-Nikodym Theorem, which provides the existence of the unique Radon-Nikodym derivative
\begin{equation*}
\dfrac{d(\nu_1\otimes\nu_2)}{d(\mu_1\otimes\mu_2)}.
\end{equation*}
Let $\Omega'\in\mathcal{B}(\Omega_1\times\Omega_2)$ be arbitrary, and let $\mathbf{1}_{A}$ denote the characteristic function of the set $A$, where $A$ is a set in $\mathcal{B}(\Omega_1)$, $\mathcal{B}(\Omega_2)$ or $\mathcal{B}(\Omega_1\times\Omega_2)$. By hypothesis of the lemma the Radon-Nikodym derivatives $d\nu_1/d\mu_1$ and $d\nu_2/d\mu_2$ exist. We can now perform the following calculations:
\begin{eqnarray*}
(\nu_1\otimes\nu_2)(\Omega') &=& \int_{\Omega'_1}\nu_2(\Omega'_{x_1})d\nu_1(x_1)\\
&=& \int_{\Omega'_1}\Biggl(\int_{\Omega'_{x_1}}d\nu_2(x_2)\Biggr)d\nu_1(x_1)\\
&=& \int_{\Omega'_1}\Biggl(\int_{\Omega'_{x_1}}\dfrac{d\nu_2}{d\mu_2}d\mu_2(x_2)\Biggr)\dfrac{d\nu_1}{d\mu_1}d\mu_1(x_1)\\
&=& \int_{\Omega'_1}\int_{\Omega'_{x_1}}\Biggl(\dfrac{d\nu_1}{d\mu_1}\dfrac{d\nu_2}{d\mu_2}\Biggr)d\mu_2(x_2)d\mu_1(x_1)\\
&=& \int_{\Omega_1}\int_{\Omega_2}\Biggl(\dfrac{d\nu_1}{d\mu_1}\dfrac{d\nu_2}{d\mu_2}\mathbf{1}_{\Omega'_1}(x_1)\,\mathbf{1}_{\Omega'_{x_1}}(x_2)\Biggr)d\mu_2(x_2)d\mu_1(x_1)\\
&=& \int_{\Omega_1\times\Omega_2}\Biggl(\dfrac{d\nu_1}{d\mu_1}\dfrac{d\nu_2}{d\mu_2}\mathbf{1}_{\Omega'_1}(x_1)\,\mathbf{1}_{\Omega'_{x_1}}(x_2)\Biggr)d(\mu_1\otimes\mu_2)(x_1,x_2),
\end{eqnarray*}
where the last step follows from \cite{Halmos}, p.~147.\\
\\
Let $(\tilde{x}_1,\tilde{x}_2)$ be an element of $\Omega_1\times\Omega_2$. By definition of $\Omega'_{\tilde{x_1}}$, we have that $\mathbf{1}_{\Omega'_{\tilde{x}_1}}(\tilde{x}_2)=1$ if and only if $(\tilde{x}_1,\tilde{x}_2)\in\Omega'$.\\
If $(\tilde{x}_1,\tilde{x}_2)\in\Omega'$, then also $\mathbf{1}_{\Omega'_1}(\tilde{x}_1)=1$, by definition of $\Omega'_1$. We thus have
\begin{equation*}
\mathbf{1}_{\Omega'_1}(\tilde{x}_1)\,\mathbf{1}_{\Omega'_{\tilde{x}_1}}(\tilde{x}_2)=1.
\end{equation*}
On the other hand, if $(\tilde{x}_1,\tilde{x}_2)\notin\Omega'$ then $\mathbf{1}_{\Omega'_{\tilde{x}_1}}(\tilde{x}_2)=0$, and thus
\begin{equation*}
\mathbf{1}_{\Omega'_1}(\tilde{x}_1)\,\mathbf{1}_{\Omega'_{\tilde{x}_1}}(\tilde{x}_2)=0.
\end{equation*}
We conclude that for all $(x_1,x_2)\in\Omega_1\times\Omega_2$ the following identity holds:
\begin{equation*}
\mathbf{1}_{\Omega'_1}(x_1)\,\mathbf{1}_{\Omega'_{x_1}}(x_2)=\mathbf{1}_{\Omega'}(x_1,x_2).
\end{equation*}
This means that we can write
\begin{eqnarray*}
(\nu_1\otimes\nu_2)(\Omega') &=& \int_{\Omega_1\times\Omega_2}\Biggl(\dfrac{d\nu_1}{d\mu_1}\dfrac{d\nu_2}{d\mu_2}\mathbf{1}_{\Omega'}(x_1,x_2)\Biggr)d(\mu_1\otimes\mu_2)(x_1,x_2)\\
&=& \int_{\Omega'}\Biggl(\dfrac{d\nu_1}{d\mu_1}\dfrac{d\nu_2}{d\mu_2}\Biggr)d(\mu_1\otimes\mu_2).
\end{eqnarray*}
Uniqueness of the Radon-Nikodym derivative now makes sure that
\begin{equation*}
\dfrac{d(\nu_1\otimes\nu_2)}{d(\mu_1\otimes\mu_2)}=\dfrac{d\nu_1}{d\mu_1}\dfrac{d\nu_2}{d\mu_2},
\end{equation*}
which finishes the proof.
\end{proof}

\newpage
\chapter{Derivation of the entropy density for an ideal gas}\label{Appendix entropy ideal gas}
The following derivation is due to \cite{FonsvdVen}. More details can be found in \cite{Bowen1989} (pp.~16--19 and 96--102).\\
\\
We consider a gas that is compressible, non-viscous and thermal (i.e. the temperature influences its behaviour). The set of independent variables is $Q:=\{\rho, T, j\}$, where $\rho$ is the density, $T$ the absolute temperature and $j=\nabla T$, the temperature gradient. All other dependent variables are functions of $Q$. We assume the following conservation laws:
\begin{enumerate}
  \item Balance of mass:
        \begin{equation}\label{balance of mass appendix}
        \dfrac{\partial \rho}{\partial t}+\nabla\cdot(\rho v)=0.
        \end{equation}
  \item Balance of energy:
        \begin{equation}\label{balance of energy appendix}
        \rho\dfrac{D\varepsilon}{Dt}=-p\nabla\cdot v+\nabla\cdot q + \rho r.
        \end{equation}
  \item Entropy inequality (Clausius-Duhem):
        \begin{equation}\label{entropy inequality appendix}
        \rho\dfrac{D\eta}{Dt}-\nabla\cdot \Bigl(\dfrac{q}{T}\Bigr)-\dfrac{\rho r}{T}\geqslant 0.
        \end{equation}
\end{enumerate}
In the balance of energy, $\varepsilon$ is the internal energy density. Furthermore, $q$ denotes the heat flux and $r$ is the heat supply density. The entropy inequality is the local version of the Clausius-Duhem Inequality. It is precisely the inequality given by Postulate \ref{postulate local entropy inequality}, for one component. Moreover $j_{\eta}=q/T$ has been taken, as was also suggested by (1.6.9) in \cite{Bowen} (the results on p.~29 have to be reduced to one component to see this).\\
Note that we require $T>0$ for the entropy inequality to make sense.\\
\\
Since (\ref{balance of mass appendix}) can also be written as
\begin{eqnarray}
\nonumber 0 &=& \dfrac{\partial \rho}{\partial t}+\rho\nabla\cdot v + \nabla\rho\cdot v\\
&=& \dfrac{D \rho}{Dt}+\rho\nabla\cdot v,
\end{eqnarray}
it follows that $\nabla\cdot v = -\dfrac{1}{\rho}\dfrac{D\rho}{Dt}$.\\
\\
Note that
\begin{eqnarray}
\nonumber \rho \dfrac{D\varepsilon}{Dt}&=&\rho\Bigl(\dfrac{\partial \varepsilon}{\partial t}+\nabla\varepsilon\cdot v \Bigr)\\
\nonumber &=& \rho \dfrac{\partial \varepsilon}{\partial t}+\rho \nabla\varepsilon\cdot v + \varepsilon\underbrace{\Bigl( \dfrac{\partial \rho}{\partial t}+\nabla\cdot(\rho v)\Bigr)}_{=0}\\
\nonumber &=& \rho \dfrac{\partial \varepsilon}{\partial t}+ \varepsilon\dfrac{\partial \rho}{\partial t} + \nabla\varepsilon\cdot(\rho v) + \varepsilon\nabla\cdot(\rho v)\\
\nonumber &=& \dfrac{\partial}{\partial t}(\rho\varepsilon) +\nabla\cdot(\rho v\varepsilon),
\end{eqnarray}
and a similar identity holds if we take $\eta$ instead of $\varepsilon$.\\
\\
Let $F$ denote the \textit{Helmholtz free energy}, defined by $F:=\varepsilon -\eta T$. Taking the derivative $D/Dt$ in the definition of $F$ and multiplicating afterwards by $\rho$, leads to
\begin{equation}\label{rho times derivative Helmholtz free energy appendix}
\rho \dfrac{D\varepsilon}{Dt}=\rho \dfrac{DF}{Dt} + \rho \eta \dfrac{DT}{Dt}+\rho T \dfrac{D\eta}{Dt}=-p\nabla\cdot v+\nabla\cdot q + \rho r.
\end{equation}
The latter equality is due to the balance of energy (\ref{balance of energy appendix}). Equation (\ref{rho times derivative Helmholtz free energy appendix}) can also be written as
\begin{equation}\label{energy balance wrt free energy appendix}
\rho T \dfrac{D\eta}{Dt}=-\rho \dfrac{DF}{Dt} - \rho \eta \dfrac{DT}{Dt}-p\nabla\cdot v+\nabla\cdot q + \rho r.
\end{equation}
A suitable combination of the entropy inequality (\ref{entropy inequality appendix}) and (\ref{energy balance wrt free energy appendix}), $\nabla\cdot v = -\dfrac{1}{\rho}\dfrac{D\rho}{Dt}$ and of the fact that $T>0$, yields
\begin{eqnarray}
\nonumber -\rho \dfrac{DF}{Dt} - \rho \eta \dfrac{DT}{Dt}+\dfrac{p}{\rho}\dfrac{D\rho}{Dt}&=& \rho T \dfrac{D\eta}{Dt}-\nabla\cdot q - \rho r\\
\nonumber &=& \rho T \dfrac{D\eta}{Dt}-\nabla\cdot \Bigl(\dfrac{qT}{T}\Bigr) - \rho r\\
\nonumber &=& \rho T \dfrac{D\eta}{Dt}-T\nabla\cdot \Bigl(\dfrac{q}{T}\Bigr)- \dfrac{q}{T}\cdot\nabla T - \rho r\\
\nonumber &=& T\underbrace{\Bigl(\rho \dfrac{D\eta}{Dt}-\nabla\cdot \Bigl(\dfrac{q}{T}\Bigr)- \dfrac{\rho r}{T}\Bigr)}_{\geqslant0}- \dfrac{q}{T}\cdot\nabla T\\
&\geqslant& - \dfrac{q}{T}\cdot\nabla T,
\end{eqnarray}
or
\begin{equation}\label{entropy energy ineq wrt free energy appendix}
-\rho \dfrac{DF}{Dt} - \rho \eta \dfrac{DT}{Dt}+\dfrac{p}{\rho}\dfrac{D\rho}{Dt}+ \dfrac{1}{T}q\cdot\nabla T\geqslant 0.
\end{equation}
Since $F$ is a depends only on $Q=\{\rho, T, j\}$, we have
\begin{equation*}
\dfrac{DF}{Dt}=\dfrac{\partial F}{\partial \rho}\dfrac{D\rho}{Dt}+ \dfrac{\partial F}{\partial T}\dfrac{DT}{Dt}+ \dfrac{\partial F}{\partial j}\dfrac{Dj}{Dt},
\end{equation*}
by which (\ref{entropy energy ineq wrt free energy appendix}) transforms into
\begin{equation*}
\Bigl(\dfrac{p}{\rho}-\rho\dfrac{\partial F}{\partial \rho} \Bigr)\dfrac{D\rho}{Dt}-\rho\Bigl(\eta+\dfrac{\partial F}{\partial T} \Bigr)\dfrac{DT}{Dt}-\rho\dfrac{\partial F}{\partial j}\dfrac{Dj}{Dt}+\dfrac{1}{T}q\cdot\nabla T\geqslant 0.
\end{equation*}
The coefficients of $D\rho/Dt$, $DT/Dt$ and $Dj/Dt$, and $\dfrac{1}{T}q\cdot\nabla T$ are independent of $D\rho/Dt$, $DT/Dt$ and $Dj/Dt$. Furthermore, $D\rho/Dt$, $DT/Dt$ and $Dj/Dt$ can be chosen arbitrarily (pointwise), and thus the following \textit{constitutive equations} must hold
\begin{equation}\label{constitutive equations appendix}
p=\rho^2\dfrac{\partial F}{\partial \rho},\hspace{1 cm} \eta=-\dfrac{\partial F}{\partial T},\hspace{1 cm}\text{and}\hspace{1 cm} \dfrac{\partial F}{\partial j}=0,
\end{equation}
and since $T>0$ also
\begin{equation}\label{q in nabla T positive}
q\cdot \nabla T\geqslant 0.
\end{equation}
\begin{remark}
It follows eventually from (\ref{q in nabla T positive}) that $q=\lambda(\rho,T,j)\nabla T$, where $\lambda\geqslant0$. If \textit{linear thermal conduction} is assumed (that is, $q$ is linear w.r.t. $j$), then $\lambda=\lambda(\rho, T)$ and thus Fourier's Law is obtained. However, this result is not so important in the sequel.
\end{remark}
If we want a more explicit form of the constitutive equations in (\ref{constitutive equations appendix}), we are required to give also an explicit choice of $F$. It is clear from the third constitutive equation in (\ref{constitutive equations appendix}) that
\begin{equation*}
F=F(\rho, T).
\end{equation*}
\begin{definition}[Ideal gas]\label{definition ideal gas}
A gas is called \textit{ideal gas}, if
\begin{equation*}
F(\rho, T):=c(T-T_0)-cT\log \Bigl(\dfrac{T}{T_0} \Bigr)+RT\log\Bigl(\dfrac{\rho}{\rho_0}\Bigr),
\end{equation*}
where $T_0$ and $\rho_0$ are an arbitrary reference temperature and density, $c$ is the specific heat, and $R$ is the universal gas constant.
\end{definition}
In fact, $c=c_V$ is the specific heat at constant volume, which can only be a function of $T$ (cf. \cite{Zemansky}, p.~114). We have made the assumption that $c_V$ is constant. In nature, this assumption is only valid over wide temperature ranges for monoatomic gases; see \cite{Zemansky}, pp.~114--115. For the sake of clarity, we will not go into details for non-constant specific heat $c_V=c_V(T)$.\\
\\
For an ideal gas, it follows from (\ref{constitutive equations appendix}) that
\begin{equation*}
p=R\rho T,\hspace{1 cm} \text{and}\hspace{1 cm} \eta=c\log\Bigl(\dfrac{T}{T_0} \Bigr)-R\log\Bigl(\dfrac{\rho}{\rho_0}\Bigr).
\end{equation*}
In $p=R\rho T$ we recognize the ideal gas law. Furthermore, note that $\varepsilon$ can be determined explicitly using $\varepsilon=F+\eta T$. The internal energy is a function of the temperature only: $\varepsilon=\varepsilon(T)=c(T-T_0)$.

\newpage
\chapter{Modifications in the proofs of Theorem \ref{Thm existence time-discrete} and Corollary \ref{corollary conservation of mass discretized}}\label{Appendix modification proofs}
We describe the details that need to be adapted in the proofs of Theorem \ref{Thm existence time-discrete} (global existence of the time-discrete solution) and Corollary \ref{corollary conservation of mass discretized} (conservation of mass) to make them compatible with the new assumptions made in Section \ref{section relax conditions on motion mapping}.

\section{The proof of Theorem \ref{Thm existence time-discrete}}
In Part \ref{Thm existence time-discrete Part Abs Cont} of the proof of Theorem \ref{Thm existence time-discrete}, the identity $(\chi^{\alpha}_n)^{-1}(\Omega)=\Omega$ is used. This identity holds if the motion mapping is invertible. Using the pre-image, we only have that
\begin{equation*}
(\chi^{\alpha}_n)^{-1}(\Omega)\subset\Omega.
\end{equation*}
We modify the proof that the absolutely continuous part $\mu^{\alpha}_{\text{ac},n+1}$ is finite accordingly, by inserting an inequality:
\begin{equation*}
\mu^{\alpha}_{\text{ac},n+1}(\Omega)=\mu^{\alpha}_{\text{ac},n}\bigl( (\chi^{\alpha}_n)^{-1}(\Omega) \bigr)\leqslant\mu^{\alpha}_{\text{ac},n}(\Omega)<\infty.
\end{equation*}
Consider the arguments to prove that $\mu^{\alpha}_{\text{d},n+1}$ is finite in Part \ref{Thm existence time-discrete Part Discrete} of the proof of Theorem \ref{Thm existence time-discrete}. There, we used that $\{x_i\}_{i\in\mathcal{J}}\subset \Omega$ implies $\{\chi^{\alpha}_n(x_{i})\}_{i\in\mathcal{J}}\subset \Omega$ due to the fact that the motion mapping is a homeomorphism mapping from $\Omega$ to $\Omega$. Note that $\{x_i\}_{i\in\mathcal{J}}\subset \supp \mu_n^{\alpha}$ holds. In Section \ref{section relax conditions on motion mapping} we require that $\{\chi^{\alpha}_n\big|_{\supp\mu_n^{\alpha}}$ maps from $\supp \mu_n^{\alpha}$ to $\Omega$. Thus $\{\chi^{\alpha}_n(x_{i})\}_{i\in\mathcal{J}}\subset \Omega$ is still true.\\
\\
Regarding Part \ref{Thm existence time-discrete Part Singular Continuous} of the proof, note that
\begin{equation*}
\Omega\setminus B_n= \bigl(\chi^{\alpha}_n\bigr)^{-1}\Bigl(\chi^{\alpha}_n\bigl(\Omega\setminus B_n\bigr)\Bigr)
\end{equation*}
no longer holds if pre-images are used. Instead we have that
\begin{equation*}
\Omega\setminus B_n\subset \bigl(\chi^{\alpha}_n\bigr)^{-1}\Bigl(\chi^{\alpha}_n\bigl(\Omega\setminus B_n\bigr)\Bigr).
\end{equation*}
We need to prove that if
\begin{equation*}
\mu^{\alpha}_{\text{sc},n}\bigl(\Omega\setminus B_n\bigr) = 0
\end{equation*}
holds, then
\begin{equation*}
\mu^{\alpha}_{\text{sc},n+1}\Bigl(\Omega\setminus \chi^{\alpha}_n\bigl(B_n\bigr)\Bigr) = 0.
\end{equation*}
Since
\begin{equation*}
B_n\subset (\chi^{\alpha}_n)^{-1}\bigl(\chi_n^{\alpha}(B_n)\bigr),
\end{equation*}
we have that
\begin{equation*}
\Omega\setminus(\chi^{\alpha}_n)^{-1}\bigl(\chi_n^{\alpha}(B_n)\bigr)\subset\Omega\setminus B_n.
\end{equation*}
It follows that
\begin{equation*}
0=\mu^{\alpha}_{\text{sc},n}(\Omega\setminus B_n)\geqslant \mu^{\alpha}_{\text{sc},n}\Bigl(\Omega\setminus(\chi^{\alpha}_n)^{-1}\bigl(\chi_n^{\alpha}(B_n)\bigr)\Bigr),
\end{equation*}
thus
\begin{equation*}
\mu^{\alpha}_{\text{sc},n}\Bigl(\Omega\setminus(\chi^{\alpha}_n)^{-1}\bigl(\chi_n^{\alpha}(B_n)\bigr)\Bigr)=0.
\end{equation*}
Now use that
\begin{equation*}
(\chi^{\alpha}_n)^{-1}(\Omega)\subset\Omega,
\end{equation*}
and
\begin{equation*}
(\chi^{\alpha}_n)^{-1}(\Omega)\setminus (\chi^{\alpha}_n)^{-1}\bigl(\chi_n^{\alpha}(B_n)\bigr) =(\chi^{\alpha}_n)^{-1}\Bigl(\Omega\setminus \chi_n^{\alpha}(B_n)\Bigr),
\end{equation*}
to obtain
\begin{eqnarray*}
\mu^{\alpha}_{\text{sc},n}\Bigl(\Omega\setminus(\chi^{\alpha}_n)^{-1}\bigl(\chi_n^{\alpha}(B_n)\bigr)\Bigr)&\geqslant& \mu^{\alpha}_{\text{sc},n}\Bigl((\chi^{\alpha}_n)^{-1}(\Omega)\setminus(\chi^{\alpha}_n)^{-1}\bigl(\chi_n^{\alpha}(B_n)\bigr)\Bigr)\\
&=& \mu^{\alpha}_{\text{sc},n}\Bigl((\chi^{\alpha}_n)^{-1}\Bigl(\Omega\setminus \chi_n^{\alpha}(B_n)\Bigr) \Bigr)\\
&=& \mu^{\alpha}_{\text{sc},n+1}\Bigl(\Omega\setminus \chi^{\alpha}_n\bigl(B_n\bigr)\Bigr).
\end{eqnarray*}
Because
\begin{equation*}
\mu^{\alpha}_{\text{sc},n}\Bigl(\Omega\setminus(\chi^{\alpha}_n)^{-1}\bigl(\chi_n^{\alpha}(B_n)\bigr)\Bigr)=0.
\end{equation*}
we also have that
\begin{equation*}
\mu^{\alpha}_{\text{sc},n+1}\Bigl(\Omega\setminus \chi^{\alpha}_n\bigl(B_n\bigr)\Bigr)=0,
\end{equation*}
and thus we are done.\\
\\
We also need to show that $\lambda^d(B_n)=0$ implies that $\lambda^d\bigl(\chi^{\alpha}_n(B_n)\bigr)=0$. This follows directly from the newly made assumption (cf. Section \ref{section relax conditions on motion mapping}) that there is a constant $c_n>0$ such that
\begin{equation*}
c_n \lambda^d\bigl(\chi^{\alpha}_n(\Omega')\bigr) \leqslant \lambda^d (\Omega'),\hspace{1 cm}\text{for all }\Omega'\in\mathcal{B}(\Omega).
\end{equation*}
When proving finiteness of $\mu^{\alpha}_{\text{sc},n+1}$ we again can only use that $(\chi^{\alpha}_n)^{-1}(\Omega)\subset\Omega$ instead of $(\chi^{\alpha}_n)^{-1}(\Omega)=\Omega$. It follows that
\begin{equation*}
\mu^{\alpha}_{\text{sc},n+1}(\Omega)=\mu^{\alpha}_{\text{sc},n}\bigl( (\chi^{\alpha}_n)^{-1}(\Omega) \bigr)\leqslant\mu^{\alpha}_{\text{sc},n}(\Omega)<\infty.
\end{equation*}
\section{The proof of Corollary \ref{corollary conservation of mass discretized}}
We redo the proof of Corollary \ref{corollary conservation of mass discretized} completely, because it is slightly more involved.
\begin{proof}
For each $n\in\{0,1,\ldots,N_T-1\}$ and $\alpha\in\{1,2,\ldots,\nu\}$, consider the measure $\mu^{\alpha}_{\omega, n}$, where $\omega\in\{\text{ac, d, sc}\}$. By definition of $\mu^{\alpha}_{\omega, n+1}$ (see the constructive proof of Theorem \ref{Thm existence time-discrete}, Parts \ref{Thm existence time-discrete Part Abs Cont}, \ref{Thm existence time-discrete Part Discrete} and \ref{Thm existence time-discrete Part Singular Continuous}):
\begin{equation*}
\mu^{\alpha}_{\omega,n+1}:=\chi^{\alpha}_n \# \mu^{\alpha}_{\omega,n}.
\end{equation*}
For each $n$ we thus have
\begin{equation*}
\mu^{\alpha}_{\omega,n+1}(\Omega)=\mu^{\alpha}_{\omega,n}\bigl( (\chi^{\alpha}_n)^{-1}(\Omega) \bigr).
\end{equation*}
Note that $\supp \mu^{\alpha}_{\omega,n}\subset\supp\mu_n^{\alpha}\subset(\chi^{\alpha}_n)^{-1}(\Omega)$. This implies that
\begin{eqnarray*}
\mu^{\alpha}_{\omega,n+1}(\Omega) &=& \mu^{\alpha}_{\omega,n}\bigl( (\chi^{\alpha}_n)^{-1}(\Omega) \bigr)\\
&=& \mu^{\alpha}_{\omega,n}\bigl( (\chi^{\alpha}_n)^{-1}(\Omega)\cap \supp\mu_n^{\alpha} \bigr)+ \mu^{\alpha}_{\omega,n}\bigl( (\chi^{\alpha}_n)^{-1}(\Omega)\setminus \supp\mu_n^{\alpha} \bigr)\\
&=& \mu^{\alpha}_{\omega,n}\bigl(\supp\mu_n^{\alpha} \bigr)+ \mu^{\alpha}_{\omega,n}\bigl( \emptyset \bigr)\\
&=& \mu^{\alpha}_{\omega,n}\bigl(\supp\mu_n^{\alpha}\cap \supp \mu^{\alpha}_{\omega,n}\bigr)+ \underbrace{\mu^{\alpha}_{\omega,n}\bigl(\supp\mu_n^{\alpha}\setminus \supp \mu^{\alpha}_{\omega,n}\bigr)}_{=0}\\
&=& \mu^{\alpha}_{\omega,n}\bigl(\supp \mu^{\alpha}_{\omega,n}\bigr)\\
&=& \mu^{\alpha}_{\omega,n}(\Omega).
\end{eqnarray*}
By an inductive argument: $\mu^{\alpha}_{\omega,n}(\Omega)=\mu^{\alpha}_{\omega,0}(\Omega)$ for each $n$.\\
Since $\mu^{\alpha}_n = \mu^{\alpha}_{\text{ac},n}+\mu^{\alpha}_{\text{d},n}+\mu^{\alpha}_{\text{sc},n}$ for all $n\in\mathbb{N}$, the above implies trivially that
\begin{equation*}
\mu^{\alpha}_n(\Omega)=\mu^{\alpha}_0(\Omega),\hspace{1 cm}\text{for all }n\in\mathcal{N}.
\end{equation*}
\end{proof}

\newpage
\chapter[Paper `Modeling micro-macro pedestrian counterflow in\\ heterogeneous domains']{Paper `Modeling micro-macro pedestrian counterflow in heterogeneous domains'}\label{appendix paper}
The following pages contain the full content of \cite{EversMuntean}: our paper `Modeling micro-macro pedestrian counterflow in heterogeneous domains', published in \textit{Nonlinear Phenomena in Complex Systems}.
\newcommand{\setpaperheaders}{
\pagestyle{fancy}
\fancyhf{}
\fancyfoot[LE]{\thepage \hspace{0.5 cm} Modelling Crowd Dynamics}
\fancyfoot[RO]{Modelling Crowd Dynamics\hspace{0.5 cm} \thepage }
}

\includepdf[pages={1-8},pagecommand=\setpaperheaders]{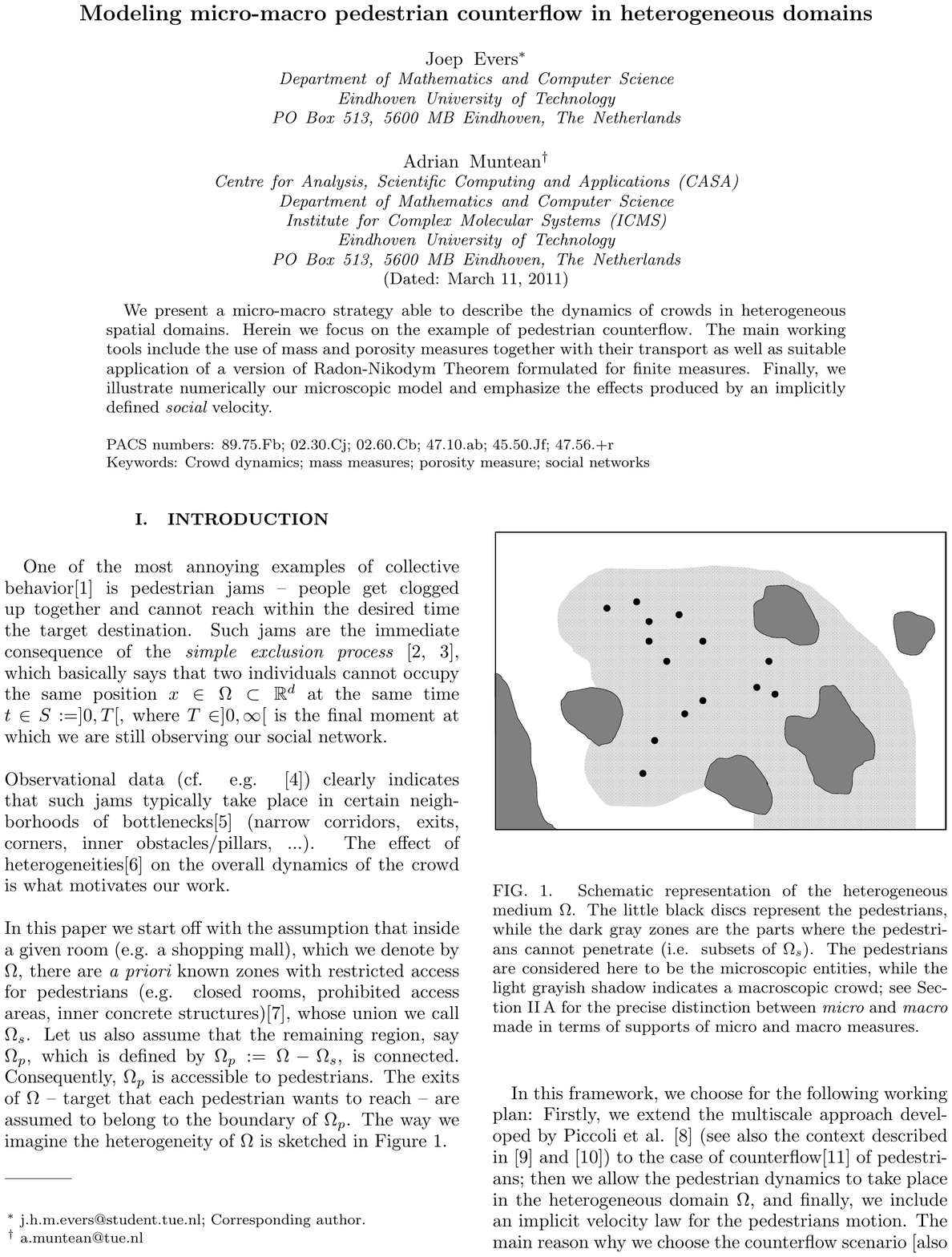}

\setheaders{}
\newpage
\addcontentsline{toc}{chapter}{Bibliography}
\bibliographystyle{abbrv}
\bibliography{references}

\begin{thebibliography}{10}

\bibitem{HajjBBC}
Hajj ritual sees new safety moves.
\newblock {\em BBC News}, 10 January 2006.

\bibitem{Hajj}
Hajj.
\newblock In J.~Esposito, editor, {\em Oxford Dictionary of Islam}. Oxford
  University Press, 2003.

\bibitem{SueddeutscheDuisburg}
Loveparade in {D}uisburg: {Z}u wenig {O}rdner am {T}unnel.
\newblock {\em Süddeutsche Zeitung}, 26 July, 2010.

\bibitem{ZeitDuisburg1}
Loveparade-{A}ufarbeitung: {D}uisburg macht dicht.
\newblock {\em Zeit Online}, 27 July, 2010.

\bibitem{TagesspiegelDuisburg}
Loveparade-{E}rmittlungen: {E}inige {T}ote wurden doch im {T}unnel gefunden.
\newblock {\em Tagesspiegel}, 30 July, 2010.

\bibitem{Still}
S.~AlGadhi and G.~Still.
\newblock Jamarat {B}ridge: Mathematical models, computer simulation and
  {H}ajjis safety analysis.
\newblock Technical report, Ministry of Municipal and Rural Affairs, Saudi
  Arabia, 2003.

\bibitem{Ball}
P.~Ball.
\newblock {\em Critical Mass}.
\newblock Farrar, Straus and Giroux, New York, 2004.

\bibitem{BedfordDrumheller}
A.~Bedford and D.~Drumheller.
\newblock Recent advances: Theories of immiscible and structured mixtures.
\newblock {\em Int. J. Engng. Sci.}, 21(8):863--960, 1983.

\bibitem{Boehm}
M.~Böhm.
\newblock Lecture {N}otes in {M}athematical {M}odeling.
\newblock Department of Mathematics, University of Bremen (unpublished), 2006.

\bibitem{BodnarVelazquez}
M.~Bodnar and J.~Velazquez.
\newblock Derivation of macroscopic equations for individual cell-based models:
  A formal approach.
\newblock {\em Mathematical Methods in the Applied Sciences}, 28:1757--1779,
  2005.

\bibitem{Bowen}
R.~Bowen.
\newblock Part {I} {T}heory of {M}ixtures.
\newblock In A.~Eringen, editor, {\em Continuum Physics, Vol. III - Mixtures
  and EM Field Theories}, pages 1--36. Academic Press, 1976.

\bibitem{Bowen1989}
R.~Bowen.
\newblock {\em Introduction to Continuum Mechanics for Engineers}.
\newblock Plenum Press, New York, 1989.

\bibitem{Bradley}
R.~Bradley.
\newblock An elementary treatment of the {R}adon-{N}ikodym derivative.
\newblock {\em The American Mathematical Monthly}, 96(5):437--440, 1989.

\bibitem{CampanellaIFAC09}
M.~Campanella, S.~Hoogendoorn, and W.~Daamen.
\newblock Improving the {N}omad microscopic walker model.
\newblock In A.~Chassiakos, editor, {\em Proceedings 12th IFAC Symposium},
  pages 12--18. Long Beach: California State University, 2009.

\bibitem{Carrillo2010}
J.~Carrillo, M.~Fornasier, J.~Rosado, and G.~Toscani.
\newblock Asymptotic flocking dynamics for the kinetic {C}ucker-{S}male model.
\newblock {\em SIAM Journal on Mathematical Analysis}, 42(1):218--236, 2010.

\bibitem{CarrilloMoll}
J.~Carrillo and J.~Moll.
\newblock Numerical simulation of diffusive and aggregation phenomena in
  nonlinear continuity equations by evolving diffeomorphisms.
\newblock {\em SIAM J. Sci. Comput.}, 31(6):4305--4329, 2009.

\bibitem{Piccoli2010}
E.~Cristiani, B.~Piccoli, and A.~Tosin.
\newblock Multiscale modeling of granular flows with application to crowd
  dynamics.
\newblock {\em Multiscale Modeling and Simulations}, 2010.
\newblock (submitted).

\bibitem{Curseu}
P.~Curseu.
\newblock Group dynamics and effectiveness: A primer.
\newblock In S.~Boros, editor, {\em Organizational Dynamics}, chapter~7, pages
  225--246. Sage, London, 2009.

\bibitem{DijkstraDeVries}
J.~Dijkstra, J.~Jessurun, H.~Timmermans, and B.~de~Vries.
\newblock A framework for processing agent based pedestrian activity
  simulations in shopping environments.
\newblock In R.~Trappl, editor, {\em 20th European Meeting on Cybernetics and
  Systems Research}, pages 515--521. Austrian Society for Cybernetic Studies,
  Vienna, 2010.

\bibitem{DunwoodyMueller}
N.~Dunwoody and I.~Müller.
\newblock A thermodynamic theory of two chemically reacting ideal gases with
  different temperatures.
\newblock {\em Arch. Ration. Mech. Anal.}, 29(5):344--369, 1968.

\bibitem{Elstrodt}
J.~Elstrodt.
\newblock {\em Maß- und Integrationstheorie}.
\newblock Grundwissen Mathematik. Springer-Verlag, 2nd ed. edition, 1999.

\bibitem{EvansGariepy}
L.~Evans and R.~Gariepy.
\newblock {\em Measure Theory and Fine Properties of Functions}.
\newblock CRC Press, 1992.

\bibitem{EversMuntean}
J.~Evers and A.~Muntean.
\newblock Modeling micro-macro pedestrian counterflow in heterogeneous domains.
\newblock {\em Nonlinear Phenomena in Complex Systems}, 14(1):27--37, 2011.

\bibitem{Fetecau}
R.~Fetecau, Y.~Huang, and T.~Kolokolnikov.
\newblock Swarm dynamics and equilibria for a nonlocal aggregation model.
\newblock {\em Nonlinearity}, 2011.
\newblock (submitted).

\bibitem{GreenNaghdi}
A.~Green and P.~Naghdi.
\newblock Entropy inequalities for mixtures.
\newblock {\em Quart. Journ. Mech. and Applied Math.}, 24:473--485, 1971.

\bibitem{Halmos}
P.~Halmos.
\newblock {\em Measure Theory}.
\newblock D. van Nostrand, Princeton, New Jersey, 1956.

\bibitem{Hawking}
S.~Hawking.
\newblock {\em A {B}rief {H}istory of {T}ime}.
\newblock Bantam Books, 1998.

\bibitem{Helbing}
D.~Helbing.
\newblock A fluid-dynamic model for the movement of pedestrians.
\newblock {\em Complex Systems}, 6:391--415, 1992.

\bibitem{HelbingPhysA}
D.~Helbing.
\newblock Modeling multi-lane traffic flow with queuing effects.
\newblock {\em Physica A}, 242:175--194, 1997.

\bibitem{HelbingJamarat}
D.~Helbing, A.~Johansson, and H.~Al-Abideen.
\newblock Dynamics of crowd disasters: An empirical study.
\newblock {\em Physical Review E}, 75:046109.1--7, 2007.

\bibitem{HelbingMolnar}
D.~Helbing and P.~Molnár.
\newblock Social force model for pedestrian dynamics.
\newblock {\em Physical Review E}, 51(5):4282--4286, 1995.

\bibitem{Holmes}
M.~Holmes.
\newblock Mixture {T}heories for the {M}echanics of {B}iological {T}issues.
\newblock (book under development, RPI web book, 1995).

\bibitem{HoogendoornBovyGameTh}
S.~Hoogendoorn and P.~Bovy.
\newblock Simulation of pedestrian flows by optimal control and differential
  games.
\newblock {\em Optimal Control Applications and Methods}, 24:153--172, 2003.

\bibitem{HoogendoornGameTh}
S.~Hoogendoorn and P.~Bovy.
\newblock Generic driving behavior modeling by differential game theory.
\newblock In {\em Traffic and Granular Flow '07}, pages 321--331, 2009.

\bibitem{Hoogendoorn}
S.~Hoogendoorn, W.~Daamen, and M.~Campanella.
\newblock Self-organization and chaos in pedestrians flow: experiments and
  modelling.
\newblock In {\em Proceedings of the Chaotic Modeling and Simulation
  International Conference (CHAOS 2008)}. Chania, Crete, Greece, 3-6 June 2008.

\bibitem{SwarmScienceNews}
E.~Klarreich.
\newblock The mind of the swarm.
\newblock {\em Science News}, 170:347--349, 2006.

\bibitem{Koralov}
L.~Koralov and Y.~Sinai.
\newblock {\em Theory of Probability and Random Processes}.
\newblock Springer-Verlag, 2007.

\bibitem{Zeeburg}
W.-K. Lam and A.~van Zeeburg.
\newblock {FDS} + {E}vac vs {S}imulex.
\newblock Graduation report, Hogeschool Windesheim, Zwolle, 2009.

\bibitem{LeBon}
G.~{\uppercase{l}e}~Bon.
\newblock {\em La psychologie des foules}.
\newblock The Echo Library, Middlesex, 2008.

\bibitem{UrbDesComp}
{Llewelyn-Davies Ltd.} and {Alan Baxter \& Associates}.
\newblock {\em Urban Design Compendium}.
\newblock English Partnerships and The Housing Corporation, 2nd ed. edition,
  2007.

\bibitem{Maury}
B.~Maury, A.~Roudneff-Chupin, and F.~Santambrogio.
\newblock A macroscopic crowd motion model of gradient flow type.
\newblock {\em Mathematical Models and Methods in Applied Sciences}, 2010.
\newblock (accepted).

\bibitem{Muller}
I.~Müller.
\newblock A thermodynamic theory of mixtures of fluids.
\newblock {\em Archive for Rational Mechanics and Analysis}, 28(1):1--39, 1968.

\bibitem{Parthasarathy}
K.~Parthasarathy.
\newblock {\em Probability {M}easures on {M}etric {S}paces}.
\newblock Academic Press, 1967.

\bibitem{PiccoliTosin}
B.~Piccoli and A.~Tosin.
\newblock Pedestrian flows in bounded domains with obstacles.
\newblock {\em Continuum Mech Thermodyn.}, 21:85--107, 2009.

\bibitem{PiccoliTosinMeasTh}
B.~Piccoli and A.~Tosin.
\newblock Time-evolving measures and macroscopic modeling of pedestrian flow.
\newblock {\em Arch. Ration. Mech. Anal.}, 199(3):707--738, 2010.

\bibitem{Rudin}
W.~Rudin.
\newblock {\em Real and Complex Analysis}.
\newblock McGraw-Hill, 3rd ed. edition, 1987.

\bibitem{Saavedra}
S.~Saavedra, K.~Hagerty, and B.~Uzzi.
\newblock Synchronicity, instant messaging, and performance among financial
  traders.
\newblock {\em Proceedings of the National Academy of Sciences},
  108(13):5296--5301, 2011.

\bibitem{Schadschneider2011}
A.~Schadschneider, D.~Chowdhury, and K.~Nishinari.
\newblock {\em Stochastic Transport in Complex Systems}.
\newblock Elsevier, 2011.

\bibitem{TemamMiranville}
R.~Temam and A.~Miranville.
\newblock {\em Mathematical Modeling in Continuum Mechanics}.
\newblock Cambridge University Press, 2nd ed. edition, 2005.

\bibitem{FonsvdVen}
A.~v. Ven.
\newblock Personal communications, 2010--2011.

\bibitem{VicsekCollMotion}
T.~Vicsek and A.~Zafiris.
\newblock Collective motion.
\newblock {\em Rev. Mod. Phys.}, 2010.
\newblock (submitted, preprint available: arXiv:1010.5017v1).

\bibitem{Zemansky}
M.~Zemansky and R.~Dittman.
\newblock {\em Heat and {T}hermodynamics}.
\newblock McGraw-Hill, 7th ed. edition, 1997.

\end{thebibliography}

\end{document}